\documentclass[9pt,leqno,a4paper]{article}
\usepackage [dvipsnames] { xcolor }
\usepackage{amsfonts}
\usepackage{amsmath, amsfonts, amsthm, amssymb, amscd}
\usepackage[mathcal]{euscript}
\usepackage{mathrsfs}
\usepackage{dsfont}
\usepackage{ulem}
\usepackage{bbm}
\usepackage{pifont}

\usepackage{hyperref}
\hypersetup{linktoc = all}
\hypersetup{hidelinks}

\usepackage{slashed}
\usepackage{a4wide}
\usepackage{graphicx}  
\graphicspath{ {./} }

\usepackage{lineno} 

\modulolinenumbers[5]

\newtheorem{theorem}{Theorem}[section]
\newtheorem{proposition}[theorem]{Proposition}
\newtheorem{lemma}[theorem]{Lemma}
\newtheorem{corollary}[theorem]{Corollary}
\newtheorem{definition}[theorem]{Definition}
\newtheorem{remark}[theorem]{Remark}

\numberwithin{equation}{section}

\usepackage{multirow}
\usepackage{tabularx}
\usepackage{lscape}

\newcommand \del \partial

\newcommand \Kcal {\mathcal K}

\newcommand \Hcal {\mathcal H}

\newcommand \Hinit{\Hcal^{\text{\bf init}}}

\newcommand \Jcal {\mathcal{J}}

\newcommand \Rcal{\mathcal{R}}

\newcommand \delb {\bar {\del}}

\newcommand \Boxt {\widetilde {\Box}}
\newcommand \delu {\uline{\del}}

\newcommand \Tu {\uline{T}}

\newcommand \hu {\uline{h}}

\newcommand \Pu{\uline{P}}
\newcommand \Psiu{\uline{\Psi}}
\newcommand \Phiu{\uline{\Phi}}

\newcommand \Ebf {{\bf E}}

\newcommand \RR{\mathbb{R}}
\newcommand \eps{\varepsilon}

\newcommand {\vep}{\varepsilon}

\newcommand{\diff}{\mathrm{d}}

\newcommand \red {\color{red}}

\newcommand \Mcal {\mathcal{M}}
\newcommand \etau {\underline{\eta}}
\newcommand \gSch {{g_{\mathrm{S}}}}
\newcommand \mSch {{m_{\mathrm{S}}}}

\newcommand \be {\begin{equation}}
\newcommand \ee {\end{equation}}

\newcommand \bea {\be\aligned}
\newcommand \eea {\endaligned\ee}

\makeatletter
\def\hlinew#1{%
	\noalign{\ifnum0=`}\fi\hrule \@height #1 \futurelet
	\reserved@a\@xhline}
\makeatother

\title{Gravitational Radiation Coupled to a real massive scalar Field\footnote{The present work is supported by Shaanxi Fundamental Science Research Project for Mathematics and Physics (Grant No.22JSZ003)}}

\author{
Yue MA\footnote{School of Mathematics and Statistics,
Xi'an Jiaotong University, Xi'an, Shaanxi 710049, P.R. China.\
E-mail: yuemath@xjtu.edu.cn}, Xianghao Wu and Weidong Zhang
}
\author{
Yue MA \footnote{School of Mathematics and Statistics, Xi'an Jiaotong University, Xi'an, Shaanxi 710049, P.R. China.\ E-mail: yuemath@xjtu.edu.cn}, Xianghao Wu \footnote{School of Mathematics and Statistics,
Xi'an Jiaotong University, Xi'an, Shaanxi 710049, P.R. China.\
E-mail: xianghaowu@stu.xjtu.edu.cn} and Weidong Zhang \footnote{School of Mathematics and Statistics,
Xi'an Jiaotong University, Xi'an, Shaanxi 710049, P.R. China.\
E-mail: zhangwdmath@gmail.com}
}

\begin{document}
	\maketitle

\begin{abstract}
We study the asymptotic behavior of small-data global solutions to the Einstein--Klein--Gordon system with a positive-mass real scalar field in wave gauge. We show that the linearized trace component of the metric perturbation admits a $\hat t^{-1}\simeq t^{-1}$ optimal decreasing rate. Moreover, a quantitative one-point observability estimate is established: the initial Klein-Gordon energy is controlled by this leading order coefficient. 
\end{abstract}

\tableofcontents
\newpage
\section{Introduction}
\subsection{Objective}
In this paper we study the gravitational radiation properties of global solutions to the Einstein--Klein--Gordon system arising from small initial data in a {\sl wave gauge}. We prove that, when the gravitational field is coupled to a positive-mass real scalar field, the Friedlander radiation field \cite{Friedlander1962} of the linearized trace of the metric perturbation defined as
\begin{equation}
	\hbar:= \eta^{\alpha\beta}h_{\alpha\beta} = \eta^{\alpha\beta}(g_{\alpha\beta} - \eta_{\alpha\beta})
\end{equation}
with $\eta_{\alpha\beta} = \mathrm{diag}(-1,1,1,1)$ has a decay rate different from the vacuum case at future null infinity in every null direction. This difference is essential in the following sense: if, in even one angular direction $\omega_0\in\mathbb S^2$, the radiation field $\hat{\Rcal}[\hbar](\omega_0,\hat q)$\footnote{see \eqref{eq1-13-oct-2026} for detailed definition.} decays to zero as $\hat q\to+\infty$, as in the vacuum case, then the scalar field must vanish identically throughout the spacetime.

\subsection{Notation and main result}
To be more precise, let $(\Mcal,g,\phi)$ be the spacetime under consideration with $\Mcal\simeq \RR^4_+ = \big\{(x^0,x^a) \mid x^0\geq 2\big\}$. The metric $g$ and the real scalar field $\phi$ solve the Einstein-Klein-Gordon system\footnote{Throughout this work the Greek indices $\alpha,\beta,\gamma,\cdots$ and $\mu,\nu,\lambda,\cdots$ take values in $\{0,1,2,3\}$ for the spacetime components, while the Latin indices $a,b,c,\cdots$ take values in $\{1,2,3\}$ for the space components.}
\begin{equation}\label{eq1-19-08-2026}
G_{\alpha\beta} = 8\pi T[\phi]_{\alpha\beta},
\quad 
T[\phi]_{\alpha\beta} = \nabla_{\alpha}\phi\nabla_{\beta}\phi - \frac{1}{2}\big(\nabla^{\mu}\phi\nabla_{\mu}\phi + 2V(\phi)\big)g_{\alpha\beta}.
\end{equation} 
The nonlinear stability of Minkowski space for the Einstein vacuum
equations was proved in \cite{CK1993}. Global stability in wave
coordinates was proved by Lindblad and Rodnianski
\cite{Lindblad-Rodnianski-05,LR2010}, with a massless scalar field also
included in \cite{LR2010}.

It was shown in \cite{PLF-YM-book2,IP3,PLF-YM-PDE} that a small regular initial data set imposed on $\{t=2\}$ generates a globally defined, globally hyperbolic regular solution that remains close to the Minkowski vacuum. A complementary approach was developed by Wang \cite{Wang}, who constructed an intrinsic hyperboloid framework in curved spacetimes and obtained the geometric and boundary control near the asymptotic light cone needed for the corresponding energy method. In the present work we use the formulation established in \cite{PLF-YM-book2}, where \eqref{eq1-19-08-2026} is written as
\begin{equation}\label{eq1-29-nov-2025}
	\aligned
	&R_{\alpha \beta}  = 8\pi\big(\nabla_\alpha \phi\nabla_{\beta} \phi + V(\phi) \, g_{\alpha \beta} \big),
	\\
	&\Box_g\phi - V'(\phi) = 0,
	\endaligned
\end{equation}
where $R_{\alpha\beta}$ is the Ricci curvature of the spacetime $(\Mcal,g)$, and $V$ is the scalar potential. In the present case, we fix $c>0$ and take the massive quadratic potential
\begin{equation}\label{eq2-29-nov-2025}
	V(\phi) = \frac{c^2}{2}\phi^2.
\end{equation}
This is in fact not an essential restriction, due to the fact that cubic and higher-order contributions of $V$ change neither the asymptotic behavior of $\phi$ nor that of $\hbar$. 

In a {\sl wave gauge}, the above system \eqref{eq1-29-nov-2025} is written as
\begin{equation}\label{eq3-29-nov-2025}
	\aligned
	&\Boxt_g h_{\alpha \beta} = F_{\alpha \beta}(h;\del h, \del h) - 16\pi \del_{\alpha} \phi\del_{\beta} \phi - 16\pi V(\phi)g_{\alpha \beta},
	\\
	&\Boxt_g\phi - V'(\phi) = 0,
	\\
	&\Box_g x^{\alpha} = 0,
	\endaligned
\end{equation}
where $\Boxt_g = g^{\mu\nu}\del_{\mu}\del_{\nu}$. The nonlinear term $F_{\alpha\beta}$ is composed of {\sl null terms} and {\sl quasi-null terms}, see \eqref{eq1-02-dec-2025} for the precise expressions.
The Einstein vacuum equations in wave coordinates satisfy the weak null
condition introduced in \cite{LR2003WeakNull}. 

In the present work, to keep the focus on the main mechanisms, we treat compact perturbations around a given Schwarzschild spacetime. For this purpose, let us recall the Schwarzschild metric in standard wave coordinates $(x^0, x^1, x^2, x^3)$ (cf.~\cite{Asanov}):
\begin{equation}\label{eq Sch-wave}
	\aligned
	\gSch_{00} &= - \frac{r-m_{\mathrm{S}}}{r+m_{\mathrm{S}}},\qquad 
	\gSch_{0a} = 0,
	\qquad
	\gSch_{ab} = \frac{r+m_{\mathrm{S}}}{r-m_{\mathrm{S}}} \omega_a \omega_b + \frac{(r+m_{\mathrm{S}})^2}{r^2}(\delta_{ab} - \omega_a \omega_b)
	\endaligned
\end{equation}
with $\omega_a := x_a/r$.  The inverse metric components are given by
\begin{equation}
	\gSch^{00} = -\,\frac{r + \mSch}{r - \mSch}, 
	\qquad
	\gSch^{0a} = 0, 
	\qquad
	\gSch^{ab} = \frac{r - \mSch}{r + \mSch}\,\omega^a\omega^b + \frac{r^2}{(r + \mSch)^2}\,\bigl(\delta^{ab} - \omega^a\omega^b\bigr).
\end{equation}
We now recall the global existence result established in \cite{PLF-YM-book2}.
\begin{theorem}[Theorem~2.4 of \cite{PLF-YM-book2}]
	\label{thm main-PDE}
	Consider the Einstein-massive field equations \eqref{eq3-29-nov-2025} together with an initial data set satisfying the constraints and prescribed on the hypersurface $\{t=2\}$:
	\begin{equation}\label{eq main initial-data}
		\aligned
		& g_{\alpha \beta}|_{\{t=2\}} = g_{0, \alpha \beta},
		\qquad
		&&
		\del_t g_{\alpha \beta}|_{\{t=2\}} = g_{1, \alpha \beta},
		\\
		& \phi|_{\{t=2\}} = \phi_0,
		\qquad
		&&
		\del_t \phi |_{\{t=2\}} = \phi_1, 
		\endaligned
	\end{equation}
	which, on $\{t=2\}$ outside the unit ball $\{ r < 1\}$, is assumed to coincide with the restriction of Schwarzschild spacetime of mass $m_{\mathrm{S}}$ (in the wave gauge \eqref{eq Sch-wave}), i.e. 
	\begin{equation}\label{eq1-07-dec-2025}
		g_{\alpha \beta}(2, \cdot) = {\gSch}_{\alpha \beta}, \qquad \del_tg_{\alpha \beta}(2, \cdot) = \phi(2, \cdot) = \del_t\phi(2, \cdot) = 0
		\quad \text{in } \, \big\{ r=|x| \geq 1 \big\}.
	\end{equation}
	Then, for any sufficiently large integer $N$ ($N\geq 14$ being sufficient), there exist constants $\vep_0,C_1, \delta>0$ such that, provided
	\begin{equation}\label{eq main smallness-condition}
		\sum_{\alpha, \beta} \| \del g_{0, \alpha \beta}, g_{1, \alpha \beta} \|_{H^N(\{r < 1\})} 
		+ \|\phi_0 \|_{H^{N+1}(\{ r <1\})} 
		+ \|\phi_1 \|_{H^{N}(\{ r <1\})} 
		+ m_{\mathrm{S}} \leq \vep\leq \vep_0
	\end{equation}
	holds at the initial time, then the solution associated with the initial data \eqref{eq main initial-data} exists for all times $t \geq 2$ and, furthermore, 
	\begin{equation}\label{eq main energy-bound}
		\aligned
		\Ebf(s, \del^IL^Jh_{\alpha \beta})^{1/2}
		& \leq C_1\vep s^{\delta},
		\qquad&&&& | I | + |J| \leq N,
		\\
		\Ebf_c(s, \del^IL^J\phi)^{1/2}
		& \leq C_1\vep s^{\delta+1/2},
		\qquad &&&&| I | + |J| \leq N,
		\\
		\Ebf_c(s, \del^IL^J\phi)^{1/2}
		& \leq C_1\vep s^{\delta},
		\qquad &&&&| I | + |J| \leq N-4.
		\endaligned
	\end{equation}
\end{theorem}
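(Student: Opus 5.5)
This theorem is quoted from \cite{PLF-YM-book2}, so the plan is the Euclidean-hyperboloidal foliation bootstrap used there. We foliate the future of $\{t=2\}$ by hyperboloids $\{s^2 = t^2 - r^2\}$ inside the light cone $\{r<t-1\}$. Outside that cone, the Schwarzschild data in \eqref{eq1-07-dec-2025} and finite speed of propagation show that the solution is exactly Schwarzschild there. The first step is therefore to solve up to the initial hyperboloid $\{s=2\}$ by local existence, with estimates controlled by $\vep$. After that, the energies $\Ebf(s,\cdot)$ (wave) and $\Ebf_c(s,\cdot)$ (Klein--Gordon) are hyperboloidal energies built from $\delb_a = \del_a + (x^a/t)\del_t$, $(s/t)\del_t$, and the mass term $c^2\phi^2$. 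We use the vector fields $\del^I L^J$ with the Lorentz boosts $L_a$. These commute with $\Box$ and, up to curved corrections, with the Klein--Gordon operator.

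We then run a bootstrap on $[2,s^*)$. The assumptions are exactly the bounds \eqref{eq main energy-bound} with $C_1\vep$ replaced by $2C_1\vep$, and the goal is to close them with $C_1\vep$. The ingredients, in order, are as follows.
\begin{itemize}
\item Klainerman--Sobolev on hyperboloids converts the energy bounds into pointwise decay, namely $|\del h|\lesssim \vep t^{-1/2}s^{-1+\delta}$-type bounds and $|\phi|\lesssim \vep t^{-3/2}s^{\delta}$.
\item For wave components, sharper decay comes from integrating along characteristics or from a Kirchhoff-formula estimate, giving $|\del h|\lesssim\vep t^{-1}s^{\delta}$-type bounds. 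For the Klein--Gordon field we use a normal-form or ODE-along-rays estimate, giving $|\phi|\lesssim \vep (s/t)^{2-\delta} s^{-3/2}\dots$.
\item The null and quasi-null structure of $F_{\alpha\beta}$ in \eqref{eq1-02-dec-2025}, written in the semi-hyperboloidal frame, lets every source term be estimated in $L^2(\Hcal_s)$ by $\vep^2 s^{-1+\delta}$. The exception is the quasi-null terms, which are handled by the wave-gauge condition $\Box_g x^\alpha=0$ as in Lindblad--Rodnianski. The resulting bad components $h_{\alpha\beta}$ grow like $s^{\delta}$, which is why the energy bound carries the factor $s^\delta$.
\end{itemize}

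The crucial point is the coupling. The quadratic source $\del_\alpha\phi\del_\beta\phi$ decays like $t^{-3}$ times weights in $s/t$, so it is integrable. The Klein--Gordon equation, however, contains the term $h^{\mu\nu}\del_\mu\del_\nu\phi$, and commuting with boosts produces $L^J h\cdot\del\del\phi$, where $h$ decays only like $t^{-1}$. At the top order $|I|+|J|=N$ this loss can only be absorbed by allowing $s^{1/2+\delta}$ growth in $\Ebf_c$. The lower-order bound ($|I|+|J|\le N-4$) is then recovered by a hierarchy argument that uses the sharp pointwise decay of $h$ and $\del h$. This is the second and third line of \eqref{eq main energy-bound}.

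I expect the main obstacle to be closing the Klein--Gordon top-order energy together with the metric quasi-null terms. The difficulty is that the weight $s/t$ degenerates near the light cone, where the Klein--Gordon field has no good decay in the hyperboloidal frame. The Schwarzschild exterior controls the boundary terms there, and the $\mSch/r$ tail has to be subtracted as a known background.

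Once all improved bounds hold with $C_1\vep/2$ for $\vep\le\vep_0$ small, continuity gives $s^*=\infty$. This yields global existence and \eqref{eq main energy-bound}.
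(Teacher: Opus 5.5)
The paper does not prove this statement; it quotes it from \cite{PLF-YM-book2}, and Section~3 only records some of its consequences. Your outline has the right overall architecture for that reference. Note that it is the pure hyperboloidal method on $\Hcal_s^*$, made possible by the exact Schwarzschild exterior, not the Euclidean--hyperboloidal one. However, the step you pass over as ``up to curved corrections'' fails with the ingredients you list. The hyperboloidal energy controls only $(s/t)\del_t u$, so the curved energy identity produces the term $\int_{\Hcal_s^*}(t/s)|\del_t\hu^{00}|\,|(s/t)\del_t u|^2\,\diff x$ (compare the proof of Proposition~\ref{prop1-20-jan-2026}). With the Klainerman--Sobolev bound $|\del h|\lesssim C_1\eps t^{-1/2}s^{-1+\delta}$, the coefficient $(t/s)|\del h|$ is of size $C_1\eps s^{-1+\delta}$ near $\del\Kcal$, where $t\simeq s^2$. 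This gives only $\tfrac{\diff}{\diff s}\Ebf\lesssim C_1\eps s^{-1+\delta}\Ebf$, i.e.\ $\Ebf\lesssim\exp(CC_1\eps s^{\delta}/\delta)$, which cannot close the bootstrap. The missing idea is that the wave gauge gives extra decay toward the cone to the one component $\hu^{00}$ that is not paired with a good derivative $\delu_a$. This is essentially a factor $(s/t)^2$ for $\del\hu^{00}$, and for $\hu^{00}$ itself up to the explicit term $-4\mSch t^{-1}$, as in \eqref{eq2-07-dec-2025} and \eqref{eq2-20-jan-2026}. It makes the coefficient $O(C_1\eps s^{-3/2+3\delta/2})$, and the same gain controls $\hu^{00}\del_t\del_t$ and its commutators in both equations. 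In your sketch the gauge enters only through the quasi-null terms.

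Two of your decay claims also need correcting. First, your `sharper' bound $|\del h|\lesssim\vep t^{-1}s^{\delta}$ is weaker than Klainerman--Sobolev, since $t^{-1/2}s^{-1}\simeq t^{-1}(t-r)^{-1/2}$. What the argument needs is the sharp bound for $h$ itself, $|h|_{N-7}\lesssim C_1\eps t^{-1}s^{C(C_1\eps)^{1/2}}$ as in \eqref{eq1-14-dec-2025}. Radial integration of the Klainerman--Sobolev bound gives only $t^{-1}(t-r)^{1/2}s^{\delta}$. The sharp bound is what handles $L^Jh\,\del\del\phi$ at lower order. At top order, $L^Jh$ has no pointwise bound and must stay in a weighted $L^2(\Hcal_s)$ norm, and this, not the $t^{-1}$ rate, is the real source of the $s^{1/2}$ loss. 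Second, the metric source $\del\phi\,\del\phi$ is not harmless merely because it decays pointwise. At top order it pairs the $s^{1/2+\delta}$-growing Klein--Gordon energy with a low-order factor. The Klainerman--Sobolev bound $|\del\phi|\lesssim C_1\eps (t/s)t^{-3/2}s^{\delta}$ then yields only $(C_1\eps)^2 s^{-1/2+2\delta}$ in $L^2(\Hcal_s^*)$, which is not integrable in $s$. Closing the metric energy at rate $s^{\delta}$ requires a lossless sharp bound for $\del\phi$ as well as for $\phi$, such as $|\phi|+|\del\phi|\lesssim C_1\eps t^{-3/2}$ in \eqref{eq7-03-aug-2026}. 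That bound reflects extra decay of $\phi$ toward the light cone, the opposite of your closing remark that the Klein--Gordon field has no good decay there.
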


In the proof of the quoted theorem, the exponent $\delta$ is chosen, for instance, in the range $1/50\leq\delta\leq1/20$; in particular, the restriction $\delta<1/9$ imposed below is compatible with that construction.

For linear waves on fixed long-range asymptotically Minkowski
backgrounds, a logarithmic correction of the time-delay variable is used
in \cite{BaskinVasyWunsch2018}.
Inspired by the coordinates introduced in
\cite{Lindblad-3,KauffmanLindblad}, we introduce the {\sl mass-adapted
coordinates} in a slightly different manner:
$$
\hat{t} := t-2m_{\mathrm{S}}\ln t,\quad \hat{x}^a = x^a, \quad \hat{r} = r = |x|,
$$
and we define the {\sl future null infinity} as
\begin{equation}
\mathscr{I}^+ = \{(\hat{r}+\hat{q},\hat{r}\omega)\mid \hat{r}\rightarrow+\infty,\hat{q}\geq \hat{q}_0>1/2\}\simeq \mathbb{S}^2\times [\hat{q}_0,\infty).
\end{equation}
Let $u$ be a scalar field defined in $\{(\hat{t},\hat{x})\mid \hat{r}\leq \hat{t}-\hat{q}_0\}$. Its Friedlander radiation field is given by
\begin{equation}\label{eq1-13-oct-2026}
\hat{\Rcal}[u](\omega,\hat{q}) := \lim_{\hat{r}\rightarrow+\infty}\hat r\,u(\hat{r}+\hat{q},\hat{r}\omega).
\end{equation}
In \eqref{eq1-13-oct-2026}, the radiation field is defined by the limit of
$r u$, without differentiation in $\hat q$.
Now we are ready to state the main result.
\begin{theorem}\label{thm1-15-aug-2026}
	Let $(g_{\alpha\beta},\phi)$ be the global solution to the Einstein-massive field equations constructed in Theorem~\ref{thm main-PDE}, and assume $0<\delta<1/9$.
	Let $\hat{t} := t-2m_{\mathrm{S}}\ln t$ (the mass-adapted time). Then there exists a non-negative function
	\[
	U\in C\bigl(\overline{B_1(0)}\bigr)
	\]
	such that the linearized trace component $\hbar:=\eta^{\alpha\beta}h_{\alpha\beta}$ has the following development:
	\begin{equation}\label{eq4-19-08-2026}
		\hbar(\hat{t},x) = \hat{t}^{-1}U(x/\hat{t}) + \epsilon O_\delta(\hat{t}^{-1}(\hat{t}-r)^{-1+9\delta}),\quad\text{in}\quad
		\{(t,x)\mid \hat{t}-r\geq 2\hat{t}_{\mathrm{init}} = 4-4m_{\mathrm{S}}\ln2\}.
	\end{equation}
	For every $\omega\in\mathbb S^2$, its boundary value is the radial limit
	\[
	U(\omega)=U_\infty(\omega):=\lim_{\rho\uparrow1}U(\rho\omega),
	\]
	and the convergence is uniform in $\omega$.

	Set
	\[
	\kappa:=1-\frac{11}{2}\delta>0.
	\]
	Then, for every $z\in\overline{B_1(0)}$, the scalar initial energy satisfies the quantitative observability estimate
	\begin{equation}\label{eq:main-quantitative-observability}
	\Ebf_c^{\flat}(2,\phi)
	\lesssim_\delta
	(C_1\eps)^{\frac{2}{\kappa+1}}
	U(z)^{\frac{\kappa}{\kappa+1}},
	\qquad z\in\overline{B_1(0)},
	\end{equation}
	where
	\[
	\Ebf_c^{\flat}(2,\phi)
	:=\int_{|x|<1}\left(\sum_{\alpha=0}^3|\del_\alpha\phi|^2+c^2\phi^2\right)(2,x)\diff x.
	\]
	In particular, if $U(z_0)=0$ for some $z_0\in\overline{B_1(0)}$, then
	\[
	\phi\equiv0\qquad\text{and}\qquad U\equiv0.
	\]
	Here $z_0\in B_1(0)$ corresponds to a timelike observation, whereas
	$z_0\in\partial B_1(0)=\mathbb S^2$ corresponds to a null observation.
\end{theorem}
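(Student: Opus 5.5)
The trace equation is where the mechanism lives, so I start there. Contracting \eqref{eq3-29-nov-2025} with $\eta^{\alpha\beta}$ gives
\[
\Boxt_g\hbar=\eta^{\alpha\beta}F_{\alpha\beta}+16\pi\bigl(|\del_t\phi|^2-\sum_a|\del_a\phi|^2\bigr)-16\pi V(\phi)\,\eta^{\alpha\beta}g_{\alpha\beta}.
\]
Up to cubic and higher corrections, $\eta^{\alpha\beta}g_{\alpha\beta}=4+\hbar$, so the scalar source is
\[
S[\phi]:=16\pi\bigl(|\del_t\phi|^2-|\nabla\phi|^2-2c^2\phi^2\bigr).
\]
The massive field decays like $t^{-3/2}$ inside the cone and oscillates. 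Its quadratic source therefore splits into two parts. The first is a non-oscillating part of size $\hat t^{-3}$ times a profile in $x/\hat t$. The second oscillates like $e^{\pm 2ic\sqrt{t^2-r^2}}$ and is negligible after integration.

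Here is how I would get the leading term. Using the hyperboloidal foliation and the Klein--Gordon asymptotics available from the energy bounds \eqref{eq main energy-bound} (via Klainerman--Sobolev on hyperboloids with $\delta$-losses), I would write
\[
\phi\simeq \hat t^{-3/2}\,\mathrm{Re}\bigl(e^{ic\hat s}\,a(x/\hat t)\bigr),
\qquad \hat s=\sqrt{\hat t^2-r^2}.
\]
The non-oscillating part of the source is then $\hat t^{-3}\Sigma(x/\hat t)$. Using $|\del_t\phi|^2\approx(\hat t/\hat s)^2c^2|a|^2\hat t^{-3}/2$ and its analogues, the combination reduces to a quantity of definite sign, $\Sigma=-16\pi c^2|a|^2(\cdots)$ with a positive bracket.

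Next I would solve $\Box u=\hat t^{-3}\Sigma(x/\hat t)$ by self-similar ansatz. Writing $u=\hat t^{-1}U(x/\hat t)$ turns $\Box u$ into a degenerate elliptic operator on $B_1$. Alternatively, and more robustly, I would use the Kirchhoff representation. It shows that
\[
U(z)=\frac{1}{4\pi}\int\frac{-\Sigma(y)}{\dots}\,dy
\]
is a positive kernel integral of $-\Sigma\ge0$, which gives non-negativity and continuity up to $\overline{B_1}$. The uniform radial limit on $\mathbb S^2$ follows from the continuity of the kernel. Working in $\hat t=t-2m_{\mathrm S}\ln t$ removes the leading Schwarzschild long-range term in $g^{\mu\nu}\del_\mu\del_\nu$.

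The error term $\epsilon O(\hat t^{-1}(\hat t-r)^{-1+9\delta})$ is, I expect, the main obstacle. It collects four contributions:
\begin{itemize}
\item the difference $\Boxt_g-\Box$ in mass-adapted coordinates;
\item the null and quasi-null terms $\eta^{\alpha\beta}F_{\alpha\beta}$. These are quadratic in $\del h$ and decay like $t^{-1}(t-r)^{-1}\cdot t^{-1+2\delta}$; the quasi-null $\del\hbar$ parts are controlled by the wave-gauge condition.
\item the oscillating part of $S[\phi]$. I would handle it with a normal-form or integration-by-parts in $\hat s$, gaining a factor of $\hat s^{-1}$.
\item the homogeneous wave with data on $\{t=2\}$. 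Since the exterior data is exactly Schwarzschild, this is absorbed.
\end{itemize}
Each contribution would be estimated by weighted $L^\infty$–$L^\infty$ estimates for the wave equation (Lindblad/Asakura type), yielding $(\hat t-r)^{-1+C\delta}$ losses. Tracking the exponents to $9\delta$ is what requires $\delta<1/9$.

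Finally, the observability estimate. Positivity of the kernel gives a pointwise lower bound for $U(z)$, valid at every $z\in\overline{B_1}$ including boundary points:
\[
U(z)\gtrsim\int_{\Omega}|a(y)|^2\,w(y)\,dy
\]
over a region $\Omega$ and weight $w$. I would relate $\int|a|^2$ to the conserved Klein--Gordon energy, which is asymptotically $\int|a|^2$ on hyperboloids. The energy is conserved up to the coupling errors, which $\eqref{eq main energy-bound}$ bounds by $C_1\epsilon\,s^{-\kappa}$-type quantities. Localizing near $\partial B_1$ loses powers of $(1-|y|)$, since the weight degenerates there. To close, I would split the energy integral into $\{|y|<1-\lambda\}$ and its complement. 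The tail is bounded by $(C_1\epsilon)^2\lambda^{\kappa}$ using the $s^{\delta}$ growth (whence $\kappa=1-\tfrac{11}{2}\delta$), while the interior is bounded by $\lambda^{-1}U(z)$. Optimizing in $\lambda$ gives the exponents $2/(\kappa+1)$ and $\kappa/(\kappa+1)$. If $U(z_0)=0$, then $\Ebf_c^\flat(2,\phi)=0$; combined with the exterior vanishing, this gives $\phi\equiv0$ by uniqueness, and hence $\Sigma\equiv0$ and $U\equiv0$.
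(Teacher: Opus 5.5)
Your architecture matches the paper's. The non-oscillating part of the scalar source is $\hat t^{-3}P_0(\hat y)$ with $P_0\ge0$; your leading-source computation from the raw trace equation is correct, and the normal form you allude to is exactly the paper's $u=\hbar+12\pi\phi^2$, which makes the leading oscillations cancel pointwise. The rescaled Kirchhoff formula then gives $U$ as a positive-kernel integral of $P_0$, the limiting energy is $\frac1{8\pi}\int_{B_1}P_0(\hat y)(1-|\hat y|^2)^{-1}\,d\hat y$, and one splits at $1-|\hat y|^2=\eta$ and optimizes.

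The gap is the one input that makes the split work: the boundary decay $P_0(\hat y)\lesssim(C_1\varepsilon)^2(1-|\hat y|^2)^{\kappa}$ with $\kappa>0$. You attribute it to ``the $s^\delta$ growth'', but a loss cannot produce decay, and nothing else in your plan produces it. A profile read off from pointwise decay, $|a(\hat y)|\le\limsup_{\hat t\to\infty}\hat t^{3/2}|\phi(\hat t,\hat t\hat y)|$, is controlled by $|\phi|\lesssim C_1\varepsilon t^{-3/2}$ only as $|a|\lesssim C_1\varepsilon$. That gives $P_0=O((C_1\varepsilon)^2)$, with no vanishing at $\partial B_1$. With that bound the tail $\int_{1-|\hat y|^2<\eta}P_0(1-|\hat y|^2)^{-1}\,d\hat y$ is not even shown to be finite, let alone $O(\eta^\kappa)$, so the H\"older estimate fails. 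Nor can you exclude energy concentrating in the layer near the cone, which you must do to identify $E_\infty[\phi]$ with the profile integral at all. Similarly, the uniform radial limit does not follow from ``continuity of the kernel'': $K(\omega,\hat y)=(1-\omega\cdot\hat y)^{-1}$ is singular at $\hat y=\omega$, so one needs a uniform cap estimate near $\omega$ that uses a bound on $P_0$.

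What is missing is a ray-by-ray construction of the profile that tracks $\vartheta=\sqrt{1-|\hat y|^2}$. In the paper, $v=\hat s^{3/2}\phi$ along the hyperboloidal ray $\gamma_{\hat y}$ solves $v''+c^2(1+\mathfrak q)v=f_*$. The ODE is started at $\lambda_0(\hat y)\simeq\vartheta^{-1}$, where the ray enters $\{\hat t-\hat r\ge1/2\}$ at time $\hat t\simeq\vartheta^{-2}$. The pointwise bounds there include the near-cone improvement $|\phi|+|\partial_t\phi|\lesssim C_1\varepsilon t^{-5/2+\delta}(t-r)^{1+\delta/2}$, obtained by expressing $c^2\phi$ through the Klein--Gordon equation as $(s/t)^2\partial_t^2\phi$ plus $t^{-1}$ times boost derivatives. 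They give $|v|+|v'|\lesssim C_1\varepsilon\vartheta^{5/2-2\delta}\lambda^{2\delta}$ and $|\mathfrak q'v'|+|f_*|\lesssim C_1\varepsilon\vartheta^{3/2+7\delta/2}\lambda^{-2+9\delta}$. Integrating the diagonalized system from $\lambda_0$ then yields $|b^\pm|\lesssim C_1\varepsilon\vartheta^{5/2-11\delta/2}$, i.e.\ $P_0\lesssim(C_1\varepsilon)^2(1-|\hat y|^2)^{1-\frac{11}{2}\delta}$. The leading power comes from the $\vartheta$-weights at the entry point; the $\delta$-losses only produce the $-\frac{11}{2}\delta$.

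The same analysis shows that your phase $e^{ic\hat s}$ must be replaced by $c\int\sqrt{1+\mathfrak q}\,d\sigma$, because $\underline{h}^{00}+4m_{\mathrm S}t^{-1}$ is only known to decay like $\hat s^{-1+7\delta/2}$ along rays. This is harmless for $|b^+|^2$ but needed for a fixed profile with a decaying remainder.

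Finally, the interior bound $\eta^{-1}U(z)$ at \emph{every} $z\in\overline{B_1(0)}$ requires a lower bound on the kernel over all of $B_1$, not only on some region $\Omega$. For the explicit kernel $D(z,\hat y)^{-1/2}$ this follows from $D(z,\hat y)\le|\hat y-z|^2\le4$, which gives $\int_{B_1}P_0\le8\pi U(z)$.
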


\subsection{Relation to the final Bondi mass}

For the Bondi--Sachs description of radiation and mass at null infinity,
we refer to \cite{BondiBurgMetzner1962,Sachs1962}.
For small-data solutions of the Einstein vacuum equations in harmonic
coordinates, the mass at null infinity was studied in
\cite{HeLindblad2024}, where Bondi--Sachs coordinates were also
constructed.

The Bondi-energy identities recalled in this subsection concern the solution
class and normalizations of \cite{IP3}. Whenever they are applied to our
solution, we additionally assume that, after the scalar-field, mass, and
initial wave-coordinate normalizations of that reference, the data satisfy
all hypotheses of \cite[Theorem~1.3]{IP3}, and that the resulting geometric
development is identified with the solution under consideration. This
additional hypothesis is used only for the Bondi interpretation and the
last assertion of the corollary below; it is not an assumption of
Theorem~\ref{thm1-15-aug-2026}.
The fields and coordinates in the following Bondi formulas are those of
the identified solution, in the normalization of \cite{IP3}.
For this solution class, we recall the Bondi energy introduced in
\cite[Eq.~(6.3.42)]{IP3}:
\begin{equation}
E_{\mathrm{Bondi}}(R) = \frac{1}{16\pi}\lim_{t\rightarrow \infty}\int_{S_{R,t}^u}W_j{\bf n}^j\diff \sigma,
\end{equation}
where $S_{R,t}^u := \{x\in \Sigma_t\mid u(t,x) = R\}$. Here $u$ is an approximate optical function constructed in \cite[Lemma~6.14]{IP3},  $W_j = \bar{g}^{ab}(\del_a h_{jb} - \del_jh_{ab})$ with $\bar{g}^{ab}$ the induced metric on $\Sigma_t:=\{(t,x)\in\Mcal\mid x\in\RR^3\}$. The vector ${\bf n}$ is defined as
\begin{equation}
{\bf n}_j:= \del_j u\big(\bar{g}^{ab}\del_au\del_b u\big)^{-1/2},
\end{equation}
which is the unit vector field in $\Sigma'_t = \{x\in\Sigma_t\mid |x|>2^{-8}t\}$, normal to the level sets of the function $u$. In this solution class, it is proved in \cite[Theorem~6.17, Eq.~(6.3.44)]{IP3} that 
\begin{equation}
\aligned
\lim_{R\rightarrow -\infty}E_{\mathrm{Bondi}} =& \frac{1}{16\pi}\|V_{\infty}^{\psi}\|_{L^2}^2,
\\
\lim_{R\rightarrow \infty}E_{\mathrm{Bondi}} =& E_{\mathrm{ADM}},
\endaligned
\end{equation}
Here $V_\infty^\psi$ denotes the nonlinear scattering data of the scalar
field in the normalization of \cite{IP3}. In that setting, the massive
scalar contributes to the energy at future timelike infinity, and the
final Bondi energy comes entirely from the scalar field.

Regarding the {\sl Single-null-direction profile rigidity}, one has the following relations between the asymptotic profile of the radiation field $\hat{\Rcal}[\hbar]$ and the final Bondi mass.
\begin{corollary}
Under the framework of Theorem~\ref{thm1-15-aug-2026}, the radiation field $\hat{\Rcal}[\hbar]$ exists for all $\omega\in\mathbb{S}^2$. Define
\begin{equation}
\hat{\Rcal}^{\infty}[\hbar](\omega):= \lim_{\hat{q}\rightarrow +\infty}\hat\Rcal[\hbar](\omega,\hat{q}).
\end{equation}
Furthermore, if there exists $\omega_0\in\mathbb{S}^2$ such that
\begin{equation}
\hat{\Rcal}^{\infty}[\hbar](\omega_0)=0,
\end{equation}
then $\phi$ vanishes identically. If, in addition, the solution satisfies
the compatibility and initial-data hypotheses stated at the beginning
of this subsection, its final Bondi energy also vanishes.
\end{corollary}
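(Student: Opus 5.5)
The plan is to deduce the corollary directly from the expansion \eqref{eq4-19-08-2026} of Theorem~\ref{thm1-15-aug-2026}. Fix $\omega\in\mathbb S^2$ and $\hat q\geq \hat q_0$ with $\hat q$ large enough that the points $(\hat r+\hat q,\hat r\omega)$ lie in the region $\{\hat t-r\geq 2\hat t_{\mathrm{init}}\}$. (For the finitely many small values of $\hat q$ below this threshold, existence of the limit should follow from the same argument once the expansion is extended down to $\hat q_0$, or we restrict to $\hat q\geq 2\hat t_{\mathrm{init}}$; only large $\hat q$ matter for $\hat\Rcal^\infty$.) Along the ray $\hat t=\hat r+\hat q$, $x=\hat r\omega$, we have $x/\hat t=\frac{\hat r}{\hat r+\hat q}\omega\to\omega$ radially from inside as $\hat r\to\infty$. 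Multiplying the expansion by $\hat r$ gives
\[
\hat r\,\hbar(\hat r+\hat q,\hat r\omega)=\frac{\hat r}{\hat r+\hat q}\,U\Big(\tfrac{\hat r}{\hat r+\hat q}\omega\Big)+\eps\,O_\delta\Big(\tfrac{\hat r}{\hat r+\hat q}\,(\hat t-r)^{-1+9\delta}\Big).
\]

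The key point is that $\hat t-r=\hat q$ exactly, since $\hat r=r$. Hence the error term is bounded uniformly in $\hat r$ by $C\eps\hat q^{-1+9\delta}$. Its limit as $\hat r\to\infty$ may not exist a priori, so I would not split the limit naively. Instead I would use the first statement of Theorem~\ref{thm1-15-aug-2026}: the radial limit $U(\rho\omega)\to U(\omega)$ exists, so the main term converges to $U(\omega)$.

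To get an honest limit for the full expression, I would rather show that the error, after multiplication by $\hat r$, has a limit. One option is to check in the proof of the theorem that the remainder $\hbar-\hat t^{-1}U(x/\hat t)$ is itself of the form $\hat t^{-1}\times$(a function continuous up to the cone). Failing that, I would argue via $\limsup/\liminf$: both $\limsup$ and $\liminf$ of $\hat r\hbar$ lie within $C\eps\hat q^{-1+9\delta}$ of $U(\omega)$. This already suffices for the main content. Since $\delta<1/9$, we get $-1+9\delta<0$, so
\[
\lim_{\hat q\to\infty}\hat\Rcal[\hbar](\omega,\hat q)=U(\omega),
\]
that is, $\hat\Rcal^\infty[\hbar](\omega)=U(\omega)$. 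Here the $\limsup$ and $\liminf$ versions coincide as $\hat q\to\infty$. I expect the existence of $\hat\Rcal[\hbar](\omega,\hat q)$ at each finite $\hat q$ to be the main technical obstacle. The plan there is to revisit the construction of $U$ and show that the remainder admits its own radiation limit, using a Kirchhoff-type representation along the cone, and with $\hat q$ fixed.

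Finally, suppose $\hat\Rcal^\infty[\hbar](\omega_0)=0$. Then $U(\omega_0)=0$ with $\omega_0\in\partial B_1(0)$. The observability estimate \eqref{eq:main-quantitative-observability} at $z=\omega_0$ gives $\Ebf_c^\flat(2,\phi)=0$, so $\phi_0=\phi_1=0$ on $\{r<1\}$, and by \eqref{eq1-07-dec-2025} also outside. Uniqueness for the Klein--Gordon equation on the global solution then gives $\phi\equiv0$; this is also the final assertion of Theorem~\ref{thm1-15-aug-2026}. Under the additional hypotheses, $V^\psi_\infty=0$ because the scalar field is trivial. The recalled identity \cite[Eq.~(6.3.44)]{IP3} then gives $\lim_{R\to-\infty}E_{\mathrm{Bondi}}=0$, so the final Bondi energy vanishes.
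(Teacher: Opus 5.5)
Your identification $\hat\Rcal^\infty[\hbar](\omega)=U(\omega)$ is correct, as are the rigidity step via \eqref{eq:main-quantitative-observability} at $z=\omega_0$ and the Bondi step; all three match the paper. The gap is the first assertion of the corollary: existence of $\hat\Rcal[\hbar](\omega,\hat q)$ at fixed $\hat q$ is never proved, and without it $\hat\Rcal^\infty[\hbar]$ is not even defined. You flag this yourself, but neither of your suggested remedies closes it.

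The expansion \eqref{eq4-19-08-2026} controls the remainder only in size. It bounds $\limsup-\liminf$ of $\hat r\hbar(\hat r+\hat q,\hat r\omega)$ by $O(\eps\hat q^{-1+9\delta})$ and gives nothing more. The remainder is also not of the form $\hat t^{-1}G(x/\hat t)$ with $G$ continuous up to the cone. Such a form would make its radiation field independent of $\hat q$, hence zero by that same bound. But the remainder contains $u_1$, whose radiation field depends on $\hat q$ and need not vanish. Your other suggestion, a Kirchhoff-type representation along the cone, is stated only as an intention.

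The missing idea is the commuted characteristic argument of Proposition~\ref{prop:radiation-field-existence} (for $W$) and Proposition~\ref{prop1-19-07-2026} (for $u_1$), both via Corollary~\ref{cor1-03-aug-2026}. In the hatted variables, $-\hat\Box u=f$ reads
\[
(\hat\del_0-\hat\del_r)(\hat\del_0+\hat\del_r)(\hat r u)=\hat r^{-1}\Delta_\omega u+\hat r f,
\]
and the only obstruction to integrating along characteristics is the angular term. Because $\hat\Omega_{bc}=\Omega_{bc}$ commutes with $\hat\Box$, $\Delta_\omega u$ solves a wave equation with source $\Delta_\omega f$. This source obeys the same bounds as $f$, by the commuted estimates on $\mathbb S[\phi]$ and $\mathbb G$. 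Its data are compactly supported: zero for $W$, and for $u_1$ because the Schwarzschild tail is radial. Corollary~\ref{cor1-02-aug-2026} then gives $\hat r^{-1}|\Delta_\omega u|\lesssim\hat t^{-1-a}(\hat t-\hat r)^{-b}$ on $\{\hat r\geq\hat t/2\}$.

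Integrating first along $\hat t+\hat r=\mathrm{const}$ and then along $\hat t-\hat r=\hat q$ shows that $(\hat\del_0+\hat\del_r)(\hat r u)$ is integrable in $\hat r$. Hence $\hat r u(\hat r+\hat q,\hat r\omega)$ converges for every $\hat q\geq1/2$. Finally, $\hat\Rcal[\hbar]=\hat\Rcal[u]$ because $\hat r\phi^2\to0$.

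With existence in hand, your $\limsup/\liminf$ argument is essentially what Corollary~\ref{cor:full-source-null-profile} does. There too only $\limsup_{\hat r\to\infty}\hat r|\mathcal G[R_0]|$ is bounded directly, and existence of that limit comes from the separately established existence of $\hat\Rcal[W]$.
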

\begin{proof}
The first conclusion is a consequence of
Theorem~\ref{thm1-15-aug-2026}. Under the additional hypotheses for the
second conclusion, the corresponding Klein--Gordon scattering profile
vanishes, and \cite[Theorem~6.17]{IP3} applies.
\end{proof}
The property $\phi\equiv0$ is geometric, whereas the linearized trace and
its final null profile depend on the chosen coordinates. The rigidity
proved here is therefore a statement in the specified wave coordinates
and their mass-adapted transform. Invariance under a further class of
coordinate changes would require a separate analysis of that class and
its contribution to the $r^{-1}$ trace coefficient.

\subsection{Relations to the asymptotic analysis of wave-Klein-Gordon type systems}
For a class of nonlinear Klein--Gordon equations in three space dimensions,
small-data global existence was proved in \cite{Klainerman85}.

We first recall some works on global solutions of wave--Klein--Gordon
systems. In \cite{Georgiev1990,Katayama2012}, global existence for classes
of such systems in three space dimensions is studied. The hyperboloidal
foliation method for wave--Klein--Gordon systems was developed in
\cite{PLF-YM-book1,PLF-YM-CMP}. For a semilinear wave--Klein--Gordon
system, global existence for small compactly supported initial data was
proved in \cite{DongWyatt2020}. Global existence and linear scattering
for a semilinear system with small non-compactly supported initial data
satisfying suitable polynomial decay assumptions were established in
\cite{Zhang2024Noncompact}.

An augmented conformal formulation of $f(R)$ gravity in wave gauge
yields a wave--Klein--Gordon system \cite{PLF-YM-SMF}.

We recall the works \cite{IP2,IP3}, in which the asymptotic profiles of
small regular solutions to a wave--Klein--Gordon system and to the
Einstein--Klein--Gordon system are studied, respectively.
In the present work, as we will see in Section~\ref{sec1-04-aug-2026},
the mechanism can be understood via the following simplified system:
\begin{equation}
\aligned
&\Box h = -(s/t)^2|\del_t\phi|^2,
\\
&\Box\phi - \phi = 0
\endaligned
\end{equation}
in the light-cone $\{r<t-1\}$ with $C_c^{\infty}$ initial data given in $\{(2,x)\mid |x|<1\}$. It can be shown that in 3-D, $\phi\sim t^{-3/2}$ and $|h|\lesssim t^{-1}$ (for example, by \cite[Proposition~3.1]{Alinhac-2006}). One naturally asks whether this $t^{-1}$ decay rate of $h$ is {\sl optimal}. In \cite{Chen-Lindblad-2023}, a similar quasilinear model system
\begin{equation}
\aligned
-\Box u = (\del_t\phi)^2+\phi^2
\\
-\Box \phi + \phi = H^{\alpha\beta}u\del_{\alpha}\del_{\beta}\phi
\endaligned
\end{equation} 
is discussed. We use our notation $s:=\sqrt{t^2-r^2}$ for the hyperboloidal radius (denoted by $\rho$ in \cite{Chen-Lindblad-2023}) and write $y=x/t$. In the region $\{t-r\geq 4\}$, the pointwise asymptotic profiles take the form
\begin{equation}
\aligned
&u = s^{-1}U(y) + O(t^{-1}(t-r)^{-1+\delta}),
\\
&\phi = s^{-3/2}\big(e^{\mathrm{i}s^*}a_+(y) 
+ e^{-\mathrm{i}s^*}a_-(y)\big) + O(t^{-5/2+\delta}),
\endaligned
\end{equation}
where, along each hyperboloidal ray $y=\mathrm{const.}$, the modified phase is determined by
\begin{equation}
 s^*=\int \left(1-H^{ss}(y)\frac{U(y)}{s}\right)^{-1/2}\,\diff s,
 \qquad
 H^{ss}(y):=H^{\alpha\beta}\del_\alpha s\,\del_\beta s,
\end{equation}
with the normalization corresponding to that in \cite{Chen-Lindblad-2023}. In particular, $s^*-s=O(\eps\ln s)$ along each fixed hyperboloidal ray. Here
\[
|U(y)|\lesssim \eps(1-|y|^2)^{1/2},
\]
and $a_{\pm}$ decay rapidly as $|y|\rightarrow 1$. Chen and Lindblad also identify the limiting value $A(\omega)$ of the wave radiation field with the radial boundary limit
\[
A(\omega)=\lim_{\rho\uparrow1}(1-\rho^2)^{-1/2}U(\rho\omega).
\]

In \cite{Ouyang2023}, modified wave operators for a coupled
wave--Klein--Gordon system are constructed for sufficiently small
scattering data in weighted Sobolev spaces.

In the present work, the wave profile is normalized as the coefficient
of $\hat t^{-1}$. The function $U$ is continuous on
$\overline{B_1(0)}$, and its boundary values agree with the final null
radiation profile. In Theorem~\ref{thm1-15-aug-2026}, we prove that the
initial Klein--Gordon energy is controlled by $U(z)$ at any one point of
the closed unit ball. For a nontrivial scalar field, this estimate also
implies that the $t^{-1}$ decay rate of $\hbar$ is {\sl optimal} along
$\{\hat r=p\hat t\}$ with $0\leq p<1$ and along all sufficiently late
outgoing null hypersurfaces $\{\hat t=\hat r+\hat q\}$.

Unique continuation from null infinity for linear waves on fixed
asymptotically flat backgrounds is studied in
\cite{AlexakisSchlueShao2016}. Uniqueness for semilinear wave systems
satisfying the null condition is studied in \cite{Yu2023Uniqueness}.
For a rigidity property of the wave component in a wave--Klein--Gordon
system, we refer to \cite{DCDS-Li-Ma}. The radiation field in that work
is defined using the time derivative of the wave component.
In the present work, a single value $U(z)$ controls the initial
Klein--Gordon energy. At the boundary, this requires only one null
direction. The statement is made in the wave coordinates and their
mass-adapted transform specified above.

\paragraph{Main mechanism of the proof.}
The argument is organized around the same non-negative source profile $P_0$, which governs both the induced wave profile and the limiting Klein--Gordon energy. The refined asymptotics of $\phi$ identify $P_0\geq0$ on $B_1(0)$. Propagating the leading source $\hat t^{-3}P_0(\hat x/\hat t)$ through the wave equation yields the profile
\[
U(z)=\frac{1}{4\pi}\int_{B_1(0)}K(z,\hat y)P_0(\hat y)\,\diff\hat y,
\]
where the kernel $K$ is positive and $U$ extends continuously to the closed unit ball, with its boundary trace identified with the final null radiation profile. On the other hand, the same profile $P_0$ determines the limiting flat Klein--Gordon energy through
\[
E_\infty[\phi]
=\frac{1}{8\pi}\int_{B_1(0)}\frac{P_0(\hat y)}{1-|\hat y|^2}\,\diff\hat y.
\]
Comparing these two positive representations, together with the boundary decay of $P_0$ and the inverse energy estimates, gives the quantitative one-point observability estimate of Theorem~\ref{thm1-15-aug-2026}. In particular, the vanishing of $U$ at any single interior or boundary point forces the scalar field to vanish identically.

\section{Recall of notation and basic facts}\label{sec1-13-05-2021}
\subsection{Frames and vector fields}
Let $(t,x)\in \RR^{1+3}$ with $x\in \RR^3$. Set $r:= \sqrt{|x^1|^2+|x^2|^2 +|x^3|^2}$. As we consider compact perturbations of the Schwarzschild spacetime, our metric coincides with \eqref{eq Sch-wave} outside the cone $\Kcal := \{r\leq t-1\}$. We thus introduce the restriction
$$
\Hcal_s^* := \Hcal_s\cap\Kcal
$$
of the hyperboloid to the light-cone. Furthermore, we introduce
$$
\Hcal_s := \Big\{(t,x)\in \RR^{1+3}| t = \sqrt{s^2+r^2} \Big\}, \quad \Hcal_{[s_0,s_1]} := \Kcal\cap \Big\{\sqrt{s_0^2+|x|^2}\leq t\leq \sqrt{s_1^2+|x|^2}\Big\},
$$
$$
\del\Kcal = \{t = |x|+1\},\quad \del\Kcal_{[s_0,s_1]} = \del\Kcal\cap \Big\{\sqrt{s_0^2+|x|^2}\leq t\leq \sqrt{s_1^2+|x|^2}\Big\}.
$$
and
$$
\Hinit_{s_0,s_1} := \Hcal_{s_0}^* \cup \del\Kcal_{[s_0,s_1]}.
$$

We recall the following notations introduced in \cite{PLF-YM-book1}:
$$
\delu_0:=\del_t,\quad \delu_a := \delb_a = (x^a/t)\del_t + \del_a.
$$
These vector fields form the semi-hyperboloidal frame which can be defined in $\{r\leq t-1/2\}$. The transition matrices between this frame and the natural frame $\{\del_{\alpha}\}$ are:
\begin{equation}\label{eq semi-frame}
\Phiu_{\alpha}^{\beta} := \left(
\begin{array}{cccc}
1 &0 &0 &0
\\
x^1/t &1 &0 &0
\\
x^2/t &0 &1 &0
\\
x^3/t &0 &0 &1
\end{array}
\right),
\quad
\Psiu_{\alpha}^{\beta} := \left(
\begin{array}{cccc}
1 &0 &0 &0
\\
-x^1/t &1 &0 &0
\\
-x^2/t &0 &1 &0
\\
-x^3/t &0 &0 &1
\end{array}
\right)
\end{equation}
with
$$
\delu_{\alpha} = \Phiu_{\alpha}^{\beta}\del_{\beta},\quad \del_{\alpha} = \Psiu_{\alpha}^{\beta}\delu_{\beta}.
$$

The vector fields $\delu_a$ are tangent to the hyperboloid $\Hcal_s$. We call them hyperbolic derivatives.

Let $T = T^{\alpha\beta}\del_{\alpha}\otimes\del_{\beta}$ be a two-tensor defined in $\Kcal$ or its subset. Then $T$ can be written with $\{\delu_{\alpha}\}$:
$$
T = \Tu^{\alpha\beta} \delu_{\alpha}\otimes\delu_{\beta}
\quad\text{with}\quad 
\Tu^{\alpha\beta} = T^{\alpha'\beta'}\Psiu_{\alpha'}^{\alpha}\Psiu_{\beta'}^{\beta}.
$$
With the above notation, in the semi-hyperboloidal frame we can express the Minkowski metric and its inverse as
\[
\etau_{\alpha\beta} =
\begin{pmatrix}
-1 & -x^1/t & -x^2/t & -x^3/t \\
-x^1/t & 1 - (x^1/t)^2 & -x^1x^2/t^2 & -x^1x^3/t^2 \\
-x^2/t & -x^2x^1/t^2 & 1 - (x^2/t)^2 & -x^2x^3/t^2 \\
-x^3/t & -x^3x^1/t^2 & -x^3x^2/t^2 & 1 - (x^3/t)^2
\end{pmatrix},
\]
\[
\etau^{\alpha\beta} =
\begin{pmatrix}
-s^2/t^2 & -x^1/t & -x^2/t & -x^3/t \\
-x^1/t & 1 & 0 & 0 \\
-x^2/t & 0 & 1 & 0 \\
-x^3/t & 0 & 0 & 1
\end{pmatrix}.
\]
Especially, 
\begin{equation}\label{eq3-20-june-2026}
\aligned
\underline{\gSch}^{00} - \etau^{00} =& -2\mSch\left( \frac{1}{r-\mSch} + \frac{r^2}{t^2(r+\mSch)} \right) 
\\
=& -4\mSch t^{-1} + 2\mSch t^{-1}\Big(\frac{r-t-\mSch}{r-\mSch} + \frac{r(t-r) + t\mSch}{t(r+\mSch)}\Big).
\endaligned
\end{equation}
\subsection{High-order derivatives and homogeneous coefficients}
In the region $\Kcal$, we introduce the following Lorentzian boosts:
$$
L_a = x^a\del_t + t\del_a,\quad a = 1, 2, 3
$$
and the following notation of high-order derivatives: let $I,J$ be multi-indices taking values in $\{0,1,2,3\}$ and $\{1,2,3\}$ respectively,
$$
I = (i_1,i_2,\cdots, i_m),\quad J = (j_1,j_2,\cdots, j_n).
$$
We define
$$
\del^IL^J = \del_{i_1}\del_{i_2}\cdots \del_{i_m}L_{j_1}L_{j_2}\cdots L_{j_n}
$$
to be an $(m+n)$-th-order derivative. 

Let $\mathscr{Z}$ be a family of vector fields. $\mathscr{Z} = \{Z_i|i=0,1,\cdots, 6\}$ with
$$
Z_0 = \del_t,\quad  Z_1=\del_1,\quad Z_2 = \del_2,\quad Z_3 = \del_3,\quad Z_4= L_1,\quad Z_5=L_2,\quad Z_6 = L_3,
$$
A high-order derivative of order $N$ on $\mathscr{Z}$ with multi-index $I = (i_1,i_2,\cdots, i_N)$, $i_j\in\{0,1,\cdots,6\}$ is defined as
$$
\mathscr{Z}^I := Z_{i_1}Z_{i_2}\cdots Z_{i_N}.
$$
A high-order derivative $\mathscr{Z}^I$ is said to be of type $(a,b)$, if it contains at most $a$ partial derivatives and $b$ Lorentzian boosts. We then recall  the following notation introduced in \cite{M-2021-strong}:
$$
\mathcal{I}_{p,k} = \{I| I \text{ is of type }(a,b)\text{ with }a+b \leq p, b\leq k \}
$$
so $\mathscr{Z}^I$ with $I\in \mathcal{I}_{p,k}$ stands for a high-order derivative composed of boosts and partial derivatives. Its order is less than or equal to $p$ and it contains at most $k$ boosts.  We define
\begin{equation}\label{eq1 notation}
\aligned
|u|_{p,k} &:= \max_{K\in \mathcal{I}_{p,k}}|\mathscr{Z}^K u|,\quad &&|u|_p := \max_{0\leq k\leq p}|u|_{p,k},
\\
|\del u|_{p,k} &:= \max_{\alpha}|\del_{\alpha} u|_{p,k}, &&|\del u|_p := \max_{0\leq k\leq p}|\del u|_{p,k},
\\
|\del^m u|_{p,k} &:= \max_{|I|=m}|\del^I u|_{p,k}, &&|\del^m u|_p := \max_{0\leq k\leq p}|\del^m u|_{p,k},
\\
|\delu u|_{p,k} &:= \max\{|\delu_a u|_{p,k}\}, &&|\delu u|_p := \max_{0\leq k\leq p}|\delu u|_{p,k},
\\
|\del\delu  u|_{p,k} &:=\max_{a,\alpha} \{|\delu_a\del_{\alpha} u|_{p,k},|\del_{\alpha}\delu_a u|_{p,k}\},
&&| \del\delu u|_p :=\max_{0\leq k\leq p}| \del\delu u|_{p,k},
\\
|\delu\delu  u|_{p,k} &:=\max_{a,b} \{|\delu_a\delu_b u|_{p,k}\},
&&| \delu\delu u|_p :=\max_{0\leq k\leq p}| \delu\delu u|_{p,k}.
\endaligned
\end{equation}
In the above expressions the $\max$ is taken for $\alpha\in\{0,1,2,3\}$ and $a,b\in\{1,2,3\}$.
These quantities will be used below to control various high-order derivatives. Our first task is to bound them by energy densities.

We also recall the notation of {\bf homogeneous coefficients}.
\begin{definition}[See for example \cite{PLF-YM-book2}]\label{def1-06-dec-2025}
	A {\rm smooth and homogeneous function} of degree $\alpha$ is, by definition, 
	a smooth function $\Phi$ defined in $\{r < t\}$ at least and satisfying 
	\begin{itemize}
		\item $\Phi(\lambda t, \lambda x) = \lambda^{\alpha} \Phi(t, x)$, for a fixed $\alpha \in \RR$ and for all $\lambda>0$,
		
		\item $\sup_{|x| <1}|\del^I\Phi(1, x)|< C_I<+\infty$. 
	\end{itemize}
\end{definition}
We remark that the elements of the transition matrix $\Phiu_{\alpha}^{\beta}$ are smooth and homogeneous of degree $0$.
The following properties of homogeneous functions are direct.
\begin{proposition}\label{prop 1 homo}
	Let $u,v$ be homogeneous of degree $k,l$ respectively. Then
	\begin{itemize}
		\item When $k=l$, $\alpha u + \beta v$ is homogeneous of degree $k$ where $\alpha$ and $\beta$ are constants.
		
		\item $uv$ is homogeneous of degree $k+l$.
		
		\item $\del^IL^J u$ is homogeneous of degree $k-|I|$.
		
		\item There is a positive constant determined by $I,J$ and $u$ such that the following inequality holds in $\Kcal$:
		\begin{equation}\label{eq 1 homo}
		|\del^IL^Ju|\leq Ct^{k-|I|}.
		\end{equation}
	\end{itemize}
\end{proposition}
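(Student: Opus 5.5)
The four assertions in Proposition~\ref{prop 1 homo} all follow from the scaling identity $\Phi(\lambda t,\lambda x)=\lambda^{\alpha}\Phi(t,x)$ together with the boundedness condition in Definition~\ref{def1-06-dec-2025}. The plan is to prove them in order, since the third feeds into the fourth.

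For linear combinations and products, I check the scaling identity directly. It is $(\alpha u+\beta v)(\lambda t,\lambda x)=\lambda^k(\alpha u+\beta v)(t,x)$, and similarly $(uv)(\lambda\cdot)=\lambda^{k+l}uv$. The bound on $\del^I$ at $t=1$ follows from linearity and the Leibniz rule, since each factor has bounded derivatives on $\{|x|<1\}$. Smoothness is preserved in both cases.

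For derivatives, I differentiate the scaling identity. Differentiating $u(\lambda t,\lambda x)=\lambda^k u(t,x)$ in $x^\alpha$ gives $\lambda(\del_\alpha u)(\lambda t,\lambda x)=\lambda^k\del_\alpha u(t,x)$, so $\del_\alpha u$ is homogeneous of degree $k-1$. For a boost, $L_a u(\lambda t,\lambda x)=\lambda x^a(\del_t u)(\lambda\cdot)+\lambda t(\del_a u)(\lambda\cdot)=\lambda^k L_a u(t,x)$, so boosts preserve the degree. Induction on $|I|+|J|$ then gives degree $k-|I|$. The bound at $t=1$ comes from writing $\del_t u(1,x)$ in terms of spatial derivatives of $u(1,\cdot)$ via Euler's relation $t\del_t u+x^a\del_a u=ku$, so $\del_t u(1,x)=ku(1,x)-x^a\del_a u(1,x)$. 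Repeating this, all space-time derivatives at $t=1$ are bounded on $|x|<1$.

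For the inequality \eqref{eq 1 homo}, set $w=\del^IL^Ju$, which is homogeneous of degree $m=k-|I|$ by the previous step. In $\Kcal$ we have $t>r$, so taking $\lambda=t$ gives $w(t,x)=t^{m}w(1,x/t)$ with $|x/t|<1$. The bound $\sup_{|y|<1}|w(1,y)|\le C$ then yields $|w|\le Ct^{m}$.

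The only delicate point is the derivative step: one must check that the bounded-derivative condition at $t=1$ survives the time derivatives. Euler's relation handles this cleanly, so I expect no real obstacle.
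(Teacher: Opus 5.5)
Your proof is correct and is the direct verification the paper intends; the paper gives no proof of these properties and simply calls them direct. You differentiate the scaling identity to count degrees, use Euler's relation and the Leibniz rule for the bounds at $t=1$, and rescale via $w(t,x)=t^{m}w(1,x/t)$ with $|x/t|<1$ in $\Kcal$ to obtain the final inequality.
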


\subsection{Energies}
Recalling that the nontrivial part of the initial data \eqref{eq main initial-data} is supported in $\{(2,x)\mid |x|<1\}$, Proposition~2.3 of \cite{PLF-YM-book2} implies that the nontrivial part of the metric $g_{\alpha\beta} - \gSch_{\alpha\beta}$ is supported in $\{r<t-1\}$. We thus introduce, for $p \in [1,+\infty)$, the $L^p$ norms on the hyperboloids endowed with the (flat) measure $\diff x$:
$$
\|u\|_{L^p(\Hcal_s^*)}^p := \int_{\Hcal^*_s}|u|^p dx =  \int_{\left\{r\leq \frac{s^2-1}{2}\right\}} \big|u\big(\sqrt{s^2 +r^2}, x\big) \big|^pdx.
$$
Applying the multiplier $-2\del_t \phi$ to $\Boxt_g\phi-c^2\phi=f$ and performing a standard calculation, we obtain
\begin{equation}\label{eq-div-14-dec-2025}
\begin{aligned}
\del_t(-2g^{0\beta}\del_t\phi \del_\beta \phi + g^{\alpha\beta}\del_\alpha\phi \del_\beta \phi+c^2\phi^2)
- 2\del_a({g^{a\beta}\del_t\phi \del_\beta \phi})
\\
=
(\del_tg^{\alpha\beta})\del_\alpha\phi \del_\beta \phi - 2(\del_{\alpha}g^{\alpha\beta})\del_t\phi\del_\beta \phi  
- 2(\del_t\phi) f.
\end{aligned}
\end{equation}
Integrating the above identity in suitable space-time regions leads to energy estimates. Assume that $\phi$ vanishes near $\del\Kcal$, integrating \eqref{eq-div-14-dec-2025} in $\Hcal_{[s_0,s_1]}$, we obtain
\begin{equation}
\Ebf_{g,c}(s_1,\phi) - \Ebf_{g,c}(s_0,\phi) = \int_{s_0}^{s_1}\int_{\Hcal_s^*} (s/t)R_1[h,\phi]\diff x\diff s 
+ \int_{s_0}^{s_1}\int_{\Hcal_s^*}-2(s/t)f\del_t\phi\diff x\diff s,
\end{equation}
with
$$
\aligned
\Ebf_{g,c}(s,\phi) &:=
 \int_{\Hcal^*_s} \Big(-g^{00}|\del_t \phi|^2 + g^{ab} \del_a\phi\del_b\phi + \sum_a \frac{2x^a}{t}g^{a \beta} \del_{\beta}\phi\del_t \phi + c^2\phi^2\Big) \, \diff x,
 \\
R_1[h,\phi] &:= (\del_tg^{\alpha\beta})\del_\alpha\phi \del_\beta \phi - 2(\del_{\alpha}g^{\alpha\beta})\del_t\phi\del_\beta \phi ,
\endaligned
$$
and, for the flat Minkowski background, we apply the notation
$$
\Ebf_{c}(s,\phi) := 
 \int_{\Hcal_s^*} \Big(|\del_t \phi|^2 + \sum_a|\del_a\phi|^2 + \sum_a \frac{2x^a}{t} \del_a\phi\del_t \phi + c^2\phi^2\Big) \, \diff x.
$$
We have the alternative form
$$
\aligned
\Ebf_{c}(s,\phi) & =  \int_{\Hcal_s^*} \Big((s/t)^2 |\del_t \phi|^2 + \sum_a|\delu_a \phi|^2 + c^2\phi^2\Big) \, dx
\\
& =  \int_{\Hcal_s^*} \Big(|\del_t \phi + (x^a/t) \del_a \phi|^2 +\frac{s^2}{t^2}\sum_a|\partial_a\phi|^2 + \sum_{a<b}|t^{-1} \Omega_{ab}\phi|^2 + c^2\phi^2\Big) \, dx, 
\endaligned
$$
where $\Omega_{ab} :=x^a \del_b-x^b\del_a$ denotes the spatial rotations.
When the parameter $c$ is taken to vanish, we also use the short-hand notation
$\Ebf_g(s,\phi) := \Ebf_{g,0}(s,\phi)$, and $\Ebf(s,\phi) := \Ebf_{0}(s,\phi)$.

On the other hand, integrating \eqref{eq-div-14-dec-2025} over the
strip-like domain
\begin{equation}
\{(t,x)\mid t_0\leq t\leq t_1\}
\end{equation}
gives
\begin{equation}
\Ebf^{\flat}_{g,c}(t_1,\phi) - \Ebf^{\flat}_{g,c}(t_0,\phi) = \int_{t_0}^{t_1}\int_{|x|<t-1}R_1[h,\phi]\diff x\diff t 
+ \int_{t_0}^{t_1}\int_{|x|<t-1}-2f\del_t\phi\diff x\diff t,
\end{equation}
with
\begin{equation}
\aligned
\Ebf^{\flat}_{g,c}(t,\phi) 
=&\int_{|x|<t-1}\Big(-2g^{0\beta}\del_t\phi\del_{\beta}\phi 
+ g^{\alpha\beta}\del_{\alpha}\phi\del_{\beta}\phi + c^2\phi^2\Big)(t,x)\diff x
\\
=&\int_{|x|<t-1}\big(-g^{00}|\del_t\phi|^2 + \sum_{a,b=1}^3g^{ab}\del_a\phi\del_b\phi + c^2\phi^2\big)(t,x)\diff x.
\endaligned
\end{equation}
In the flat case, 
\begin{equation}
	\Ebf^{\flat}_{c}(t,\phi)
	:=
	\int_{|x|<t-1}
	\Big(\sum_{\alpha}|\del_{\alpha} \phi|^2
	+c^2\phi^2
	\Big)(t,x)\,\diff x.
\end{equation}
\section{Analysis of the Global Solution of the Einstein-Klein-Gordon System}
\subsection{Global existence, energy and pointwise estimates}
We list the pointwise estimates which were obtained in \cite{PLF-YM-book2} and will be used directly in the following discussion.
\begin{proposition}
Let $(h_{\alpha\beta},\phi)$ be the global solution to \eqref{eq3-29-nov-2025} together with initial data \eqref{eq main initial-data} satisfying \eqref{eq1-07-dec-2025} and \eqref{eq main smallness-condition}. Then the following estimates hold:
\begin{subequations}
\begin{equation}\label{eq3-07-dec-2025}
(s/t)|\del h|_{N-2} + |\delu h|_{N-2}\lesssim_N C_1\eps t^{-3/2}s^{\delta},
\end{equation}
\begin{equation}\label{eq9-03-june-2026}
|\partial\partial h|_{N-4}\lesssim_N C_1\eps(t-r)^{-3/2+\delta}t^{-1+\delta},
\end{equation}
\begin{equation}\label{eq5-20-june-2026}
|\phi|_{N-6} + (s/t)|\del\phi|_{N-6} + |\delu \phi|_{N-6}\leq 
CC_1\eps t^{-3/2}s^{\delta},
\end{equation}
\begin{equation}\label{eq11-20-june-2026}
|\delu \phi|_{N-7}\lesssim_N C_1\eps t^{-5/2}s^{\delta},
\end{equation}
\end{subequations}
\begin{equation}\label{eq7-03-aug-2026}
|\phi| + |\del\phi|\lesssim_N C_1\eps t^{-3/2},
\end{equation}
\begin{equation}\label{eq1-14-dec-2025}
|h|_{N-7}\leq CC_1 \eps t^{-1}s^{C(C_1\eps)^{1/2}} \lesssim C_1\eps t^{-1}s^{\delta},
\end{equation}
\begin{equation}\label{eq4-01-may-2026}
|\delu h|_{N-8}\lesssim C_1\eps t^{-2}s^{\delta},
\end{equation}
\begin{equation}\label{eq7-20-june-2026}
|\del\delu h|_{N-9}\lesssim C_1\eps t^{-3/2}s^{-1+\delta},\quad t|\delu\delu h|_{N-9} \lesssim C_1\eps t^{-2}s^{\delta},
\end{equation}
\begin{equation}\label{eq2-07-dec-2025}
	\aligned
	|\del \hu^{00}|_{N-8}\lesssim& (s/t)^2|\del h|_{N-8} + |\delu h|_{N-8} + t^{-1}|h|_{N-8} 
	+ \sum_{p_1+p_2\leq N-8}|h|_{p_1}|\del h|_{p_2}
	\\
	\lesssim& C_1\eps t^{-2+\delta/2}(t-r)^{1/2+\delta/2},
	\endaligned
\end{equation}
\begin{equation}\label{eq2-20-jan-2026}
	|\hu^{00}|_{N-8}\leq Cm_{\mathrm{S}}t^{-1} + CC_1\eps (s/t)^{2}t^{-1/2}s^{\delta}.
\end{equation}
\end{proposition}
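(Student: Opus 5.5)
These are the pointwise bounds recalled from \cite{PLF-YM-book2}. I would derive them from the energy bounds \eqref{eq main energy-bound} in the following order.

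\textbf{Step 1: direct Sobolev bounds.} Apply the Klainerman--Sobolev inequality on hyperboloids,
\[
\sup_{\Hcal_s^*} t^{3/2}|u|\lesssim \sum_{|J|\leq 2}\|L^Ju\|_{L^2(\Hcal_s^*)}.
\]
Use it with $u=(s/t)\del^IL^J\del h$ and $u=\delu\,\del^IL^J h$. The commutators $[\del^IL^J,\delu_a]$ and $[\del^IL^J,(s/t)\del_\alpha]$ are combinations with homogeneous coefficients, handled by Proposition~\ref{prop 1 homo}, so they lose no decay. This gives \eqref{eq3-07-dec-2025}.

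For $\phi$, the mass term $c^2\phi^2$ in $\Ebf_c$ controls $\phi$ itself in $L^2(\Hcal_s^*)$. The same argument then gives \eqref{eq5-20-june-2026}, using the lower-order bound $\Ebf_c^{1/2}\lesssim C_1\eps s^\delta$ at order $\leq N-4$ minus the two derivatives spent in Sobolev. The improved bounds \eqref{eq11-20-june-2026} and \eqref{eq4-01-may-2026} follow from $\delu_a=t^{-1}L_a$: one boost is traded for an extra $t^{-1}$, applied to the already established bounds for $\phi$ and for $h$.

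\textbf{Step 2: second derivatives and $h$ itself.} For \eqref{eq9-03-june-2026} and \eqref{eq7-20-june-2026}, write $\Boxt_g$ in the semi-hyperboloidal frame. In that frame $\etau^{00}=-(s/t)^2$, and all other second-order terms contain at least one $\delu_a$. The equation therefore gives
\[
(s/t)^2|\del_t\del_t h|\lesssim |\del\delu h|+t^{-1}|\del h|+|F|+|\del\phi|^2+\phi^2 .
\]
The quasilinear correction $\hu^{00}\del_t\del_t h$ is absorbed using \eqref{eq2-20-jan-2026}. Since $s^2/t^2\simeq(t-r)/t$, this yields the $(t-r)^{-3/2+\delta}$ weight.

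For $|h|_{N-7}$ I would not integrate $\del h$ directly. Instead I would use a sharp $L^\infty$ estimate for the wave equation on $\Kcal$, for instance the Kirchhoff formula with hyperboloidal source bounds (in the spirit of \cite[Proposition~3.1]{Alinhac-2006}). The cubic and quasi-null source bounds from Step 1 feed into this estimate. The Schwarzschild tail, which is exactly $O(m_{\mathrm S}/r)$, gives the $t^{-1}$ contribution. The slow growth $s^{C(C_1\eps)^{1/2}}$ comes from a Gronwall argument on the quasi-null terms.

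\textbf{Step 3: the estimates for $\hu^{00}$.} For \eqref{eq2-07-dec-2025} and \eqref{eq2-20-jan-2026} I would use the wave gauge condition $\Box_g x^\alpha=0$. Written as $g^{\mu\nu}\Gamma^\alpha_{\mu\nu}=0$ and expanded in the semi-hyperboloidal frame, it expresses $\del_t\hu^{00}$ through
\begin{itemize}
\item good derivatives $\delu h$,
\item $(s/t)^2\del h$,
\item $t^{-1}h$,
\item quadratic terms.
\end{itemize}
Inserting Steps 1--2 and using $s^2\simeq t(t-r)$ gives the rate $t^{-2+\delta/2}(t-r)^{1/2+\delta/2}$. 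Integrating from $\del\Kcal$, where $\hu^{00}$ equals the Schwarzschild value $-4m_{\mathrm S}t^{-1}+\dots$ by \eqref{eq3-20-june-2026}, then gives \eqref{eq2-20-jan-2026}.

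\textbf{Main obstacle.} I expect the hardest step to be the sharp bound \eqref{eq7-03-aug-2026}, $|\phi|+|\del\phi|\lesssim C_1\eps t^{-3/2}$ with no $s^\delta$ loss. Sobolev alone always loses $s^\delta$. My first attempt would be to write the Klein--Gordon equation along each hyperboloidal ray $x/t=\text{const}$ as a perturbed harmonic oscillator in $s$ for $s^{3/2}\phi$. The remaining terms (hyperbolic derivatives, $\hu^{00}$, and the curvature corrections) would be treated as integrable sources. The $m_{\mathrm S}t^{-1}$ part of $\hu^{00}$ is not integrable in $s$. It only shifts the phase, however, so the amplitude is still bounded uniformly.
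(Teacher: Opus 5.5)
\textbf{Where the proposal agrees with the paper.} Your Steps 1 and 3, and the second-derivative argument in Step 2, are in line with what the paper uses. The paper quotes \cite{PLF-YM-book2} estimate by estimate and adds only short derivations: $\delu_a=t^{-1}L_a$, the wave-gauge identity for $\del\hu^{00}$, and radial integration from the Schwarzschild values on $\del\Kcal$. The problems are in the two sharp bounds. The paper quotes these rather than proving them, and your sketch of them does not work as stated.

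\textbf{The bound \eqref{eq7-03-aug-2026}.} The obstacle is not just an $s^\delta$ loss.
\begin{itemize}
\item For $\del\phi$, \eqref{eq5-20-june-2026} misses $t^{-3/2}$ by a factor $(t/s)s^{\delta}$, which is about $t^{1/2}$ near $\del\Kcal$. A uniformly bounded amplitude $v=\lambda^{3/2}\phi$ gives only $|\phi|\lesssim s^{-3/2}$. Near $\del\Kcal$, where $s\simeq t^{1/2}$, this is worse than Sobolev.
\item On the ray with $s/t=\vartheta$ one has $|\phi|=\vartheta^{-3/2}t^{-3/2}|v|$ and $(s/t)\del_t\phi=s^{-3/2}v'-\tfrac32 s^{-1}\phi-(x^a/s)\delu_a\phi$. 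So you need $|v|\lesssim C_1\eps\vartheta^{3/2}$ and $|v'|\lesssim C_1\eps\vartheta^{5/2}$ uniformly in $\lambda$. You also need the corrections $(t/s)s^{-1}|\phi|+(t/s)^2|\delu\phi|$ to be $O(t^{-3/2})$. Your sketch produces none of this $\vartheta$-dependence.
\item Rays with small $\vartheta$ start on $\del\Kcal$ at $\lambda_0\simeq\vartheta^{-1}$ with $v=v'=0$, so the weights must come from the integrated forcing. With \eqref{eq5-20-june-2026}--\eqref{eq11-20-june-2026} as the only inputs, the term $s^{-1/2}\phi$ in the ray remainder is already only $O(C_1\eps\vartheta^{3/2}\lambda^{-2+\delta})$. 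This integrates to $\vartheta^{5/2-\delta}$, and the corrections above are only $O(t^{-3/2}s^{\delta})$ near the cone.
\item The missing ingredient is the near-cone gain from the Klein--Gordon equation itself. Solve for $c^2\phi$ in the semi-hyperboloidal frame, and for $c^2\del_t\phi$ after one differentiation. This bounds them by $(s/t)^2|\del\del\phi|$, $t^{-1}|\del\phi|_1$ and $|h|\,|\del\del\phi|$-type terms, which gains $s/t$, respectively $(s/t)^2$, over \eqref{eq5-20-june-2026}. Feeding this gain into the forcing and the corrections is what closes the weighted bounds.
\item It is not only the $\mSch t^{-1}$ part of $\hu^{00}$ that fails to be integrable. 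By \eqref{eq2-20-jan-2026}, $(t/s)^2\hu^{00}$ also contains a piece of size $C_1\eps t^{-1/2}s^{\delta}$, that is $C_1\eps\lambda^{-1/2+\delta}$ along a ray. So the whole coefficient $(t/s)^2\hu^{00}$ must be kept in the frequency, as in Lemmas~\ref{prop1-11-05-2021} and~\ref{lem1-2026-7-31}. Only its $\lambda$-derivative needs to be integrable.
\end{itemize}

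\textbf{The bound \eqref{eq1-14-dec-2025}.} With your Step 1 input, the quasi-null source satisfies only $|P|\lesssim|\del h|^2\lesssim (C_1\eps)^2t^{-1}s^{-2+2\delta}$. Kirchhoff's formula then gives only $|h|\lesssim C_1\eps t^{-1}s^{2\delta}$. That is not $t^{-1}s^{\delta}$, let alone $t^{-1}s^{C(C_1\eps)^{1/2}}$. The ``Gronwall argument'' meant to remove this loss is exactly the non-trivial content, and you do not specify it.

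The paper splits the norm instead:
\begin{itemize}
\item Any term containing a translation is reduced, by commuting the translation to the front, to $|\del h|_{N-8}$. Since $s^2\geq t$ in $\Kcal$, one has $t^{-1/2}s^{-1+\sigma}\leq t^{-1}s^{\sigma}$. Your Step 1 therefore already gives $C_1\eps t^{-1}s^\delta$ for these terms. The refined bound $|\del h|_{N-8}\lesssim C_1\eps t^{-1/2}s^{-1+AC_1\eps}$ from \cite[Eq.~(10.9)]{PLF-YM-book2} gives the sharper exponent.
\item Only the pure boosts $L^Jh$ need a genuine sharp decay estimate. The paper quotes it from \cite[Eq.~(11.19)]{PLF-YM-book2}.
\end{itemize}

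\textbf{Ordering of your steps.} As written, the steps are circular:
\begin{itemize}
\item Step 1 uses \eqref{eq1-14-dec-2025} to get \eqref{eq4-01-may-2026}.
\item Step 2 absorbs $\hu^{00}\del_t\del_t h$ using \eqref{eq2-20-jan-2026} from Step 3.
\item Step 3 uses Steps 1--2.
\end{itemize}
To break the loop, first get a crude bound on $h$ by integrating the Step 1 bound on $\del h$ radially from $\del\Kcal$. This already yields $|\hu^{00}|\lesssim C_1\eps (s/t)^2$ through the wave gauge, which is all the absorption needs. Do this before the second-derivative and sharp bounds.
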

\begin{remark}
Because $h^{\alpha\beta} = g^{\alpha\beta} - \eta^{\alpha\beta}$, we also have, when $C_1\eps$ is sufficiently small,
\begin{equation}\label{eq1-20-jan-2026}
|L^Jh^{\alpha\beta}|\leq CC_1\eps t^{-1}s^{C(C_1\eps)^{1/2}},\quad |J|\leq N-7.
\end{equation}	
\end{remark}
\begin{proof}
These estimates are established in \cite{PLF-YM-book2}. More precisely,
\eqref{eq3-07-dec-2025} follows from \cite[Eqs.~(5.12a), (5.16a)]{PLF-YM-book2},
\eqref{eq9-03-june-2026} from \cite[Eqs.~(8.15), (8.17)]{PLF-YM-book2},
and \eqref{eq5-20-june-2026} from \cite[Eqs.~(5.14), (5.18a), (5.18b)]{PLF-YM-book2}.
Likewise, \eqref{eq11-20-june-2026} and \eqref{eq7-03-aug-2026}
follow respectively from \cite[Eqs.~(5.15), (5.19)]{PLF-YM-book2}
and \cite[Eq.~(11.3)]{PLF-YM-book2}.
For the full mixed norm in \eqref{eq1-14-dec-2025}, use
\cite[Eq.~(11.19)]{PLF-YM-book2} for terms containing only boosts.
For a term of total order at most $N-7$ containing a translation,
commuting that translation past the boosts reduces the estimate to
$|\del h|_{N-8}$. The refined derivative estimate
\cite[Eq.~(10.9)]{PLF-YM-book2}, valid up to order $N-6$, gives
\[
|\del h|_{N-8}\lesssim_N C_1\eps t^{-1/2}s^{-1+A C_1\eps},
\]
where $A$ depends only on the fixed regularity order.
Since $s^2\geq t$ in $\Kcal$, this is bounded by
\[
C C_1\eps t^{-1}s^{A C_1\eps}
\leq C C_1\eps t^{-1}s^{C(C_1\eps)^{1/2}}
\]
for sufficiently small $C_1\eps$. Together with the pure-boost estimate,
this proves \eqref{eq1-14-dec-2025} without weakening its exponent.
On the bounded initial region $s<2$, the same estimate follows from
local small-data bounds.
The final $s^\delta$ bound follows after decreasing the smallness threshold.
Furthermore, we recall the relations:
\begin{equation}
\aligned
\delu_a = t^{-1}L_a,\quad  \del_{\alpha}\delu_a = t^{-1}\del_{\alpha}L_a + t\del_{\alpha}(t^{-1})\delu_a,
\quad
|\delu_a\delu_b h|\lesssim t^{-2}|h|_2.
\endaligned
\end{equation}
Then \eqref{eq4-01-may-2026} follows from \eqref{eq1-14-dec-2025}. Moreover, using $\delu_a=t^{-1}L_a$ and the above product rule, we have
\[
|\del\delu h|_{N-9}
\lesssim t^{-1}|\del h|_{N-8}+t^{-1}|\delu h|_{N-9}
\lesssim C_1\eps t^{-3/2}s^{-1+\delta},
\]
which gives the first estimate in \eqref{eq7-20-june-2026}; the second one follows in the same way from \eqref{eq1-14-dec-2025}. The estimate \eqref{eq2-07-dec-2025} is an application of \cite[Lemma~4.8]{PLF-YM-book2} together with \eqref{eq3-07-dec-2025}; here the order-$N-8$ good-derivative term is controlled by \eqref{eq1-14-dec-2025} using $\delu_a=t^{-1}L_a$. The estimate \eqref{eq2-20-jan-2026} is obtained by integrating \eqref{eq2-07-dec-2025} in the radial direction; see also \cite[Lemma~7.1, Eq.~(7.2)]{PLF-YM-book2}.  
\end{proof}
\subsection{The nonlinear structure of the Einstein equations}
We recall the nonlinear terms in Einstein equations written within a wave gauge, i.e. the first term in the  right-hand side of \eqref{eq3-29-nov-2025}. It is decomposed (see for example \cite[Lemma~3.1]{Lindblad-Rodnianski-05}) into the null terms and quasi-null ones:
\begin{subequations}\label{eq1-02-dec-2025}
	\begin{equation}
		F_{\alpha \beta}(h;\del h, \del h) = P_{\alpha\beta} + Q_{\alpha\beta}
	\end{equation}
	with quasi-null terms
	\begin{equation}\label{eq1-2026-8-20}
		P_{\alpha \beta} := - \frac{1}{2}g^{\lambda \lambda'}g^{\delta \delta'} \del_{\alpha}g_{\delta \lambda'} \del_{\beta}g_{\lambda \delta'}
		+\frac{1}{4}g^{\delta \delta'}g^{\lambda \lambda'} \del_{\beta}g_{\delta \delta'} \del_{\alpha}g_{\lambda \lambda'}
	\end{equation}
	and null terms
	\begin{equation}
		\aligned
		Q_{\alpha \beta} &:= 
		g^{\lambda \lambda'}g^{\delta \delta'} \del_{\delta}g_{\alpha \lambda'} \del_{\delta'}g_{\beta \lambda}
		-g^{\lambda \lambda'}g^{\delta \delta'} \big
		(\del_{\delta}g_{\alpha \lambda'} \del_{\lambda}g_{\beta \delta'} - \del_{\delta}g_{\beta \delta'} \del_{\lambda}g_{\alpha \lambda'} \big)
		\\
		& \quad + g^{\lambda \lambda'}g^{\delta \delta'}
		\big(\del_{\alpha}g_{\lambda'\delta'} \del_{\delta}g_{\lambda \beta} - \del_{\alpha}g_{\lambda \beta} \del_{\delta}g_{\lambda'\delta'} \big)
		+\frac{1}{2}g^{\lambda \lambda'}g^{\delta \delta'}
		\big(\del_{\alpha}g_{\lambda \beta} \del_{\lambda'}g_{\delta \delta'} - \del_{\alpha}g_{\delta \delta'} \del_{\lambda'}g_{\lambda \beta} \big)
		\\
		& \quad + g^{\lambda \lambda'}g^{\delta \delta'}
		\big(\del_{\beta}g_{\lambda'\delta'} \del_{\delta}g_{\lambda \alpha} - \del_{\beta}g_{\lambda \alpha} \del_{\delta}g_{\lambda'\delta'} \big)+\frac{1}{2}g^{\lambda \lambda'}g^{\delta \delta'}
		\big(\del_{\beta}g_{\lambda \alpha} \del_{\lambda'}g_{\delta \delta'} - \del_{\beta}g_{\delta \delta'} \del_{\lambda'}g_{\lambda \alpha} \big).
		\endaligned
	\end{equation}
\end{subequations}
The null terms are bounded in the following manner
\begin{lemma}\label{lem1-20-june-2026}
	Let $(h_{\alpha\beta},\phi)$ be the global solution to \eqref{eq3-29-nov-2025} together with initial data \eqref{eq main initial-data} satisfying \eqref{eq1-07-dec-2025} and \eqref{eq main smallness-condition}. Then 
	\begin{equation}\label{eq1-20-june-2026}
		\aligned
		|Q_{\alpha\beta}|_{N-8}\lesssim_N& (s/t)^2|\del_th|_{N-8}|\del_th|_{[(N-8)/2]} + (C_1\eps)^2t^{-3+\delta}(t-r)^{-1/2+\delta}
		\\
		\lesssim_N& (C_1\eps)^2 t^{-3+\delta}(t-r)^{\delta},
		\endaligned
	\end{equation}
	\begin{equation}\label{eq2-20-june-2026}
		\aligned
		\big|\eta^{\alpha\beta}P_{\alpha\beta}\big|_{N-8}\lesssim_N& (s/t)^2|\del_th|_{N-8}|\del_th|_{[(N-8)/2]} + (C_1\eps)^2t^{-3+\delta}(t-r)^{-1/2+\delta}
		\\
		\lesssim_N& (C_1\eps)^2 t^{-3+\delta}(t-r)^{\delta}.
		\endaligned
	\end{equation}
\end{lemma}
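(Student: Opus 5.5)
Both estimates rest on the same decomposition of each quadratic term into its parts in the semi-hyperboloidal frame. Write $\del_\alpha=\Psiu_\alpha^\beta\delu_\beta$, so every factor $\del_\mu g\,\del_\nu g$ splits into $\del_t g\,\del_t g$, $\del_t g\,\delu_a g$ and $\delu_a g\,\delu_b g$. The coefficients are homogeneous of degree $0$; by Proposition~\ref{prop 1 homo} they are harmless under $\del^IL^J$. Likewise the factors $g^{\lambda\lambda'}g^{\delta\delta'}=(\eta+h)(\eta+h)$ can be expanded, with the $h$-corrections producing cubic terms. For the null terms $Q_{\alpha\beta}$ the standard null structure applies: after the frame change, the coefficient of $\del_t h\,\del_t h$ is $\etau^{00}=-(s/t)^2$ up to contributions from $\hu^{00}$. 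For $\eta^{\alpha\beta}P_{\alpha\beta}$ the contraction $\eta^{\alpha\beta}\del_\alpha(\cdot)\del_\beta(\cdot)$ is itself a null form, so the same mechanism applies. The bad piece therefore carries the weight $(s/t)^2|\del_t h|_{N-8}|\del_t h|_{[(N-8)/2]}$, which is the first term on the right of both estimates. When $\del^IL^J$ is applied, we distribute derivatives by Leibniz and put the lower-order factor, of order at most $[(N-8)/2]$, in $L^\infty$. We also use that commutators $[\del^IL^J,\delu_a]$ and $[\del^IL^J,\Psiu]$ only generate terms of the same type with homogeneous coefficients.

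Next we bound the remaining pieces with the Proposition. For a good-times-bad term $|\delu h|_{N-8}|\del h|_{N-8}$, we use \eqref{eq4-01-may-2026}, $|\delu h|\lesssim C_1\eps t^{-2}s^\delta$, together with \eqref{eq3-07-dec-2025}, $|\del h|\lesssim C_1\eps t^{-1/2}s^{-1+\delta}$. Since $s^2\simeq t(t-r)$, this gives
\[
(C_1\eps)^2t^{-5/2}s^{-1+2\delta}\lesssim (C_1\eps)^2 t^{-3+\delta}(t-r)^{-1/2+\delta}.
\]
Good-times-good terms are even better. For the cubic corrections, such as $h^{00}\del h\del h$ and the $\hu^{00}$ part of the $\del_t\del_t$ coefficient, we use \eqref{eq1-14-dec-2025} and \eqref{eq2-20-jan-2026}. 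The factor $t^{-1}s^\delta$ compensates for the loss of $(s/t)^2$: it yields $t^{-1}s^{\delta}\cdot t^{-1}s^{-2+2\delta}$, which is of the same type. The main obstacle, and the step requiring the most care, is the high-order bookkeeping. When all $N-8$ derivatives fall on one factor, that factor must be measured only in the norms available at order $N-8$, namely \eqref{eq3-07-dec-2025} up to $N-2$ and \eqref{eq4-01-may-2026} up to $N-8$. These orders are compatible, so the bound closes at order $N-8$.

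Finally, the second line of each estimate follows by inserting the pointwise bound \eqref{eq3-07-dec-2025} into the first term:
\[
(s/t)^2|\del_th|^2\lesssim (s/t)^2(C_1\eps)^2t^{-3}s^{2\delta}(t/s)^2=(C_1\eps)^2t^{-3}s^{2\delta}.
\]
Since $s^{2\delta}\lesssim t^{\delta}(t-r)^{\delta}$, this is $\lesssim (C_1\eps)^2t^{-3+\delta}(t-r)^\delta$. The term $(t-r)^{-1/2+\delta}$ is trivially dominated by $(t-r)^\delta$ because $t-r\geq1$ in $\Kcal$. Outside $\Kcal$ the metric is exactly Schwarzschild with $\phi=0$, and the explicit $m_{\mathrm S}$-terms decay faster there, so it suffices to work in $\Kcal$.
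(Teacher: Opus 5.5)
Your proposal follows the paper's argument. You pass to the semi-hyperboloidal frame and use the null structure of $Q_{\alpha\beta}$ and of the contraction $\eta^{\alpha\beta}\del_\alpha(\cdot)\,\del_\beta(\cdot)$ in $\eta^{\alpha\beta}P_{\alpha\beta}$. This isolates the only bad product, which carries the weight $\etau^{00}=-(s/t)^2$, and every term containing a $\delu_a$ is then bounded by \eqref{eq3-07-dec-2025} and \eqref{eq4-01-may-2026}. Your computations for the good-times-bad terms and for the passage to the second line are correct.

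One step needs repair: the cubic piece $\hu^{00}\del_t h\,\del_t h$ coming from $h^{\delta\delta'}\del_\delta(\cdot)\,\del_{\delta'}(\cdot)$.

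\textbf{Why your bound fails.} The estimate $|h|\lesssim C_1\eps t^{-1}s^{\delta}$ does not compensate for the missing weight. Since $(s/t)^2\simeq (t-r)/t$ in $\Kcal$, you would need $s^{\delta}\lesssim t-r$, which fails near the cone. Concretely, your bound $t^{-2}s^{-2+3\delta}\simeq t^{-3+3\delta/2}(t-r)^{-1+3\delta/2}$ is weaker than both $t^{-3+\delta}(t-r)^{-1/2+\delta}$ and $t^{-3+\delta}(t-r)^{\delta}$ by a factor $t^{\delta/2}$ where $t-r\simeq 1$. The extra factor $C_1\eps$ does not help.

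\textbf{The fix.} The estimate that does the job is \eqref{eq2-20-jan-2026}, which you cite but do not actually use. Because $t-r\geq 1$ in $\Kcal$, one has $t^{-1}\leq (s/t)^2$; also $t^{-1/2}s^{\delta}\leq 1$. Together with $m_{\mathrm S}\lesssim C_1\eps$, this gives $|\hu^{00}|_{N-8}\lesssim C_1\eps\,(s/t)^2$. In other words, $\hu^{00}$ itself carries the $(s/t)^2$ weight. Hence this cubic term is absorbed into $(s/t)^2|\del_t h|_{N-8}|\del_t h|_{[(N-8)/2]}$, since at most one of the three factors exceeds order $[(N-8)/2]$. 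All remaining cubic and higher terms contain $h$ only as a coefficient of a null form, or of a product with a good derivative, and need nothing beyond $|h|_{N-8}\lesssim 1$.
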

\begin{proof}
	The estimate \eqref{eq1-20-june-2026} follows directly from the null structure. More precisely, $Q_{\alpha\beta}$ can be written as finite linear combinations of the following terms
	\begin{equation}
		(s/t)^2\del_th_{\mu\nu}\del_th_{\mu'\nu'},\quad \delu_{\alpha}h_{\mu\nu}\delu_{\beta}h_{\mu'\nu'},\quad (\alpha,\beta)\neq(0,0)
	\end{equation}
	up to cubic terms. Then we apply \eqref{eq3-07-dec-2025} and \eqref{eq4-01-may-2026}. For the estimate \eqref{eq2-20-june-2026}, we {\bf do not} need the wave gauge conditions. In fact we only need to remark that
	\begin{equation}
		\eta^{\alpha\beta}P_{\alpha\beta} = \etau^{00}\Pu_{00} + \sum_{(\alpha,\beta)\neq(0,0)}\etau^{\alpha\beta}\Pu_{\alpha\beta},
	\end{equation}
	and
	\begin{equation}
		\Pu_{\alpha\beta} = \underline{\Phi}_{\alpha}^{\alpha'}\underline{\Phi}_{\beta}^{\beta'}P_{\alpha'\beta'}
		= - \frac{1}{2}g^{\lambda \lambda'}g^{\delta \delta'} \delu_{\alpha}h_{\delta \lambda'} \delu_{\beta}h_{\lambda \delta'}
		+\frac{1}{4}g^{\delta \delta'}g^{\lambda \lambda'} \delu_{\alpha}h_{\delta \delta'} \delu_{\beta}h_{\lambda \lambda'}
	\end{equation}
	Thus
	\begin{equation}\label{eq1-01-aug-2026}
		\aligned
		&\big|\Pu_{00}\big|_{N-8}\lesssim |\del h|_{N-8}^2,
		\\
		&\big|\Pu_{\alpha\beta}\big|_{N-8}\lesssim (C_1\eps)^2 t^{-3+\delta}(t-r)^{-1/2+\delta},\quad (\alpha,\beta)\neq (0,0).
		\endaligned
	\end{equation}
\end{proof}
\begin{corollary}\label{coro1-2026-8-20}
Let $(h_{\alpha\beta},\phi)$ be the global solution to \eqref{eq3-29-nov-2025} together with initial data \eqref{eq main initial-data} satisfying \eqref{eq1-07-dec-2025} and \eqref{eq main smallness-condition}. Then 
\begin{equation}
|F_{\alpha\beta}|_{N-8}\lesssim (C_1\eps)^2t^{-2+\delta}(t-r)^{-1+\delta}.
\end{equation}
\end{corollary}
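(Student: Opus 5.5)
The corollary follows by combining Lemma~\ref{lem1-20-june-2026} with a bound on the full quasi-null term $P_{\alpha\beta}$, not only its trace. Since $F_{\alpha\beta}=P_{\alpha\beta}+Q_{\alpha\beta}$, it suffices to bound $|Q_{\alpha\beta}|_{N-8}$ and $|P_{\alpha\beta}|_{N-8}$ separately by $(C_1\eps)^2t^{-2+\delta}(t-r)^{-1+\delta}$.

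\textbf{The null term.} Estimate \eqref{eq1-20-june-2026} gives
\[
|Q_{\alpha\beta}|_{N-8}\lesssim (C_1\eps)^2t^{-3+\delta}(t-r)^{\delta}.
\]
In $\Kcal$ we have $t-r\le t$, hence $t^{-1}\le (t-r)^{-1}$. Therefore $t^{-3+\delta}(t-r)^{\delta}\le t^{-2+\delta}(t-r)^{-1+\delta}$, which is the desired bound.

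\textbf{The quasi-null term.} There is no null structure to exploit here, so I would use the crude product bound. Writing $g^{\lambda\lambda'}=\eta^{\lambda\lambda'}+h^{\lambda\lambda'}$, with $|h|$ bounded by \eqref{eq1-20-jan-2026} and \eqref{eq1-14-dec-2025}, the Leibniz rule applied to the quadratic expression \eqref{eq1-2026-8-20} gives
\[
|P_{\alpha\beta}|_{N-8}\lesssim \sum_{p_1+p_2\le N-8}|\del h|_{p_1}|\del h|_{p_2}\,(1+|h|_{N-8}).
\]
In each product at least one factor has order at most $[(N-8)/2]\le N-9$. That factor is bounded pointwise by the refined bound $|\del h|_{N-8}\lesssim C_1\eps\, t^{-1/2}s^{-1+\delta}$ quoted in the proof of \eqref{eq1-14-dec-2025}. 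The other factor, of order at most $N-8$, is bounded by the same estimate. Hence
\[
|P_{\alpha\beta}|_{N-8}\lesssim (C_1\eps)^2t^{-1}s^{-2+2\delta}.
\]
Now $s^2=(t-r)(t+r)\simeq t(t-r)$, so this equals $(C_1\eps)^2 t^{-2+\delta}(t-r)^{-1+\delta}$ up to constants. The cubic parts coming from $h^{\lambda\lambda'}$ carry the extra small factor $|h|_{N-8}\lesssim C_1\eps$ and are absorbed.

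\textbf{Main obstacle.} The delicate point is the derivative bookkeeping. The refined $\del h$ bound must be available at order $N-8$, and it is: it holds up to $N-6$ by \cite[Eq.~(10.9)]{PLF-YM-book2}. If one used only \eqref{eq3-07-dec-2025}, the factor $(t/s)^2$ it introduces would give $t^{-1}s^{-2+2\delta}$ again, so even that cruder route gives the same bound. Away from the tip region ($s<2$) no difficulty arises; the tip region is handled by local bounds. Summing the $P$ and $Q$ estimates yields the corollary.
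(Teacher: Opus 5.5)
Your proof is correct and is essentially the paper's argument. You bound $P_{\alpha\beta}$ crudely by $|\del h|_{N-8}|\del h|_{[(N-8)/2]}\lesssim (C_1\eps)^2t^{-1}s^{-2+2\delta}\simeq(C_1\eps)^2t^{-2+\delta}(t-r)^{-1+\delta}$, and you bound $Q_{\alpha\beta}$ by \eqref{eq1-20-june-2026} together with $t-r\le t$. The paper takes the $\del h$ decay directly from \eqref{eq3-07-dec-2025}, which already holds to order $N-2$, so the refined bound you flag as the ``main obstacle'' is not needed.
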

\begin{proof}
By \eqref{eq1-2026-8-20},
\bea
\big|P_{\alpha\beta}\big|_{N-8}\lesssim_N |\del_th|_{N-8}|\del_th|_{[(N-8)/2]} \lesssim (C_1\eps)^2t^{-2+\delta}(t-r)^{-1+\delta}.
\eea
Then by \eqref{eq1-20-june-2026}, we arrive at the conclusion.
\end{proof}
We state a direct result of the above estimates.
\begin{lemma}
	Let $(h_{\alpha\beta},\phi)$ be the global solution to \eqref{eq3-29-nov-2025} together with initial data \eqref{eq main initial-data} satisfying \eqref{eq1-07-dec-2025} and \eqref{eq main smallness-condition}. Then 
	\begin{equation}\label{eq12-20-june-2026}
		\big|-(1+4\mSch t^{-1})\del_t^2h_{\alpha\beta}+\Delta_xh_{\alpha\beta}\big|_{N-9}\lesssim_N (C_1\eps)^2t^{-2+2\delta}(t-r)^{-1+3\delta/2}.
	\end{equation}
\end{lemma}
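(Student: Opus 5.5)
We start from the wave equation for $h_{\alpha\beta}$ in \eqref{eq3-29-nov-2025},
\[
g^{\mu\nu}\del_\mu\del_\nu h_{\alpha\beta} = F_{\alpha\beta} - 16\pi\del_\alpha\phi\del_\beta\phi - 16\pi V(\phi)g_{\alpha\beta},
\]
and move everything except $-(1+4\mSch t^{-1})\del_t^2h_{\alpha\beta}+\Delta_x h_{\alpha\beta}$ to the right-hand side. To do this, we write $g^{\mu\nu}=\eta^{\mu\nu}+h^{\mu\nu}$ and re-express the quasilinear part in the semi-hyperboloidal frame:
\[
h^{\mu\nu}\del_\mu\del_\nu = \hu^{00}\del_t\del_t + \sum_{(\alpha,\beta)\neq(0,0)}\hu^{\alpha\beta}\delu_\alpha\delu_\beta + h^{\mu\nu}(\del_\mu\Phiu_\nu^{\beta'})\del_{\beta'} .
\]
The last term is a lower-order term with homogeneous coefficients of degree $-1$. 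Consequently, the difference between $\Boxt_g h$ and $-(1+4\mSch t^{-1})\del_t^2h+\Delta_x h$ equals
\[
\bigl(\hu^{00}+4\mSch t^{-1}\bigr)\del_t^2 h \;+\; \hu^{\alpha\beta}\delu_\alpha\delu_\beta h \;\text{terms with } (\alpha,\beta)\neq(0,0) \;+\; t^{-1}|h||\del h|\text{-type terms}.
\]
We then apply $\mathscr Z^I$ with $I\in\mathcal I_{N-9,N-9}$ and distribute derivatives by the Leibniz rule, using Proposition~\ref{prop 1 homo} for the homogeneous coefficients.

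\textbf{Source terms.} By Corollary~\ref{coro1-2026-8-20}, $|F|_{N-8}\lesssim (C_1\eps)^2t^{-2+\delta}(t-r)^{-1+\delta}$. By \eqref{eq5-20-june-2026} together with $s^2\simeq t(t-r)$,
\[
|\del\phi\,\del\phi|_{N-9}+|\phi^2 g|_{N-9}\lesssim (C_1\eps)^2(t/s)^2t^{-3}s^{2\delta}\lesssim (C_1\eps)^2t^{-2+\delta}(t-r)^{-1+\delta}.
\]
Both bounds are within the target.

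\textbf{Mixed second derivatives.} For $(\alpha,\beta)\neq(0,0)$, at least one factor is a good derivative. We use \eqref{eq1-20-jan-2026} for $|h^{\mu\nu}|_{N-9}\lesssim C_1\eps t^{-1}s^{\delta}$ (low-order factors are handled by the same pointwise bound, with $N$ large). Combined with \eqref{eq7-20-june-2026}, $|\del\delu h|_{N-9}\lesssim C_1\eps t^{-3/2}s^{-1+\delta}$, this gives
\[
(C_1\eps)^2t^{-5/2}s^{-1+2\delta}\lesssim (C_1\eps)^2 t^{-2+\delta}(t-r)^{-1/2+\delta}\cdot t^{-1}.
\]
The $t^{-1}h\,\del h$ terms are bounded similarly, using $|\del h|_{N-8}\lesssim C_1\eps t^{-1/2}s^{-1+\delta}$.

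\textbf{Main obstacle: the $\del_t^2$ term.} The delicate term is $(\hu^{00}+4\mSch t^{-1})\del_t^2h$. This is where the normalization $1+4\mSch t^{-1}$ must be used. The plan is:
\begin{itemize}
\item Write $\hu^{00}=(\hu^{00}-\underline{h_{\mathrm S}}^{00})+\underline{h_{\mathrm S}}^{00}$.
\item By \eqref{eq3-20-june-2026}, $\underline{h_{\mathrm S}}^{00}+4\mSch t^{-1}=O(\mSch t^{-1}\cdot (t-r+\mSch)/r)$, which is small near the cone and bounded by $\mSch t^{-1}$ in the interior.
\item For the deviation from Schwarzschild, use \eqref{eq2-20-jan-2026}: $|\hu^{00}|_{N-8}\lesssim \mSch t^{-1}+C_1\eps(s/t)^2t^{-1/2}s^{\delta}$. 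The second piece carries the good factor $(s/t)^2=(t-r)(t+r)/t^2$.
\item Multiply these by \eqref{eq9-03-june-2026}, $|\del\del h|_{N-4}\lesssim C_1\eps(t-r)^{-3/2+\delta}t^{-1+\delta}$. The second piece gives
\[
(C_1\eps)^2(t-r)t^{-3/2}\bigl(t(t-r)\bigr)^{\delta/2}(t-r)^{-3/2+\delta}t^{-1+\delta}=(C_1\eps)^2t^{-5/2+3\delta/2}(t-r)^{-1/2+3\delta/2},
\]
which is acceptable.
\item For the Schwarzschild piece, the bound $\mSch t^{-1}(t-r)/r\cdot(t-r)^{-3/2+\delta}t^{-1+\delta}$ is fine when $r\gtrsim t$. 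For $r\lesssim t/2$, the whole coefficient is $O(\mSch t^{-1})$ but $t-r\simeq t$, so the product is $\lesssim \eps^2 t^{-5/2+2\delta}$, also within $t^{-2+2\delta}(t-r)^{-1+3\delta/2}$. Here we use $\mSch\le\eps\le C_1\eps$.
\end{itemize}
Since the difference $\hu^{00}-\underline{h_{\mathrm S}}^{00}$ vanishes outside $\Kcal$, we expect the main bookkeeping issue to be controlling $\mathscr Z^I$ applied to the coefficient $\hu^{00}+4\mSch t^{-1}$ uniformly, including $(t-r)$ versus $s$ conversions and the $s^{\delta}$ losses. Those losses produce the exponents $2\delta$ and $3\delta/2$ in \eqref{eq12-20-june-2026}.
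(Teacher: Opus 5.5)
Your proof is correct after the repairs listed in the second paragraph. It follows the paper for the source terms and for the terms carrying a good derivative, but handles the coefficient of $\del_t^2h$ differently.

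\textbf{Comparison with the paper.} The paper sets $H=\hu^{00}+4\mSch t^{-1}$ and integrates \eqref{eq2-07-dec-2025} radially inward from the cone $r=t-1$. There the metric is exactly Schwarzschild, and \eqref{eq3-20-june-2026} makes the boundary value $O(\mSch t^{-2})$. This yields the sharp bound \eqref{eq8-20-june-2026}, $|H|_{N-8}\lesssim C_1\eps t^{-2+\delta/2}(t-r)^{3/2+\delta/2}$. Against \eqref{eq9-03-june-2026} it gives $(C_1\eps)^2t^{-3+3\delta/2}(t-r)^{3\delta/2}$, well below the target.

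You instead subtract a Schwarzschild reference and invoke \eqref{eq2-20-jan-2026}. That subtraction buys nothing. Directly, \eqref{eq2-20-jan-2026} gives $|\hu^{00}+4\mSch t^{-1}|_{N-8}\lesssim \mSch t^{-1}+C_1\eps(s/t)^2t^{-1/2}s^{\delta}$.
\begin{itemize}
\item The unimproved piece $\mSch t^{-1}$ times \eqref{eq9-03-june-2026} is $(C_1\eps)^2t^{-2+\delta}(t-r)^{-3/2+\delta}$. This is already inside the target because $t-r\geq1$; the paper makes this very remark on the way to \eqref{eq1-02-aug-2026}.
\item The second piece gives your $(C_1\eps)^2t^{-5/2+3\delta/2}(t-r)^{-1/2+3\delta/2}$.
\end{itemize}
So the normalization $1+4\mSch t^{-1}$ is not what makes this lemma true; the same bound holds for $\Box h$. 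Your route is the shorter one for this statement.

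What the paper's route buys is the sharp coefficient bound, together with the radial identity and boundary estimate \eqref{eq2-02-aug-2026}--\eqref{eq3-02-aug-2026}. These are reused in \eqref{eq-2026-8-4}--\eqref{eq4-02-aug-2026}, where the partner factor is only \eqref{eq1-2026-7-10}. There a coefficient of size $\mSch t^{-1}$ would leave $\mSch C_1\eps\, t^{-2+7\delta/2}(t-r)^{-2}$ instead of $(C_1\eps)^2t^{-3+7\delta}(t-r)^{-1}$. That is where subtracting $4\mSch t^{-1}$, later absorbed into $\hat t$, is genuinely needed.

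\textbf{Corrections to your execution.} Several steps are wrong as written, though each is easily repaired.
\begin{itemize}
\item Your reference $\underline{\gSch}^{00}-\etau^{00}$ is singular at $r=\mSch$, which lies in $\Kcal$. For $\mSch\ll r\ll t$ one has $\underline{\gSch}^{00}-\etau^{00}+4\mSch t^{-1}\approx-2\mSch/r$, so it is not bounded by $\mSch t^{-1}$ in the interior. The split can only be used where $r\gtrsim t$.
\item For $r\leq t/2$ the full coefficient is not $O(\mSch t^{-1})$ either. There \eqref{eq2-20-jan-2026} gives only $O(C_1\eps t^{-1/2+\delta})$.
\item Your displayed interior bound $\eps^2t^{-5/2+2\delta}$ does not lie within the target, which for $t-r\simeq t$ is $t^{-3+7\delta/2}$. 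The correct product, $(C_1\eps)^2t^{-3+3\delta}$, does.
\item The frame remainder should read $h^{\mu\nu}\del_\mu\Psiu_\nu^{\nu'}\delu_{\nu'}h$. This slip is harmless, since only its type $t^{-1}|h||\del h|$ is used.
\end{itemize}
Dropping the Schwarzschild split and arguing directly from \eqref{eq2-20-jan-2026} removes the first three issues at once.
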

\begin{proof}
We first establish the estimate:
\begin{equation}\label{eq9-20-june-2026}
\big|\Boxt h_{\alpha\beta}\big|_{N-9}\lesssim_N (C_1\epsilon)^2 t^{-2+\delta}(t-r)^{-1+\delta}.
\end{equation}
For this purpose we need
\begin{equation}
|\del_{\alpha}\phi\del_{\beta}\phi|_{N-8} + |V(\phi)g_{\alpha\beta}|_{N-8} + |F_{\alpha\beta}|_{N-8}\lesssim (C_1\eps)^2t^{-2+\delta}(t-r)^{-1+\delta}.
\end{equation}
The term $F_{\alpha\beta}$ is estimated in Corollary \ref{coro1-2026-8-20}.  For the terms containing $\phi$ we apply \eqref{eq5-20-june-2026}. Then we decompose the quasilinear terms as follows:
	\begin{equation}\label{eq6-20-june-2026}
		h^{\mu\nu}\del_{\mu}\del_{\nu}h_{\alpha\beta} 
		= \hu^{00}\del_t\del_t h_{\alpha\beta} 
		+ \sum_{(\mu,\nu)\neq(0,0)}\hu^{\mu\nu}\delu_{\mu}\delu_{\nu}h_{\alpha\beta} 
		+ h^{\mu\nu}\del_{\mu}\Psiu_{\nu}^{\nu'}\delu_{\nu'}h_{\alpha\beta}.
	\end{equation}
	Due to the homogeneity of $\Psiu$ and by applying \eqref{eq7-20-june-2026},
	\begin{equation}
		\Big|\sum_{(\mu,\nu)\neq(0,0)}\hu^{\mu\nu}\delu_{\mu}\delu_{\nu}h_{\alpha\beta} 
		+ h^{\mu\nu}\del_{\mu}\Psiu_{\nu}^{\nu'}\delu_{\nu'}h_{\alpha\beta}\Big|_{N-9}
		\lesssim_N (C_1\eps)^2t^{-3+\delta}(t-r)^{-1/2+\delta}.
	\end{equation}
	However, we still need to decompose $\hu^{00}$. Set
\[
H:=\hu^{00}+4\mSch t^{-1}.
\]
For a fixed $(t,x)\in \Kcal$, we integrate along the radial segment $\{(t,\lambda x)| 1\leq \lambda\leq (t-1)/r\}$. For $|I|+|J|\leq N-8$, this gives
\begin{equation}\label{eq2-02-aug-2026}
\del^IL^JH(t,x) = \del^IL^JH(t,(t-1)x/r) - \int_{r}^{t-1}\del_r\del^IL^JH\big(t,(\rho/r)x\big)\diff \rho.
\end{equation}
Here
\[
\del_r\del^IL^JH
=\del_r\del^IL^J\hu^{00}+4\mSch\del_r\del^IL^J(t^{-1}),
\qquad
|\del_r\del^IL^J(t^{-1})|\lesssim_N t^{-2}.
\]
Thus, by \eqref{eq2-07-dec-2025} and $\mSch\leq C_1\eps$, the radial integral is bounded by
\[
C_1\eps t^{-2+\delta/2}(t-r)^{3/2+\delta/2}
+C_1\eps t^{-2}(t-r),
\]
and the second term is absorbed by the first one in $\Kcal$.
Recall that
\begin{equation}
\aligned
\hu^{00}(t,(t-1)x/r)=\underline{\gSch}^{00}(t,t-1) - \etau^{00}(t,t-1)
= -2\mSch\left( \frac{1}{t-1-\mSch} + \frac{(t-1)^2}{t^2(t-1+\mSch)} \right).
\endaligned
\end{equation}
For the commuted boundary values, the commutations are performed on the exact Schwarzschild expression before restricting to $r=t-1$. Using
\[
L_a(t-r)=-\omega_a(t-r),
\]
together with \eqref{eq3-20-june-2026} and $\mSch\leq C_1\eps$, we obtain, for large $t$,
\begin{equation}\label{eq3-02-aug-2026}
\big|\hu^{00}\big(t,(t-1)x/r\big)  + 4\mSch t^{-1}\big|_{N-8}\lesssim_N C_1\epsilon t^{-2}.
\end{equation}
Combining this boundary estimate with \eqref{eq2-02-aug-2026}, we obtain
\begin{equation}\label{eq8-20-june-2026}
\big|\hu^{00}+4\mSch t^{-1}\big|_{N-8}\lesssim_N C_1\eps t^{-2+\delta/2}(t-r)^{3/2+\delta/2}.
\end{equation}
Then recalling \eqref{eq9-03-june-2026}, we obtain 
\begin{equation}
\big|\big(\hu^{00}+4\mSch t^{-1}\big)\del_t\del_th_{\alpha\beta}\big|_{N-9}\lesssim_N (C_1\eps)^2t^{-3+3\delta/2}(t-r)^{3\delta/2}.
\end{equation}
Then the desired estimate is obtained.
\end{proof}
\section{Further analysis on the metric components}
\subsection{Decomposition of the wave operator }
We rely on the decay estimate by integration along timelike hyperbolas (see \cite{M-2021-strong}). For convenience, we record a modified form suited to our application.

We first recall the following decomposition established in \cite{M-2021-strong} for wave operator in $\RR^{1+n}$:
\begin{equation}\label{decompo-wave-2}
	\aligned
	-\Box u =& (t-r)^{-\beta}t^{-\alpha}\big((s/t)^2\del_t 
	+ (2x^a/t)\delu_a\big)((t-r)^{\beta}t^{\alpha}\del_t u) + p_{n,\alpha,\beta}(t,r)\del_tu 
	\\
	&- \sum_a \delu_a\delu_a u
	\endaligned
\end{equation}
where
$$
p_{n,\alpha,\beta}(t,r) = \big((n-\alpha) - (\alpha+1)(r/t)^2 - \beta(t-r)t^{-1}\big)t^{-1}.
$$
In order to keep $p_{n,\alpha,\beta}$ positive, one needs:
$$
\alpha \leq \frac{n-1}{2},\quad \beta\leq n-\alpha.
$$
In our case, $n=3$, we set $\alpha=1$ and $1/2<\beta< 2.$ Thus
$$
p_{3,1,\beta}(t,r)=(2-\beta+2\frac{r}{t})\frac{t-r}{t^2} \simeq_\beta \frac{s^2}{t^2}t^{-1}.
$$
Then \eqref{decompo-wave-2} is written as
\begin{equation}\label{eq2-19-08-2022-M}
	\mathcal{J}\big((t-r)^{\beta}t \del_tu\big) + P\ \big((t-r)^{\beta}t \del_tu\big) = S^w[u] + \Delta^w[u]	
\end{equation}
with
$$
\aligned
&\mathcal{J} := \del_t + \frac{2tx^a}{t^2+r^2}\del_a,
\\
&P(t,r) := \frac{t^2}{t^2+r^2}p_{3,1,\beta}(t,r),
\\
&S^w[u] := \frac{t^3(t-r)^{\beta}(-\Box u)}{t^2+r^2},\quad 
\Delta^w[u] := \frac{t^3(t-r)^{\beta}\sum_a\delu_a\delu_a u}{t^2+r^2}.
\endaligned
$$
Let $(t,x)\in\Kcal$ and let $\zeta(\cdot;t,x)$ be the integral curve of $\Jcal$ through $(t,x)$. As observed in \cite{M-2021-strong}, for $r>0$ the quantity
\begin{equation}\label{eq13-19-08-2022-M}
	\frac{t^2-r^2}{r}=c_0
\end{equation}
is constant along $\zeta$, so these integral curves are timelike hyperbolas. Tracing such a curve backward in time, it meets $\Hcal_2^*\cup\del\Kcal$ at a unique point. More precisely, the intersection lies on $\del\Kcal$ when $c_0<8/3$, on $\Hcal_2^*$ when $c_0>8/3$, and at $\Hcal_2^*\cap\del\Kcal$ when $c_0=8/3$. When $r=0$, the integral curve is the vertical line $x=0$ and meets $\Hcal_2^*$.
\subsection{Improved decay bound on the metric components}
Let us denote by $U_{t,x}(\tau) := t(t-r)\del_t\del^IL^J h\big|_{\zeta(\tau;t,x)}$ (i.e., fix $\beta = 1$). Then \eqref{eq2-19-08-2022-M} leads to
\begin{equation}\label{eq12-19-08-2022-M}
	U_{t,x}'(\tau) + P_{t,x}(\tau) U_{t,x}(\tau) = \big(S^w[\del^IL^J h] + \Delta^w[\del^IL^J h]\big)\big|_{\zeta(\tau;t,x)}.
\end{equation}
Here $P_{t,x}(\tau) := P|_{\zeta(\tau;t,x)}$.

Then by integrating the above ODE, we have, for $(t,x)\in \Kcal$,
\begin{equation}\label{eq21-19-08-2022-M}
	U_{t,x}(t) = U_{t,x}(t_0)e^{-\int_{t_0}^tP_{t,x}(\eta) d\eta} 
	+ \int_{t_0}^t\big(S^w[\del^IL^J h] + \Delta^w[\del^IL^J h]\big)\big|_{\zeta(\tau;t,x)} e^{-\int_{\tau}^tP_{t,x}(\eta) d\eta}d\tau. 
\end{equation}

Then we establish the following result.
\begin{proposition}\label{prop1-18-dec-2025}
	Let $(h_{\alpha\beta},\phi)$ be the global solution to \eqref{eq3-29-nov-2025} together with initial data \eqref{eq main initial-data} satisfying \eqref{eq1-07-dec-2025} and \eqref{eq main smallness-condition}. Then the following estimates hold:
	\begin{equation}\label{eq2-2026-7-10}
		|\del h|_{N-9}\lesssim_{\delta,N} C_1\eps t^{-1+7\delta/2}(t-r)^{-1}.
	\end{equation}
\end{proposition}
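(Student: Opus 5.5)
The plan is to apply the integration along the timelike hyperbolas $\zeta$, namely \eqref{eq21-19-08-2022-M}, with $\beta=1$, to $u=\del^IL^Jh_{\alpha\beta}$ for $|I|+|J|\leq N-9$. This bounds $t(t-r)|\del_t\del^IL^Jh|$. The spatial derivatives are then recovered from $\del_a=\delu_a-(x^a/t)\del_t$ together with \eqref{eq4-01-may-2026}. Commuting $\del$ past $\del^IL^J$ only produces lower-order terms of the same type, since $[L_a,\del_\beta]$ is a combination of partial derivatives, so the estimate for $|\del h|_{N-9}$ follows. In \eqref{eq21-19-08-2022-M} the factor $e^{-\int P}$ is at most $1$ because $P\geq0$. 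It therefore suffices to bound the initial term and the integral of $S^w+\Delta^w$ along $\zeta$.

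\textbf{Initial term.} The curve $\zeta$ starts either on $\Hcal_2^*$ or on $\del\Kcal$. On $\Hcal_2^*$ the time $t_0$ is bounded and $U_{t,x}(t_0)\lesssim C_1\eps$ by local bounds. On $\del\Kcal$ we have $t_0-r_0=1$, and the metric there is exactly Schwarzschild, so $|\del\del^IL^J h|\lesssim m_{\mathrm S}t_0^{-2}$. Hence $U(t_0)\lesssim \eps$, which gives a contribution $\eps\, t^{-1}(t-r)^{-1}$.

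\textbf{Source terms.}

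1. For $\Delta^w$ we use $|\delu\delu h|_{N-9}\lesssim C_1\eps t^{-3}s^\delta$ from \eqref{eq7-20-june-2026}. This gives $|\Delta^w|\lesssim C_1\eps (t-r)t^{-2}s^{\delta}\lesssim C_1\eps\, t^{-1+\delta}(t-r)^{\delta/2}$.

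2. For $S^w$ we use $\Box h=\Boxt_g h - h^{\mu\nu}\del_\mu\del_\nu h$. Rather than \eqref{eq9-20-june-2026} alone, we use \eqref{eq12-20-june-2026}. The term $4m_{\mathrm S}t^{-1}\del_t^2h$ has to be handled separately, since $|\Box h|$ itself contains $m_{\mathrm S} t^{-1}|\del\del h|$. Using \eqref{eq9-03-june-2026}, this term is bounded by $\eps^2 t^{-2+\delta}(t-r)^{-3/2+\delta}$. The remainder is bounded by $(C_1\eps)^2t^{-2+2\delta}(t-r)^{-1+3\delta/2}$. Multiplying by $t(t-r)$ gives
\[
|S^w|\lesssim (C_1\eps)^2t^{-1+2\delta}(t-r)^{3\delta/2}.
\]

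\textbf{Integration along $\zeta$.} Along $\zeta$ we have $t-r=c_0r/(t+r)$ with $c_0$ constant. Hence $t-r$ is essentially increasing and comparable to its endpoint value when $r\gtrsim t$; in the interior $t-r\simeq t$. The integrand is therefore at most $C_1\eps\,\tau^{-1+2\delta}(t-r)^{3\delta/2}$. Integrating in $\tau$ up to $t$ gives $C_1\eps\, t^{2\delta}(t-r)^{3\delta/2}\lesssim C_1\eps\, t^{7\delta/2}$, and dividing by $t(t-r)$ yields the claim.

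\textbf{Main obstacle.} The difficulty is the quasilinear term $h^{00}\del_t^2h$. Its Schwarzschild part $-4m_{\mathrm S}t^{-1}\del_t^2h$ is only borderline integrable against $t(t-r)$. I would try first to bound it with \eqref{eq9-03-june-2026} and the extra smallness $m_{\mathrm S}\leq\eps$, absorbing the logarithmic loss into the $t^{\delta}$ slack. If the factor $(t-r)^{-1/2}$ is not integrable along the hyperbola, the fallback is to move $4m_{\mathrm S}t^{-1}\del_t$ into the transport operator $\Jcal$, i.e.\ to modify $P$ by a term of size $O(m_{\mathrm S}t^{-1})$. The modified integrating factor then loses at most $t^{Cm_{\mathrm S}}\leq t^{\delta/2}$, which fits within the stated exponent $7\delta/2$.
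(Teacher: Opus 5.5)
Your proposal is correct and follows the paper's proof: you integrate the transport ODE \eqref{eq12-19-08-2022-M}--\eqref{eq21-19-08-2022-M} with $\beta=1$, bound $\Box h$ via \eqref{eq12-20-june-2026} together with \eqref{eq9-03-june-2026} for the term $4\mSch t^{-1}\del_t^2h$, bound $\Delta^w$ by the $\delu\delu h$ estimate, and use $|U_{t,x}(t_0)|\lesssim C_1\eps$; you also supply the commutator and spatial-derivative details that the paper leaves implicit. The fallback in your last paragraph is not needed: since $t-r\geq1$, the Schwarzschild-part contribution satisfies $t(t-r)\,\mSch t^{-1}|\del_t^2 h|\lesssim (C_1\eps)^2t^{-1+\delta}(t-r)^{-1/2+\delta}\leq (C_1\eps)^2t^{-1+\delta}$, which integrates along $\zeta$ to $O(t^{\delta})$, so there is no borderline term.
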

\begin{proof}
	We apply \eqref{eq12-19-08-2022-M} -- \eqref{eq21-19-08-2022-M} on \eqref{eq12-20-june-2026}, where we remark that
	\begin{equation}
		|\mSch t^{-1}\del_t\del_th|_{N-9}\lesssim (C_1\eps)^2 t^{-2+\delta}(t-r)^{-3/2+\delta}.
	\end{equation}
	Thus combined with \eqref{eq12-20-june-2026}, we obtain
	\begin{equation}\label{eq1-02-aug-2026}
	|\Box h_{\alpha\beta}|_{N-9}\lesssim (C_1\eps)^2 t^{-2+2\delta}(t-r)^{-1+3\delta/2}.
	\end{equation}
	
	Then for $|I|+|J|\leq N-9$,
	\begin{equation}
		\big|S^w[\del^IL^J h] + \Delta^w[\del^IL^J h]\big|\lesssim_NC_1\eps t^{-1+7\delta/2}.
	\end{equation}
	We also remark that 
	\begin{equation}
		|U_{t,x}(t_0)|\lesssim \sup_{(t,x)\in\Hcal_2^*\cup\{r=t-1\}}|t(t-r)\del_t\del^IL^J h_{\alpha\beta}|\lesssim C_1\eps.
	\end{equation}
	Thus
	we obtain the desired estimate.
\end{proof}

\begin{corollary}
	Let $(h_{\alpha\beta},\phi)$ be the global solution to
	\eqref{eq3-29-nov-2025}, arising from the initial data
	\eqref{eq main initial-data} satisfying
	\eqref{eq1-07-dec-2025} and \eqref{eq main smallness-condition}.
	Then
	\begin{equation}\label{eq10-20-june-2026}
		|\eta^{\alpha\beta}F_{\alpha\beta}|_{N-9}
		\lesssim_{\delta,N}
		(C_1\epsilon)^2t^{-3+7\delta}(t-r)^{-1},
	\end{equation}
	\begin{equation}\label{eq1-2026-7-10}
		|\del_t\del_t h_{\alpha\beta}|_{N-10}
		\lesssim_{\delta,N} CC_1\eps t^{-1+\frac{7}{2}\delta}(t-r)^{-2}.
	\end{equation}
\end{corollary}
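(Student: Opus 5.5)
Both estimates will follow from bounds already in hand, namely Proposition~\ref{prop1-18-dec-2025} and the quasilinear decomposition \eqref{eq12-20-june-2026}, together with the null/quasi-null structure of Lemma~\ref{lem1-20-june-2026}.

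\textbf{The trace estimate \eqref{eq10-20-june-2026}.} I start from $\eta^{\alpha\beta}F_{\alpha\beta}=\eta^{\alpha\beta}Q_{\alpha\beta}+\eta^{\alpha\beta}P_{\alpha\beta}$. Both pieces were reduced, in \eqref{eq1-20-june-2026}--\eqref{eq2-20-june-2026}, to
\[
(s/t)^2|\del_t h|_{N-8}|\del_t h|_{[(N-8)/2]}+(C_1\eps)^2t^{-3+\delta}(t-r)^{-1/2+\delta}.
\]
Redoing this reduction at order $N-9$, I will bound the bad product with \eqref{eq2-2026-7-10} applied twice. Using $(s/t)^2\simeq (t-r)/t$, this gives
\[
(t-r)t^{-1}\cdot (C_1\eps)^2t^{-2+7\delta}(t-r)^{-2}=(C_1\eps)^2t^{-3+7\delta}(t-r)^{-1}.
\]
The remaining good terms are of the form $\delu h\cdot\del h$, cubic terms, and $|\delu h|_{N-9}\lesssim C_1\eps t^{-2}s^\delta$ from \eqref{eq4-01-may-2026}. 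Multiplying the latter by $|\del h|\lesssim C_1\eps t^{-1+7\delta/2}(t-r)^{-1}$ gives $(C_1\eps)^2t^{-3+4\delta}(t-r)^{-1}$, which is dominated by the target. The leftover term $t^{-3+\delta}(t-r)^{-1/2+\delta}$ from the lemma comes from cruder bounds; I will re-estimate it with \eqref{eq2-2026-7-10} so that it too fits under $t^{-3+7\delta}(t-r)^{-1}$. The only care needed is to make sure that every contribution in the proof of Lemma~\ref{lem1-20-june-2026} is of the form good $\times$ bad or $(s/t)^2$ bad $\times$ bad.

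\textbf{The second-derivative estimate \eqref{eq1-2026-7-10}.} I will use the standard identity expressing $\del_t\del_t$ through $\Box$ and good derivatives:
\[
(t-r)(t+r)t^{-2}\,\del_t\del_t u \;=\; \text{(terms in }\Box u\text{)} + t^{-1}\del\, L u\text{-type terms} + t^{-1}\del u,
\]
or equivalently the decomposition \eqref{decompo-wave-2} read as an identity for $(s/t)^2\del_t\del_t u$. Concretely, I write $-\Box u=(s/t)^2\del_t^2u+(2x^a/t)\delu_a\del_t u-\delu_a\delu_a u+(\text{lower order})\,t^{-1}\del_t u$. I then apply this to $u=\del^IL^Jh_{\alpha\beta}$ with $|I|+|J|\le N-10$ and bound the pieces as follows.
\begin{itemize}
\item The term $\Box u$ is controlled by \eqref{eq1-02-aug-2026}, giving $(C_1\eps)^2t^{-2+2\delta}(t-r)^{-1+3\delta/2}$. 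Since $(t-r)^{1+3\delta/2}\le t^{3\delta/2}$, this is below $C_1\eps\, t^{-1+7\delta/2}(t-r)^{-2}\cdot (t-r)/t$.
\item The term $\delu_a\del_t u=t^{-1}\del_tL_a u+O(t^{-1})\del u$ is bounded by \eqref{eq2-2026-7-10} at one higher order (which is why the index drops to $N-10$), giving $C_1\eps t^{-2+7\delta/2}(t-r)^{-1}$.
\item The term $\delu\delu u$ is bounded by \eqref{eq7-20-june-2026}.
\end{itemize}
Dividing by $(s/t)^2\simeq (t-r)/t$ then yields $C_1\eps t^{-1+7\delta/2}(t-r)^{-2}$.

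\textbf{Main obstacle.} The step that needs attention is the bookkeeping of commutators $[\del^IL^J,\Box]$ and of the $\delu\delu$ term after division by $(s/t)^2$. The bound $t^{-3}s^\delta\cdot t/(t-r)$ is fine only because $s^\delta\le t^{\delta}$. For the same reason, the number of derivatives has to be tracked so that \eqref{eq2-2026-7-10} is applied at order at most $N-9$.
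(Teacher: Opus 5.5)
Your proposal is correct and follows the paper's proof: the trace bound comes from substituting \eqref{eq2-2026-7-10} into the null/quasi-null structure of Lemma~\ref{lem1-20-june-2026}. The second-derivative bound is the estimate $(s/t)^2|\del\del u|_{p,k}\lesssim|\Box u|_{p,k}+t^{-1}|\del u|_{p+1,k+1}$, which the paper quotes from \cite[Proposition~2.4]{M-2021-strong} and you rederive from the frame decomposition of $\Box$, combined with \eqref{eq1-02-aug-2026} and \eqref{eq2-2026-7-10}.

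Two small corrections. First, for cubic terms $h\,\del h\,\del h$ with two bad derivatives and no $(s/t)^2$ weight, apply \eqref{eq3-07-dec-2025} to one factor, since using \eqref{eq2-2026-7-10} on both loses a factor $s^\delta/(t-r)$ near the cone. Second, the comparison needed for the $\Box u$ term is $(t-r)^{3\delta/2}\le t^{3\delta/2}$, not $(t-r)^{1+3\delta/2}\le t^{3\delta/2}$.
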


\begin{proof}
	To prove \eqref{eq10-20-june-2026}, we simply substitute the estimate
	\eqref{eq2-2026-7-10} into the expressions for
	$Q_{\alpha\beta}$ and $P_{\alpha\beta}$.

	The estimate \eqref{eq1-2026-7-10} is due to \cite[Lemmas~8.1 and~8.2]{PLF-YM-book2}. However, here we apply a clearer form from \cite[Proposition~2.4]{M-2021-strong}:
	\begin{equation}
	(s/t)^2|\del\del u|_{p,k}\lesssim |\Box u|_{p,k} + t^{-1}|\del u|_{p+1,k+1}.
	\end{equation}
	Then we apply \eqref{eq2-2026-7-10} and \eqref{eq1-02-aug-2026}.
\end{proof}

\begin{corollary}
Let $(h_{\alpha\beta},\phi)$ be the global solution to \eqref{eq3-29-nov-2025}, arising from the initial data
\eqref{eq main initial-data} satisfying \eqref{eq1-07-dec-2025} and \eqref{eq main smallness-condition}.
Then
\begin{equation}\label{eq4-02-aug-2026}
\big|\big(\hu^{00}+4\mSch t^{-1}\big)\del_t\del_th_{\alpha\beta}\big|_{N-10}
\lesssim (C_1\eps)^2t^{-3+7\delta}(t-r)^{-1}.
\end{equation}
\end{corollary}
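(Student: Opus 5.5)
The estimate \eqref{eq4-02-aug-2026} is a product bound. I would split the $|\cdot|_{N-10}$ norm of the product by the Leibniz rule. Vector fields in $\mathscr{Z}$ are derivations, so for any $K\in\mathcal I_{N-10,k}$ the quantity $\mathscr{Z}^K\big((\hu^{00}+4\mSch t^{-1})\del_t\del_th_{\alpha\beta}\big)$ is a finite sum of products $\mathscr{Z}^{K_1}(\hu^{00}+4\mSch t^{-1})\cdot\mathscr{Z}^{K_2}\del_t\del_th_{\alpha\beta}$ with $K_1\cup K_2=K$. This gives
\[
\big|\big(\hu^{00}+4\mSch t^{-1}\big)\del_t\del_th_{\alpha\beta}\big|_{N-10}\lesssim_N \big|\hu^{00}+4\mSch t^{-1}\big|_{N-10}\,|\del_t\del_t h_{\alpha\beta}|_{N-10}.
\]
It then suffices to bound each factor with the estimates already available.

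For the first factor I use \eqref{eq8-20-june-2026}, valid at order $N-8\geq N-10$:
\[
|\hu^{00}+4\mSch t^{-1}|_{N-10}\lesssim C_1\eps\, t^{-2+\delta/2}(t-r)^{3/2+\delta/2}.
\]
For the second factor I use the improved bound \eqref{eq1-2026-7-10}:
\[
|\del_t\del_th|_{N-10}\lesssim C_1\eps\, t^{-1+7\delta/2}(t-r)^{-2}.
\]
Multiplying, the product is bounded by $(C_1\eps)^2 t^{-3+4\delta}(t-r)^{-1/2+\delta/2}$.

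The exponent of $(t-r)$ here is $-1/2+\delta/2$, while the target exponent is $-1$, so a conversion is needed. In $\Kcal$ we have $1\leq t-r\leq t$, which gives
\[
(t-r)^{-1/2+\delta/2}=(t-r)^{-1}(t-r)^{1/2+\delta/2}\leq (t-r)^{-1}t^{1/2+\delta/2}.
\]
The resulting power of $t$ is $t^{-5/2+9\delta/2}$. This is at most $t^{-3+7\delta}$ only when $5\delta/2\geq 1/2$, which fails for small $\delta$, so the direct product does not close.

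That conversion is the main obstacle: \eqref{eq8-20-june-2026} loses a factor $(t-r)^{3/2}$ in the interior, where $t-r\sim t$. The remedy is to use a better bound for $\hu^{00}+4\mSch t^{-1}$ in the interior. By \eqref{eq2-20-jan-2026} and $\mSch\leq C_1\eps$, the crude bound $|\hu^{00}+4\mSch t^{-1}|_{N-10}\lesssim C_1\eps\,t^{-1+\delta}$ also holds, using $s^\delta\leq t^\delta$ and $(s/t)^2t^{-1/2}\leq t^{-1}(t-r)^{1/2}\cdot$const. I would take the minimum of the two bounds. Combined with \eqref{eq1-2026-7-10}, the crude bound gives
\[
(C_1\eps)^2 t^{-2+9\delta/2}(t-r)^{-2}\leq (C_1\eps)^2t^{-3+9\delta/2}(t-r)^{-1}\quad\text{whenever } t-r\geq t/2 .
\]

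Near the cone, where $t-r\leq t/2$, I would instead use \eqref{eq8-20-june-2026}. Writing $(t-r)^{-1/2+\delta/2}\leq(t-r)^{-1}(t-r)^{1/2+\delta/2}$ and interpolating between the two bounds, I would tune the split point (for instance $t-r\sim t^{\theta}$) so that every resulting power of $t$ stays within $t^{-3+7\delta}$. Absorbing the leftover $\delta$-losses into the allowed $7\delta$ is the delicate part. If it fails, the fallback is to rerun the radial integration \eqref{eq2-02-aug-2026} with the improved $|\del h|_{N-9}$ from \eqref{eq2-2026-7-10}, which sharpens $|\del\hu^{00}|$ and hence $|\hu^{00}+4\mSch t^{-1}|$ directly.
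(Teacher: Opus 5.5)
\textbf{Your main route does not close.} You split $\Kcal$ and use the better of the crude bound and \eqref{eq8-20-june-2026}, but for the relevant $\delta$ no choice of split point makes this work. Write $t-r=t^{\theta}$.
\begin{itemize}
\item The crude product $(C_1\eps)^2t^{-2+9\delta/2}(t-r)^{-2}$ is below the right-hand side of \eqref{eq4-02-aug-2026} only when $\theta\geq 1-5\delta/2$.
\item The product $(C_1\eps)^2t^{-3+4\delta}(t-r)^{-1/2+\delta/2}$ coming from \eqref{eq8-20-june-2026} is below it only when $\theta\leq 6\delta/(1+\delta)$.
\end{itemize}
For $\delta<1/9$ these two ranges leave an uncovered zone; for instance $t-r\simeq t^{2/3}$ satisfies neither condition. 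Interpolating between two upper bounds is never better than their minimum, so tuning the split cannot help.

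There is also a slip in your crude bound. Estimate \eqref{eq2-20-jan-2026} does not give $t^{-1+\delta}$, because its second term is of size $C_1\eps\, t^{-1/2+\delta}$ where $t-r\simeq t$. The crude bound must instead come from \eqref{eq1-14-dec-2025} together with the degree-$0$ homogeneity of the frame coefficients. This does not change the conclusion above.

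The obstruction is exactly the loss you diagnosed. Estimate \eqref{eq8-20-june-2026} is the radial integral of \eqref{eq2-07-dec-2025}, which grows like $(t-r)^{1/2+\delta/2}$. That half power cannot be recovered from the old estimates by any case analysis.

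\textbf{The fix is your fallback.} It is exactly the paper's proof, it closes with no region splitting, and it should have been your main argument.
\begin{itemize}
\item Apply the wave-gauge bound in the first line of \eqref{eq2-07-dec-2025} at order $N-9$, now inserting the improved estimate \eqref{eq2-2026-7-10}. The dominant term becomes
$(s/t)^2|\del h|_{N-9}\lesssim (t-r)t^{-1}\cdot C_1\eps\, t^{-1+7\delta/2}(t-r)^{-1}=C_1\eps\, t^{-2+7\delta/2}$.
The other terms in that bound are at most of the same size. Hence $|\del\hu^{00}|_{N-9}\lesssim C_1\eps\, t^{-2+7\delta/2}$, uniformly in $t-r$.
\item Use the radial identity \eqref{eq2-02-aug-2026} at order $N-9$, with the boundary value \eqref{eq3-02-aug-2026}. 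This gives $|\hu^{00}+4\mSch t^{-1}|_{N-9}\lesssim C_1\eps\, t^{-2+7\delta/2}(t-r)$, a gain of $(t-r)^{1/2+\delta/2}$ over \eqref{eq8-20-june-2026}.
\item Your Leibniz step combined with \eqref{eq1-2026-7-10} now gives $(C_1\eps)^2t^{-3+7\delta}(t-r)^{-1}$ directly. One factor of $t-r$ cancels against $(t-r)^{-2}$, and the powers of $t$ add up to exactly $-3+7\delta$.
\end{itemize}
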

\begin{proof}
Applying the same wave-coordinate estimate at order $N-9$ and combining it with \eqref{eq2-2026-7-10}, we obtain
\begin{equation}\label{eq:del-hu00-N9}
|\del \hu^{00}|_{N-9}\lesssim_{\delta,N} C_1\eps t^{-2+7\delta/2}.
\end{equation}
Applying the radial identity \eqref{eq2-02-aug-2026} at order $N-9$, together with
\[
|\del_r\del^IL^J(t^{-1})|\lesssim_N t^{-2},
\]
and using \eqref{eq3-02-aug-2026} for the boundary term, we obtain
\begin{equation}\label{eq-2026-8-4}
\big|\big(\hu^{00}+4\mSch t^{-1}\big)\big|_{N-9}\lesssim_{\delta,N}C_1\eps t^{-2+7\delta/2}(t-r).
\end{equation}
Then combined with \eqref{eq1-2026-7-10}, we obtain \eqref{eq4-02-aug-2026}.
\end{proof}

\section{Further analysis on the scalar field}
\subsection{Inverse energy estimates}
We introduce 
\begin{equation}
\mathcal{D}_{t_0} := \Big\{(t,x)|t_0\leq t\leq \sqrt{t_0^2+r^2}\Big\}\cap \Kcal.
\end{equation}
\begin{proposition}[Inverse energy estimate under hyperboloidal foliation]\label{prop1-20-jan-2026}
Let $(h_{\alpha\beta},\phi)$ be the global solution to \eqref{eq3-29-nov-2025} together with initial data \eqref{eq main initial-data} satisfying \eqref{eq1-07-dec-2025} and \eqref{eq main smallness-condition}. Then there exists a constant $A >1$ such that
\begin{equation}
A^{-1}\Ebf_c(2,\phi) \leq  \Ebf_c(s,\phi) \leq A\Ebf_c(2,\phi),\quad \forall s\geq 2. 
\end{equation}
\end{proposition}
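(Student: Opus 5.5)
The plan is a Gronwall argument in the hyperboloidal time $s$, applied to the curved energy identity obtained by integrating \eqref{eq-div-14-dec-2025} over $\Hcal_{[2,s]}$ with $f=0$ (for $V(\phi)=c^2\phi^2/2$ the Klein--Gordon equation is exactly $\Boxt_g\phi-c^2\phi=0$). The identity reads
\[
\Ebf_{g,c}(s_1,\phi)-\Ebf_{g,c}(s_0,\phi)=\int_{s_0}^{s_1}\int_{\Hcal_s^*}(s/t)R_1[h,\phi]\,\diff x\diff s .
\]
The boundary flux on $\del\Kcal$ vanishes because $\phi$ is supported in $\{r<t-1\}$ by finite speed of propagation.

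Two ingredients are needed. The first is the equivalence $\Ebf_{g,c}(s,\phi)\simeq\Ebf_c(s,\phi)$ with constants close to $1$. This follows from \eqref{eq1-20-jan-2026} and \eqref{eq2-20-jan-2026}, writing the quadratic form in the semi-hyperboloidal frame. The delicate coefficient is $\hu^{00}$ multiplying $|\del_t\phi|^2$, since the flat weight there is only $(s/t)^2$. By \eqref{eq8-20-june-2026} we have $|\hu^{00}+4\mSch t^{-1}|\lesssim C_1\eps\, t^{-2+\delta/2}(t-r)^{3/2+\delta/2}$. Both $\mSch t^{-1}$ and this bound are $\ll (s/t)^2\simeq (t-r)/t$ in $\Kcal$, because $t-r\geq1$ there and $t-r\le t$. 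This is the same reasoning used in \cite{PLF-YM-book2} to prove energy equivalence.

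The second ingredient is the bound
\[
\Big|\int_{\Hcal_s^*}(s/t)R_1\,\diff x\Big|\lesssim C_1\eps\, s^{-1-\theta}\,\Ebf_c(s,\phi),\qquad \theta>0,
\]
that is, an integrable-in-$s$ multiple of the energy. Once these are in place, Gronwall in both directions gives
\[
\Ebf_{g,c}(s)\le \Ebf_{g,c}(2)\,e^{C\int_2^\infty C_1\eps\sigma^{-1-\theta}\diff\sigma},\qquad \Ebf_{g,c}(s)\ge \Ebf_{g,c}(2)\,e^{-C\int_2^\infty C_1\eps\sigma^{-1-\theta}\diff\sigma},
\]
and combining with the equivalence yields the constant $A$.

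The obstacle is the second ingredient. Naively, $|\del h|\lesssim C_1\eps t^{-1/2}s^{-1+\delta}$ times $|\del\phi|^2$ produces $s^{-1+\delta}$, which is not integrable. I would decompose $R_1$ in the frame $\delu$ as $\del_\alpha g^{\alpha\beta}\cdots = \del\hu\cdot\delu\phi\cdots$ plus homogeneous corrections. Terms containing at least one good derivative $\delu\phi$ or a factor $(s/t)\del_t\phi$ are already controlled by the energy density. The worst terms are $(\del_t\hu^{00})|\del_t\phi|^2$ and $\del_\alpha\hu^{\alpha 0}\,\del_t\phi\,\del_t\phi$.

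To handle these I plan three steps.
\begin{itemize}
\item Use \eqref{eq:del-hu00-N9}, $|\del\hu^{00}|\lesssim C_1\eps t^{-2+7\delta/2}$, together with $\mSch\del(t^{-1})=O(\mSch t^{-2})$, so that $t^{-2+7\delta/2}|\del_t\phi|^2\lesssim t^{-1+7\delta/2}(t-r)^{-1}(s/t)^2|\del_t\phi|^2$. Since $(t-r)t\simeq s^2$, this is $\lesssim s^{-2+7\delta/2}\cdot$ energy density, which is integrable.
\item For the remaining components $\hu^{a0},\hu^{ab}$, use the wave-gauge relation (as in \cite[Lemma~4.8]{PLF-YM-book2}) to express bad derivatives through good derivatives, and bound them by \eqref{eq2-2026-7-10}: $t^{-1+7\delta/2}(t-r)^{-1}\simeq t^{7\delta/2}s^{-2}$. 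Paired with $(s/t)|\del\phi|$ and $|\delu\phi|$, the factor $(s/t)$ from the measure gives an integrable rate.
\item Where $t^{7\delta/2}$ appears rather than $s$, use $t\le s^2$, so the rate becomes $s^{-2+7\delta}$, still integrable since $\delta<1/9$.
\end{itemize}
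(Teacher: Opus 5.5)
Your overall route is the paper's. You integrate \eqref{eq-div-14-dec-2025} over $\Hcal_{[2,s]}$, compare $\Ebf_{g,c}$ with $\Ebf_c$, and split $R_1$ in the semi-hyperboloidal frame so that $(s/t)R_1$ is an integrable-in-$s$ multiple of the energy density, then close with Gronwall. The paper instead does the upper Gronwall first and re-inserts it into the identity to get the lower bound; this is equivalent to your two-sided Gronwall. Using \eqref{eq:del-hu00-N9} and \eqref{eq8-20-june-2026} in place of the paper's \eqref{eq2-07-dec-2025} and \eqref{eq2-20-jan-2026} is harmless. The weaker \eqref{eq2-07-dec-2025} already gives the weight $(t/s)|\del_t\hu^{00}|\lesssim C_1\eps s^{-3/2+\delta}$.

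One step, however, fails as you wrote it. In $-2(\del_\alpha g^{\alpha\beta})\del_t\phi\del_\beta\phi$, the $\beta=0$ part in the frame is $(\del_t\hu^{00}+\delu_a\hu^{a0})|\del_t\phi|^2$, plus homogeneous terms of size $t^{-1}|h|$.

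The piece $\delu_a\hu^{a0}|\del_t\phi|^2$ contains no good derivative of $\phi$. So your second bullet, which pairs \eqref{eq2-2026-7-10} with $(s/t)|\del\phi|$ and $|\delu\phi|$, does not apply to it. If you bound $|\delu h|\le2|\del h|$ by \eqref{eq2-2026-7-10}, the Gronwall weight becomes $(t/s)|\del h|$. Near the edge of the support, where $t-r\simeq1$ and $t\simeq s^2/2$, this is of size $C_1\eps s^{-1+7\delta}$, which is not integrable.

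The wave gauge is also not the mechanism for $\hu^{0a},\hu^{ab}$. The wave-gauge estimate used here, \eqref{eq2-07-dec-2025}, concerns $\hu^{00}$. The bad-derivative terms $\del_t\hu^{0a}$ and $\del_t\hu^{ab}$ need nothing more than the good $\phi$-derivative already present in the same product.

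The fix is to notice that the derivative falling on $h$ in this term is tangential. Then \eqref{eq4-01-may-2026}, $|\delu h|\lesssim C_1\eps t^{-2}s^{\delta}$, gives $(t/s)|\delu h|\lesssim C_1\eps s^{-2+\delta}$ on $\Hcal_s^*$; \eqref{eq3-07-dec-2025} would also do. With that correction, the remaining terms are handled as you describe and the argument closes exactly as in the paper.
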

\begin{proof}
We recall
\begin{equation}\label{eq2-12-june-2026}
R_1[h,\phi]:= (\del_t g^{\alpha\beta})\del_{\alpha}\phi \del_{\beta}\phi-2(\del_{\alpha}g^{\alpha\beta})\del_t\phi \del_\beta\phi .
\end{equation}
Integrating equation \eqref{eq-div-14-dec-2025} in $\Hcal_{[2,s_1]}$ and using Stokes' formula and the fact that $\phi$ vanishes near $\del\Kcal$, we obtain
\begin{equation}\label{eq1-28-june-2026}
\Ebf_{g,c}(s_1,\phi) - \Ebf_{g,c}(2,\phi) = \int_{2}^{s_1}\int_{\Hcal^*_s} (s/t)R_1[h,\phi] \diff x \diff s.
\end{equation}
In order to estimate the right-hand side integral, we rewrite $R_1$ in the semi-hyperboloidal frame. Recall that
$$\delu_{\alpha} = \Phiu_{\alpha}^{\beta}\del_{\beta},\quad \del_{\alpha} = \Psiu_{\alpha}^{\beta}\delu_{\beta},$$
and 
$$h^{\alpha \beta}=\Phiu_{\alpha^\prime}^{\alpha}\Phiu_{\beta^\prime}^{\beta}\hu^{\alpha^{\prime}\beta^{\prime}},$$
Recall $g^{\alpha\beta}= \eta^{\alpha\beta}+h^{\alpha \beta},$ for the first term in $R_1$ we have
\begin{equation}
		\aligned
		(\del_t g^{\alpha\beta})\del_{\alpha}\phi \del_{\beta}\phi
		&=(\del_t h^{\alpha\beta})\del_{\alpha}\phi \del_{\beta}\phi
		\\
		&=(\del_t \hu^{00})\del_t\phi \del_t\phi 
		+
		\sum_{(\alpha, \beta)\neq (0,0)}(\del_t \hu^{\alpha \beta})\delu_{\alpha}\phi \delu_{\beta}\phi 
		+
		\del_t (\Phiu_{\alpha^\prime}^{\alpha}\Phiu_{\beta^\prime}^{\beta})\hu^{\alpha^{\prime}\beta^{\prime}}\del_{\alpha}\phi \del_{\beta}\phi.
		\endaligned
	\end{equation}
	We begin to estimate the integral of each term on the right-hand side. By \eqref{eq2-07-dec-2025}, we obtain, on any time interval $[2,s_1]$,
	\begin{equation}
		\aligned
		\int_{2}^{s_1}\int_{\Hcal^*_s} |(\del_t \hu^{00})\del_t\phi \del_t\phi | (s/t) \diff x \diff s
		&\leq 
		\int_{2}^{s_1} CC_1\eps s^{-3/2+3\delta/2} \| (s/t)\del_t\phi \|^2_{L^2(\Hcal^*_s)} \diff s
		\\
		&\leq
		\int_{2}^{s_1} CC_1\eps s^{-3/2+3\delta/2} \Ebf_{c}(s,\phi) \diff s.
		\endaligned
	\end{equation}
	Since $\Phiu_\alpha^\beta$ and $\Psiu_\alpha^\beta$ are smooth and homogeneous of degree $0$, by Proposition \ref{prop 1 homo}, we have
	$$
	|\del_t (\Phiu_{\alpha^\prime}^{\alpha}\Phiu_{\beta^\prime}^{\beta})|
	+
	|\del_t (\Psiu_{\alpha^\prime}^{\alpha}\Psiu_{\beta^\prime}^{\beta})|
	\leq 
	Ct^{-1}.
	$$
	Therefore by \eqref{eq1-14-dec-2025} the third term
	\begin{equation}
		\aligned
		\int_{2}^{s_1}\int_{\Hcal^*_s}|\del_t (\Phiu_{\alpha^\prime}^{\alpha}\Phiu_{\beta^\prime}^{\beta})\hu^{\alpha^{\prime}\beta^{\prime}}\del_{\alpha}\phi \del_{\beta}\phi| (s/t)\diff x \diff s
		&\leq
		\int_{2}^{s_1} CC_1\eps s^{-3/2+3\delta/2} \| (s/t)\del \phi \|^2_{L^2(\Hcal^*_s)} \diff s
		\\
		&\leq
		 \int_{2}^{s_1} CC_1\eps s^{-3/2+3\delta/2} \Ebf_{c}(s,\phi) \diff s,
		\endaligned
	\end{equation}
	since $\hu^{\alpha^{\prime}\beta^{\prime}}=\Psiu_{\alpha}^{\alpha^\prime}\Psiu_{\beta}^{\beta^\prime}h^{\alpha\beta}.$ 
	In a similar way, by \eqref{eq3-07-dec-2025}, we have
	$$
	|\del_t \hu^{\alpha \beta}|\leq CC_1\eps (s/t)^{-1}t^{-3/2}s^{\delta},
	$$
	notice that the second term contains at least one ``good'' derivative, hence
	\begin{equation}
		\int_{2}^{s_1}\int_{\Hcal^*_s} (s/t)
		\big|\sum_{(\alpha, \beta)\neq (0,0)}(\del_t \hu^{\alpha \beta})\delu_{\alpha}\phi \delu_{\beta}\phi \big| \diff x \diff s
		\leq
		\int_{2}^{s_1} CC_1\eps s^{-3/2+3\delta/2} \Ebf_{c}(s,\phi) \diff s.
	\end{equation}
	Gathering the above results, we obtain the estimate for the integral of the first term in $R_1$:
	\begin{equation}
		\int_{2}^{s_1}\int_{\Hcal^*_s} (s/t)|(\del_t g^{\alpha\beta})\del_{\alpha}\phi \del_{\beta}\phi| \diff x \diff s 
		\leq
		\int_{2}^{s_1} CC_1\eps s^{-3/2+3\delta/2} \Ebf_{c}(s,\phi) \diff s.
	\end{equation}
	By a similar argument, we obtain
	\begin{equation}
		(\del_{\alpha}g^{\alpha\beta})\del_t\phi \del_\beta\phi
		=
		(\del_t \hu^{00})\del_t\phi \del_t\phi
		+
		\sum_{(\alpha,\beta)\neq (0,0)} \delu_{\alpha}\hu^{\alpha\beta}\del_t\phi \delu_{\beta}\phi
		+
		\del_{\alpha} (\Phiu_{\alpha^\prime}^{\alpha}\Phiu_{\beta^\prime}^{\beta})\hu^{\alpha^{\prime}\beta^{\prime}}\del_{t}\phi \del_{\beta}\phi
	\end{equation}
	and the estimate for the integral of the second term in $R_1$
	\begin{equation}\label{eq2-28-june-2026}
		\int_{2}^{s_1}\int_{\Hcal^*_s} (s/t)|(\del_{\alpha}g^{\alpha\beta})\del_t\phi \del_\beta\phi| \diff x \diff s 
		\leq
		\int_{2}^{s_1} CC_1\eps s^{-3/2+3\delta/2} \Ebf_{c}(s,\phi) \diff s.
	\end{equation}
	Then based on \eqref{eq1-28-june-2026} and \eqref{eq2-28-june-2026},
	\begin{equation}\label{eq2-14-dec-2025}
	\big|\Ebf_{g,c}(s_1,\phi) - \Ebf_{g,c}(2,\phi)\big| \lesssim C_1\eps\int_{2}^{s_1}  s^{-3/2+3\delta/2} \Ebf_{c}(s,\phi) \diff s.
	\end{equation}
	
	On the other hand, recall that
	\begin{equation}\label{eq3'-14-dec-2025}
		\aligned
		&\big|\Ebf_{g,c}(s,\phi) - \Ebf_c(s,\phi)\big|
		\\
		=&
		\Big| \int_{\Hcal^*_s}\big(h^{\alpha \beta}\del_\alpha \phi \del_\beta \phi +2\sum_{a}\frac{x^a}{t}h^{a\beta}\del_t \phi \del_{\beta} \phi-2h^{0\beta}\del_t \phi \del_{\beta}\phi \big)   \diff x \Big|
		\\
        =&\Big| \int_{\Hcal^*_s}\big(\hu^{\alpha \beta}\delu_\alpha \phi \delu_\beta \phi +2\sum_{a}\frac{x^a}{t}\hu^{\alpha^{\prime}\beta^{\prime}}\Phiu_{\alpha^{\prime}}^{a}\Phiu_{\beta^{\prime}}^{\beta}\del_t \phi \del_{\beta} \phi-2\hu^{\alpha^{\prime}\beta^{\prime}}\Phiu_{\alpha^{\prime}}^{0}\Phiu_{\beta^{\prime}}^{\beta}\del_t \phi \del_{\beta}\phi \big)   \diff x \Big|
        \\
        =&\left|\int_{\Hcal^*_s}-(t/s)^2\hu^{00}(s/t \del_t \phi)^2+\hu^{ab}\delu_a \phi \delu_b \phi \diff x\right|
		\leq
		CC_1 \eps \Ebf_c(s,\phi),
		\endaligned
	\end{equation}
	in which we applied \eqref{eq2-20-jan-2026}, \eqref{eq1-14-dec-2025}, and the fact $t\leq s^2$ in the region $\Kcal$.
When $C_1\eps$ is sufficiently small, we have
\begin{equation}\label{eq3-14-dec-2025}
		\aligned
		\big|\Ebf_{g,c}(s,\phi) - \Ebf_c(s,\phi)\big| \leq (1/2)\Ebf_c(s,\phi),
		\endaligned
	\end{equation}
Now we consider \eqref{eq2-14-dec-2025} and \eqref{eq3-14-dec-2025} together to obtain
\begin{equation}\label{eq21-june-2026}
|\Ebf_{g,c}(s_1,\phi) - \Ebf_{g,c}(2,\phi)|\lesssim C_1\eps\int_2^{s_1} s^{-3/2+3\delta/2} \Ebf_{g,c}(s,\phi) \diff s,
\end{equation}
hence
\begin{equation}\label{eq1-29-jan-2026}
\Ebf_{g,c}(s_1,\phi) \leq \Ebf_{g,c}(2,\phi) + CC_1\eps\int_2^{s_1}  s^{-3/2+3\delta/2} \Ebf_{g,c}(s,\phi) \diff s.
\end{equation}
Then by Grönwall's inequality we obtain
\begin{equation}
\Ebf_{g,c}(s_1,\phi)\leq e^{CC_1\eps\int_2^{s_1}\tau^{-3/2+3\delta/2}\diff \tau} \Ebf_{g,c}(2,\phi)
\lesssim \Ebf_{g,c}(2,\phi).
\end{equation}
Plug this back into \eqref{eq21-june-2026} and provided that $C_1\eps$ is sufficiently small,
\begin{equation}\label{eq3-21-june-2026}
(4/5)\Ebf_{g,c}(2,\phi)\leq \Ebf_{g,c}(s_1,\phi)\leq (5/4)\Ebf_{g,c}(2,\phi).
\end{equation}
Combining this with \eqref{eq3-14-dec-2025}, we obtain the desired estimate.
\end{proof}

\begin{corollary}[Inverse energy estimate under flat foliation]\label{coro1-28-5-2026}
 Let $(h_{\alpha\beta},\phi)$ be the global solution to \eqref{eq3-29-nov-2025} together with initial data \eqref{eq main initial-data} satisfying \eqref{eq1-07-dec-2025} and \eqref{eq main smallness-condition}. Then there exists a constant $A>1$ such that
\begin{equation}\label{eq6-21-june-2026}
A^{-1} \Ebf^{\flat}_{c}(2,\phi) \leq \Ebf^{\flat}_c(t,\phi) \leq A \Ebf^{\flat}_{c}(2,\phi),\qquad \forall\, t\geq2.
\end{equation}
\end{corollary}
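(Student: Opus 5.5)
We mimic the proof of Proposition~\ref{prop1-20-jan-2026}, replacing the hyperboloidal foliation by the flat slices $\{t=\mathrm{const}\}$. The starting point is the flat-slice energy identity obtained by integrating \eqref{eq-div-14-dec-2025} over the strip $\{2\leq t\leq t_1\}$ with $f=0$. Since $\phi$ vanishes near $\del\Kcal$ (finite speed of propagation, the data being supported in $\{|x|<1\}$), no lateral boundary term appears, and we get
\[
\Ebf^{\flat}_{g,c}(t_1,\phi)-\Ebf^{\flat}_{g,c}(2,\phi)=\int_2^{t_1}\int_{|x|<t-1}R_1[h,\phi]\,\diff x\,\diff t .
\]
The proof then has two estimates and a Grönwall step.

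\textbf{Step 1: equivalence of the flat energies.} We compare $\Ebf^{\flat}_{g,c}$ with $\Ebf^{\flat}_c$. The difference is $\int h^{\alpha\beta}\del_\alpha\phi\del_\beta\phi$-type terms (after the $-2g^{0\beta}$ rearrangement). By \eqref{eq1-20-jan-2026}, $|h^{\alpha\beta}|\lesssim C_1\eps t^{-1}s^{\delta}\lesssim C_1\eps$ in $\Kcal$. Hence
\[
\big|\Ebf^{\flat}_{g,c}(t,\phi)-\Ebf^{\flat}_c(t,\phi)\big|\leq CC_1\eps\,\Ebf^{\flat}_c(t,\phi)\leq \tfrac12\Ebf^{\flat}_c(t,\phi)
\]
for $C_1\eps$ small. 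This is easier than the hyperboloidal case, because the flat energy controls all of $|\del\phi|^2$ without the degenerate $(s/t)$ weights.

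\textbf{Step 2: bound on the flux term.} We need an integrable-in-$t$ bound
\[
\int_{|x|<t-1}|R_1|\,\diff x\lesssim C_1\eps\,\omega(t)\,\Ebf^{\flat}_c(t,\phi),\qquad \int_2^\infty\omega<\infty .
\]
The crude bound $|\del h|\lesssim t^{-1+7\delta/2}(t-r)^{-1}$ from \eqref{eq2-2026-7-10} gives only $t^{-1+7\delta/2}$, which is not integrable, so we must exploit the structure and the location of $\phi$. This is the main obstacle. I would try two routes.

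The first route reuses the semi-hyperboloidal decomposition from the proof of Proposition~\ref{prop1-20-jan-2026}. The dangerous terms are $(\del_t\hu^{00})|\del_t\phi|^2$. By \eqref{eq2-07-dec-2025},
\[
|\del\hu^{00}|\lesssim C_1\eps\, t^{-2+\delta/2}(t-r)^{1/2+\delta/2}\lesssim C_1\eps\, t^{-3/2+\delta},
\]
which is integrable in $t$, and it multiplies $|\del_t\phi|^2$, which is controlled by the flat energy directly. The good-derivative terms carry $\delu\phi$ or $\delu h$. Using \eqref{eq3-07-dec-2025}, we have $|\del_t\hu^{\alpha\beta}|\lesssim C_1\eps\,(t/s)t^{-3/2}s^{\delta}$. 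The loss $t/s$ is harmful only near the cone. There we would use that $\phi$ decays rapidly near $\del\Kcal$ (Klein--Gordon fields concentrate in $s\sim t$), or simply the bound \eqref{eq4-01-may-2026}, $|\delu h|\lesssim C_1\eps t^{-2}s^\delta\lesssim t^{-3/2}$. The term with $\del(\Phiu\Phiu)\hu$ is $O(t^{-1}\cdot t^{-1}s^\delta)=O(t^{-2+\delta})$, which is integrable.

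The second route, a fallback if some term resists, avoids a pure Grönwall argument on the flat foliation. Instead we transfer Proposition~\ref{prop1-20-jan-2026} to flat slices. We integrate the divergence identity over the region between $\Hcal^*_{s}$ and $\{t=t_1\}$, using that $\phi$ vanishes near $\del\Kcal$, so that $\Ebf^{\flat}_{g,c}(t_1)$ is compared with $\Ebf_{g,c}(s)$ for $s$ on either side (e.g.\ $s^2=t_1$ and $s=t_1$). The spacetime error is again bounded by Step 2 type estimates. The sandwich between hyperboloids avoids having to integrate all the way from $t=2$.

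\textbf{Step 3: Grönwall and reversal.} With Steps 1 and 2,
\[
\big|\Ebf^{\flat}_{g,c}(t_1,\phi)-\Ebf^{\flat}_{g,c}(2,\phi)\big|\lesssim C_1\eps\int_2^{t_1}\omega(t)\,\Ebf^{\flat}_{g,c}(t,\phi)\,\diff t .
\]
Grönwall gives $\Ebf^{\flat}_{g,c}(t_1)\lesssim\Ebf^{\flat}_{g,c}(2)$. Plugging this back in yields the two-sided bound $(4/5)\Ebf^{\flat}_{g,c}(2)\leq\Ebf^{\flat}_{g,c}(t_1)\leq(5/4)\Ebf^{\flat}_{g,c}(2)$. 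Step 1 then converts this into \eqref{eq6-21-june-2026}. At $t=2$ the flat energy coincides with $\Ebf^{\flat}_c(2,\phi)$ on $\{|x|<1\}$.
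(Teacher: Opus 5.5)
Your Step~1 is correct and matches the paper. Your fallback Route~2 (comparing the flat slice with a hyperboloid) is in fact the paper's strategy. The difficulty is that your primary Route~1, Grönwall on flat slices, does not close. Route~2 as written inherits the same defect, because you propose to bound its spacetime error by ``Step~2 type estimates''.

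\textbf{Why Route 1 fails.} The obstruction lies in the cross terms of $R_1[h,\phi]$ where $h$ carries $\del_t$ and $\phi$ carries at least one good derivative:
\[
(\del_t\hu^{0a})\,\del_t\phi\,\delu_a\phi,\qquad
(\del_t\hu^{ab})\,\delu_a\phi\,\delu_b\phi,
\]
together with $(\del_t\hu^{0b})\,\del_t\phi\,\delu_b\phi$ from the divergence part. Near $\del\Kcal$, where $t-r\simeq1$ and $s\simeq\sqrt t$, the coefficient is only
\[
|\del_t\hu|\lesssim C_1\eps\,t^{-1+7\delta/2}
\]
by \eqref{eq2-2026-7-10}, or $C_1\eps\,(t/s)t^{-3/2}s^{\delta}\simeq C_1\eps\,t^{-1+\delta/2}$ by \eqref{eq3-07-dec-2025}. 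This is not integrable in $t$. On a flat slice, $\|\delu\phi\|_{L^2}$ is no better than $\Ebf^{\flat}_c(t,\phi)^{1/2}$, so no $L^1$ majorant $\omega$ exists.

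Neither of your proposed fixes works. First, \eqref{eq4-01-may-2026} controls $\delu h$, but in these terms $h$ carries $\del_t$. Second, the pointwise decay of $\phi$ near the cone is quantified by $C_1\eps$, the size of all the data, metric included. Using it produces an additive error that is not proportional to $\Ebf^\flat_c(2,\phi)$. That destroys the lower bound whenever the scalar data are much smaller than the metric perturbation. An inverse estimate has to be linear in the scalar energy and may use only pointwise bounds on $h$.

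\textbf{The missing idea.} Never estimate $R_1$ slice by slice on $\{t=\mathrm{const}\}$. Instead, foliate the relevant spacetime region by hyperboloids. There $\Ebf_c(s,\phi)$ controls $\delu\phi$ without weight and controls $(s/t)\del_t\phi$. This compensates the $(t/s)$ loss in $\del_t\hu$ and gives
\[
\int_{\Hcal_s^*}(s/t)|R_1|\,\diff x\lesssim C_1\eps\, s^{-3/2+3\delta/2}\,\Ebf_c(s,\phi),
\]
as in Proposition~\ref{prop1-20-jan-2026}.

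Concretely, the paper proceeds in three steps:
\begin{itemize}
\item Integrate \eqref{eq-div-14-dec-2025} over $\mathcal D_{t_0}$, the region between $\{t=t_0\}$ and $\Hcal_{t_0}$. This region is foliated by $\Hcal_s^*$ with $s\in[\sqrt{2t_0-1},t_0]$.
\item Bound the resulting error by $C_1\eps\int s^{-3/2+2\delta}\Ebf_{g,c}(s,\phi)\,\diff s\lesssim C_1\eps\,\Ebf_{g,c}(2,\phi)$, using \eqref{eq3-21-june-2026}.
\item Treat the initial layer $2\leq t\leq 5/2$ separately by local estimates and a short-time Grönwall, since $\{t=2\}$ and $\Hcal_2$ do not coincide there. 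This compares $\Ebf_{g,c}(2,\phi)$ with $\Ebf^\flat_{g,c}(2,\phi)$.
\end{itemize}
Your Route~2 does not address this initial layer. With these corrections, Route~2 becomes the paper's proof.
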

\begin{proof}
Similar to \eqref{eq3'-14-dec-2025}, we have
\bea\label{eq1-30-6-2026}
\big|\Ebf^{\flat}_{g,c}(t,\phi) - \Ebf^{\flat}_c(t,\phi)\big| =\left|\int_{|x|<t-1}\big(-h^{00}|\del_t\phi|^2 + \sum_{a,b=1}^3h^{ab}\del_a\phi\del_b\phi  \big)\diff x\right|\leq CC_1 \eps \Ebf^{\flat}_c(t,\phi).
\eea
When $2\leq t_0\leq5/2$, integrating \eqref{eq-div-14-dec-2025} over
$\{(t,x):2\leq t\leq t_0\}$ and using the local estimates together with
\eqref{eq1-30-6-2026}, we obtain
\begin{equation}\label{eq4-28-june-2026}
\big|\Ebf^{\flat}_{g,c}(t_0,\phi) - \Ebf^{\flat}_{g,c}(2,\phi)\big| \lesssim \int_2^{t_0}\int_{\{r<t-1\}}|R_1[h,\phi]|\diff x\diff t
\lesssim C_1\eps\int_2^{t_0}\Ebf^{\flat}_{g,c}(t,\phi)\diff t,
\end{equation}
which, by Gr\"onwall's inequality, gives
\begin{equation}
\Ebf^{\flat}_{g,c}(t_0,\phi)\lesssim \Ebf^{\flat}_{g,c}(2,\phi).
\end{equation}
Then plugging back to \eqref{eq4-28-june-2026}, we obtain, for $2\leq t_0\leq 5/2$,
\begin{equation}\label{eq5-28-june-2026}
A_1^{-1} \Ebf^{\flat}_{g,c}(2,\phi)\leq\Ebf^{\flat}_{g,c}(t_0,\phi)\leq A_1\Ebf^{\flat}_{g,c}(2,\phi).
\end{equation}

On the other hand, we integrate \eqref{eq-div-14-dec-2025} in the region $\mathcal{D}_{t_0}$ and obtain
\begin{equation}\label{eq7-21-june-2026}
\Ebf_{g,c}(t_0,\phi) - \Ebf_{g,c}^{\flat}(t_0,\phi) = \int_{\mathcal{D}_{t_0}} R_1[h,\phi]\,\diff x\diff t
\end{equation}
with $R_1[h,\phi]$ defined by \eqref{eq2-12-june-2026}. For $t_0=2$,
\begin{equation}
\big|\Ebf_{g,c}(2,\phi) - \Ebf^{\flat}_{g,c}(2,\phi)\big|\lesssim C_1\eps\int_2^{5/2}\Ebf^{\flat}_{g,c}(t,\phi)\diff t
\lesssim C_1\eps \Ebf^{\flat}_{g,c}(2,\phi),
\end{equation}
where for the last inequality we have applied \eqref{eq5-28-june-2026}. This leads to
\begin{equation}\label{eq6-28-june-2026}
A_2^{-1}\Ebf^{\flat}_{g,c}(2,\phi)\leq \Ebf_{g,c}(2,\phi)\leq A_2\Ebf^{\flat}_{g,c}(2,\phi).
\end{equation}

Again, we recall \eqref{eq7-21-june-2026}. This leads to, when $t_0\geq 5/2$,
\begin{equation}
	\big|\Ebf_{g,c}(t_0,\phi) - \Ebf_{g,c}^{\flat}(t_0,\phi) \big|\leq \int_{\mathcal{D}_{t_0}}\big|R_1[h,\phi]\big|\,\diff x\diff t
	\leq \int_{\sqrt{2t_0-1}}^{t_0}\int_{\Hcal^*_s}(s/t)\big|R_1[h,\phi]\big|\diff x\diff s.
\end{equation}
Recalling \eqref{eq3-14-dec-2025}, \eqref{eq21-june-2026} and \eqref{eq3-21-june-2026},
\begin{equation}
	\aligned
	\big|\Ebf_{g,c}(t_0,\phi) - \Ebf_{g,c}^{\flat}(t_0,\phi) \big|\lesssim& C_1\eps \int_{\sqrt{2t_0-1}}^{t_0}s^{-3/2+2\delta}\Ebf_{g,c}(s,\phi)\diff s
	\\
	\lesssim& C_1\eps \Ebf_{g,c}(2,\phi)\lesssim C_1\eps\Ebf_{g,c}(t_0,\phi).
	\endaligned
\end{equation}
Provided that $C_1\eps$ is sufficiently small and $t\geq 5/2$,
\begin{equation}\label{eq4-21-june-2026}
	(4/5)\Ebf_{g,c}(t,\phi)\leq\Ebf^{\flat}_{g,c}(t,\phi)\leq (5/4)\Ebf_{g,c}(t,\phi)
\end{equation}
which, combined with \eqref{eq3-21-june-2026}, leads to
\begin{equation}
	(16/25)\Ebf_{g,c}(2,\phi)\leq\Ebf^{\flat}_{g,c}(t,\phi)\leq (25/16)\Ebf_{g,c}(2,\phi),\quad t\geq 5/2.
\end{equation}
Then by \eqref{eq6-28-june-2026}, there exists $A_3>1$ such that
\begin{equation}
A_3^{-1}\Ebf^{\flat}_{g,c}(2,\phi)\leq \Ebf^{\flat}_{g,c}(t,\phi)\leq A_3\Ebf^{\flat}_{g,c}(2,\phi),\quad t\geq 5/2.
\end{equation}
Combined with \eqref{eq5-28-june-2026}, there exists $A_4>1$ such that
\begin{equation}
A_4^{-1}\Ebf^{\flat}_{g,c}(2,\phi)\leq \Ebf^{\flat}_{g,c}(t,\phi)\leq A_4\Ebf^{\flat}_{g,c}(2,\phi),\quad \forall\, t\geq 2.
\end{equation}
Finally, we apply a calculation similar to \eqref{eq3'-14-dec-2025},
\begin{equation}
A_5^{-1}\Ebf^{\flat}_c(t,\phi)\leq\Ebf^{\flat}_{g,c}(t,\phi)\leq A_5\Ebf^{\flat}_c(t,\phi),\quad \forall\, t\geq 2.
\end{equation}
Then \eqref{eq6-21-june-2026} is established.

\end{proof}

\subsection{An $L^2$ estimate for the semi-hyperboloidal derivatives}
We record the following direct consequence of the pointwise estimates for the scalar field.
\begin{lemma}\label{lem:hyperbolic-gradient-decay}
Let $(h_{\alpha\beta},\phi)$ be the global solution to \eqref{eq3-29-nov-2025} together with initial data \eqref{eq main initial-data} satisfying \eqref{eq1-07-dec-2025} and \eqref{eq main smallness-condition}. Then, for $a=1,2,3$,
\begin{equation}\label{eq:hyperbolic-gradient-L2}
\int_{|x|<t-1}|\delu_a\phi|^2(t,x)\diff x
\lesssim (C_1\eps)^2t^{-2+2\delta}.
\end{equation}
\end{lemma}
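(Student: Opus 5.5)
The estimate follows from the pointwise bound \eqref{eq11-20-june-2026} by integrating over the ball $\{|x|<t-1\}$.

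First, since $N\geq 14$, the index $N-7$ is nonnegative. Taking the zeroth-order term in \eqref{eq11-20-june-2026} gives, for every $(t,x)\in\Kcal$,
\[
|\delu_a\phi|(t,x)\lesssim C_1\eps\, t^{-5/2}s^{\delta}.
\]
In $\Kcal$ we have $s=\sqrt{t^2-r^2}\leq t$. Hence $s^{2\delta}\leq t^{2\delta}$, and squaring yields
\[
|\delu_a\phi|^2(t,x)\lesssim (C_1\eps)^2t^{-5+2\delta}, \qquad |x|<t-1.
\]

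Next I integrate this bound on the slice $\{t\}\times\{|x|<t-1\}$. This is exactly the domain in \eqref{eq:hyperbolic-gradient-L2}, and it lies in $\Kcal$, where the pointwise bound holds. Its Lebesgue measure is $\frac{4\pi}{3}(t-1)^3\lesssim t^3$. Therefore
\[
\int_{|x|<t-1}|\delu_a\phi|^2(t,x)\,\diff x\lesssim (C_1\eps)^2t^{-5+2\delta}\cdot t^3=(C_1\eps)^2t^{-2+2\delta},
\]
which is \eqref{eq:hyperbolic-gradient-L2}.

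The only point needing care is that \eqref{eq11-20-june-2026} holds on the whole flat slice, including near $\del\Kcal$ and for $t$ close to $2$. It is stated throughout $\Kcal$. Outside $\Kcal$ the field $\phi$ vanishes by finite speed of propagation, since it agrees with the Schwarzschild data there. For bounded $t$, any loss is absorbed into the implied constant via the local small-data bounds. No genuine obstacle is expected.
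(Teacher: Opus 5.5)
Your proposal is correct and matches the paper's proof: both square the zeroth-order pointwise bound $|\delu_a\phi|\lesssim C_1\eps\,t^{-5/2}s^{\delta}$ from \eqref{eq11-20-june-2026}, use $s\leq t$ in $\Kcal$ to get $(C_1\eps)^2t^{-5+2\delta}$, and multiply by the $O(t^3)$ volume of $\{|x|<t-1\}$. Your closing remark about the region outside $\Kcal$ is harmless but unnecessary, since the integration domain already lies inside $\Kcal$.
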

\begin{proof}
By \eqref{eq11-20-june-2026}, and since $s\leq t$ in $\Kcal$,
\[
|\delu_a\phi|^2\lesssim (C_1\eps)^2t^{-5+2\delta}.
\]
The volume of $\{|x|<t-1\}$ is bounded by $Ct^3$, and the conclusion follows.
\end{proof}
\section{The modified trace component}
\subsection{Equation for the modified trace component}
We introduce
\begin{equation}\label{eq3'-02-dec-2025}
	u := \hbar + 12\pi\phi^2 = \eta^{\alpha\beta}h_{\alpha\beta} + 12\pi\phi^2.
\end{equation}
This component satisfies the following wave equation:
\begin{equation}\label{eq1-25-may-2026}
	\big(1-\hu^{00}\big)\del_t\del_t u - \Delta u = \mathbb{S}[\phi] + \mathbb{F}[h] + \mathbb{T}[h,\phi],
\end{equation}
where
\begin{equation}
	\mathbb{S}[\phi] := 8\pi\Big((s/t)^2|\del_t\phi|^2+ c^2\phi^2\Big),
	\quad
	\mathbb{F}[h] := -\eta^{\alpha\beta}F_{\alpha\beta},
\end{equation}
and
\begin{equation}\label{eq4-20-june-2026}
	\aligned
	\mathbb{T}[h,\phi] :=& 16\pi\Big((x^a/t)\del_t\phi\delu_a\phi - \frac{1}{2}\sum_a|\delu_a\phi|^2 + V(\phi)\eta^{\alpha\beta}h_{\alpha\beta} - (3/2)h^{\alpha\beta}\del_{\alpha}\phi\del_{\beta}\phi \Big)
	\\
	&+ \sum_{(\alpha, \beta)\neq (0,0)}\hu^{\alpha\beta}\delu_{\alpha}\delu_{\beta}u 
	+ h^{\alpha\beta}\del_{\alpha}\Psiu_{\beta}^{\nu}\delu_{\nu}u.
	\endaligned
\end{equation}
This can be found by direct calculation. In fact we first remark that
\begin{subequations}\label{eq5-20-jan-2026}
	\begin{equation}\label{eq5a-20-jan-2026}
		-\Boxt_g u = 8\pi\Big((s/t)^2|\del_t\phi|^2+ c^2\phi^2\Big)
		-\eta^{\alpha\beta}F_{\alpha\beta}  + 16\pi \mathbb{C}[h,\phi]
	\end{equation}
	with
	\begin{equation}\label{eq5b-20-jan-2026}
		\mathbb{C}[h,\phi] :=(x^a/t)\del_t\phi\delu_a\phi - \frac{1}{2}\sum_a|\delu_a\phi|^2 + V(\phi)\eta^{\alpha\beta}h_{\alpha\beta} - (3/2)h^{\alpha\beta}\del_{\alpha}\phi\del_{\beta}\phi.
	\end{equation}
\end{subequations}
To see this, we recall
$$
R_{\alpha\beta} = 8\pi(\nabla_{\alpha}\phi\nabla_{\beta}\phi + V(\phi)g_{\alpha\beta})
$$
and
$$
R_{\alpha\beta} = -\frac{1}{2}\Boxt_g h_{\alpha\beta} + \frac{1}{2}F_{\alpha\beta},\quad \Boxt_gu := g^{\mu\nu}\del_{\mu}\del_{\nu}u,
$$
which leads to
$$
-\frac{1}{2}\Boxt_gh_{\alpha\beta} = -\frac{1}{2}F_{\alpha\beta} + 8\pi(\nabla_{\alpha}\phi\nabla_{\beta}\phi + V(\phi)g_{\alpha\beta})
$$
Thus, recalling $g^{\alpha\beta} = \eta^{\alpha\beta} + h^{\alpha\beta}$, $g_{\alpha\beta} = \eta_{\alpha\beta} + h_{\alpha\beta}$ with $\eta_{\alpha\beta} = \mathrm{diag}(-1,1,1,1)$,
\begin{equation}\label{eq1-25-nov-2025}
	-\frac{1}{2}\Boxt_g \big(\eta^{\alpha\beta}h_{\alpha\beta}\big) = -\frac{1}{2}\eta^{\alpha\beta}F_{\alpha\beta} 
	+ 8\pi\eta^{\alpha\beta}\del_{\alpha}\phi\del_{\beta}\phi + 32\pi V(\phi) 
	+ 8\pi V(\phi)\eta^{\alpha\beta}h_{\alpha\beta}.
\end{equation}
On the other hand, recalling
$$
\Boxt_g\phi - V'(\phi) = 0,
$$
one has
$$
\Boxt_g(\phi^2) = 2g^{\alpha\beta}\del_{\alpha}\phi\del_{\beta}\phi + 2\phi\Boxt_g\phi
=2g^{\alpha\beta}\del_{\alpha}\phi\del_{\beta}\phi + 2\phi V'(\phi)
$$
This (multiplied by $6\pi$) together with \eqref{eq1-25-nov-2025} leads to, recall that $V(\phi  )= \frac{c^2}{2}\phi^2$,
$$
\aligned
\Boxt_g\Big(-\frac{1}{2}\eta^{\alpha\beta}h_{\alpha\beta} - 6\pi\phi^2\Big) 
=& 32\pi\big(V(\phi) - \frac{3}{8}\phi V'(\phi)\big) 
- 4\pi \eta^{\alpha\beta}\del_{\alpha}\phi\del_{\beta}\phi
\\
&- 12\pi h^{\alpha\beta}\del_{\alpha}\phi\del_{\beta}\phi 
+  8\pi V(\phi)\eta^{\alpha\beta}h_{\alpha\beta}
-\frac{1}{2}\eta^{\alpha\beta}F_{\alpha\beta} 
\\
=&4\pi\Big((s/t)^2|\del_t\phi|^2 + c^2\phi^2\Big) 
+ 8\pi\Big((x^a/t)\del_t\phi\delu_a\phi - \frac{1}{2}\sum_a|\delu_a\phi|^2\Big)
\\
&+8\pi\big(V(\phi)\eta^{\alpha\beta}h_{\alpha\beta} - (3/2)h^{\alpha\beta}\del_{\alpha}\phi\del_{\beta}\phi\big)
- \frac{1}{2}\eta^{\alpha\beta}F_{\alpha\beta}.
\endaligned
$$
Applying the decomposition \eqref{eq6-20-june-2026} gives
\eqref{eq1-25-may-2026}. The favorable terms satisfy
\begin{equation}
|\mathbb F[h]|_{N-9}+|\mathbb T[h,\phi]|_{N-10}
\lesssim_{\delta,N}(C_1\eps)^2t^{-3+7\delta}(t-r)^{-1}.
\end{equation}
For $\mathbb F$ this is \eqref{eq10-20-june-2026}. To estimate the metric
terms in $\mathbb T$, the product rule and
$\delu_a=t^{-1}L_a$ give, for $p\leq N-10$,
\[
|\del\delu h|_p
\lesssim_N t^{-1}|\del h|_{p+1}+t^{-2}|h|_{p+1},
\qquad
|\delu\delu h|_p\lesssim_N t^{-2}|h|_{p+2}.
\]
The derivatives of the frame coefficients have size $O(t^{-1})$ and obey
the corresponding commuted bounds. Thus the metric contributions are
bounded by sums of terms of the form
\[
\sum_{p_1+p_2\leq p}|h|_{p_1}
\left(t^{-1}|\del h|_{p_2+1}+t^{-2}|h|_{p_2+2}\right).
\]
Here $p_2+1\leq N-9$ and $p_2+2\leq N-8$, so
\eqref{eq1-14-dec-2025} and \eqref{eq2-2026-7-10} apply.
Since $s\leq t$ and $t-r\leq t$, these terms have the stated bound.
The scalar terms, including the contribution of $12\pi\phi^2$ to $u$,
are controlled by the same product rule and
\eqref{eq5-20-june-2026}, \eqref{eq11-20-june-2026}, and
\eqref{eq1-14-dec-2025}. For example,
\[
|\del\phi|_p|\delu\phi|_p
\lesssim (C_1\eps)^2t^{-3}s^{-1+2\delta}
\lesssim (C_1\eps)^2t^{-3+2\delta}(t-r)^{-1},
\]
and
\[
|h|_p|\del\phi|_p^2
\lesssim (C_1\eps)^3t^{-2}s^{-2+3\delta}
\lesssim (C_1\eps)^2t^{-3+3\delta}(t-r)^{-1}.
\]
The remaining scalar products have at least the same decay. Only the
order-$N-10$ bound for $\mathbb T$ is used in the definition and estimates
of $\mathbb R$ below.

Finally, substituting \eqref{eq8-20-june-2026} into \eqref{eq1-25-may-2026} and applying \eqref{eq1-2026-7-10} together with \eqref{eq5-20-june-2026}, we obtain
\begin{equation}\label{eq1-5-27-2026}
	(1+4\mSch t^{-1})\del_t^2 u -\Delta_xu = \mathbb{S}[\phi]+ \mathbb{R}[h,\phi]
\end{equation}
with
\bea
\mathbb{R}[h,\phi]= \mathbb{F}[h] + \mathbb{T}[h,\phi]+\left(\hu^{00}+4\mSch t^{-1}\right)\del_t\del_tu,
\eea
Then recalling \eqref{eq4-02-aug-2026},
\begin{equation}\label{eq1-22-june-2026}
\big|\mathbb{R}[h,\phi]\big|_{N-10}\lesssim (C_1\eps)^2t^{-3+7\delta}(t-r)^{-1}.
\end{equation}

%
%
%
%
%

\subsection{The non-trivial support of the global solution}
In \cite[Proposition~2.3]{PLF-YM-book2} we have clarified that in the exterior region $\{r\geq t-1\}$ the global solution is trivial in the sense that the spacetime agrees with the Schwarzschild one while the scalar field vanishes. In the present article we need a more detailed analysis on the non-trivial region of the global solution. More precisely, we establish the following result.
\begin{lemma}\label{lem1-31-july-2026}
The global solution $(g_{\alpha\beta},\phi)$ of \eqref{eq3-29-nov-2025} with initial data satisfying \eqref{eq1-07-dec-2025} and \eqref{eq main smallness-condition} coincides with the Schwarzschild metric \eqref{eq Sch-wave} {\bf out of} the following region:
\begin{equation}
\Kcal^m:= \big\{(t,x)\big| |x|+2m_{\mathrm{S}}\ln|x| <t-1\big\}.
\end{equation}
Here and below, we adopt the convention
\[
r+2m_{\mathrm S}\ln r=-\infty\qquad\text{at }r=0.
\]
\end{lemma}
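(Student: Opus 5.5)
Because the solution is already known to coincide with Schwarzschild (and $\phi\equiv0$) in $\{r\geq t-1\}$ by \cite[Proposition~2.3]{PLF-YM-book2}, the task is to push this exterior region inward, up to the curved boundary $\{r+2\mSch\ln r=t-1\}$, by a finite-speed-of-propagation / domain-of-dependence argument for the \emph{actual} metric $g$ rather than $\eta$. The region $\{r+2\mSch\ln r\geq t-1,\ r\geq1\}$ should be exactly where the Schwarzschild outgoing characteristic cones from the initial sphere $\{t=2,\ r=1\}$ lie. Near the origin the convention $r+2\mSch\ln r=-\infty$ makes $\Kcal^m$ contain a neighbourhood of $\{r\ \text{small}\}$, so no claim is made there.

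\textbf{Step 1: the boundary is null or spacelike for $\gSch$.} Set $\Sigma:=\{t-1=r+2\mSch\ln r\}$ and $\psi:=t-r-2\mSch\ln r$, so that $\Sigma=\{\psi=1\}$. We have $\diff\psi=\diff t-(1+2\mSch/r)\diff r$, hence
\[
\gSch^{\alpha\beta}\del_\alpha\psi\del_\beta\psi=-\frac{r+\mSch}{r-\mSch}+\frac{r-\mSch}{r+\mSch}\Big(1+\frac{2\mSch}{r}\Big)^2 .
\]
A direct expansion (multiply through by $r^2(r^2-\mSch^2)$) gives
\[
\gSch^{\alpha\beta}\del_\alpha\psi\del_\beta\psi=-\frac{4\mSch^3 r+O(\mSch^4)}{r^2(r^2-\mSch^2)}\leq0
\]
for $r\geq1$ and $\mSch\leq\eps$ small. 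So $\Sigma\cap\{r\geq1\}$ is spacelike or null with respect to $\gSch$, and $\{\psi\geq1\}$ lies to its future-outside. At $t=2$ the surface $\Sigma$ meets the initial slice at $r$ with $r+2\mSch\ln r=1$, which gives $r\leq1$ because $\ln r\geq0$ when $r\geq1$. So the initial data are exactly Schwarzschild on $\{\psi\geq1\}\cap\{t=2\}$, using \eqref{eq1-07-dec-2025}.

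\textbf{Step 2: uniqueness in the domain of dependence.} Compare the solution $(g,\phi)$ of the reduced system \eqref{eq3-29-nov-2025} with $(\gSch,0)$, which solves the same reduced system since \eqref{eq Sch-wave} is in wave gauge. The difference $w=(h-h_{\mathrm S},\phi)$ satisfies a linear hyperbolic system with principal part $g^{\mu\nu}\del_\mu\del_\nu$ and coefficients bounded by the estimates of Section 3. I would run the energy identity \eqref{eq-div-14-dec-2025}, in its standard version with lower-order terms, on the region $\{\psi\geq1,\ 2\leq t\leq T\}$, bounded by the initial slice, the slice $t=T$, and the lateral boundary $\Sigma$. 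The boundary flux on $\Sigma$ is nonnegative provided $\Sigma$ is non-timelike for $g$ itself. Then Grönwall gives $w\equiv0$ there.

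\textbf{Step 3 (main obstacle): causal character of $\Sigma$ for the unknown metric $g$.} Step 2 needs $\Sigma$ to be non-timelike for $g$, not merely for $\gSch$, and $g$ is not known a priori to equal $\gSch$ on $\Sigma$. I would resolve this with a continuity (bootstrap) argument in $T$. Let $T^*$ be the supremum of the times $T$ such that $g=\gSch$ and $\phi=0$ on $\{\psi\geq1,\ t\leq T\}$. On $\Sigma\cap\{t\leq T^*\}$, smoothness gives $g=\gSch$, so $\Sigma$ is non-timelike there by Step 1. Local uniqueness for the quasilinear system then extends the coincidence slightly beyond $T^*$. This uses the standard local domain-of-dependence theorem near the null or spacelike surface: when the $\gSch$-causal character is strictly spacelike (the $O(\mSch^3)$ margin above), small perturbations are harmless. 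When $\mSch=0$ the surface $\Sigma$ is the flat cone $\{t-r=1\}$, and the statement is already \cite[Proposition~2.3]{PLF-YM-book2}. This yields $T^*=\infty$, which proves the lemma.
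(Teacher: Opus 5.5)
Your route is sound in substance and differs from the paper's, but several slips need fixing.

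\textbf{Step 1 expansion.} The expansion is off. With $A=r^2+\mSch r$ one has $(r-\mSch)^2(r+2\mSch)^2-(r+\mSch)^2r^2=(A-2\mSch^2)^2-A^2$, so
\[
\gSch^{\alpha\beta}\del_\alpha\psi\del_\beta\psi=-\frac{4\mSch^2(r^2+\mSch r-\mSch^2)}{r^2(r^2-\mSch^2)}<0,\qquad r\geq1.
\]
The margin is of order $\mSch^2r^{-2}$, not $\mSch^3$. The sign, which is all you use, is right.

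\textbf{Orientation.} Your orientation is reversed throughout. Since $\del_t\psi=1$ and $\diff\psi$ is $\gSch$-timelike, $\psi$ increases toward the future. The set $\{\psi\geq1\}=\overline{\Kcal^m}$ is where the solution is nontrivial: indeed $\{\psi\geq1\}\cap\{t=2\}=\{r\leq1\}$, and $\Sigma\cap\{t=2\}$ is exactly $\{r=1\}$. The set you must show is Schwarzschild is $\{\psi\leq1,\ t\geq2\}$. It lies to the past of $\Sigma$, so $\Sigma$ is a future spacelike boundary of it, and its energy flux has the good sign. As written, Steps 2--3 run the energy argument on the wrong region.

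\textbf{$\Sigma$ is not the null cone.} $\Sigma$ is not the outgoing null cone of $\{t=2,r=1\}$. That cone is $t-1=R+2\mSch\ln\frac{R-\mSch}{1-\mSch}$ and lies inside $\{\psi\geq1\}$, which is consistent with your finding that $\Sigma$ is spacelike.

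\textbf{Step 3 is unnecessary.} Write the difference system with principal part $\gSch^{\mu\nu}\del_\mu\del_\nu$ instead of $g^{\mu\nu}\del_\mu\del_\nu$. Move $(g^{\mu\nu}-\gSch^{\mu\nu})\del_\mu\del_\nu h$ to the right-hand side, where it is of zeroth order in $w$ with smooth bounded coefficient; treat $\phi$ the same way. The flux through $\Sigma$ is then governed by $\gSch$, Step 1 gives its sign, and Gr\"onwall closes with no bootstrap in $T$. Since $w=0$ for $r\geq t-1$, each time slice of the region is compact.

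\textbf{Comparison with the paper.} The paper does not test a level set for causal character. It integrates the radial null ODE $R'=(R-\mSch)/(R+\mSch)$, $R(2)=1$, and obtains $t-R-1=2\mSch\ln\frac{R-\mSch}{1-\mSch}$. It then uses $\frac{R-\mSch}{1-\mSch}\geq R$ for $R\geq1$, together with monotonicity of $r\mapsto r+2\mSch\ln r$, to place the region of influence $\{|x|\leq R(t)\}$ inside $\overline{\Kcal^m}$. It takes for granted that this region of influence is bounded by the Schwarzschild null cone.

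Your argument avoids solving the ODE and exhibits the level sets of $\psi$ as a spacelike barrier. Together with the $\gSch$-principal-part uniqueness argument, it supplies exactly the domain-of-dependence justification that the paper leaves implicit. The paper's computation gives sharper information: the exact boundary of the nontrivial region, of which $\Kcal^m$ is a slight enlargement.
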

\begin{proof}
	We only need to remark that the outgoing characteristic surface of the ball $\big\{(2,x)\big||x|<1\big\}$ is determined by the following ODE:
	\begin{equation}
		R'(t) = \frac{R(t) - m_{\mathrm{S}}}{R(t) + m_{\mathrm{S}}},\quad R(2) = 1.
	\end{equation} 
	More precisely, this characteristic surface is $\big\{(t,x) \big||x| = R(t) \big\}$. This leads to the implicit solution
	\begin{equation}\label{eq2-03-june-2026}
		t - R(t) -1 = 2m_{\mathrm{S}}\ln \Big(\frac{R(t)-m_{\mathrm{S}}}{1-m_\mathrm{S}}\Big).
	\end{equation}
	It is clear that for $t\geq 2$, $R(t)\geq 1\gg m_{\mathrm{S}}$. Thus 
	\begin{equation}\label{eq1-03-june-2026}
		2m_{\mathrm{S}}\ln\Big(\frac{R(t)-m_{\mathrm{S}}}{1-m_{\mathrm{S}}}\Big)\geq 2m_{\mathrm{S}}\ln R(t).
	\end{equation}
	On the other hand, if $(t,x)$ belongs to the region of influence of $\big\{(2,x)\big||x|<1\big\}$, then $|x|\leq R(t)$. Thus the desired result follows.
\end{proof}

Then we have the following result:
\begin{corollary}\label{coro1-22-june-2026}
	One has
	\begin{equation}
		t-r-2m_{\mathrm{S}}\ln r \geq 1,\quad \forall(t,x)\in \mathcal{K}^m.
	\end{equation}
	Furthermore, the function
	\begin{equation}\label{eq1-15-aug-2026}
		\hat{q}(t,r) := t-r-2m_{\mathrm{S}}\ln t
	\end{equation}
	satisfies the following estimates:
	\begin{equation}
		1/2<\hat{q}(t,r)\leq t-r-2m_{\mathrm{S}}\ln r,
		\quad \forall(t,x)\in\mathcal{K}^m.
	\end{equation}
\end{corollary}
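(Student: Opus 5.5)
The argument is elementary: it uses only the definition of $\Kcal^m$ from Lemma~\ref{lem1-31-july-2026}, the smallness $m_{\mathrm S}\leq\eps\leq\eps_0$ from \eqref{eq main smallness-condition}, and $t\geq 2$. We may shrink $\eps_0$ so that $m_{\mathrm S}$ is as small as needed.

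\emph{Step 1: the first inequality.} This is a restatement of the definition. For $(t,x)\in\Kcal^m$ we have $r+2m_{\mathrm S}\ln r<t-1$, hence $t-r-2m_{\mathrm S}\ln r>1$. At $r=0$ the convention $r+2m_{\mathrm S}\ln r=-\infty$ makes the left-hand side $+\infty$.

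\emph{Step 2: the upper bound $\hat q\leq t-r-2m_{\mathrm S}\ln r$.} This is equivalent to $\ln r\leq\ln t$, i.e. $r\leq t$ on $\Kcal^m$. I split into two cases.
\begin{itemize}
\item If $r\geq1$, then $\ln r\geq0$, so $r\leq r+2m_{\mathrm S}\ln r<t-1<t$.
\item If $r<1$, then $r<1<2\leq t$ directly.
\end{itemize}
The case $r=0$ is trivial by the convention.

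\emph{Step 3: the lower bound $\hat q>1/2$.} This is the only step needing care, because combining Step 1 with $\ln r\leq \ln t$ points the wrong way. I use the auxiliary function $f(t):=t/2-2m_{\mathrm S}\ln t$. It is increasing for $t>4m_{\mathrm S}$, hence on $[2,\infty)$, so $f(t)\geq f(2)=1-2m_{\mathrm S}\ln2>1/2$ for $m_{\mathrm S}$ small. I then split into three cases.
\begin{itemize}
\item \textbf{Case (a):} $r<1$. Then $t-r>t-1$, so
\[
\hat q>t-1-2m_{\mathrm S}\ln t\geq 1-2m_{\mathrm S}\ln 2>1/2,
\]
since $t\mapsto t-1-2m_{\mathrm S}\ln t$ is increasing on $[2,\infty)$.
\item \textbf{Case (b):} $r\geq1$ and $r\leq t/2$. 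Then
\[
\hat q\geq t/2-2m_{\mathrm S}\ln t=f(t)>1/2.
\]
\item \textbf{Case (c):} $r\geq1$ and $r>t/2$. Step 1 gives $t-r>1+2m_{\mathrm S}\ln r>1+2m_{\mathrm S}\ln t-2m_{\mathrm S}\ln2$, hence
\[
\hat q>1-2m_{\mathrm S}\ln 2>1/2.
\]
\end{itemize}
Together these cover $\Kcal^m$ and give $1/2<\hat q$. The main point to get right is this case split: the logarithmic correction $2m_{\mathrm S}\ln(t/r)$ is only harmless once $r$ is comparable to $t$, and the case $r\ll t$ must be handled separately by the linear gain $t-r\gtrsim t$.
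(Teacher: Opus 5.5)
Your proof is correct. It differs from the paper's only in how the lower bound $\hat q>1/2$ is organized.

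The paper notes that $\del_t\hat q=1-2m_{\mathrm S}t^{-1}>0$, so at fixed $r$ it suffices to check the smallest admissible $t$:
\begin{itemize}
\item for $r\le1$, this is $t=2$;
\item for $r\ge1$, this is the boundary value $t=r+1+2m_{\mathrm S}\ln r$ of $\mathcal K^m$. There the auxiliary inequality $1+2m_{\mathrm S}\ln r\le r$ gives $t\le2r$, and hence the logarithmic loss is at most $2m_{\mathrm S}\ln 2$.
\end{itemize}

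You instead split according to whether $r\le t/2$ or $r>t/2$:
\begin{itemize}
\item for $r\le t/2$, you use the linear gain $t-r\ge t/2$ together with the monotonicity of $t/2-2m_{\mathrm S}\ln t$;
\item for $r>t/2$, you apply the defining inequality of $\mathcal K^m$ directly with $\ln(t/r)<\ln 2$.
\end{itemize}
This avoids both the reduction to the boundary and the auxiliary inequality. The paper's route, in exchange, uses a single monotonicity fact for both of its cases. The key estimate $2m_{\mathrm S}\ln(t/r)\le 2m_{\mathrm S}\ln2$ near the cone is common to both arguments.

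Your Case (a) is in fact redundant. Since $t\ge2$, any $r<1$ already satisfies $r<1\le t/2$, and your Case (b) argument never uses $r\ge1$. Likewise, in Case (c), $r>t/2\ge1$ holds automatically. So the two cases $r\le t/2$ and $r>t/2$ alone cover $\mathcal K^m$.
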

\begin{proof}
	For $r>0$, the first estimate follows directly from the definition of
	$\mathcal K^m$, while it is automatic at $r=0$ by the convention above.
	We now prove the lower bound for $\hat q$. We first remark that
	\begin{equation}
		\del_t\hat q(t,r)=1-2m_{\mathrm S}t^{-1}>0,
		\qquad t\geq2.
	\end{equation}
By decreasing the smallness threshold if necessary, we may assume that \(0< m_{\mathrm S}\leq\frac1{4\ln2}.\)
	When $0\leq r\leq1$ , we have
	\begin{equation}
		\hat q(t,r)\geq\hat q(2,r)
		=2-r-2m_{\mathrm S}\ln2
		\geq1-2m_{\mathrm S}\ln2\geq\frac12.
	\end{equation}
	The inequality is strict in $\mathcal K^m$, since $(t,r)=(2,1)$ does
	not belong to $\mathcal K^m$. When $r\geq1$, the definition of
	$\mathcal K^m$ gives
	\[
		t>r+1+2m_{\mathrm S}\ln r.
	\]
Remark that
	\[
		1+2m_{\mathrm S}\ln r\leq r.
	\]
	Consequently,
	\begin{equation}
		\aligned
		\hat q(t,r)
		&>1+2m_{\mathrm S}\ln r
		-2m_{\mathrm S}\ln\big(r+1+2m_{\mathrm S}\ln r\big)
		\\
		&=1-2m_{\mathrm S}\ln\left(
		\frac{r+1+2m_{\mathrm S}\ln r}{r}\right)
		\geq1-2m_{\mathrm S}\ln2
		\geq\frac12.
		\endaligned
	\end{equation}
	Finally, for $r>0$, one has $t\geq r$, and thus
	\[
		\hat q(t,r)=t-r-2m_{\mathrm S}\ln t
		\leq t-r-2m_{\mathrm S}\ln r.
	\]
	The upper bound is automatic at $r=0$ by the convention above.
\end{proof}

\subsection{The mass-adapted coordinate transform}
Inspired by the coordinates introduced in \cite{Lindblad-3,KauffmanLindblad}, we make the following construction.
\begin{equation}
\hat{t} = \hat{x}^0 := t-2m_{\mathrm{S}}\ln t,\quad \hat{x}^a = x^a.
\end{equation}
We denote by 
$$
\hat{\partial}_{\alpha} = \frac{\partial}{\partial \hat{x}^{\alpha}}.
$$
Then for any regular scalar field $\psi$,
\begin{equation}\label{eq3-04-june-2026}
	\aligned
	&\partial_t\psi = \partial_t\hat{t}\,\hat{\partial}_0\psi = (1-2m_{\mathrm{S}}t^{-1})\hat{\partial}_0\psi,
	\quad 
	&&\partial_a\psi = \hat{\partial}_a\psi,
	\\
	&\partial_t^2\psi= (1-2m_{\mathrm{S}}t^{-1})^2\hat{\partial}_0^2\psi
	+ \frac{2m_{\mathrm{S}}}{t^2-2m_{\mathrm{S}}t}\partial_t\psi,
	\quad
	&&\partial_a^2\psi = \hat{\partial}_a^2\psi,
	\endaligned
\end{equation}
also set
\bea
\hat{L}_a
:=\hat{x}^a\hat{\partial}_0+\hat{t}\hat{\partial}_a,
\qquad
\hat{\Omega}_{bc}
:=\hat{x}^b\hat{\partial}_c-\hat{x}^c\hat{\partial}_b.
\eea
Then
\begin{equation}
\aligned
(1+4m_{\mathrm{S}}t^{-1})\partial_t\partial_t \psi = \hat{\partial}_0^2\psi
&+ 4\mSch^2 t^{-2}\frac{4\mSch t^{-1}-3}{(1-2\mSch t^{-1})^2}\del_t^2\psi
+ 2\mSch t^{-2}\frac{1+4\mSch t^{-1}}{1-2\mSch t^{-1}}\del_t\psi
\\
&- 8\mSch^3t^{-4}\frac{4\mSch t^{-1}-3}{(1-2\mSch t^{-1})^3}\del_t\psi,
\endaligned
\end{equation}
and
\bea
\hat{L}_a\psi
=
L_a\psi
+\frac{2m_{\mathrm{S}}x^a}{t-2m_{\mathrm{S}}}\del_t\psi
-2m_{\mathrm{S}}\ln t\,\del_a\psi.
\qquad
\hat{\Omega}_{bc}\psi
=
\Omega_{bc}\psi.
\eea
By induction, we obtain, for $|I|+|J|\leq N,\quad |J|=k,$
\bea\label{eq1-2026-8-5}
|\hat{\del}^{I}\hat{L}^{J}\psi| \lesssim_{\delta,N} t^{\delta}|\psi|_{N,k}.
\eea
We denote by
\begin{equation}
\hat{\Box} = -\hat{\del}_0^2 + \sum_{a=1}^3\hat{\del}_a^2. 	
\end{equation}
Direct calculation shows that
\begin{equation}\label{eq1-2026-8-6}
\hat{\Box}\psi = -(1+4\mSch t^{-1})\del_t^2\psi + \Delta_x\psi + \mSch t^{-2} \mathbb{A}[\psi]
\end{equation}
with
\begin{equation}\label{eq2-2026-8-6}
\mathbb{A}[\psi] = 4\mSch\frac{4\mSch t^{-1}-3}{(1-2\mSch t^{-1})^2}\del_t^2\psi
+ \frac{2}{(1-2\mSch t^{-1})^3}\del_t\psi.
\end{equation}
Now we fix $\psi = u$. Then by estimates \eqref{eq2-2026-7-10}, \eqref{eq1-2026-7-10} and \eqref{eq5-20-june-2026},
\begin{equation}\label{eq2-22-june-2026}
\aligned
|\mathbb{A}[u]|_{N-10}\lesssim& |\del_t^2 h_{\alpha\beta}|_{N-10}+|\del_t h_{\alpha\beta}|_{N-10} 
+ \big|\del\del(\phi^2)\big|_{N-10} + \big|\del(\phi^2)\big|_{N-10}
\\
\lesssim& C_1\eps t^{-1+7\delta/2}(t-r)^{-1}.
\endaligned
\end{equation}
We therefore obtain the following proposition.
\begin{proposition}
Let $(h_{\alpha\beta},\phi)$ be the global solution to \eqref{eq3-29-nov-2025} together with initial data \eqref{eq main initial-data} satisfying \eqref{eq1-07-dec-2025} and \eqref{eq main smallness-condition}. Let $u$ be defined by \eqref{eq3'-02-dec-2025}. Then $u$ satisfies the following wave equation
\begin{equation}\label{eq3-22-june-2026}
-\hat{\Box} u = \mathbb{S}[\phi] + \mathbb{R}[h,\phi] - \mSch t^{-2}\mathbb{A}[u]
\end{equation}
with
\begin{equation}
\mathbb{S}[\phi] := 8\pi\Big((s/t)^2|\del_t\phi|^2+ c^2\phi^2\Big),
\end{equation}
\begin{equation}\label{eq1-01-august-2026}
\left|\mathbb{R}[h,\phi]\right|_{N-10} + \left|\mSch t^{-2}\mathbb{A}[u]\right|_{N-10}\lesssim 
\begin{cases}
(C_1\eps)^2 \hat{t}^{-3+7\delta}\hat{q}^{-1},\quad &\text{in}\quad \Kcal^m;
\\
(C_1\eps)^2 \hat{r}^{-4},\quad &\text{out of}\quad \Kcal^m.
\end{cases}
\end{equation}
Furthermore, by setting 
\begin{equation}
\hat{t}_{\mathrm{init}}=2-2\mSch\ln2,
\end{equation}
the initial data for $u$ under the coordinates $(\hat{t},\hat{x})$ are
\begin{equation}
\aligned
&u(\hat{t}_{\mathrm{init}},\hat{x}) = \eta^{\alpha\beta}h_{\alpha\beta}(\hat{t}_{\mathrm{init}},\hat{x})+12\pi\phi^2(\hat{t}_{\mathrm{init}},\hat{x}),
\\
&\hat{\del}_0u(\hat{t}_{\mathrm{init}},\hat{x}) = (1-\mSch)^{-1}\big(\eta^{\alpha\beta}\del_th_{\alpha\beta} + 24\pi\phi\del_t\phi\big)(\hat{t}_{\mathrm{init}},\hat{x}).
\endaligned
\end{equation}
\end{proposition}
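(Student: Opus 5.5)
The plan has three parts: derive the equation, establish the two-region bound \eqref{eq1-01-august-2026}, and read off the initial data.

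\textbf{The equation.} Write the identity \eqref{eq1-2026-8-6} for $\psi=u$:
\[
-\hat\Box u=(1+4\mSch t^{-1})\del_t^2u-\Delta_x u-\mSch t^{-2}\mathbb{A}[u].
\]
Substituting \eqref{eq1-5-27-2026} for the first two terms on the right gives \eqref{eq3-22-june-2026} immediately. It is an algebraic consequence of the mass-adapted change of variables \eqref{eq3-04-june-2026}; the only thing to check is that the coefficient expansion of $(1+4\mSch t^{-1})\del_t^2$ in terms of $\hat\del_0^2$ matches the form of $\mathbb{A}$.

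\textbf{The bound inside $\Kcal^m$.} By \eqref{eq1-22-june-2026}, $|\mathbb{R}|_{N-10}\lesssim (C_1\eps)^2t^{-3+7\delta}(t-r)^{-1}$. By \eqref{eq2-22-june-2026} together with $\mSch\le\eps$,
\[
|\mSch t^{-2}\mathbb{A}[u]|_{N-10}\lesssim (C_1\eps)^2t^{-3+7\delta/2}(t-r)^{-1}.
\]
It remains to convert to hatted variables. Since $\hat t=t-2\mSch\ln t$ with $\mSch$ small, $\hat t\simeq t$. Corollary~\ref{coro1-22-june-2026} gives $\tfrac12<\hat q\le t-r-2\mSch\ln r\le t-r+C$. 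Hence $(t-r)^{-1}\lesssim\hat q^{-1}$ on $\Kcal^m$, using $t-r\ge 1$ there (also from that corollary).

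One subtlety concerns the norm $|\cdot|_{N-10}$. It is taken with respect to the original fields $\del,L$. If the hatted fields are intended instead, \eqref{eq1-2026-8-5} costs only a factor $t^{\delta}$, which can be absorbed by slightly enlarging the exponent (or one simply keeps the unhatted norm). I would state it with the unhatted norm, as written.

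\textbf{The bound outside $\Kcal^m$.} By Lemma~\ref{lem1-31-july-2026}, there $\phi\equiv0$ and $g=\gSch$. So $u=\eta^{\alpha\beta}(\gSch_{\alpha\beta}-\eta_{\alpha\beta})$ is a static function of $r$ alone, with $u=O(\mSch r^{-1})$ and $\del_t u=0$. Therefore $\mathbb{A}[u]=0$ and $\mathbb{S}=0$. Moreover $\mathbb{R}$ is just the residual of the equation for the exact Schwarzschild trace:
\[
\mathbb{R}=(1+4\mSch t^{-1})\del_t^2u-\Delta u=-\Delta u.
\]
Expanding the explicit expression \eqref{eq Sch-wave} gives
\[
\eta^{ab}\gSch_{ab}-3-\gSch_{00}-1=\frac{4\mSch}{r}+\frac{2\mSch}{r}+O(\mSch^2r^{-2}).
\]
The $r^{-1}$ part is harmonic, so $\Delta u=O(\mSch^2 r^{-4})$. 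The $L$-commuted versions behave the same way, since $L_a$ applied to a radial function gives $t\,\omega_a\del_r$ and $t\simeq r$ near the exterior region. This yields the claimed $(C_1\eps)^2\hat r^{-4}$.

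\textbf{Initial data.} These follow from $\hat t(2)=\hat t_{\mathrm{init}}$ and $\hat\del_0=(1-2\mSch t^{-1})^{-1}\del_t$ evaluated at $t=2$, which gives the factor $(1-\mSch)^{-1}$, together with $\del_t(\phi^2)=2\phi\del_t\phi$.

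\textbf{Main obstacle.} The step I expect to be delicate is the exterior estimate. It requires verifying exactly that the $1/r$ part of the Schwarzschild trace is harmonic, and that the $\hu^{00}$ correction built into $\mathbb{R}$ leaves only $O(\mSch^2 r^{-4})$, including after commutation with boosts.
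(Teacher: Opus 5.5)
Your proposal is correct and follows essentially the paper's proof: combine \eqref{eq1-5-27-2026} with \eqref{eq1-2026-8-6}, transfer \eqref{eq1-22-june-2026} and \eqref{eq2-22-june-2026} to hatted variables via Corollary~\ref{coro1-22-june-2026}, and do a direct Schwarzschild computation outside $\Kcal^m$; your remark that $\del_t u=0$ there forces $\mathbb{A}[u]=0$ and $\mathbb{R}=-\Delta u$ is a clean way to organize that computation. There are three harmless slips: (i) your chain $\hat q\le t-r-2\mSch\ln r\le t-r+C$ fails as $r\to0$, but $\hat q=t-r-2\mSch\ln t\le t-r$ holds directly and is all you need; (ii) the $r^{-1}$ part of the exterior trace is $4\mSch/r$, not $6\mSch/r$, which does not matter since it is harmonic either way; (iii) outside $\Kcal^m$ one only has $t\lesssim r$, not $t\simeq r$, but that is the direction your boost estimate uses.
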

\begin{proof}
Combining \eqref{eq1-5-27-2026} with \eqref{eq1-2026-8-6} gives \eqref{eq3-22-june-2026}. We then recall \eqref{eq1-22-june-2026} and \eqref{eq2-22-june-2026} and only need to translate the decay rate in $(t,r)$ into $(\hat{t},\hat{r})$. It is clear that $t \simeq \hat{t}$ for $t\geq 2$, while for $(t-r)$, we apply Corollary~\ref{coro1-22-june-2026}, which guarantees that in $\Kcal^m$,
\begin{equation}
1/2<\hat{t}-\hat{r}\leq t-r.
\end{equation}
Outside \(\mathcal K^m\), the metric is exactly Schwarzschild and \(\phi=0\); hence a direct computation from the explicit Schwarzschild expressions yields
\[
\left|\mathbb{R}[h,\phi]\right|_{N-10}
+
\left|m_{\mathrm S}t^{-2}\mathbb A[u]\right|_{N-10}
\lesssim
m_{\mathrm S}^2\hat r^{-4}
\lesssim
(C_1\varepsilon)^2\hat r^{-4}.
\]
\end{proof}
In the following discussion, we concentrate on \eqref{eq3-22-june-2026}, and the analysis will be carried out mainly in the coordinates $(\hat{t},\hat{x})$. For simplicity we often write $f(\hat{t},\hat{x}):=f(t,x)|_{t=t(\hat{t}),x=\hat{x}}.$
\subsection{Decomposition of the modified trace component}
We decompose
\begin{equation}\label{eq1-23-7-2026}
u=u_1+W,
\end{equation}
where
\begin{equation}\label{eq13-20-june-2026}
\begin{aligned}
&-\hat{\Box}u_1=\mathbb{G}[h,\phi],
\qquad
\mathbb{G}:=\mathbb{R}[h,\phi]-\mSch t^{-2}\mathbb{A}[u],
\\
&u_1(\hat{t}_{\mathrm{init}},\hat{x})
=\eta^{\alpha\beta}h_{\alpha\beta}(\hat{t}_{\mathrm{init}},\hat{x})
+12\pi\phi^2(\hat{t}_{\mathrm{init}},\hat{x}),
\\
&\hat{\del}_0u_1(\hat{t}_{\mathrm{init}},\hat{x})
=(1-\mSch)^{-1}
\bigl(
\eta^{\alpha\beta}\del_th_{\alpha\beta}
+24\pi\phi\del_t\phi
\bigr)(\hat{t}_{\mathrm{init}},\hat{x}),
\end{aligned}
\end{equation}
and
\begin{equation}\label{eq:def-W}
-\hat{\Box}W=\mathbb{S}[\phi]
=8\pi\Big((s/t)^2|\del_t\phi|^2+c^2\phi^2\Big),
\qquad
W(\hat{t}_{\mathrm{init}},\hat{x})=\hat{\del}_0W(\hat{t}_{\mathrm{init}},\hat{x})=0.
\end{equation}
We establish the estimates on the first quantity in the decomposition \eqref{eq1-23-7-2026}.
\begin{proposition}\label{prop1-19-07-2026}
	Let $(h_{\alpha\beta},\phi)$ be the global solution to
	\eqref{eq3-29-nov-2025} with initial data
	\eqref{eq main initial-data} satisfying
	\eqref{eq1-07-dec-2025} and
	\eqref{eq main smallness-condition}. Let $u_1$ be defined by
	\eqref{eq13-20-june-2026}. Then
	\begin{equation}\label{eq2-2026-8-3}
		|u_1|
		\lesssim_{\delta}
		C_1\eps\hat{t}^{-1}{\langle\hat{t}-\hat{r}\rangle^{\red-1+8\delta}},
	\qquad \hat t-\hat r\geq1/2.
	\end{equation}
	Furthermore, the radiation field\footnote{The definition of this radiation field depends on the choice of coordinates. Thus we use the notation $\hat{\Rcal}$.}
	\begin{equation}\label{eq5-03-aug-2026}
	\hat{\Rcal}[u_1](\omega,\hat{q}) := \lim_{\hat r\rightarrow+\infty}\hat{r}u_1(\hat{q}+\hat{r},\hat{r}\omega)
	\end{equation}
	exists for $\hat q\geq 1/2$ and $\omega\in\mathbb S^2$, and
	\begin{equation}\label{eq6-03-aug-2026}
	\big|\hat{\Rcal}[u_1](\omega,\hat{q})\big|\lesssim_{\delta} C_1\eps \hat{q}^{-1+9\delta}.
	\end{equation}
\end{proposition}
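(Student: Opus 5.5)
We split $u_1=u_1^{\mathrm{lin}}+u_1^{\mathrm{inh}}$. Here $u_1^{\mathrm{lin}}$ solves the homogeneous flat equation $-\hat\Box u_1^{\mathrm{lin}}=0$ with the data of \eqref{eq13-20-june-2026} on $\{\hat t=\hat t_{\mathrm{init}}\}$. The piece $u_1^{\mathrm{inh}}$ has zero data and source $\mathbb G$. Both are represented by the Kirchhoff formula in the flat coordinates $(\hat t,\hat x)$, so each piece can be estimated and its radiation field computed explicitly.

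\emph{Homogeneous piece.} Outside $\{|x|<1\}$ the data are the Schwarzschild trace $\eta^{\alpha\beta}\gSch_{\alpha\beta}$, which is $O(\mSch r^{-1})$ with $r^{-2}$ derivatives, and zero time derivative. Inside the ball the data are $O(\eps)$ in $H^N$. I would cut the data into a compactly supported part, handled by the strong Huygens principle, and the explicit tail $\sim 4\mSch r^{-1}$. The tail produces $O(\mSch\hat t^{-1})$, but after subtracting the exact static contribution $4\mSch \hat r^{-1}$ the remainder behaves like $\mSch\hat t^{-1}\langle\hat t-\hat r\rangle^{-1}$. The tail can also be estimated directly: spherical means of $r^{-1}$ over $|y-x|=\hat t$ give $\hat t^{-1}$ times a function supported near the cone, of size $\langle \hat t-\hat r\rangle^{-1}$ after using $\partial_r$ of the data.

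\emph{Inhomogeneous piece.} I use the classical pointwise bound for $\Box^{-1}$ with sources of the form $\hat t^{-3+\mu}\langle\hat t-\hat r\rangle^{-1}$ in $\Kcal^m$ and $\hat r^{-4}$ outside, as given by \eqref{eq1-01-august-2026}. Write the retarded integral in null coordinates (Lindblad's lemma):
\[
|u_1^{\mathrm{inh}}(\hat t,\hat r)|\lesssim \frac{1}{\hat r}\int_{\hat t-\hat r}^{\hat t+\hat r}\int_{\hat t-\hat r}^{?}\ldots
\]
that is, $\hat r^{-1}\int\!\!\int_{D}\rho\,|\mathbb G|\,d\rho\,d\tau$ over the backward characteristic triangle. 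Plugging in $\hat t^{-3+7\delta}\hat q^{-1}$ and integrating first along $\tau+\rho$, then along $\tau-\rho\geq \hat t-\hat r$, gives
\[
\hat t^{-1}\int_{\hat t-\hat r}^{\hat t}\langle q\rangle^{-2+7\delta}\ln\langle q\rangle\,dq\lesssim \hat t^{-1}\langle \hat t-\hat r\rangle^{-1+8\delta}.
\]
The logarithm is absorbed by $\delta$. The exterior source $\hat r^{-4}$ contributes $\lesssim\hat t^{-1}\langle\hat t-\hat r\rangle^{-2}$. Near $\hat r\lesssim \hat t/2$ the region $\hat r\simeq0$ is handled by the standard alternative estimate without the $\hat r^{-1}$ factor. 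Together these yield \eqref{eq2-2026-8-3}.

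\emph{Radiation field.} Existence of $\lim \hat r u_1$ along fixed $\hat q$ follows from the Friedlander representation. For the inhomogeneous part I use
\[
\hat r u_1^{\mathrm{inh}}(\hat q+\hat r,\hat r\omega)=\frac{1}{4\pi}\int\frac{\hat r}{|\hat x-y|}\mathbb G(\hat t-|\hat x-y|,y)\,dy ,
\]
with the integrand converging pointwise to $\mathbb G(\hat q+\omega\cdot y,y)$. Dominated convergence applies, with domination $|y|^{-1}\langle\ldots\rangle^{-3+7\delta}\ldots$, which is integrable thanks to the source bounds. The same argument applies to the linear part. The limit bound then follows from the triangle integral, which gives $\hat q^{-1+8\delta}$, or with an extra $\delta$ loss from the exterior $\mSch$ terms and the log, hence $\hat q^{-1+9\delta}$.

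\emph{Main obstacle.} The main obstacle is the uniformity of dominated convergence and the $\hat q$ decay of the radiation field. The source only decays like $\hat t^{-3+7\delta}$, so integrability of the limiting integrand along the null direction requires care. I would resolve this by splitting the domain into $\{\hat t-\hat r\geq \hat t/2\}$ and its complement and bounding each using the null-coordinate form above.
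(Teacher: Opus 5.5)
Your proof of the decay bound \eqref{eq2-2026-8-3} is essentially the paper's. Corollary~\ref{cor1-02-aug-2026} likewise splits off the compactly supported data (strong Huygens) and treats the spherically symmetric Schwarzschild tail separately. It bounds the source part by positivity of the forward fundamental solution and the radial d'Alembert formula in null coordinates, with the same $\hat t^{-3+7\delta}\hat q^{-1}$ and $\hat r^{-4}$ bounds on $\mathbb G$. Two slips are harmless.
\begin{itemize}
\item The free evolution of the tail $\chi(\hat r)\,4\mSch\hat r^{-1}$ is not close to the static $4\mSch\hat r^{-1}$ inside the cone. By Newton's theorem for spherical means it vanishes once $\hat t-\hat t_{\mathrm{init}}-\hat r\geq2R_0$; this is the cancellation $\del_{\hat r}(\hat r u_{1,0})+\hat r u_{1,1}=O(\langle\hat r\rangle^{-2})$ in \eqref{eq1-21-aug-2026-u1}.
\item The exterior source, present also for $\hat r>\hat t$, gives $\hat t^{-1}\langle\hat t-\hat r\rangle^{-1}$ rather than $\langle\hat t-\hat r\rangle^{-2}$, which still suffices.
\end{itemize}
You are right that the bound on $\hat\Rcal[u_1]$ follows from \eqref{eq2-2026-8-3} once the limit exists (even with $\hat q^{-1+8\delta}$); the paper does the same.

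The real difference is existence of the radiation field. The paper (Corollary~\ref{cor1-03-aug-2026}) commutes with $\Delta_\omega$. This needs bounds on $\Delta_\omega\mathbb G$, costing an extra $\hat t^{\delta}$ and hence $\hat q^{-1+9\delta}$, and compactly supported data for $\Delta_\omega u_1$. Then $\hat r^{-1}\Delta_\omega u_1$ is an admissible source in the $1+1$ equation for $\hat r u_1$, and characteristic integration gives existence with a rate.

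Your Kirchhoff limit needs only pointwise bounds on $\mathbb G$ and yields the explicit formula $\hat\Rcal[u_1^{\mathrm{inh}}](\omega,\hat q)=\frac1{4\pi}\int\mathbb G(\hat q+\omega\cdot y,y)\,\diff y$. However, the interchange of limit and integral is exactly the step you leave open, and it is not a routine domination. The factor $\hat r/|\hat x-y|$ is singular at the moving point $y=\hat x$, where the source bound gives an integrand of size $\hat r^{-2+7\delta}|\hat x-y|^{-1}$. Write $y_\perp:=y-(\omega\cdot y)\omega$. Replacing $\hat r/|\hat x-y|$ by its supremum $|y|/|y_\perp|$, and $|\mathbb G|$ by $\langle y\rangle^{-3+7\delta}$, on the paraboloid $|y_\perp|^2\lesssim\hat q\,\omega\cdot y$ gives a majorant that is integrable only for $\delta<1/14$.

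The step closes with your own tool instead. On $\{|y|\leq M\}$ bounded convergence applies. On the complement, positivity and your null-coordinate computation give, uniformly in $\hat r$,
\[
\hat r\,\mathcal G\bigl[|\mathbb G|\,\mathds{1}_{\{|y|>M\}}\bigr](\hat q+\hat r,\hat r\omega)
\lesssim(C_1\eps)^2\bigl(M^{-1+7\delta}\ln\langle\hat q\rangle+M^{-1}\bigr).
\]
Here $|y|>M$ forces $\tau+|y|>2M$ where $\tau-|y|\geq1/2$, and the exterior part is $\lesssim\int_M^\infty\rho^{-2}\,\diff\rho$. This is a Cauchy criterion, so the limit exists for the whole range $\delta<1/9$ without any angular regularity of the source. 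The homogeneous part is classical: the explicit radial d'Alembert formula handles the tail, and Friedlander's theorem handles the compact part.

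In exchange, the paper's route gives a quantitative convergence rate and stays within the characteristic machinery of Lemma~\ref{lem1-19-jan-2026}.
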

The proof is postponed to Appendix~\ref{sec1-31-july-2026}.


\section{The scalar contribution at timelike infinity}
\label{sec:kg-induced-wave-profile}
We now consider the scalar contribution $W$ in the decomposition
\eqref{eq1-23-7-2026}. Recall that
\[
u=u_1+W,
\]
where $u_1$ satisfies Proposition~\ref{prop1-19-07-2026} and
\[
-\hat\Box W=\mathbb S[\phi]
=8\pi\Big((s/t)^2|\del_t\phi|^2+c^2\phi^2\Big),
\qquad
W|_{\hat t=\hat t_{\mathrm{init}}}
=\hat\del_0W|_{\hat t=\hat t_{\mathrm{init}}}=0.
\]
We shall determine the limit of $\hat t W(\hat t,\hat t z)$ for
$|z|<1$ and estimate the remainder.

\subsection{Existence of the radiation field}
We first apply the estimates of Appendix~\ref{sec1-31-july-2026} to the
scalar source. The resulting radiation field will be identified in
Section~\ref{sec1-04-aug-2026}.
By the support property established above and finite propagation speed,
$\phi$ and $W$ vanish in $\{\hat r\geq\hat t-1/2\}$.

\begin{proposition}[Existence of the radiation fields]\label{prop:radiation-field-existence}
For $\hat q\in[1/2,+\infty)$ and $\omega\in\mathbb S^2$, the radiation
fields $\hat{\Rcal}[u](\omega,\hat q)$ and
$\hat{\Rcal}[W](\omega,\hat q)$ both exist. Furthermore,
\begin{equation}\label{eq1-26-8-17}
\big|\hat{\Rcal}[u] - \hat\Rcal[W]\big|\lesssim_{\delta} C_1\eps\hat{q}^{-1+9\delta}.
\end{equation}
\begin{equation}\label{eq10-03-aug-2026}
\big|\hat\Rcal[W](\omega,\hat{q})\big|\lesssim_{\delta}(C_1\eps)^2.
\end{equation}
\end{proposition}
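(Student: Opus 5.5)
Since $u=u_1+W$ by \eqref{eq1-23-7-2026}, the statement reduces to two facts: the existence of $\hat\Rcal[W]$ together with the bound \eqref{eq10-03-aug-2026}, and the existence of $\hat\Rcal[u_1]$, which Proposition~\ref{prop1-19-07-2026} already supplies. Once both limits exist, $\hat\Rcal[u]=\hat\Rcal[u_1]+\hat\Rcal[W]$ exists, and \eqref{eq1-26-8-17} is exactly \eqref{eq6-03-aug-2026}. The work therefore concentrates on $W$.

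\textbf{Pointwise bound on the source.} The plan is to treat $W$ with the same machinery used in Appendix~\ref{sec1-31-july-2026} for $u_1$, i.e.\ the flat wave equation $-\hat\Box W=\mathbb S[\phi]$ with zero data at $\hat t_{\mathrm{init}}$. By \eqref{eq5-20-june-2026},
\[
(s/t)^2|\del_t\phi|^2+c^2\phi^2\lesssim (C_1\eps)^2t^{-3}s^{2\delta}.
\]
The source is also supported in $\{\hat r\le \hat t-1/2\}$ by Lemma~\ref{lem1-31-july-2026} and Corollary~\ref{coro1-22-june-2026}. Using $s^2\simeq t(t-r)\lesssim \hat t\,\hat q$, this becomes
\[
|\mathbb S[\phi]|\lesssim (C_1\eps)^2\hat t^{-3+\delta}\hat q^{\delta}.
\]
This is weaker in $\hat q$ than the bound \eqref{eq1-01-august-2026} for $\mathbb G$. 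Two consequences follow. First, a direct application of the appendix lemma would give only $|W|\lesssim (C_1\eps)^2\hat t^{-1}$ up to logarithmic or $\delta$-losses. Second, this decay is not integrable in $\hat q$, which is why the statement claims only boundedness of $\hat\Rcal[W]$ rather than decay in $\hat q$.

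\textbf{Existence of the limit.} I would write $W$ through the Kirchhoff/Duhamel formula in the $(\hat t,\hat x)$ coordinates:
\[
\hat r\,W(\hat q+\hat r,\hat r\omega)=\frac{\hat r}{4\pi}\int_{\hat t_{\mathrm{init}}}^{\hat t}\int_{|y|=\hat t-\tau}\frac{\mathbb S(\tau,\hat x+y)}{|y|}\,\diff\sigma(y)\,\diff\tau .
\]
Then I would pass to the limit $\hat r\to\infty$ at fixed $\hat q$, by dominated convergence in the standard radiation-field parametrisation. In that parametrisation the backward cone is cut by the support into points with $\tau-|\hat x+y|=q'\in[1/2,\hat q]$, and the integrand converges pointwise to an integral of $\mathbb S$ over the null half-plane $\{\tau-\omega\cdot y=\hat q\}$, suitably rescaled. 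The dominating function comes from the bound $\hat t^{-3+\delta}\hat q^{\delta}$: the $\tau$-integral of $\tau^{-3+\delta}\cdot\tau^2$ over the region at distance about $\hat r$ converges. Here the restriction that $\phi$ vanishes near the cone is essential, since it removes the large-$\hat r$ tail.

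\textbf{Main obstacle: uniformity of the bound.} The step I expect to be hardest is the uniform bound $|\hat\Rcal[W]|\lesssim (C_1\eps)^2$ when $\hat q\to\infty$, because the naive estimate gives $\hat q^{\delta}$ growth. To remove this loss I would replace the pointwise bound by the energy information. The $\delta$-free bound \eqref{eq7-03-aug-2026} gives $|\phi|+|\del\phi|\lesssim C_1\eps t^{-3/2}$, hence
\[
|\mathbb S|\lesssim (C_1\eps)^2\hat t^{-3}.
\]
With this bound the limiting integral is of the form
\[
\int_{1/2}^{\hat q}\int_{\text{slice}}\tau^{-3}\cdots\lesssim\int_{\text{timelike region}}\tau^{-3}\frac{\tau^2\,\diff\tau}{\tau}\cdots,
\]
and I expect it to be uniformly bounded once the angular integration is carried out. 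The key input is that the Kirchhoff weight $1/|y|$ is of size $(\hat t-\tau)^{-1}$, and along the contributing region this is comparable to $\hat r^{-1}$ times a factor depending only on $\tau/\hat r$. If the uniformity still fails, the fallback is to invoke the appendix estimate for $u_1$ with $\mathbb G$ replaced by $\mathbb S$, yielding $|\hat\Rcal[W]|\lesssim (C_1\eps)^2\hat q^{\delta}$, and then to refine this later using the $L^2$ control from Corollary~\ref{coro1-28-5-2026}.
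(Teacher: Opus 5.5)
Your reduction to $W$ via $u=u_1+W$ and Proposition~\ref{prop1-19-07-2026} is the same as the paper's. The gap is the bound \eqref{eq10-03-aug-2026}. You call it the main obstacle but only state an expectation. Your fallback does not explain how the $L^2$ control of Corollary~\ref{coro1-28-5-2026} would remove a pointwise $\hat q$-loss in $\hat r W$.

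You already hold both ingredients for the missing step. The $\delta$-free bound \eqref{eq7-03-aug-2026} gives $|\mathbb S[\phi]|\lesssim (C_1\eps)^2\hat t^{-3}\le (C_1\eps)^2\hat t^{-5/2}(\hat t-\hat r)^{-1/2}$ on the support $\{\hat r\le\hat t-1/2\}$. Corollary~\ref{cor1-02-aug-2026} with $a=b=1/2$, zero data and vanishing exterior source, then yields $|W|\lesssim (C_1\eps)^2\hat t^{-1}\langle\hat t-\hat r\rangle^{1-a-b}=(C_1\eps)^2\hat t^{-1}$ with no $\hat q$-loss. Hence $\hat r|W(\hat r+\hat q,\hat r\omega)|\le (C_1\eps)^2$, and this bound passes to the limit. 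This is exactly the paper's argument. The $\hat q$-growth you feared only appears when the $\delta$-lossy bound \eqref{eq5-20-june-2026} is fed into the appendix.

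For existence you depart from the paper, and the departure is legitimate in principle. The paper uses $|\Delta_\omega\mathbb S|\lesssim|\mathbb S|_{N-6}\lesssim(C_1\eps)^2\hat t^{-5/2+\delta}(\hat t-\hat r)^{-1/2+\delta}$ and Corollary~\ref{cor1-03-aug-2026} with $a=b=1/2-\delta$. Kirchhoff plus dominated convergence would avoid angular derivatives of the source. However, the dominating function you state is wrong: $\int^\infty\tau^{-3+\delta}\tau^2\,\diff\tau$ diverges.

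What makes the route work is the area of the part of the sphere $\{|\hat\xi-\hat x|=\hat t-\tau\}$ lying in $\{|\hat\xi|\le\tau\}$. For $\hat x=\hat r\omega$ and $\hat t=\hat r+\hat q$, this area is at most $\pi(\hat t-\tau)\hat q(2\tau-\hat q)/\hat r$, and it vanishes for $\tau<\hat q/2$. So the $\tau$-integrand of $\hat r W$ in Kirchhoff's formula is bounded by $\tfrac14\hat q(2\tau-\hat q)\sup|\mathbb S(\tau,\cdot)|$. With the $\delta$-free source bound this is $\lesssim(C_1\eps)^2\hat q\tau^{-2}$ on $[\hat q/2,\infty)$, uniformly in $\hat r$, and its integral is $\lesssim (C_1\eps)^2$. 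This one computation supplies both the domination needed for the existence of $\hat\Rcal[W]$ and the uniform bound \eqref{eq10-03-aug-2026}. With it your route is complete; without it, neither step is established.
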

\begin{proof}
By \eqref{eq5-03-aug-2026}, \eqref{eq6-03-aug-2026}, and
\eqref{eq1-23-7-2026}, it suffices to prove the existence of
$\hat\Rcal[W]$. Indeed,
\[
\hat\Rcal[u]-\hat\Rcal[W]=\hat\Rcal[u_1],
\]
and \eqref{eq1-26-8-17} then follows from \eqref{eq6-03-aug-2026}.

Write $f:=\mathbb S[\phi]$. In $\{\hat r<\hat t-1/2\}$,
\eqref{eq7-03-aug-2026} gives
\begin{equation}\label{eq8-03-aug-2026}
|f|\lesssim(C_1\eps)^2 \hat{t}^{-3}
\lesssim (C_1\eps)^2\hat{t}^{-5/2}(\hat{t}-\hat{r})^{-1/2}.
\end{equation}
Since both Cauchy data of $W$ vanish, Corollary~\ref{cor1-02-aug-2026}
with $a=b=1/2$ gives
\begin{equation}\label{eq11-03-aug-2026}
|W|\lesssim (C_1\eps)^2\hat{t}^{-1}.
\end{equation}

Furthermore, the product estimates
\eqref{eq5-20-june-2026} and \eqref{eq7-03-aug-2026} give
\begin{equation}\label{eq9-03-aug-2026}
\begin{aligned}
|f|_{N-6}
&\lesssim (C_1\eps)^2t^{-3}s^{2\delta}
\lesssim (C_1\eps)^2\hat t^{-3+2\delta}
\\
&\lesssim (C_1\eps)^2
\hat t^{-5/2+\delta}
(\hat t-\hat r)^{-1/2+\delta},
\end{aligned}
\end{equation}
where in the last step we used $\hat t-\hat r\leq\hat t$ and $\delta<1/2$.
Since $\hat\Omega_{bc}=\Omega_{bc}$, \eqref{eq9-03-aug-2026} also gives
\begin{equation}
|\Delta_\omega f|
\lesssim |f|_{N-6}
\lesssim (C_1\eps)^2
\hat t^{-5/2+\delta}
(\hat t-\hat r)^{-1/2+\delta}.
\end{equation}
Both Cauchy data of $\Delta_\omega W$ also vanish. Thus
Corollary~\ref{cor1-03-aug-2026}, with $a=b=1/2-\delta$, gives the
existence of $\hat\Rcal[W]$. The bound \eqref{eq10-03-aug-2026} follows
from \eqref{eq11-03-aug-2026}.
\end{proof}

\subsection{Asymptotic profiles of the Klein--Gordon field}
\label{subsec:refined-kg-asymptotics-main}
We use asymptotic profiles for both $\phi$ and its time derivative.
The construction follows the integration along hyperboloidal rays in
Chen and Lindblad \cite[Section~4.1]{Chen-Lindblad-2023}, with a modified
phase as in their quasilinear analysis
\cite[Section~5]{Chen-Lindblad-2023}. The coefficient bounds and the
remainder estimates for the present system are proved in
Appendix~\ref{sec:appendix-kg-profile-technical}.

\begin{theorem}[Klein--Gordon asymptotic profiles]
\label{thm:refined-kg-asymptotics}
Let $(h_{\alpha\beta},\phi)$ be the global solution to
\eqref{eq3-29-nov-2025} with initial data \eqref{eq main initial-data}
satisfying \eqref{eq1-07-dec-2025} and \eqref{eq main smallness-condition},
and let $0<\delta<1/9$. Write
\[
\hat r=|\hat x|,\qquad
\hat s=\sqrt{\hat t^2-|\hat x|^2},\qquad
\hat y=\frac{\hat x}{\hat t},
\]
and set
\[
B_1(0):=\{\hat y\in\mathbb R^3:\ |\hat y|<1\}.
\]
For each fixed $\hat y\in B_1(0)$, define the hyperboloidal ray in the
hatted coordinates by
\[
\gamma_{\hat y}(\hat s)
:=\left(
\frac{\hat s}{\sqrt{1-|\hat y|^2}},
\frac{\hat s\hat y}{\sqrt{1-|\hat y|^2}}
\right).
\]
Set
\[
\mathfrak q(\hat s,\hat y)
:=\frac1{1-|\hat y|^2}
\left.
\left[
\bigl(\hu^{00}+4\mSch t^{-1}\bigr)
\bigl(1-2\mSch t^{-1}\bigr)^2
\right]
\right|_{\gamma_{\hat y}(\hat s)},
\qquad
a(\hat s,\hat y):=\sqrt{1+\mathfrak q(\hat s,\hat y)},
\]
where $t$ denotes the original time coordinate, evaluated along the ray
through the inverse coordinate transform. Set $s_0:=\hat t_{\mathrm{init}}$.
For $C_1\eps$ sufficiently small, one has $|\mathfrak q|\leq1/3$
for $\hat s\geq s_0$ along these rays, and hence $a$ is real-valued
and continuous on $[s_0,\infty)\times B_1(0)$.
For phase integrals only, we extend $a$ to $0\leq\lambda<s_0$ by
\[
a(\lambda,\hat y):=a(s_0,\hat y),\qquad 0\leq\lambda<s_0.
\]
This real continuous extension is a convention; the metric square-root
formula is used only for $\lambda\geq s_0$.
Then there exist continuous profiles
$b^\pm=b^\pm(\hat y)$ on $B_1(0)$ such that
\begin{equation}\label{eq:kg-profile-bound-main}
|b^\pm(\hat y)|
\lesssim C_1\eps(1-|\hat y|^2)^{5/4-11\delta/4},
\qquad |\hat y|<1,
\end{equation}
and, since $\phi$ is real-valued,
\[
b^-(\hat y)=\overline{b^+(\hat y)},\qquad |\hat y|<1.
\]
Define the associated phase by
\[
\Phi(\hat s,\hat y)
:=c\int_{\hat t_{\mathrm{init}}}^{\hat s}a(\sigma,\hat y)\diff\sigma,
\]
and set
\[
\mathcal A(\hat s,\hat y)
:=e^{-i\Phi(\hat s,\hat y)}b^+(\hat y)
+e^{i\Phi(\hat s,\hat y)}b^-(\hat y),
\]
\[
\mathcal D(\hat s,\hat y)
:=-ci e^{-i\Phi(\hat s,\hat y)}b^+(\hat y)
+ci e^{i\Phi(\hat s,\hat y)}b^-(\hat y).
\]
In the region $\hat t-\hat r\geq1/2$ one has
\begin{equation}\label{eq:simultaneous-phi-error}
\left|\hat s^{3/2}\phi-\mathcal A(\hat s,\hat y)\right|
\lesssim
C_1\eps
\left(\frac{\hat s}{\hat t}\right)^{3/2+7\delta/2}
\hat s^{-1+9\delta},
\end{equation}
and
\begin{equation}\label{eq:simultaneous-dt-error}
\begin{aligned}
&\left|
\hat s^{3/2}\frac{\hat s}{\hat t}\hat\del_0\phi
-\mathcal D(\hat s,\hat y)
\right|
\\
&\qquad\lesssim C_1\eps\left[
\left(\frac{\hat s}{\hat t}\right)^{3/2+7\delta/2}
\hat s^{-1+9\delta}
+
\left(\frac{\hat s}{\hat t}\right)^{3/2-\delta}
\hat s^{-1+3\delta}
\right].
\end{aligned}
\end{equation}
In particular,
\[
\phi(\hat t,\hat x)
=\hat s^{-3/2}\left(
 e^{-i\Phi(\hat s,\hat y)}b^+(\hat y)
 +e^{i\Phi(\hat s,\hat y)}b^-(\hat y)
\right)
+O\left(C_1\eps
\left(\frac{\hat s}{\hat t}\right)^{3/2+7\delta/2}
\hat s^{-5/2+9\delta}\right).
\]
\end{theorem}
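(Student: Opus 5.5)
The plan is to follow the hyperboloidal-ray ODE method of Chen and Lindblad \cite[Sections~4.1 and~5]{Chen-Lindblad-2023}, carried out in the hatted coordinates. First, rewrite the Klein--Gordon equation $\Boxt_g\phi-c^2\phi=0$ in the coordinates $(\hat s,\hat y)$. Using \eqref{eq3-04-june-2026} together with the decomposition \eqref{eq6-20-june-2026} applied to $\phi$, the principal part becomes
\[
-(1+\mathfrak q)\,\hat\del_{\hat s}^2(\hat s^{3/2}\phi)-c^2\hat s^{3/2}\phi=\hat s^{3/2}\mathcal E .
\]
Here $\mathfrak q$ collects the term $\hu^{00}+4\mSch t^{-1}$ and the factor $(1-2\mSch t^{-1})^2$ from the change of variables. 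Along $\gamma_{\hat y}$ one has $\hat\del_{\hat s}=(\hat t/\hat s)\hat\del_0+(\hat x^a/\hat s)\hat\del_a$, so the coefficient in front of $\hat\del_0^2$ acquires the factor $(\hat t/\hat s)^2=(1-|\hat y|^2)^{-1}$; this explains the normalization in $\mathfrak q$.

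The error $\mathcal E$ contains several types of terms:
\begin{itemize}
\item the hyperbolic Laplacian in the hyperbolic directions, $\sum_a\delu_a\delu_a\phi$ (expressed through $\hat s^{-2}$ times boosts);
\item the lower-order term $\frac34\hat s^{-2}\phi$;
\item the metric terms $\hu^{\alpha\beta}\delu_\alpha\delu_\beta\phi$ with $(\alpha,\beta)\ne(0,0)$;
\item the term $h\,\del\Psiu\,\delu\phi$;
\item the $\mSch t^{-2}\del_t\phi$ corrections coming from \eqref{eq1-2026-8-6}.
\end{itemize}
The smallness $|\mathfrak q|\le1/3$ follows from \eqref{eq-2026-8-4}: it gives $|\hu^{00}+4\mSch t^{-1}|\lesssim C_1\eps t^{-1+7\delta/2}(t-r)\cdot t^{-1}\lesssim C_1\eps (s/t)^2 t^{-7\delta/2\cdot0}\dots$, and more precisely, dividing by $(s/t)^2\simeq(t-r)/t$, yields $|\mathfrak q|\lesssim C_1\eps\,\hat s^{-?}$, a small quantity which in particular is $\le 1/3$ once $C_1\eps$ is small. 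I will also record the decay of $\hat\del_{\hat s}\mathfrak q$ along each ray, which is needed for the WKB step.

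Next, diagonalize the resulting second-order ODE with a modified-phase WKB ansatz. Set $\Psi=\hat s^{3/2}\phi$ and
\[
V_\pm:=\tfrac12 e^{\pm i\Phi}\Big(\Psi\pm\tfrac{i}{c\,a}\,\hat\del_{\hat s}\Psi\Big).
\]
This gives first-order equations $\hat\del_{\hat s}V_\pm=\mathcal O\big(|\hat\del_{\hat s}a|\,|V|+|\hat s^{3/2}\mathcal E|\big)$. The profile is then defined by integration,
\[
b^\pm(\hat y):=V_\pm(s_0,\hat y)+\int_{s_0}^\infty(\text{RHS})\,\diff\hat s .
\]
Convergence of this integral and the tail bound for $\hat s\ge s_0$ yield \eqref{eq:simultaneous-phi-error}. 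To obtain the time-derivative estimate \eqref{eq:simultaneous-dt-error}, express $\tfrac{\hat s}{\hat t}\hat\del_0\phi$ through $\hat\del_{\hat s}\Psi$, which produces $\mathcal D$, plus the hyperbolic derivatives $\delu_a\phi$, controlled by \eqref{eq11-20-june-2026}. The latter source explains the extra error term $(\hat s/\hat t)^{3/2-\delta}\hat s^{-1+3\delta}$. Continuity of $b^\pm$ follows from locally uniform convergence in $\hat y$, and the relation $b^-=\overline{b^+}$ holds because $\Psi$ and $a$ are real.

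The main obstacle is obtaining the sharp weights in $(1-|\hat y|^2)$, namely the factors $(\hat s/\hat t)^{3/2+7\delta/2}$ and the profile bound \eqref{eq:kg-profile-bound-main}. For this, bound $\hat s^{3/2}|\mathcal E|$ along $\gamma_{\hat y}$ using:
\begin{itemize}
\item \eqref{eq5-20-june-2026} and \eqref{eq11-20-june-2026} for the boosts of $\phi$, namely $|L^J\phi|\lesssim C_1\eps t^{-3/2}s^\delta$;
\item \eqref{eq-2026-8-4} and \eqref{eq1-14-dec-2025} for the metric terms.
\end{itemize}
These give $\hat s^{3/2}|\mathcal E|\lesssim C_1\eps(\hat s/\hat t)^{3/2-\cdots}\hat s^{-2+9\delta}$, which is integrable in $\hat s$ since $9\delta<1$. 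The initial value $V_\pm(s_0,\hat y)$ vanishes unless $\gamma_{\hat y}$ starts inside the support $\Kcal^m$. Rays with $|\hat y|$ close to $1$ enter the support only at $\hat s\gtrsim(1-|\hat y|^2)^{-1/2}$, so the integral starts there. Integrating the tail from that entry point produces the power $(1-|\hat y|^2)^{5/4-11\delta/4}$. Carefully tracking the factor $(\hat s/\hat t)$ in each error term is where the exponent bookkeeping will be delicate. If the direct bound loses too much, I would first try to exploit the support starting point rather than the pointwise weights.
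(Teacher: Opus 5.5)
Your plan follows the paper's proof, and one sign in it must be corrected. In the paper, the hatted Klein--Gordon equation is treated as a wave equation with the single effective coefficient $\hat h^{00}=(\hu^{00}+4\mSch t^{-1})(1-2\mSch t^{-1})^2$. It is then reduced to an ODE for $v=\lambda^{3/2}\phi$ along $\gamma_{\hat y}$, with $\lambda=\hat s$ and $\vartheta=\hat s/\hat t$, and diagonalized with exactly your $V_\pm$ (Lemma~\ref{lem1-2026-7-31}). The $(1-|\hat y|^2)$-weights come, as you anticipate, from integrating the forcing only from the entry point $\lambda_0(\hat y)\simeq\vartheta^{-1}$, trading $\lambda$-decay for powers of $\vartheta$ via $\lambda\vartheta\gtrsim1$. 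The time derivative is recovered from $\vartheta\hat\del_0\phi=\lambda^{-3/2}\bigl(v'-\lambda^{-1}(\tfrac32v+\hat y^a\hat L_av)\bigr)$. Note that this identity contains the $\tfrac32\lambda^{-1}v$ term as well as the boost term.

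The sign error is in the principal coefficient along the ray. Since $-\Boxt_g\phi=(1-\hu^{00})\del_t^2\phi-\dots$, that coefficient is $1-\mathfrak q$, not $1+\mathfrak q$. The ray equation is therefore $v''+c^2(1-\mathfrak q)^{-1}v=\dots$. This matches the stated $a=\sqrt{1+\mathfrak q}$ once the integrable term $c^2\mathfrak q^2(1-\mathfrak q)^{-1}v$ is moved into the forcing.

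With your sign, the natural frequency would be $c(1+\mathfrak q)^{-1/2}$. Its mismatch with the phase $c\int a$ is about $c\,\mathfrak q$, which is not integrable in general, since one only has $|\mathfrak q|\lesssim C_1\eps\,\hat s^{-1+7\delta/2}$. Your claim $\hat\del_{\hat s}V_\pm=\mathcal O(|\hat\del_{\hat s}a|\,|V|+|\hat s^{3/2}\mathcal E|)$ would then fail, and the profiles would not exist for the stated phase $\Phi$.

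Separately, your tail bound for $\hat s\geq s_0$ does not cover the bounded set $\{\hat t-\hat r\geq1/2,\ \hat s<s_0\}$. That set is easy to handle, because there $\vartheta\gtrsim1$ and both sides of the estimates are $O(C_1\eps)$.
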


\subsection{The leading scalar source}
We substitute the Klein--Gordon profiles into $\mathbb S[\phi]$ and
separate the leading source from the remainder, following the source
decomposition in \cite[Section~4.2]{Chen-Lindblad-2023}. In the weighted
combination defining $\mathbb S[\phi]$, the leading oscillatory terms
cancel. This gives the following non-negative source profile.

\begin{proposition}[The scalar-induced source profile]
\label{prop:scalar-induced-source-profile}
Let $b^\pm$ be the Klein--Gordon profiles given by
Theorem~\ref{thm:refined-kg-asymptotics}, and let $\hat y=\hat x/\hat t$.
In the region $\hat t-\hat r\geq1/2$,
\begin{equation}\label{eq:modified-source-decomposition}
\mathbb S[\phi](\hat t,\hat x)
=\hat t^{-3}P_0(\hat y)+R_0(\hat t,\hat x),
\end{equation}
where
\begin{equation}\label{eq:def-new-P0}
P_0(\hat y)
:=32\pi c^2(1-|\hat y|^2)^{-3/2}|b^+(\hat y)|^2\geq0,
\qquad |\hat y|<1.
\end{equation}
Moreover,
\begin{equation}\label{eq:modified-source-profile-bounds}
0\leq P_0(\hat y)
\lesssim (C_1\eps)^2(1-|\hat y|^2)^{1-11\delta/2},
\qquad |\hat y|<1,
\end{equation}
and
\begin{equation}\label{eq:modified-source-remainder-bound}
|R_0(\hat t,\hat x)|
\lesssim_\delta (C_1\eps)^2\hat t^{-4+9\delta},
\qquad \hat t-\hat r\geq\frac12.
\end{equation}
\end{proposition}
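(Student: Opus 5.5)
We substitute the asymptotic profiles of Theorem~\ref{thm:refined-kg-asymptotics} into $\mathbb S[\phi]=8\pi\big((s/t)^2|\del_t\phi|^2+c^2\phi^2\big)$ and expand.

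\textbf{Step 1: express the source in hatted quantities.} Write $\del_t=(1-2\mSch t^{-1})\hat\del_0$ by \eqref{eq3-04-june-2026}. Then
\[
\frac{s}{t}\del_t\phi=\frac{\hat s}{\hat t}\hat\del_0\phi\,(1+O(\mSch t^{-1}\ln t)),
\]
and the relative errors $s/\hat s-1$ and $\hat t/t-1$ are $O(\mSch\ln t/(t-r))$ and $O(\mSch\ln t/t)$ respectively. Using $|\phi|+|\del\phi|\lesssim C_1\eps t^{-3/2}$, these metric discrepancies contribute
\[
O\big((C_1\eps)^2\mSch\,\hat t^{-4}\ln\hat t\,\hat q^{-1}\cdot \hat t\big).
\]
This bound must be absorbed into $\hat t^{-4+9\delta}$. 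The cleaner route is to track the errors as factors of $(s/t)^2$ against $(\hat s/\hat t)^2$. Their difference is $O(\mSch\ln t/t)$, and it multiplies $|\del_t\phi|^2\lesssim(C_1\eps)^2t^{-3}$. The result is $(C_1\eps)^3\hat t^{-4}\ln\hat t$, which is acceptable.

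So, up to admissible errors,
\[
\mathbb S[\phi]=8\pi\hat s^{-3}\Big(\big|\hat s^{3/2}\tfrac{\hat s}{\hat t}\hat\del_0\phi\big|^2+c^2|\hat s^{3/2}\phi|^2\Big).
\]

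\textbf{Step 2: insert the profiles and observe the cancellation.} Replace $\hat s^{3/2}\phi$ by $\mathcal A$ and $\hat s^{3/2}(\hat s/\hat t)\hat\del_0\phi$ by $\mathcal D$. Because $b^-=\overline{b^+}$, both are real, and
\[
\mathcal A^2=2|b^+|^2+2\,\mathrm{Re}\big(e^{-2i\Phi}(b^+)^2\big),\qquad
\mathcal D^2=c^2\big(2|b^+|^2-2\,\mathrm{Re}(e^{-2i\Phi}(b^+)^2)\big).
\]
Hence $\mathcal D^2+c^2\mathcal A^2=4c^2|b^+|^2$: the oscillating terms cancel exactly. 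Since $\hat s^{-3}=\hat t^{-3}(1-|\hat y|^2)^{-3/2}$, the main term is
\[
32\pi c^2\hat t^{-3}(1-|\hat y|^2)^{-3/2}|b^+|^2=\hat t^{-3}P_0(\hat y).
\]
Non-negativity of $P_0$ is immediate. The bound \eqref{eq:modified-source-profile-bounds} follows by squaring \eqref{eq:kg-profile-bound-main}: the exponent is $5/2-11\delta/2-3/2=1-11\delta/2$.

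\textbf{Step 3: bound the remainder $R_0$.} The remainder consists of cross terms of the form
\[
\hat s^{-3}\big(2|\mathcal A|\,|E_\phi|+|E_\phi|^2\big),
\]
together with the analogous terms for $\mathcal D$, where $E_\phi$ and $E_D$ are the errors in \eqref{eq:simultaneous-phi-error} and \eqref{eq:simultaneous-dt-error}. We have $|\mathcal A|,|\mathcal D|\lesssim C_1\eps(\hat s/\hat t)^{5/2-11\delta/2}$ and $|E|\lesssim C_1\eps(\hat s/\hat t)^{3/2-\delta}\hat s^{-1+9\delta}$. So the cross term is bounded by
\[
(C_1\eps)^2\hat s^{-4+9\delta}\Big(\frac{\hat s}{\hat t}\Big)^{4-13\delta/2}
=(C_1\eps)^2\hat t^{-4+9\delta}\Big(\frac{\hat s}{\hat t}\Big)^{5\delta/2}
\le(C_1\eps)^2\hat t^{-4+9\delta}.
\]
The quadratic error term carries the weight $(\hat s/\hat t)^{3-2\delta}\hat s^{-2+18\delta}\hat s^{-3}$. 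Since $\hat s^{-2}\lesssim 1$ and the weights $\hat s^{-5+18\delta}(\hat s/\hat t)^{3-2\delta}$ behave like $\hat t^{-5+18\delta}(\hat s/\hat t)^{-2+16\delta}$, and $\hat s^2\gtrsim\hat t$ in the region, this is at most $\hat t^{-4+9\delta}$ provided $\delta<1/9$.

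\textbf{Main obstacle.} The delicate part is the bookkeeping of the weight $(\hat s/\hat t)$ near the light cone, where $\hat s/\hat t\sim\hat t^{-1/2}$. One must check that every error term, including the metric-correction terms from Step 1, loses no negative power of $\hat s/\hat t$ beyond what $\hat s^2\gtrsim\hat t$ can absorb. For this we use the explicit powers $3/2+7\delta/2$ and $3/2-\delta$ from the profile theorem.
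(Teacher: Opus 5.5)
\paragraph{Overall approach.} Your route is the paper's. You rewrite $\mathbb S[\phi]$ using $\del_t=B\hat\del_0$ and $(s/t)^2=(\hat s/\hat t)^2+\mathcal E_m$. You insert $\hat s^{3/2}\phi=\mathcal A+\rho_\phi$ and $\hat s^{3/2}\vartheta\hat\del_0\phi=\mathcal D+\rho_t$, with $\vartheta=\hat s/\hat t$. You use the exact cancellation $\mathcal D^2+c^2\mathcal A^2=4c^2|b^+|^2$, bound $P_0$, and expand the remainder.

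Several pieces are fine:
\begin{itemize}
\item Disposing of $\mathcal E_m$ and $B^2-1$ with the crude bound $|\del_t\phi|\lesssim C_1\eps t^{-3/2}$ works. The paper instead keeps $B^2$ inside the expansion, but either way gives $O((C_1\eps)^3\hat t^{-4}\ln\hat t)$.
\item Your bound on $P_0$ is correct.
\item Your cross-term computation, giving $\hat t^{-4+9\delta}\vartheta^{5\delta/2}$, is correct.
\end{itemize}
Your first multiplicative claim in Step 1, a relative error $O(m_{\mathrm S}t^{-1}\ln t)$ between $s/t$ and $\hat s/\hat t$, is false near the cone, where the relative discrepancy is $O(m_{\mathrm S}\ln t/(t-r))$. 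You abandon it, so it does no harm.

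\paragraph{The failing step.} The quadratic remainder estimate in Step 3 does not give the stated bound. You merge the error terms of \eqref{eq:simultaneous-phi-error} and \eqref{eq:simultaneous-dt-error} into the single bound $|E|\lesssim C_1\eps\vartheta^{3/2-\delta}\hat s^{-1+9\delta}$. This pairs the worst $\vartheta$-exponent (from the second term of $\rho_t$) with the worst $\hat s$-exponent (from the first). Squaring gives $\hat s^{-3}|E|^2\lesssim(C_1\eps)^2\hat t^{-5+18\delta}\vartheta^{-2+16\delta}$. Near the light cone $\hat t-\hat r\simeq1$ one has $\vartheta\simeq\hat t^{-1/2}$, so this is $\hat t^{-4+10\delta}$. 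That is not $\lesssim\hat t^{-4+9\delta}$, and the condition $\delta<1/9$ does not rescue it; your final sentence of Step 3 is false as computed.

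\paragraph{The fix.} Keep the two pieces separate, as the paper does. Since $\rho_\phi^2$ and $\rho_t^2$ enter separately, one has
\[
\lambda^{-3}\bigl(|\rho_\phi|^2+|\rho_t|^2\bigr)\lesssim(C_1\eps)^2\bigl(\vartheta^{3+7\delta}\lambda^{-5+18\delta}+\vartheta^{3-2\delta}\lambda^{-5+6\delta}\bigr),\qquad \lambda=\hat t\vartheta .
\]
\begin{itemize}
\item The second term equals $\hat t^{-5+6\delta}\vartheta^{-2+4\delta}$. By $\hat t\vartheta^2\geq\hat t-\hat r\geq1/2$ this is $\lesssim\hat t^{-4+4\delta}$.
\item The first term equals $\hat t^{-5+18\delta}\vartheta^{-2+25\delta}$. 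For $\delta<2/25$, using $\hat t\vartheta^2\gtrsim1$ again, it is $\lesssim\hat t^{-4+11\delta/2}$.
\item For $2/25\leq\delta<1/9$, use $\vartheta\leq1$ to get $\leq\hat t^{-5+18\delta}\leq\hat t^{-4+9\delta}$. This last inequality is where $\delta<1/9$ genuinely enters.
\end{itemize}
With this correction the rest of your argument goes through.
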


\begin{proof}
Set
\[
\vartheta:=\frac{\hat s}{\hat t}=\sqrt{1-|\hat y|^2},
\qquad
\lambda:=\hat s,
\qquad
B:=1-2\mSch t^{-1}.
\]
By Theorem~\ref{thm:refined-kg-asymptotics}, the leading Klein--Gordon profiles are
\[
\mathcal A=e^{-i\Phi}b^++e^{i\Phi}b^- ,
\qquad
\mathcal D=-ci e^{-i\Phi}b^++ci e^{i\Phi}b^- .
\]
Writing
\[
\lambda^{3/2}\phi=\mathcal A+\rho_\phi,
\qquad
\lambda^{3/2}\vartheta\hat\del_0\phi
=\mathcal D+\rho_t,
\]
the remainders $\rho_\phi$ and $\rho_t$ satisfy the estimates of
Theorem~\ref{thm:refined-kg-asymptotics}. Since $\phi$ is real-valued,
$b^-=\overline{b^+}$. Recall that
\begin{equation}\label{eq:s-hat-s-comparison}
\left(\frac{s}{t}\right)^2
=
\left(\frac{\hat s}{\hat t}\right)^2
+\frac{4\mSch r^2\ln t\,(t-\mSch\ln t)}
{t^2(t-2\mSch\ln t)^2}.
\end{equation}
Writing $(s/t)^2=\vartheta^2+\mathcal E_m$, we have
$|\mathcal E_m|\lesssim\mSch t^{-1}\ln t$. Since
$\del_t=B\hat\del_0$ by \eqref{eq3-04-june-2026},
\begin{equation}\label{eq:source-before-cancellation}
\mathbb S[\phi]=8\pi\lambda^{-3}\left[
B^2(\mathcal D+\rho_t)^2+c^2(\mathcal A+\rho_\phi)^2\right]
+8\pi\mathcal E_m(\del_t\phi)^2.
\end{equation}
For the leading terms,
\[
\begin{aligned}
\mathcal D^2
&=-c^2e^{-2i\Phi}(b^+)^2-c^2e^{2i\Phi}(b^-)^2+2c^2b^+b^-,\\
c^2\mathcal A^2
&=c^2e^{-2i\Phi}(b^+)^2+c^2e^{2i\Phi}(b^-)^2+2c^2b^+b^-.
\end{aligned}
\]
Thus the oscillatory leading terms cancel exactly:
\begin{equation}\label{eq:leading-source-cancellation}
\mathcal D^2+c^2\mathcal A^2
=4c^2b^+b^-=4c^2|b^+|^2.
\end{equation}
Since $\lambda=\hat t\vartheta$, the leading term is
\[
\hat t^{-3}32\pi c^2\vartheta^{-3}|b^+(\hat y)|^2
=\hat t^{-3}P_0(\hat y).
\]
By \eqref{eq:kg-profile-bound-main},
\eqref{eq:modified-source-profile-bounds} follows.
Lemma~\ref{lem:scalar-source-remainder} in
Appendix~\ref{sec:appendix-kg-profile-technical} then gives
\eqref{eq:modified-source-remainder-bound}.
\end{proof}

Since $1-11\delta/2>0$, we extend $P_0$ continuously by zero to
$\overline{B_1(0)}$ and set it equal to zero on $\RR^3\setminus B_1(0)$.

\begin{remark}
\label{rem:direct-limit-cancellation-role}
The identity \eqref{eq:leading-source-cancellation} concerns the leading
terms of $\mathbb S[\phi]$. The factors $B^2-1$ and $\mathcal E_m$ in
\eqref{eq:source-before-cancellation} contribute to $R_0$ and are estimated
in Lemma~\ref{lem:scalar-source-remainder}.
\end{remark}

\subsection{The induced wave profile}
We derive an integral representation of the wave profile over $B_1(0)$
with an explicit positive kernel. The representation follows from the
change of variables and Jacobian calculation in the proof below.
For $z\in \overline{B_1(0)}$ and $\hat y\in B_1(0)$ set
\begin{equation}\label{eq:def-positive-kernel-discriminant}
D(z,\hat y)
:=(1-z\cdot\hat y)^2-(1-|z|^2)(1-|\hat y|^2)
\end{equation}
and
\begin{equation}\label{eq:def-positive-kernel}
K(z,\hat y):=D(z,\hat y)^{-1/2}.
\end{equation}
For $z=\omega\in\mathbb S^2$, this reduces to
\begin{equation}\label{eq:positive-kernel-boundary-value}
K(\omega,\hat y)=\frac1{1-\omega\cdot\hat y},
\qquad |\hat y|<1.
\end{equation}
We define the interior profile by
\begin{equation}\label{eq:positive-kernel-interior}
U(z):=\frac1{4\pi}\int_{|\hat y|<1}
K(z,\hat y)P_0(\hat y)\diff\hat y,
\qquad |z|<1.
\end{equation}
The local integrability of the kernel is shown in the proof below.

\begin{proposition}
\label{prop:direct-full-source-limit} Let $W$ be defined by \eqref{eq:def-W}. Then
for every fixed $z\in B_1(0)$,
\begin{equation}\label{eq:direct-full-source-limit}
\lim_{\hat t\rightarrow+\infty}4\pi \hat t\, W(\hat t,\hat t z)
=
\int_{|\hat y|<1}K(z,\hat y)P_0(\hat y)\diff\hat y
=4\pi U(z).
\end{equation}
\end{proposition}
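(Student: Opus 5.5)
We write $W$ through the Kirchhoff (Duhamel) formula for the flat operator $\hat\Box$ in the hatted coordinates. Since both Cauchy data vanish at $\hat t_{\mathrm{init}}$,
\[
W(\hat t,\hat x)=\frac{1}{4\pi}\int_{\hat t_{\mathrm{init}}}^{\hat t}\int_{|\xi|=\hat t-\tau}\frac{f(\tau,\hat x+\xi)}{\hat t-\tau}\,\diff\sigma(\xi)\,\diff\tau,
\qquad f=\mathbb S[\phi],
\]
which is equivalent to $\frac1{4\pi}\int \frac{f(\tau,y)}{|\hat x-y|}\big|_{\tau=\hat t-|\hat x-y|}\diff y$ over the backward cone. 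By Proposition~\ref{prop:scalar-induced-source-profile} we split $f=\tau^{-3}P_0(y/\tau)+R_0$, with $P_0$ extended by zero outside $B_1(0)$, which is consistent with $f$ vanishing for $\hat r\geq\hat t-1/2$. Accordingly $W=W_0+W_R$.

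For the remainder we would use $|R_0|\lesssim(C_1\eps)^2\tau^{-4+9\delta}$, supported in $|y|<\tau$. Since $9\delta<1$ the source decays like $\tau^{-3-\mu}$ with $\mu=1-9\delta>0$. The standard estimate $\int_0^{\hat t}(\hat t-\tau)\,\tau^{-3-\mu}\cdot\tau^{2}\,\diff\tau/(\hat t-\tau)\cdot\min(1,\ldots)$ (or the Appendix Corollary~\ref{cor1-02-aug-2026} with suitable $a,b$) then gives $|W_R(\hat t,\hat t z)|\lesssim\hat t^{-1-\mu'}$ for fixed $|z|<1$. In particular $\hat tW_R\to0$. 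To bound $|W_R|$ we note that the sphere of radius $\hat t-\tau$ meets the ball $|y|<\tau$ in area $\lesssim\min((\hat t-\tau)^2,\ \tau(\hat t-\tau))$.

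For the main term we use scaling. Substitute $\tau=\hat t\sigma$ and $\xi=\hat t(1-\sigma)\theta$ with $\theta\in\mathbb S^2$. Then
\[
\hat t\,W_0(\hat t,\hat tz)=\frac1{4\pi}\int_{\hat t_{\mathrm{init}}/\hat t}^{1}\int_{\mathbb S^2}\sigma^{-3}(1-\sigma)\,P_0\Big(\frac{z+(1-\sigma)\theta}{\sigma}\Big)\diff\theta\,\diff\sigma,
\]
which is exactly $\hat t$-invariant except for the lower limit. Monotone convergence (the integrand is nonnegative) shows the limit is the integral over $\sigma\in(0,1)$; near $\sigma=0$ the integrand vanishes because the support requires $|z+(1-\sigma)\theta|<\sigma$, impossible for small $\sigma$ when $|z|<1$ is fixed.

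It remains to identify this integral with $\int K(z,\hat y)P_0(\hat y)\diff\hat y$. We change variables $(\sigma,\theta)\mapsto\hat y=(z+(1-\sigma)\theta)/\sigma$. For fixed $\hat y$, the equation $|\sigma\hat y-z|=1-\sigma$ is the quadratic
\[
\sigma^2(|\hat y|^2-1)-2\sigma(z\cdot\hat y-1)+|z|^2-1=0,
\]
whose relevant root in $(0,1)$ is $\sigma=\frac{(1-z\cdot\hat y)-\sqrt{D}}{1-|\hat y|^2}$. The other root exceeds $1$ or violates $1-\sigma\ge0$; we must check this. Computing the Jacobian $\diff\hat y=|\partial\hat y/\partial(\sigma,\theta)|\diff\sigma\diff\theta$ should give $\sigma^{-3}(1-\sigma)\diff\sigma\diff\theta=D^{-1/2}\diff\hat y$. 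A convenient route is to write $\hat y$ in spherical coordinates around $z/\sigma$: radius $(1-\sigma)/\sigma$, with radial derivative in $\sigma$ combined with the translation $-z/\sigma^2$. Alternatively, differentiate the quadratic implicitly: $\partial_\sigma F=2\sqrt D$, and the coarea formula gives the weight $(1-\sigma)^2\sigma^{-2}\cdot\ldots$ over $|\nabla_{\hat y}\sigma|$. This Jacobian computation, together with verifying local integrability of $D^{-1/2}$, is the main obstacle. For $|z|<1$ we have $D\geq(1-|z|)^2(\ldots)$; $D$ vanishes only when $\hat y=z$ and $|z|=1$. Since $D(z,\hat y)=|\hat y-z|^2-|z\wedge\hat y|^2$-type, it is comparable to $|\hat y-z|^2$ near $\hat y=z$, so $K\sim|\hat y-z|^{-1}$ is integrable in $\mathbb R^3$. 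As a sanity check of the constant, take $z=0$: then $D=|\hat y|^2$, and the direct computation with $\hat y=(1-\sigma)\theta/\sigma$, $|\hat y|=\rho$, $\sigma=1/(1+\rho)$ gives $\sigma^{-3}(1-\sigma)\diff\sigma\diff\theta=\rho^{-1}\cdot\rho^2\diff\rho\diff\theta\cdot\rho^{-2}\ldots=\rho^{-1}\diff\hat y$. This matches $K(0,\hat y)=|\hat y|^{-1}$. The general case then follows by the same calculation or by Lorentz invariance of $D$.
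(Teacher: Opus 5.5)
Your architecture matches the paper's: Kirchhoff's formula, the rescaling $\tau=\hat t\sigma$, $\xi=\hat t(1-\sigma)\theta$, and the change of variables $\hat y=(z+(1-\sigma)\theta)/\sigma$. You split off $R_0$ first and use the exact $\hat t$-invariance of the leading term, where the paper proves pointwise convergence $F_{\hat t}\to P_0$ and applies dominated convergence. That is a legitimate variant; it is what the paper itself does in Theorem~\ref{thm:scalar-induced-wave-profile}, and it even gives a rate.

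\textbf{The gap.} The Jacobian identity $\diff\hat y=\sqrt{D(z,\hat y)}\,(1-\sigma)\sigma^{-3}\,\diff\sigma\,\diff\theta$ is only asserted. It is the only place the kernel $K$ enters, so it carries the whole content of the second equality. You verify it at $z=0$ and then defer to ``the same calculation or Lorentz invariance of $D$''. The Lorentz argument does not work as it stands. $D$ is boost-invariant only up to the normalizing factors of the vectors $(1,z)$ and $(1,\hat y)$. Neither $\diff\hat y$ nor the weight $\sigma^{-3}(1-\sigma)$ is invariant, so all these factors would have to be tracked. As written, you have identified the limit only as the $(\sigma,\theta)$-integral.

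\textbf{How to close it.} The coarea route you mention finishes it in a few lines. Let $F$ be your quadratic. At the root in $(0,1)$ you have $\partial_\sigma F=2\bigl((1-z\cdot\hat y)-(1-|\hat y|^2)\sigma\bigr)=2\sqrt{D}$. Also $\nabla_{\hat y}F=2\sigma(\sigma\hat y-z)=2\sigma(1-\sigma)\theta$, so implicit differentiation gives $|\nabla_{\hat y}\sigma|=\sigma(1-\sigma)/\sqrt{D}$. The level set $\{\sigma=\mathrm{const}\}$ is the sphere of radius $(1-\sigma)/\sigma$ about $z/\sigma$, with area element $((1-\sigma)/\sigma)^2\diff\theta$. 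Hence $\diff\hat y=((1-\sigma)/\sigma)^2\sqrt{D}\,(\sigma(1-\sigma))^{-1}\diff\theta\,\diff\sigma$, which is the required formula. The paper does the equivalent computation via the normal component $\lambda^{-2}(1+z\cdot\eta)$ of $\partial_\lambda\hat y$ and the relation $1+z\cdot\eta=\lambda\sqrt{D}/(1-\lambda)$.

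The root check you left open is immediate. The polynomial $(1-|\hat y|^2)\sigma^2-2(1-z\cdot\hat y)\sigma+(1-|z|^2)$ equals $1-|z|^2>0$ at $\sigma=0$ and $-|\hat y-z|^2<0$ at $\sigma=1$. So exactly one root lies in $(0,1)$ and the other exceeds $1$, giving a bijection onto $B_1(0)\setminus\{z\}$.

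\textbf{Smaller corrections.}
\begin{itemize}
\item $D(z,z)=0$ for every $z$, not only when $|z|=1$. What you need is the exact identity $D=|\hat y-z|^2-|z\wedge(\hat y-z)|^2\geq(1-|z|^2)|\hat y-z|^2$, which you nearly wrote.
\item On the strip $\tau-1/2\leq|y|<\tau$, $f$ vanishes but $\tau^{-3}P_0(y/\tau)$ need not, so there $R_0=-\tau^{-3}P_0(y/\tau)$. The bound $|R_0|\lesssim\tau^{-4+11\delta/2}\leq\tau^{-4+9\delta}$ on this strip comes from $P_0\lesssim(C_1\eps)^2(1-|\hat y|^2)^{1-11\delta/2}$ together with $1-|\hat y|^2\lesssim\tau^{-1}$. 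It does not come from Proposition~\ref{prop:scalar-induced-source-profile}, whose remainder bound holds only for $\hat t-\hat r\geq1/2$.
\item Your estimate of $W_R$ must use the support restriction $\tau\geq\hat t(1-|z|)/2$ explicitly. The area bound alone, integrated over $\tau<\hat t/2$, only gives $|W_R|=O(1)$.
\end{itemize}
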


\begin{proof}
Kirchhoff's formula for \eqref{eq:def-W} gives
\begin{equation}\label{eq:direct-full-kirchhoff-start}
W(\hat t,\hat t z)
=
\frac1{4\pi}
\int_{\hat t_{\mathrm{init}}}^{\hat t}
\frac1{\hat t-\hat\tau}
\int_{|\hat\xi-\hat t z|=\hat t-\hat\tau}
\mathbb S[\phi](\hat\tau,\hat\xi)
\diff S_{\hat\xi}\diff\hat\tau.
\end{equation}
Write
\[
\lambda:=\frac{\hat\tau}{\hat t},
\qquad
\hat\xi=\hat t z+\hat t(1-\lambda)\eta,
\qquad
\eta\in\mathbb S^2.
\]
Since
\[
\diff S_{\hat\xi}=\hat t^2(1-\lambda)^2\diff\eta,
\qquad
\diff\hat\tau=\hat t\diff\lambda,
\]
we obtain
\begin{equation}\label{eq:direct-full-rescaled-kirchhoff}
4\pi \hat t\, W(\hat t,\hat t z)
=
\int_{\hat t_{\mathrm{init}}/\hat t}^{1}
\int_{\mathbb S^2}
(1-\lambda)\lambda^{-3}
F_{\hat t}(\lambda,\eta)
\diff\eta\diff\lambda,
\end{equation}
where
\begin{equation}\label{eq:def-direct-rescaled-source}
F_{\hat t}(\lambda,\eta)
:=(\lambda\hat t)^3
\mathbb S[\phi]\left(
\lambda\hat t,
\hat t\bigl(z+(1-\lambda)\eta\bigr)
\right).
\end{equation}
Chen and Lindblad \cite[Section~4.2.1]{Chen-Lindblad-2023} use this
rescaling of Kirchhoff's formula to construct the wave profile generated
by a non-oscillatory source.
For $\lambda>0$, set
\begin{equation}\label{eq:def-direct-y-lambda-eta}
\hat y=\hat y(\lambda,\eta)
:=\frac{z+(1-\lambda)\eta}{\lambda}.
\end{equation}
Then
\[
\left(
\lambda\hat t,\,
\hat t\bigl(z+(1-\lambda)\eta\bigr)
\right)
=
(\lambda\hat t,\lambda\hat t\hat y).
\]

If $|\hat y|<1$, then
\[
\lambda\hat t-|\lambda\hat t\hat y|
=\lambda\hat t(1-|\hat y|)\longrightarrow+\infty,\quad \hat t\to \infty.
\]
Hence, for all sufficiently large $\hat t$,
\eqref{eq:modified-source-decomposition} and
\eqref{eq:modified-source-remainder-bound} apply at this point, and
\begin{equation}\label{eq:direct-rescaled-source-convergence-interior}
F_{\hat t}(\lambda,\eta)
=P_0(\hat y)
+O_\delta\bigl((C_1\eps)^2(\lambda\hat t)^{-1+9\delta}\bigr)
\longrightarrow P_0(\hat y).
\end{equation}
If $|\hat y|\geq1$, the scalar field vanishes at the corresponding point,
so $F_{\hat t}(\lambda,\eta)=0=P_0(\hat y)$. Thus, for every fixed $\lambda>0$
and $\eta\in\mathbb S^2$,
\begin{equation}\label{eq:direct-rescaled-source-convergence}
F_{\hat t}(\lambda,\eta)
\longrightarrow
P_0\left(\frac{z+(1-\lambda)\eta}{\lambda}\right).
\end{equation}

If $F_{\hat t}(\lambda,\eta)\neq0$, the support property gives $|\hat y|<1$.
From \eqref{eq:def-direct-y-lambda-eta},
\[
1-\lambda=|\lambda\hat y-z|
\leq\lambda|\hat y|+|z|,
\]
and consequently
\begin{equation}\label{eq:direct-lambda-lower-bound}
\lambda
\geq\frac{1-|z|}{1+|\hat y|}
>\frac{1-|z|}{2}
=: \lambda_*(z)>0.
\end{equation}
For fixed $z\in B_1(0)$, the support is therefore bounded away from
$\lambda=0$. On this support,
\eqref{eq:modified-source-decomposition}--\eqref{eq:modified-source-remainder-bound}
and \eqref{eq:modified-source-profile-bounds} imply
\[
|F_{\hat t}(\lambda,\eta)|
\lesssim_{\delta,z}(C_1\eps)^2,
\]
because $-1+9\delta<0$. Hence
\begin{equation}\label{eq:direct-dominated-majorant}
(1-\lambda)\lambda^{-3}|F_{\hat t}(\lambda,\eta)|
\lesssim_{\delta,z}
(C_1\eps)^2\mathds 1_{[\lambda_*(z),1]}(\lambda),
\end{equation}
which is integrable independently of $\hat t$. Dominated convergence gives
\begin{equation}\label{eq:direct-limit-lambda-eta}
\lim_{\hat t\rightarrow+\infty}4\pi \hat t\, W(\hat t,\hat t z)
=
\int_0^1\int_{\mathbb S^2}
(1-\lambda)\lambda^{-3}
P_0\left(\frac{z+(1-\lambda)\eta}{\lambda}\right)
\diff\eta\diff\lambda.
\end{equation}

We now use \eqref{eq:def-direct-y-lambda-eta} to change variables from
$(\lambda,\eta)$ to $\hat y$. The relation
$|\lambda\hat y-z|=1-\lambda$ is equivalent to
\begin{equation}\label{eq:direct-quadratic-lambda}
(1-|\hat y|^2)\lambda^2
-2(1-z\cdot\hat y)\lambda
+(1-|z|^2)=0.
\end{equation}
For $\hat y\neq z$, the left-hand side is positive at $\lambda=0$ and
negative at $\lambda=1$, and hence exactly one root lies in $(0,1)$; the
other root is larger than one. Thus
$(\lambda,\eta)\mapsto\hat y$ is one-to-one onto
$B_1(0)\setminus\{z\}$.  The omitted point corresponds to the degenerate
endpoint $\lambda=1$ and is negligible for integration.

For $z\neq0$, put $e=z/|z|$ and write
$\hat y=\alpha e+v$ with $v\perp e$. Then
\begin{equation}\label{eq:positive-kernel-discriminant}
D(z,\hat y)
=(\alpha-|z|)^2+(1-|z|^2)|v|^2,
\end{equation}
while $D(0,\hat y)=|\hat y|^2$. In particular,
\begin{equation}\label{eq:D-distance-comparison}
(1-|z|^2)|\hat y-z|^2
\leq D(z,\hat y)
\leq |\hat y-z|^2.
\end{equation}
Hence $D(z,\hat y)>0$ for $\hat y\neq z$, while near the diagonal
\[
D(z,\hat y)^{-1/2}
\simeq_{z}
|\hat y-z|^{-1}.
\]
The resulting diagonal singularity is locally integrable in three dimensions.

At fixed $\lambda$, the image of $\eta\mapsto\hat y$ is a sphere of radius
$(1-\lambda)/\lambda$, so the two tangential directions contribute the area
factor $((1-\lambda)/\lambda)^2$. Moreover,
\[
\partial_\lambda\hat y=-\lambda^{-2}(z+\eta),
\]
and its component in the spherical normal direction $\eta$ has absolute
value $\lambda^{-2}(1+z\cdot\eta)$. If
$A=1-|\hat y|^2$, $B=1-z\cdot\hat y$, and $C=1-|z|^2$, then the root
in $(0,1)$ satisfies $B-A\lambda=\sqrt{D(z,\hat y)}$, and
\[
1+z\cdot\eta
=\frac{C-\lambda B}{1-\lambda}
=\frac{\lambda\sqrt{D(z,\hat y)}}{1-\lambda}.
\]
Therefore
\begin{equation}\label{eq:direct-jacobian}
\diff\hat y
=\sqrt{D(z,\hat y)}
(1-\lambda)\lambda^{-3}\diff\eta\diff\lambda.
\end{equation}
Using this change of variables in \eqref{eq:direct-limit-lambda-eta} gives
\begin{equation}\label{eq:direct-limit-positive-kernel}
\lim_{\hat t\rightarrow+\infty}4\pi \hat t\, W(\hat t,\hat t z)
=
\int_{|\hat y|<1}
\frac{P_0(\hat y)}{\sqrt{D(z,\hat y)}}\diff\hat y
=4\pi U(z),
\end{equation}
which proves the proposition.
\end{proof}

The source remainder gives the following estimate away from the light cone.

\begin{theorem}[Timelike asymptotics]
\label{thm:scalar-induced-wave-profile}
Let $W$ be defined by \eqref{eq:def-W} and let $U$ be given by
\eqref{eq:positive-kernel-interior}. Then, for every fixed $z\in B_1(0)$,
\begin{equation}\label{eq:theorem-direct-timelike-limit}
\lim_{\hat t\to+\infty}\hat t\,W(\hat t,\hat t z)=U(z).
\end{equation}
Moreover,
\begin{equation}\label{eq3-2026-8-18}
W(\hat t,\hat x)
=
\frac1{\hat t}U\left(\frac{\hat x}{\hat t}\right)
+O_\delta\left((C_1\eps)^2\hat t^{-1}
\langle\hat t-\hat r\rangle^{-1+9\delta}\right),
\qquad
\hat t-\hat r\geq2\hat t_{\mathrm{init}}.
\end{equation}
\end{theorem}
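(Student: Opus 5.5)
The first assertion, \eqref{eq:theorem-direct-timelike-limit}, is Proposition~\ref{prop:direct-full-source-limit} divided by $4\pi$, so the work is in the quantitative development \eqref{eq3-2026-8-18}. I plan to split the source with Proposition~\ref{prop:scalar-induced-source-profile}, $\mathbb S[\phi]=\hat t^{-3}P_0(\hat y)+R_0$ on $\{\hat t-\hat r\ge 1/2\}$, and to write $W=W_0+W_R$ accordingly. Here $W_0$ is the zero-data solution of $-\hat\Box W_0=\hat\tau^{-3}P_0(\hat\xi/\hat\tau)\mathds 1_{\{\hat\tau\ge\hat t_{\mathrm{init}}\}}$, with $P_0$ extended by zero outside $B_1(0)$. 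Then $W_R$ solves the zero-data problem with source $\mathbb S[\phi]-\hat t^{-3}P_0$. By \eqref{eq:modified-source-remainder-bound} and the support property, this source is bounded by $(C_1\eps)^2\hat t^{-4+9\delta}$ inside $\{\hat r<\hat t-1/2\}$. In the thin strip $\hat t-1/2\le\hat r<\hat t$, only $\hat t^{-3}P_0$ survives, and by \eqref{eq:modified-source-profile-bounds} it is bounded by $(C_1\eps)^2\hat t^{-3}(1-|\hat y|^2)^{1-11\delta/2}\lesssim(C_1\eps)^2\hat t^{-4+11\delta/2}(\hat t-\hat r)^{1-11\delta/2}$.

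For $W_R$ I would apply the weighted Kirchhoff estimate of the appendix (Corollary~\ref{cor1-02-aug-2026}) with a source bound of the form $\hat t^{-2-a}(\hat t-\hat r)^{-1+b}$. Writing $\hat t^{-4+9\delta}\le \hat t^{-3+9\delta}(\hat t-\hat r)^{-1}$ (since $\hat t-\hat r\le\hat t$) gives $a=1-9\delta>0$ and $b=0$. I expect this to yield $|W_R|\lesssim(C_1\eps)^2\hat t^{-1}\langle\hat t-\hat r\rangle^{-1+9\delta}$, which is exactly the admissible error. If the corollary requires $b>0$, I would instead use $\hat t^{-4+9\delta}\le \hat t^{-5/2+9\delta}(\hat t-\hat r)^{-3/2}$ together with a direct computation of Kirchhoff's integral, as in Proposition~\ref{prop1-19-07-2026}.

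The main obstacle is $W_0$: I must show $|W_0(\hat t,\hat x)-\hat t^{-1}U(\hat x/\hat t)|\lesssim(C_1\eps)^2\hat t^{-1}\langle\hat t-\hat r\rangle^{-1+9\delta}$. The source of $W_0$ is homogeneous of degree $-3$. Repeating the rescaled Kirchhoff computation \eqref{eq:direct-full-rescaled-kirchhoff} with $z=\hat x/\hat t$ gives, exactly,
\[
4\pi\hat t\,W_0(\hat t,\hat t z)=\int_{\hat t_{\mathrm{init}}/\hat t}^1\int_{\mathbb S^2}(1-\lambda)\lambda^{-3}P_0\Big(\frac{z+(1-\lambda)\eta}{\lambda}\Big)\diff\eta\diff\lambda ,
\]
so the difference from $4\pi U(z)$ is the same integral over $\lambda\in(0,\hat t_{\mathrm{init}}/\hat t)$. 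On the support $|\hat y|<1$, the constraint \eqref{eq:direct-lambda-lower-bound} forces $\lambda\ge(1-|z|)/2$. Since $1-|z|\ge(\hat t-\hat r)/\hat t$, the truncated part vanishes as soon as $\hat t-\hat r\ge2\hat t_{\mathrm{init}}$. This is precisely the threshold in the statement, and in that range $W_0=\hat t^{-1}U(\hat x/\hat t)$ exactly. For this step I must still verify that the Jacobian \eqref{eq:direct-jacobian} and the identification with $U$ remain valid for the truncated range, but the support argument makes it straightforward.

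Combining the two pieces gives \eqref{eq3-2026-8-18}, with the implicit constant depending only on $\delta$. The possible weakness is the bound for $W_R$ near the cone: there the integrated source along the backward cone contributes $\int\hat\tau^{-2+9\delta}$ over a region of size comparable to $\hat t-\hat r$. This is where I would check carefully that the exponent $-1+9\delta$, and not something worse, is produced.
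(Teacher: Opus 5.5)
Your route is the paper's. It uses the same split $W=\mathcal G[\hat t^{-3}P_0(\hat x/\hat t)]+\mathcal G[R_0]$ with $P_0$ extended by zero, the same exact identity $\mathcal G[\hat t^{-3}P_0]=\hat t^{-1}U(\hat x/\hat t)$ for $\hat t-\hat r\geq2\hat t_{\mathrm{init}}$, and Corollary~\ref{cor1-02-aug-2026} for the remainder. The strip $\hat t-1/2\le\hat r<\hat t$ is handled, as in the paper, through \eqref{eq:modified-source-profile-bounds}. Your support argument $\lambda>(1-|z|)/2=(\hat t-\hat r)/(2\hat t)\geq\hat t_{\mathrm{init}}/\hat t$ is exactly why the truncation is invisible and the threshold $2\hat t_{\mathrm{init}}$ appears. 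It makes explicit what the paper leaves as ``the same rescaling and change of variables''.

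The step that fails as written is the choice of exponents for $\mathcal G[R_0]$. The corollary assumes a source bound $C_{\mathrm{int}}t^{-2-a}(t-r)^{-b}$ with $0<b<1$. Your splitting $\hat t^{-4+9\delta}\le\hat t^{-3+9\delta}(\hat t-\hat r)^{-1}$ (your $b=0$) is the excluded endpoint $b=1$ in that normalization, where the constant $(1-b)^{-1}$ blows up. Redoing the first characteristic integration there gives $\int_{1/4}^{(t_1-r_1)/2}\langle\sigma\rangle^{-1}\diff\sigma\simeq\ln\langle t_1-r_1\rangle$. So you would only get $\hat t^{-1}\langle\hat t-\hat r\rangle^{-1+9\delta}\ln\langle\hat t-\hat r\rangle$.

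Your fallback goes the wrong way. The majorant $\hat t^{-5/2+9\delta}(\hat t-\hat r)^{-3/2}$ is weaker than the bound you started from. Inserted into the characteristic integrations, or into Kirchhoff's formula, the factor $(\hat t-\hat r)^{-3/2}$ integrates to a constant, and the remaining $\hat t^{-5/2+9\delta}$ yields only $\hat t^{-1}\langle\hat t-\hat r\rangle^{-1/2+9\delta}$.

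The fix is to transfer only a power $\delta$: $\hat t^{-4+9\delta}\le\hat t^{-3+8\delta}(\hat t-\hat r)^{-1+\delta}$. This gives $a=1-8\delta$ and $b=1-\delta$, so $1-a-b=-1+9\delta$ exactly. Your strip bound gives $\lesssim(C_1\eps)^2\hat r^{-4+11\delta/2}$ there, which supplies the exterior hypothesis with $d=3-11\delta/2\geq2+a$. With these parameters your argument is complete and coincides with the paper's proof.
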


\begin{proof}[Proof of Theorem~\ref{thm:scalar-induced-wave-profile}]
The fixed-ray limit \eqref{eq:theorem-direct-timelike-limit} follows from
Proposition~\ref{prop:direct-full-source-limit}. To prove
\eqref{eq3-2026-8-18}, denote by $\mathcal G[F]$ the solution of
$-\hat\Box\mathcal G[F]=F$ with both Cauchy data zero at
$\hat t=\hat t_{\mathrm{init}}$. Kirchhoff's formula gives
\begin{equation}\label{eq:def-retarded-operator}
\mathcal G[F](\hat t,\hat x)
:=\frac1{4\pi}
\int_{\hat t_{\mathrm{init}}}^{\hat t}
\frac1{\hat t-\hat\tau}
\int_{|\hat\xi-\hat x|=\hat t-\hat\tau}
F(\hat\tau,\hat\xi)\diff S_{\hat\xi}\diff\hat\tau.
\end{equation}
Thus $W=\mathcal G[\mathbb S[\phi]]$. Extend $R_0$ by
$R_0=\mathbb S[\phi]-\hat t^{-3}P_0(\hat x/\hat t)$ outside
$\{\hat t-\hat r\geq1/2\}$. Then \eqref{eq:modified-source-decomposition}
and linearity give
\begin{equation}\label{eq:full-source-kirchhoff-split}
W
=\mathcal G\!\left[\hat t^{-3}P_0(\hat x/\hat t)\right]
+\mathcal G[R_0].
\end{equation}

Let $z=\hat x/\hat t$ and assume
$\hat t-\hat r=\hat t(1-|z|)\geq2\hat t_{\mathrm{init}}$. The same
rescaling and change of variables used in the proof of
Proposition~\ref{prop:direct-full-source-limit} give the exact identity
\begin{equation}\label{eq:leading-source-exact-profile}
\mathcal G\!\left[\hat t^{-3}P_0(\hat x/\hat t)\right](\hat t,\hat x)
=\hat t^{-1}U(\hat x/\hat t).
\end{equation}
For the remainder, in the region
$\hat q:=\hat t-\hat r\geq1/2$,
\eqref{eq:modified-source-remainder-bound} gives
\[
|R_0|
\lesssim_\delta(C_1\eps)^2\hat t^{-4+9\delta}
\lesssim_\delta(C_1\eps)^2
\hat t^{-3+8\delta}\hat q^{-1+\delta}.
\]
In the exterior region $\hat r\geq\hat t-1/2$, the scalar source vanishes.
If $\hat r\geq\hat t$, the extended $P_0$ also vanishes. If
$\hat t-1/2\leq\hat r<\hat t$, then
\[
1-|\hat y|^2
=\frac{(\hat t-\hat r)(\hat t+\hat r)}{\hat t^2}
\lesssim\hat r^{-1},
\]
and hence, by \eqref{eq:modified-source-profile-bounds},
\[
|R_0|
=\hat t^{-3}|P_0(\hat y)|
\lesssim_\delta(C_1\eps)^2\hat r^{-4+11\delta/2}.
\]
Since $\mathcal G[R_0]$ has vanishing initial data,
Corollary~\ref{cor1-02-aug-2026} applies with
\[
a=1-8\delta,\qquad
b=1-\delta,\qquad
d=3-\frac{11}{2}\delta,
\]
for which $d\geq2+a$. Thus
\begin{equation}\label{eq:retarded-R0-bound}
|\mathcal G[R_0](\hat t,\hat x)|
\lesssim_\delta(C_1\eps)^2\hat t^{-1}
\langle\hat t-\hat r\rangle^{-1+9\delta}.
\end{equation}
Combining \eqref{eq:full-source-kirchhoff-split},
\eqref{eq:leading-source-exact-profile}, and
\eqref{eq:retarded-R0-bound} proves \eqref{eq3-2026-8-18}.
\end{proof}

\section{Asymptotics at null infinity and one-point observability}
\label{sec1-04-aug-2026}

We now extend the timelike profile $U$ to $\partial B_1(0)$ and identify
its boundary value with the final radiation profile. We then express the
limiting Klein--Gordon energy in terms of $P_0$. Comparing this expression
with the integral representation of $U$, we prove quantitative one-point
observability.

\subsection{Continuous extension and null asymptotics}
\begin{proposition}[Continuous extension to the closed ball]
\label{prop:closed-ball-extension}
Let $P_0$ and $U$ be defined by
\eqref{eq:def-new-P0} and \eqref{eq:positive-kernel-interior}. For every
$\omega\in\mathbb S^2$, set
\begin{equation}\label{eq:positive-kernel-null}
U_\infty(\omega)
:=\frac1{4\pi}\int_{|\hat y|<1}
\frac{P_0(\hat y)}{1-\omega\cdot\hat y}\diff\hat y.
\end{equation}
Then $U\in C(B_1(0))$, $U_\infty\in C(\mathbb S^2)$, and
\begin{equation}\label{eq:uniform-radial-profile-limit}
\sup_{\omega\in\mathbb S^2}
|U(\rho\omega)-U_\infty(\omega)|\longrightarrow0,
\qquad \rho\uparrow1.
\end{equation}
Consequently $U$ extends uniquely to a function in
$C(\overline{B_1(0)})$, still denoted by $U$, with
$U|_{\mathbb S^2}=U_\infty$. Equivalently,
\begin{equation}\label{eq:positive-kernel-closed-ball}
U(z)=\frac1{4\pi}\int_{|\hat y|<1}
K(z,\hat y)P_0(\hat y)\diff\hat y,
\qquad z\in\overline{B_1(0)},
\end{equation}
where on the boundary we understand
\[
K(\omega,\hat y)=\frac1{1-\omega\cdot\hat y}.
\]
\end{proposition}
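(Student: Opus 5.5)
The argument is a uniform-integrability statement for a family of positive kernels against the fixed weight $P_0$. The key inputs are the two-sided comparison \eqref{eq:D-distance-comparison} for $D$ and the boundary decay \eqref{eq:modified-source-profile-bounds}, $0\le P_0(\hat y)\lesssim (C_1\eps)^2(1-|\hat y|^2)^{1-11\delta/2}$. The plan is a dominated-convergence argument with an explicit majorant, split into a near-diagonal and a far-from-diagonal part.

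\emph{Step 1: a uniform kernel bound.} For $z\in\overline{B_1(0)}$ and $\hat y\in B_1(0)$, write $D=(1-z\cdot\hat y)^2-(1-|z|^2)(1-|\hat y|^2)$. I would aim for the lower bound
\[
D(z,\hat y)\gtrsim (1-|\hat y|^2)\,|\hat y-z|^2 ,
\]
uniformly in $z$, including $|z|=1$. Near $|z|=1$ the bound $(1-|z|^2)|\hat y-z|^2$ from \eqref{eq:D-distance-comparison} degenerates, so this step needs separate care. I would derive it by symmetry from the decomposition \eqref{eq:positive-kernel-discriminant}, exchanging the roles of $z$ and $\hat y$; $D$ is symmetric in $(z,\hat y)$. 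That exchange gives $D\geq (1-|\hat y|^2)|\hat y-z|^2$ whenever $|\hat y|<1$, by the same computation with $e=\hat y/|\hat y|$. Consequently
\[
K(z,\hat y)P_0(\hat y)\lesssim (C_1\eps)^2(1-|\hat y|^2)^{1/2-11\delta/2}|\hat y-z|^{-1}.
\]
This is integrable in $\hat y$ uniformly in $z\in\overline{B_1(0)}$ because $1/2-11\delta/2>0$ for $\delta<1/9$. Moreover, for a small set $E$ of $\hat y$, H\"older's inequality with $|\hat y-z|^{-1}\in L^{p}$ for $p<3$ gives $\int_E K P_0\lesssim |E|^{1-1/p}$ uniformly in $z$. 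This uniform integrability is the crux.

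\emph{Step 2: continuity.} For $z_n\to z$ in $\overline{B_1(0)}$ and $\hat y\neq z$, the map $z\mapsto D(z,\hat y)$ is a polynomial and stays positive. Hence $K(z_n,\hat y)P_0(\hat y)\to K(z,\hat y)P_0(\hat y)$ pointwise almost everywhere. On the boundary, $D(\omega,\hat y)=(1-\omega\cdot\hat y)^2$, so $K(\omega,\hat y)=(1-\omega\cdot\hat y)^{-1}$ and the limit kernel is exactly the one in \eqref{eq:positive-kernel-null}. By Step 1 the family $\{K(z,\cdot)P_0\}$ is uniformly integrable on the bounded set $B_1(0)$, so Vitali's theorem yields convergence of the integrals.

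This single argument gives several conclusions at once: $U\in C(B_1(0))$, $U_\infty\in C(\mathbb S^2)$, and continuity of the function defined by \eqref{eq:positive-kernel-closed-ball} on the compact set $\overline{B_1(0)}$.

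\emph{Step 3: uniform radial convergence and uniqueness.} The uniform convergence \eqref{eq:uniform-radial-profile-limit} follows from the uniform continuity of a continuous function on the compact set $\overline{B_1(0)}$. Concretely, $|U(\rho\omega)-U(\omega)|\leq \sup_{|z-z'|\le 1-\rho}|U(z)-U(z')|$, which tends to $0$ as $\rho\uparrow1$. Uniqueness of the extension holds because $B_1(0)$ is dense in $\overline{B_1(0)}$.

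\emph{Main obstacle.} The main obstacle is Step 1 near the boundary sphere, where the interior constant $(1-|z|^2)^{-1/2}$ blows up. The fix is to trade the vanishing factor $1-|z|^2$ for $1-|\hat y|^2$ using the symmetry of $D$, and then let the boundary decay of $P_0$ absorb the resulting $(1-|\hat y|^2)^{-1/2}$.
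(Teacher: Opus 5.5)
Your proposal is correct in substance, but it takes a different route from the paper, and one arithmetic step needs repair.

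\textbf{Comparison with the paper.} The paper never writes down a single majorant valid up to the sphere. For interior continuity it uses the local comparison $K(z,\hat y)\lesssim_\Omega|\hat y-z|^{-1}$ for $z$ in a compact $\Omega\Subset B_1(0)$, with a near/far split. At the boundary it separates the cap $\Gamma_\eta(\omega)=\{1-\omega\cdot\hat y<\eta\}$. On the cap, an exact slice integration in the plane $v\perp\omega$, using $(\alpha-\rho)^2+(1-\rho^2)(1-\alpha^2)=(1-\alpha\rho)^2$, gives $\int_{\Gamma_\eta(\omega)}K(\rho\omega,\cdot)P_0\lesssim(C_1\eps)^2\eta^{\kappa+1}$ uniformly in $\omega$ and $\rho\in[1/2,1]$. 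Off the cap it proves the explicit bound $|K(\rho\omega,\hat y)-(1-\omega\cdot\hat y)^{-1}|\le4\sqrt2(1-\rho)\eta^{-3}$. The paper thus obtains the uniform radial limit first, then continuity of $U_\infty$ as a uniform limit, and finally continuity of the extension.

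Your symmetric bound $D(z,\hat y)\ge(1-|\hat y|^2)|\hat y-z|^2$ is valid for every $z$, since \eqref{eq:positive-kernel-discriminant} is an algebraic identity. It gives one majorant for all $z\in\overline{B_1(0)}$ simultaneously. Vitali's theorem then yields joint continuity on the closed ball, including non-radial approach to $\mathbb S^2$, and the uniform radial convergence follows from compactness. This is shorter and treats interior and boundary points alike. What the paper's explicit cap and kernel-difference estimates buy is quantitative control: optimizing in $\eta$ gives a power-law rate in $1-\rho$, whereas your Vitali step is qualitative as written.

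\textbf{The step to fix.} You wrote that $1/2-11\delta/2>0$ for $\delta<1/9$. That holds only for $\delta<1/11$. For $1/11\le\delta<1/9$, set $\gamma:=11\delta/2-1/2\in[0,1/9)$; the weight in your majorant is then $(1-|\hat y|^2)^{-\gamma}$, which is unbounded. Your H\"older step, which treats this weight as bounded and uses only $|\hat y-z|^{-1}\in L^p$ for $p<3$, therefore does not apply as stated.

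The fix is one more application of H\"older. The weight $(1-|\hat y|^2)^{-\gamma}$ belongs to $L^3(B_1(0))$ because $3\gamma<1$. Also $\sup_{z}\||\hat y-z|^{-1}\|_{L^2(B_1(0))}<\infty$. Hence your majorant lies in $L^{6/5}(B_1(0))$ uniformly in $z\in\overline{B_1(0)}$. A uniform $L^{r}$ bound with $r>1$ on a finite measure space gives exactly the uniform integrability your Vitali argument needs. With this repair the rest of your proof goes through.
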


\begin{proof}
We first consider the interior. Fix $z_0\in B_1(0)$ and choose
$\Omega\Subset B_1(0)$ containing a neighborhood of $z_0$. Then
\eqref{eq:D-distance-comparison} gives, uniformly for $z\in\Omega$,
\[
K(z,\hat y)\lesssim_\Omega |\hat y-z|^{-1}.
\]
For $0<\eta<\frac13\operatorname{dist}(z_0,\partial\Omega)$ and
$|z-z_0|\leq\eta$, the near contributions over $|\hat y-z_0|<2\eta$
are bounded uniformly by
\[
C_\Omega\|P_0\|_{L^\infty}
\int_{|\hat y-z|<3\eta}|\hat y-z|^{-1}\diff\hat y
\lesssim_\Omega\|P_0\|_{L^\infty}\eta^2.
\]
On the complement $|\hat y-z_0|\geq2\eta$, the same condition
$|z-z_0|\leq\eta$ gives
\[
|\hat y-z|\geq|\hat y-z_0|-|z-z_0|\geq\eta.
\]
Thus the kernel is uniformly nonsingular there and dominated convergence applies. Letting first
$z\to z_0$ and then $\eta\downarrow0$ gives
\[
U\in C(B_1(0)).
\]

We now study the boundary. By
\eqref{eq:positive-kernel-boundary-value}, for fixed
$\omega\in\mathbb S^2$ and $|\hat y|<1$,
\begin{equation}\label{eq:positive-kernel-boundary-limit}
K(\rho\omega,\hat y)
\longrightarrow K(\omega,\hat y)
=\frac1{1-\omega\cdot\hat y},
\qquad \rho\uparrow1.
\end{equation}
To pass to the integral uniformly in $\omega$, we estimate the contribution
near the singularity of the boundary kernel and the kernel difference on
its complement separately.
Set
\[
\kappa:=1-\frac{11}{2}\delta>0.
\]
By \eqref{eq:modified-source-profile-bounds},
\[
|P_0(\hat y)|\lesssim(C_1\eps)^2(1-|\hat y|^2)^\kappa.
\]
For $0<\eta<1/4$ and $\omega\in\mathbb S^2$, define
\[
\Gamma_\eta(\omega)
:=\{\hat y\in B_1(0):1-\omega\cdot\hat y<\eta\}.
\]
We first prove
\begin{equation}\label{eq:boundary-cap-uniform-integrability}
\sup_{\substack{\omega\in\mathbb S^2\\1/2\leq\rho\leq1}}
\int_{\Gamma_\eta(\omega)}
K(\rho\omega,\hat y)|P_0(\hat y)|\diff\hat y
\lesssim
(C_1\eps)^2\eta^{\kappa+1}.
\end{equation}
Write
\[
\hat y=\alpha\omega+v,
\qquad v\perp\omega.
\]
On $\Gamma_\eta(\omega)$ one has $1-\eta<\alpha<1$, and
\[
1-|\hat y|^2=1-\alpha^2-|v|^2,
\qquad
D(\rho\omega,\hat y)
=(\alpha-\rho)^2+(1-\rho^2)|v|^2.
\]
For fixed $\alpha$, polar integration in the $v$-plane gives
\[
\begin{aligned}
&\int_{|v|<\sqrt{1-\alpha^2}}
\frac{(1-\alpha^2-|v|^2)^\kappa}
{\sqrt{(\alpha-\rho)^2+(1-\rho^2)|v|^2}}\diff v
\\
&\qquad\leq2\pi(1-\alpha^2)^\kappa
\int_0^{\sqrt{1-\alpha^2}}
\frac{r\diff r}{\sqrt{(\alpha-\rho)^2+(1-\rho^2)r^2}}
\\
&\qquad=
\frac{2\pi(1-\alpha^2)^{\kappa+1}}
{1-\alpha\rho+|\alpha-\rho|}.
\end{aligned}
\]
Here we used
\[
(\alpha-\rho)^2+(1-\rho^2)(1-\alpha^2)=(1-\alpha\rho)^2.
\]
Since $1-\eta<\alpha<1$, one has $1-\alpha^2\leq2(1-\alpha)$ and
$1-\alpha\rho\geq1-\alpha$ for $1/2\leq\rho\leq1$.
Thus the transverse $v$-integral is bounded by $C_\kappa(1-\alpha)^\kappa$.
Integrating in $\alpha$ yields
\[
\int_{\Gamma_\eta(\omega)}
K(\rho\omega,\hat y)|P_0(\hat y)|\diff\hat y
\lesssim
(C_1\eps)^2\int_{1-\eta}^1(1-\alpha)^\kappa\diff\alpha
\lesssim
(C_1\eps)^2\eta^{\kappa+1},
\]
uniformly in $\omega$ and $1/2\leq\rho\leq1$. This proves
\eqref{eq:boundary-cap-uniform-integrability}. Taking $\rho=1$ controls the
only possible singularity of the boundary kernel; away from the cap,
$(1-\omega\cdot\hat y)^{-1}\leq\eta^{-1}$. Hence the integral in
\eqref{eq:positive-kernel-null} is absolutely convergent.

On $B_1\setminus\Gamma_\eta(\omega)$ we have
$1-\omega\cdot\hat y\geq\eta$. Recall that
\[
K(z,\hat y)=D(z,\hat y)^{-1/2}.
\]
By \eqref{eq:def-positive-kernel-discriminant}, with
$\alpha=\omega\cdot\hat y$,
\[
\begin{aligned}
\big|D(\rho\omega,\hat y)-(1-\omega\cdot\hat y)^2\big|
&=(1-\rho)\left|2\alpha-(1+\rho)(1-|\hat y|^2+\alpha^2)\right|
\\
&\leq4(1-\rho),
\end{aligned}
\]
where $|\alpha|\leq1$ and $0\leq1-|\hat y|^2+\alpha^2\leq1$.
Hence, for fixed $\eta>0$ and $0\leq1-\rho\leq\eta^2/8$,
\[
D(\rho\omega,\hat y)
\geq(1-\omega\cdot\hat y)^2-4(1-\rho)
\geq\frac12\eta^2
\]
on $B_1\setminus\Gamma_\eta(\omega)$. To compare the kernels, write
$A:=1-\omega\cdot\hat y$ and $D_\rho:=D(\rho\omega,\hat y)$. Then
\[
\begin{aligned}
\left|K(\rho\omega,\hat y)-\frac1{1-\omega\cdot\hat y}\right|
&=\frac{|A^2-D_\rho|}
{A\sqrt{D_\rho}\bigl(A+\sqrt{D_\rho}\bigr)}
\\
&\leq4\sqrt2(1-\rho)\eta^{-3},
\end{aligned}
\]
since $A\geq\eta$ and $\sqrt{D_\rho}\geq\eta/\sqrt2$. It follows that
\[
K(\rho\omega,\hat y)
\longrightarrow K(\omega,\hat y)
=\frac1{1-\omega\cdot\hat y}
\]
uniformly in $(\omega,\hat y)$ on the complement of the cap.

For $1/2\leq\rho<1$, we therefore have
\[
\begin{aligned}
4\pi|U(\rho\omega)-U_\infty(\omega)|
\leq{}&
\int_{\Gamma_\eta(\omega)}
\left(K(\rho\omega,\hat y)+\frac1{1-\omega\cdot\hat y}\right)
|P_0(\hat y)|\diff\hat y
\\
&+\int_{B_1\setminus\Gamma_\eta(\omega)}
\left|K(\rho\omega,\hat y)-\frac1{1-\omega\cdot\hat y}\right|
|P_0(\hat y)|\diff\hat y.
\end{aligned}
\]
The first term is bounded uniformly by
$C(C_1\eps)^2\eta^{\kappa+1}$ by
\eqref{eq:boundary-cap-uniform-integrability}. For fixed $\eta$, the second
term tends to zero uniformly in $\omega$ by the preceding uniform convergence
and $P_0\in L^1(B_1)$. Therefore
\[
\limsup_{\rho\uparrow1}
\sup_{\omega\in\mathbb S^2}
|U(\rho\omega)-U_\infty(\omega)|
\lesssim
(C_1\eps)^2\eta^{\kappa+1}.
\]
Letting $\eta\downarrow0$ proves
\eqref{eq:uniform-radial-profile-limit}.

For every fixed $0<\rho<1$, the map
$\omega\mapsto U(\rho\omega)$ is continuous. Since
\eqref{eq:uniform-radial-profile-limit} is uniform in $\omega$, it follows
that $U_\infty\in C(\mathbb S^2)$. Finally, if
$z_n=\rho_n\omega_n\to\omega\in\mathbb S^2$, then
\[
|U(z_n)-U_\infty(\omega)|
\leq
|U(\rho_n\omega_n)-U_\infty(\omega_n)|
+|U_\infty(\omega_n)-U_\infty(\omega)|
\longrightarrow0.
\]
This proves the continuous extension and
\eqref{eq:positive-kernel-closed-ball}.
\end{proof}

\begin{corollary}[Identification of the final null profile]\label{cor:full-source-null-profile}
Under the assumptions of Theorem~\ref{thm:scalar-induced-wave-profile}, for every
$\omega\in\mathbb S^2$ and $\hat q\geq2\hat t_{\mathrm{init}}$,
\begin{equation}\label{eq1-2026-8-18}
\hat{\Rcal}[W](\omega,\hat q)
=U_\infty(\omega)
+O_\delta\bigl((C_1\eps)^2\hat q^{-1+9\delta}\bigr).
\end{equation}
Furthermore, for $u$ defined by \eqref{eq3'-02-dec-2025}, we have
\begin{equation}\label{eq2-26-8-17}
\hat{\Rcal}[u](\omega,\hat q)
=U_\infty(\omega)
+O_\delta\bigl(C_1\eps\hat q^{-1+9\delta}\bigr).
\end{equation}
\end{corollary}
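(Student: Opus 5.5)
The plan is to pass to the limit $\hat r\to+\infty$ along the outgoing null line $\{\hat t=\hat r+\hat q\}$ in the interior expansion \eqref{eq3-2026-8-18} of Theorem~\ref{thm:scalar-induced-wave-profile}, and then let $\hat q$ be large. Fix $\omega\in\mathbb S^2$ and $\hat q\geq 2\hat t_{\mathrm{init}}$. Along the point $(\hat r+\hat q,\hat r\omega)$ we have $\hat t-\hat r=\hat q\geq2\hat t_{\mathrm{init}}$, so \eqref{eq3-2026-8-18} applies there. Multiplying by $\hat r$ gives
\[
\hat r\,W(\hat r+\hat q,\hat r\omega)
=\frac{\hat r}{\hat r+\hat q}\,U\Big(\frac{\hat r}{\hat r+\hat q}\,\omega\Big)
+O_\delta\Big((C_1\eps)^2\frac{\hat r}{\hat r+\hat q}\,\hat q^{-1+9\delta}\Big).
\]

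Set $\rho:=\hat r/(\hat r+\hat q)$, so that $\rho\uparrow1$ as $\hat r\to\infty$. By Proposition~\ref{prop:closed-ball-extension}, $U(\rho\omega)\to U_\infty(\omega)$; the prefactor $\rho$ also tends to $1$, and $U$ is bounded on the closed ball. The existence of the limit $\hat\Rcal[W](\omega,\hat q)$ is already provided by Proposition~\ref{prop:radiation-field-existence}. Taking $\hat r\to\infty$ on both sides therefore yields \eqref{eq1-2026-8-18}, with the error constant uniform in $\hat r$ because the remainder bound in \eqref{eq3-2026-8-18} does not depend on $\hat r$.

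For \eqref{eq2-26-8-17}, write $\hat\Rcal[u]=\hat\Rcal[W]+\hat\Rcal[u_1]$, which is legitimate since both limits exist by Proposition~\ref{prop:radiation-field-existence}. Then apply \eqref{eq1-26-8-17}, namely $|\hat\Rcal[u_1]|\lesssim_\delta C_1\eps\hat q^{-1+9\delta}$. Finally, $(C_1\eps)^2\leq C_1\eps$ absorbs the first error term.

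The only delicate point is that the remainder in \eqref{eq3-2026-8-18} must hold up to the null boundary, i.e. uniformly for $|z|$ arbitrarily close to $1$ with $\hat t-\hat r$ fixed. This is guaranteed by the stated region $\hat t-\hat r\geq2\hat t_{\mathrm{init}}$, which contains no restriction on $|z|$. The exact identity \eqref{eq:leading-source-exact-profile} also holds there, since the profile $U$ is defined by the kernel formula at every $z\in B_1(0)$. Hence no further obstacle is expected beyond checking this uniformity.
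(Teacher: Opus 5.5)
Your proposal is correct and follows the paper's proof: the paper likewise evaluates $W=\hat t^{-1}U(\hat x/\hat t)+\mathcal G[R_0]$ along $\hat t=\hat r+\hat q$. It then uses Proposition~\ref{prop:closed-ball-extension} for the first term, Proposition~\ref{prop:radiation-field-existence} for existence of the limit, and \eqref{eq:retarded-R0-bound} for the remainder (this bound is what \eqref{eq3-2026-8-18} packages), and adds \eqref{eq1-26-8-17} to get the statement for $u$. One small correction to your closing remark: the exact identity \eqref{eq:leading-source-exact-profile} holds on $\{\hat t-\hat r\geq 2\hat t_{\mathrm{init}}\}$ not merely because $U$ is defined for every $z\in B_1(0)$, but because there the support bound $\lambda>(1-|z|)/2\geq\hat t_{\mathrm{init}}/\hat t$ makes the truncation of the Kirchhoff integral at $\hat\tau=\hat t_{\mathrm{init}}$ invisible.
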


\begin{proof}
Set $\hat t=\hat r+\hat q$ and $\hat x=\hat r\omega$. From
\eqref{eq:full-source-kirchhoff-split} and
\eqref{eq:leading-source-exact-profile},
\[
W(\hat r+\hat q,\hat r\omega)
=(\hat r+\hat q)^{-1}
U\left(\frac{\hat r}{\hat r+\hat q}\omega\right)
+\mathcal G[R_0](\hat r+\hat q,\hat r\omega).
\]
By Proposition~\ref{prop:closed-ball-extension}, the radiation limit of
the first term is $U_\infty(\omega)$.  The existence of
$\hat\Rcal[W]$ was proved in Proposition~\ref{prop:radiation-field-existence}; hence the radiation
limit of the second term exists as the difference.
Moreover, \eqref{eq:retarded-R0-bound} gives
\[
\limsup_{\hat r\to\infty}
\hat r\,|\mathcal G[R_0](\hat r+\hat q,\hat r\omega)|
\lesssim_\delta(C_1\eps)^2\hat q^{-1+9\delta}.
\]
This proves \eqref{eq1-2026-8-18}. Finally, combining this estimate with
\eqref{eq1-26-8-17} gives \eqref{eq2-26-8-17}.
\end{proof}

\subsection{The limiting Klein--Gordon energy}
\label{subsec:kg-energy-profile-main}

We next express the limiting flat Klein--Gordon energy in terms of
the source profile $P_0$.

\begin{proposition}[Limiting Klein--Gordon energy]
\label{prop:kg-energy-profile}
The limit
\[
E_\infty[\phi]
:=\lim_{t\to\infty}\Ebf_c^{\flat}(t,\phi)
\]
exists and satisfies
\begin{equation}\label{eq:energy-profile-identity}
E_\infty[\phi]
=\frac1{8\pi}\int_{B_1}
\frac{P_0(\hat y)}{1-|\hat y|^2}\diff\hat y.
\end{equation}
\end{proposition}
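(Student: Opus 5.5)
I will compute $\Ebf_c^{\flat}(t,\phi)$ for large $t$ directly from the asymptotic profiles of Theorem~\ref{thm:refined-kg-asymptotics}. The existence of the limit will come from showing that the integral converges to the right-hand side of \eqref{eq:energy-profile-identity}. First, I would replace the spatial gradient: write $\del_a=\delu_a-(x^a/t)\del_t$. Lemma~\ref{lem:hyperbolic-gradient-decay} shows that $\int|\delu_a\phi|^2\diff x\to0$, so up to $o(1)$ errors (cross terms controlled by Cauchy--Schwarz and the uniform energy bound of Corollary~\ref{coro1-28-5-2026}) we get
\[
\Ebf_c^{\flat}(t,\phi)=\int_{|x|<t-1}\Big((1+r^2/t^2)|\del_t\phi|^2+c^2\phi^2\Big)\diff x+o(1).
\]
Next I pass to hatted variables. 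This only changes $t$ by $O(\mSch\ln t)$ and $\del_t=B\hat\del_0$ with $B-1=O(t^{-1})$, so it costs $o(1)$ after the same Cauchy--Schwarz argument. At fixed $\hat t$ I then change variables $\hat x=\hat t\hat y$, so that $\diff\hat x=\hat t^3\diff\hat y$.

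Inserting $\hat s^{3/2}\phi=\mathcal A+\rho_\phi$ and $\hat s^{3/2}\vartheta\hat\del_0\phi=\mathcal D+\rho_t$, with $\vartheta=\sqrt{1-|\hat y|^2}$ and $\hat s=\hat t\vartheta$, the integrand times $\hat t^3$ becomes
\[
\vartheta^{-3}\Big((1+|\hat y|^2)\vartheta^{-2}\mathcal D^2+c^2\mathcal A^2\Big)+\text{errors}.
\]
Expanding with $b^-=\overline{b^+}$ gives
\[
(1+|\hat y|^2)\vartheta^{-2}\mathcal D^2+c^2\mathcal A^2=2c^2|b^+|^2\big((1+|\hat y|^2)\vartheta^{-2}+1\big)+\text{oscillatory terms}.
\]
The non-oscillatory part equals $4c^2|b^+|^2\vartheta^{-2}$, because $(1+|\hat y|^2)+(1-|\hat y|^2)=2$. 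By \eqref{eq:def-new-P0}, $4c^2|b^+|^2\vartheta^{-5}=P_0/(8\pi(1-|\hat y|^2))$, which is exactly the claimed density.

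The oscillatory terms are of the form $e^{\mp2i\Phi(\hat s,\hat y)}(b^\pm)^2\,w(\hat y)$, with $\hat s=\hat t\vartheta(\hat y)$. I expect them to vanish in the limit by a Riemann--Lebesgue or stationary-phase argument in $\hat y$. The phase gradient in $\hat y$ is roughly $2c\hat t\nabla\vartheta$, which is nondegenerate away from $\hat y=0$, plus a small correction from $a-1=O(\eps)$ that is controlled along rays. For this I would approximate $(b^\pm)^2w$ in $L^1$ by smooth compactly supported functions and integrate by parts once in the radial direction of $\hat y$. The weights are integrable because $|b^\pm|^2\vartheta^{-5}\lesssim(C_1\eps)^2\vartheta^{-5/2-11\delta/2}\cdot\vartheta^{0}$ (from \eqref{eq:kg-profile-bound-main}, $|b|^2\lesssim\vartheta^{5-11\delta}$, so $|b|^2\vartheta^{-5}\lesssim\vartheta^{-11\delta}$), which is in $L^1(B_1)$.

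The main obstacle will be controlling the remainder terms uniformly up to the boundary $|\hat y|\to1$, where the error bounds \eqref{eq:simultaneous-phi-error}--\eqref{eq:simultaneous-dt-error} carry powers of $\vartheta$. Take, for example, $\hat t^3\hat s^{-3}\vartheta^{-2}|\rho_t|^2\lesssim\vartheta^{-5}(C_1\eps)^2\vartheta^{3-2\delta}\hat s^{-2+6\delta}$. After $\diff\hat y$ integration this becomes $\hat t^{-2+6\delta}\int\vartheta^{-4+4\delta}\diff\hat y$, which diverges near the boundary. I would therefore split the domain at $\vartheta=\hat t^{-\gamma}$ for a small $\gamma>0$. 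In the inner region the remainders and cross terms, handled by Cauchy--Schwarz against the leading part, give $o(1)$, because the extra $\hat s^{-1+9\delta}$ decay beats the $\vartheta$-losses there. In the boundary layer $\{\hat t-\hat r\lesssim\hat t^{1-2\gamma}\}$ I would not use profiles at all. Instead I would bound the energy directly by the pointwise estimates \eqref{eq5-20-june-2026} together with $|\del\phi|\lesssim (t/s)t^{-3/2}s^\delta$, getting roughly $\int t^{-1}s^{-2+2\delta}\diff x$ over that shell. This is small provided $\gamma$ is chosen compatibly with $\delta<1/9$. The layer's contribution to the limit integral is likewise small, since $P_0/(1-|\hat y|^2)\in L^1$. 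Letting $\hat t\to\infty$ then yields both the existence of $E_\infty[\phi]$ and \eqref{eq:energy-profile-identity}.
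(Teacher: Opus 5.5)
The route you propose is the paper's route. You remove $\del_a$ via $\delu_a$ and Lemma~\ref{lem:hyperbolic-gradient-decay}, pass to hatted variables at cost $O(\mSch t^{-1})$, insert the profiles, isolate the non-oscillatory density $4c^2\vartheta^{-5}|b^+|^2$, and kill the oscillatory part by $L^1$ approximation plus one radial integration by parts. The step that fails is the one you call the main obstacle: your boundary-layer bound is false. From \eqref{eq5-20-june-2026} you only get $|\del_t\phi|^2\lesssim(C_1\eps)^2t^{-1}s^{-2+2\delta}$. With $s^2\simeq t(t-r)$ and $\diff x\simeq t^2\,\diff(t-r)$, the shell $\{1/2\leq t-r\lesssim t^{1-2\gamma}\}$ then contributes about $t^{\delta}\int_{1/2}^{t^{1-2\gamma}}q^{-1+\delta}\diff q\simeq\delta^{-1}t^{(2-2\gamma)\delta}$. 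This grows for every $\gamma\in(0,1/2]$, and even a shell of unit width gets $t^{\delta}$. These pointwise bounds lose $s^{\delta}$, so they cannot show that any part of the energy is small. Over the whole cone they give only $t^{2\delta}$, which is weaker than Corollary~\ref{coro1-28-5-2026}. As written, the argument does not close.

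What you missed is that nothing diverges on the actual domain of integration. The scalar field is supported in $D_t=\{|x|<t-1,\ \hat t-\hat r\geq1/2\}$, and there $\vartheta^2\geq(\hat t-\hat r)/\hat t\geq1/(2\hat t)$. So your model integral is cut off at $\vartheta\gtrsim\hat t^{-1/2}$, giving $\hat t^{-2+6\delta}\int_{\hat t^{-1/2}}^{1}\vartheta^{-3+4\delta}\diff\vartheta\lesssim\hat t^{-1+4\delta}$. The $\rho_\phi$ term and the first part of $\rho_t$ are likewise $o(1)$ for $\delta<1/9$; for example $\int\vartheta^{-3}|\rho_\phi|^2\diff\hat y\lesssim(C_1\eps)^2\hat t^{-2+18\delta}$. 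Hence your Cauchy--Schwarz argument runs on all of $D_t$ with no layer. The only layer is the one created by extending the leading integral from $D_t$ to $B_1$, and there $\vartheta^{-5}|b^+|^2\lesssim(C_1\eps)^2\vartheta^{-11\delta}$ gives $O(\hat t^{-1+11\delta/2})$. This is exactly what the paper does. If you want to keep a layer, you must use \eqref{eq:pointwise-phi-sharp} there instead (with Lemma~\ref{lem:hyperbolic-gradient-decay} for the tangential part); this gives $\hat t^{3\delta-2\gamma(3+\delta)}$, which is $o(1)$ once $\gamma>\delta/2$.

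A second, smaller point concerns the oscillatory step, where control ``along rays'' is not enough. The derivatives $\del_r\chi_{\hat t}$ and $\del_r^2\chi_{\hat t}$ of $\chi_{\hat t}=c\int_{\hat t_{\mathrm{init}}}^{\hat t\vartheta}(a-1)\diff\sigma$ involve $\hat y$-derivatives of $a$ across rays. You need them to be $o(\hat t)$ on each compact subset of $B_1\setminus\{0\}$, since $\rho,\eta\downarrow0$ with $\eps$ fixed. Lemma~\ref{lem:app-q-y-derivatives}, together with the decay of $a-1$ in \eqref{eq:app-a-lambda-derivative}, supplies $|\del_r\chi_{\hat t}|+|\del_r^2\chi_{\hat t}|\lesssim\hat t^{9\delta/2}$. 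A bare $O(\eps)$ bound on $a-1$ would not.
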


\begin{proof}
We work in the hatted variables used above. Since $\hat t$ depends only on
$t$ and $\hat x=x$, each hypersurface $\{t=t_0\}$ coincides with the
corresponding hypersurface $\{\hat t=\hat t_0\}$, where
$\hat t_0=t_0-2m_{\mathrm S}\ln t_0$, and $d\hat x=dx$.
Moreover,
\begin{equation}\label{eq:app-energy-coordinate-comparison}
\del_t=(1-2m_{\mathrm S}t^{-1})\hat\del_0=:B\hat\del_0,
\qquad
\del_a=\hat\del_a,
\end{equation}
so $B^2=1+O(m_{\mathrm S}t^{-1})$.

\paragraph{The energy decomposition and the error estimates.}

Recall the flat energy
\begin{equation}\label{eq:app-flat-energy-definition}
\Ebf_c^{\flat}(t,\phi)
=\int_{|x|<t-1}
\left(\sum_{\alpha=0}^3|\del_\alpha\phi|^2+c^2\phi^2\right)(t,x)\diff x.
\end{equation}
On a constant-$\hat t$ slice, set
\[
\hat y:=\frac{\hat x}{\hat t},
\qquad
\vartheta:=\sqrt{1-|\hat y|^2},
\qquad
D_t:=\{|x|<t-1,\ \hat t-\hat r\geq1/2\}.
\]
By Corollary~\ref{coro1-22-june-2026}, the scalar field is supported in
$D_t$, so the flat energy may be restricted to this region.
Since $\phi$ is real-valued, $b^-=\overline{b^+}$. Recall from
Theorem~\ref{thm:refined-kg-asymptotics} the profiles
\[
\mathcal A=e^{-i\Phi}b^++e^{i\Phi}\overline{b^+},
\qquad
\mathcal D=-ci e^{-i\Phi}b^++ci e^{i\Phi}\overline{b^+},
\]
evaluated here at $(\hat s,\hat y)=(\hat t\vartheta,\hat y)$.
Define the corresponding leading terms by
\[
\phi_{\mathrm{as}}:=\hat t^{-3/2}\vartheta^{-3/2}\mathcal A,
\qquad
\psi_{\mathrm{as}}:=\hat t^{-3/2}\vartheta^{-5/2}\mathcal D,
\]
and write
\begin{equation}\label{eq:app-leading-phi-time}
\phi=\phi_{\mathrm{as}}+e_\phi,
\qquad
\hat\del_0\phi=\psi_{\mathrm{as}}+e_0.
\end{equation}
Set
\[
\widehat{\delu}_a:=\hat t^{-1}\hat L_a
=\hat\del_a+\hat y_a\hat\del_0.
\]
The spatial derivatives satisfy
\begin{equation}\label{eq:app-spatial-leading}
\hat\del_a\phi=\widehat{\delu}_a\phi-\hat y_a\hat\del_0\phi.
\end{equation}
Substituting these identities and
\eqref{eq:app-energy-coordinate-comparison} into the energy gives the
exact decomposition
\begin{equation}\label{eq:app-energy-exact-decomposition}
\begin{aligned}
\Ebf_c^{\flat}(t,\phi)
={}&\int_{D_t}\left[(1+|\hat y|^2)\psi_{\mathrm{as}}^2
+c^2\phi_{\mathrm{as}}^2\right]\diff x
\\
&+\mathcal E_1(t)+\mathcal E_2(t)+\mathcal E_3(t),
\end{aligned}
\end{equation}
where
\[
\mathcal E_1(t)
:=\int_{D_t}(B^2-1)|\hat\del_0\phi|^2\diff x,
\]
\[
\mathcal E_2(t)
:=\int_{D_t}\left[
\sum_{a=1}^3|\widehat{\delu}_a\phi|^2
-2\hat\del_0\phi\sum_{a=1}^3\hat y_a\widehat{\delu}_a\phi\right]\diff x,
\]
and
\[
\begin{aligned}
\mathcal E_3(t):=\int_{D_t}\Big[
&(1+|\hat y|^2)\bigl(2\psi_{\mathrm{as}}e_0+e_0^2\bigr)
\\
&+c^2\bigl(2\phi_{\mathrm{as}}e_\phi+e_\phi^2\bigr)
\Big]\diff x.
\end{aligned}
\]
These terms account for the coordinate correction, the tangential
derivatives, and the profile remainders, respectively. We prove that
each tends to zero. All $L^2$ norms below are taken over $D_t$, unless
another domain is specified.

For $\mathcal E_1$, the bound $|B^2-1|\lesssim m_{\mathrm S}t^{-1}$ and
$B\simeq1$ give
\[
|\mathcal E_1(t)|
\leq\int_{D_t}|B^2-1|\,|\hat\del_0\phi|^2\diff x
\lesssim m_{\mathrm S}t^{-1}\Ebf_c^{\flat}(t,\phi)=o(1),
\]
by the uniform flat-energy bound \eqref{eq6-21-june-2026}.

For $\mathcal E_2$, \eqref{eq:hyperbolic-gradient-L2}, the coordinate
relations \eqref{eq3-04-june-2026}, and \eqref{eq6-21-june-2026} give
\begin{equation}\label{eq:app-tangential-energy-small}
\int_{|x|<t-1}|\widehat{\delu}_a\phi|^2\diff x
\lesssim (C_1\eps)^2t^{-2+2\delta}=o(1).
\end{equation}
Since $|\hat y|\leq1$ on $D_t$, Cauchy--Schwarz yields
\[
|\mathcal E_2(t)|
\leq\sum_{a=1}^3\|\widehat{\delu}_a\phi\|_{L^2}^2
+2\|\hat\del_0\phi\|_{L^2}
\left(\sum_{a=1}^3\|\widehat{\delu}_a\phi\|_{L^2}^2\right)^{1/2}
=o(1).
\]

To estimate $\mathcal E_3$, we first bound $e_\phi$ and $e_0$ in $L^2$.
Substituting $\hat s=\hat t\vartheta$ into
\eqref{eq:simultaneous-phi-error} and \eqref{eq:simultaneous-dt-error} gives
\begin{equation}\label{eq:app-profile-remainders-plane}
\begin{aligned}
|e_\phi|
&\lesssim C_1\eps\hat t^{-5/2+9\delta}
\vartheta^{-1+25\delta/2},
\\
|e_0|
&\lesssim C_1\eps\left(
\hat t^{-5/2+9\delta}\vartheta^{-2+25\delta/2}
+\hat t^{-5/2+3\delta}\vartheta^{-2+2\delta}
\right).
\end{aligned}
\end{equation}
On $D_t$,
\begin{equation}\label{eq:app-support-vartheta-lower}
\vartheta^2
=\frac{(\hat t-\hat r)(\hat t+\hat r)}{\hat t^2}
\geq\frac{\hat t-\hat r}{\hat t}.
\end{equation}
Since $\hat x=x$, we have $d\hat x=dx=r^2\diff r\diff\omega$ and
$\hat r=r$. Moreover, $0\leq r\leq\hat t-1/2$ on $D_t$, and
\eqref{eq:app-support-vartheta-lower} implies, for $p>0$,
\[
\vartheta^{-p}\leq\hat t^{p/2}(\hat t-r)^{-p/2}.
\]
For the first estimate in \eqref{eq:app-profile-remainders-plane}, if
$0<\delta<2/25$ we use this bound with $p=2-25\delta$, while if
$2/25\leq\delta<1/9$ we use $\vartheta^{-2+25\delta}\leq1$.
In both cases,
\begin{equation}\label{eq:app-profile-remainders-L2}
\|e_\phi\|_{L^2}^2
\lesssim (C_1\eps)^2\hat t^{-2+18\delta}=o(1).
\end{equation}
Similarly, after squaring the two terms in the second line of
\eqref{eq:app-profile-remainders-plane}, for $0<\delta<1/9$ we have
\[
\hat t^{-1+11\delta/2}
\int_0^{\hat t-1/2}(\hat t-r)^{-2+25\delta/2}\diff r=o(1),
\qquad
\hat t^{-1+4\delta}
\int_0^{\hat t-1/2}(\hat t-r)^{-2+2\delta}\diff r=o(1).
\]
Hence
\begin{equation}\label{eq:app-time-remainder-L2}
\|e_0\|_{L^2}^2=o(1).
\end{equation}
The leading terms have bounded $L^2$ norms. Indeed, by
\eqref{eq:kg-profile-bound-main} and $d\hat x=\hat t^3\diff\hat y$,
\[
\|\phi_{\mathrm{as}}\|_{L^2}^2
+\|\psi_{\mathrm{as}}\|_{L^2}^2
\lesssim\int_{B_1}\vartheta^{-5}|b^+(\hat y)|^2\diff\hat y
\lesssim_\delta(C_1\eps)^2.
\]
Consequently, Cauchy--Schwarz gives
\[
\begin{aligned}
|\mathcal E_3(t)|\lesssim{}&
\|\psi_{\mathrm{as}}\|_{L^2}\|e_0\|_{L^2}
+\|e_0\|_{L^2}^2
\\
&+\|\phi_{\mathrm{as}}\|_{L^2}\|e_\phi\|_{L^2}
+\|e_\phi\|_{L^2}^2=o(1).
\end{aligned}
\]
Thus all three error terms in \eqref{eq:app-energy-exact-decomposition}
are $o(1)$.

We next extend the leading integral to $B_1$. Under
$\hat y=\hat x/\hat t$, the region $D_t$ is the ball
$|\hat y|<R_t$, where
\[
R_t=\min\left\{\frac{t-1}{\hat t},\ 1-\frac1{2\hat t}\right\}
\geq1-\frac1{\hat t}.
\]
Hence the omitted region is a shrinking boundary layer with
$\vartheta\lesssim\hat t^{-1/2}$. The leading density satisfies
\[
(1+|\hat y|^2)\vartheta^{-5}\mathcal D^2
+c^2\vartheta^{-3}\mathcal A^2
\lesssim\vartheta^{-5}|b^+(\hat y)|^2
\lesssim(C_1\eps)^2\vartheta^{-11\delta}.
\]
Since $d\hat y\simeq\vartheta\diff\vartheta\diff\omega$ near
$\partial B_1$, the error from extending the integral is bounded by
\[
\int_{\vartheta\lesssim\hat t^{-1/2}}
\vartheta^{-5}|b^+(\hat y)|^2\diff\hat y
\lesssim_\delta(C_1\eps)^2\hat t^{-1+11\delta/2}=o(1).
\]

In this extension to $B_1$, the phase uses the real continuous extension
specified in Theorem~\ref{thm:refined-kg-asymptotics}. Its exponential
has modulus one, so the density bound is independent of the convention
on the omitted layer. Every fixed $K_{\rho,\eta}$ satisfies
$\hat t\vartheta\geq s_0$ for all sufficiently large $\hat t$.
Thus the later differentiations and integrations by parts on that compact
set use exactly the original phase.
Combining this with \eqref{eq:app-energy-exact-decomposition}, we obtain
\begin{equation}\label{eq:app-leading-energy-expanded}
\Ebf_c^{\flat}(t,\phi)
=\int_{B_1}\left[
(1+|\hat y|^2)\vartheta^{-5}\mathcal D^2
+c^2\vartheta^{-3}\mathcal A^2
\right]\diff\hat y+o(1).
\end{equation}

We now expand the two profiles:
\[
\begin{aligned}
\mathcal D^2
&=-c^2e^{-2i\Phi}(b^+)^2
-c^2e^{2i\Phi}\overline{b^+}^{\,2}
+2c^2|b^+|^2,\\
\mathcal A^2
&=e^{-2i\Phi}(b^+)^2
+e^{2i\Phi}\overline{b^+}^{\,2}
+2|b^+|^2.
\end{aligned}
\]
Since $\vartheta^2=1-|\hat y|^2$, the integrand in
\eqref{eq:app-leading-energy-expanded} reduces to
\begin{equation}\label{eq:app-energy-nonosc-plus-osc}
4c^2\vartheta^{-5}|b^+(\hat y)|^2
-2c^2|\hat y|^2\vartheta^{-5}
\left(
 e^{-2i\Phi}(b^+(\hat y))^2
 +e^{2i\Phi}\overline{b^+(\hat y)}^{\,2}
\right).
\end{equation}
The first term is integrable by \eqref{eq:kg-profile-bound-main}.
It remains to prove that the two conjugate oscillatory integrals tend to zero.

\paragraph{Vanishing of the oscillatory terms.}

The weights in the plane energy leave the oscillatory terms in
\eqref{eq:app-energy-nonosc-plus-osc}, whereas those in the scalar source
give the pointwise cancellation \eqref{eq:leading-source-cancellation}.
We estimate the energy integrals by radial integration by parts.
For the related treatment of oscillatory source terms in
Kirchhoff's formula, see \cite[Section~4.2.2]{Chen-Lindblad-2023}.
On the plane $\hat t=\mathrm{const}$, recall that
$\vartheta(\hat y)=\sqrt{1-|\hat y|^2}$ and write
\begin{equation}\label{eq:app-phase-on-plane}
\Phi_{\hat t}(\hat y)
:=c\int_{\hat t_{\mathrm{init}}}^{\hat t\vartheta(\hat y)}a(\sigma,\hat y)\diff\sigma
=c\hat t\vartheta(\hat y)-c\hat t_{\mathrm{init}}+\chi_{\hat t}(\hat y),
\end{equation}
where
\begin{equation}\label{eq:app-phase-correction}
\chi_{\hat t}(\hat y)
:=c\int_{\hat t_{\mathrm{init}}}^{\hat t\vartheta(\hat y)}(a(\sigma,\hat y)-1)\diff\sigma.
\end{equation}
Define
\[
I_{\mathrm{osc}}(\hat t)
:=\int_{B_1}
|\hat y|^2\vartheta^{-5}(b^+(\hat y))^2
 e^{-2i\Phi_{\hat t}(\hat y)}\diff\hat y.
\]
By \eqref{eq:app-leading-energy-expanded} and
\eqref{eq:app-energy-nonosc-plus-osc},
\begin{equation}\label{eq:app-energy-oscillatory-decomposition}
\Ebf_c^{\flat}(t,\phi)
=4c^2\int_{B_1}\vartheta^{-5}|b^+(\hat y)|^2\diff\hat y
-4c^2\operatorname{Re}I_{\mathrm{osc}}(\hat t)+o(1).
\end{equation}
It therefore suffices to prove
\begin{equation}\label{eq:app-oscillatory-goal}
I_{\mathrm{osc}}(\hat t)\longrightarrow0
\qquad\text{as }\hat t\to+\infty.
\end{equation}

By \eqref{eq:kg-profile-bound-main},
\begin{equation}\label{eq:app-osc-amplitude-bound}
|\hat y|^2\vartheta^{-5}|b^+(\hat y)|^2
\lesssim (C_1\eps)^2|\hat y|^2\vartheta^{-11\delta},
\end{equation}
so the amplitude of $I_{\mathrm{osc}}$ belongs to $L^1(B_1)$.
Fix $0<\rho,\eta<1/4$ and decompose
\[
B_1=\{|\hat y|<\rho\}\cup\{\vartheta<\eta\}\cup K_{\rho,\eta},
\qquad
K_{\rho,\eta}:=\{|\hat y|\geq\rho,\ \vartheta\geq\eta\}.
\]
On the central region,
\begin{equation}\label{eq:app-center-small}
\int_{|\hat y|<\rho}|\hat y|^2\vartheta^{-5}|b^+|^2\diff\hat y
\lesssim (C_1\eps)^2\rho^5,
\end{equation}
while near the boundary, writing $\vartheta=\sqrt{1-r^2}$ gives
$d\hat y\simeq\vartheta\diff\vartheta\diff\omega$ and hence
\begin{equation}\label{eq:app-boundary-small}
\int_{\vartheta<\eta}|\hat y|^2\vartheta^{-5}|b^+|^2\diff\hat y
\lesssim (C_1\eps)^2\eta^{2-11\delta}.
\end{equation}
On $K_{\rho,\eta}$, for
$\hat y=r\omega$,
\[
\del_r(c\hat t\vartheta)=-c\hat t\frac r\vartheta.
\]
To estimate the correction $\chi_{\hat t}$, we use
Lemma~\ref{lem:app-q-y-derivatives} in
Appendix~\ref{sec:appendix-kg-profile-technical}.
Since $a=\sqrt{1+\mathfrak q}$ and $|\mathfrak q|\leq1/3$, the chain rule and
\eqref{eq:app-q-y-derivatives} give
\begin{equation}\label{eq:app-a-y-derivatives}
|\del_{\hat y}^k a(\lambda,\hat y)|
\lesssim_{\rho,\eta}C_1\eps\,\lambda^{-1+7\delta/2},
\qquad k=1,2.
\end{equation}
By the ray-wise coefficient bounds established in
Appendix~\ref{sec:appendix-kg-profile-technical}, in particular
\eqref{eq:ode-quantitative-bounds} and \eqref{eq:ode-structural-bounds}, and by
the explicit Schwarzschild formula on the remaining finite segment, we also have
\begin{equation}\label{eq:app-a-lambda-derivative}
|a-1|\lesssim_{\rho,\eta} C_1\eps\,\lambda^{-1+9\delta/2},
\qquad
|\del_\lambda a|\lesssim_{\rho,\eta}C_1\eps\,\lambda^{-2+9\delta/2}.
\end{equation}
Write $\hat y=r\omega$. Since
\[
\hat t\vartheta(r)\simeq_{\rho,\eta}\hat t,
\qquad
|\del_r(\hat t\vartheta)|+|\del_r^2(\hat t\vartheta)|
\lesssim_{\rho,\eta}\hat t,
\]
differentiating \eqref{eq:app-phase-correction} gives
\begin{equation}\label{eq:app-phase-correction-first}
\del_r\chi_{\hat t}
=c\bigl(a(\hat t\vartheta,r\omega)-1\bigr)\del_r(\hat t\vartheta)
+c\int_{\hat t_{\mathrm{init}}}^{\hat t\vartheta}\del_r a(\sigma,r\omega)\diff\sigma,
\end{equation}
and therefore, by \eqref{eq:app-a-y-derivatives} and
\eqref{eq:app-a-lambda-derivative},
\begin{equation}\label{eq:app-phase-correction-first-bound}
|\del_r\chi_{\hat t}|
\lesssim_{\rho,\eta}C_1\eps\hat t^{9\delta/2}.
\end{equation}
A second derivative gives
\begin{equation}\label{eq:app-phase-correction-second}
\begin{aligned}
\del_r^2\chi_{\hat t}=c\Big(&
(\del_\lambda a)(\hat t\vartheta,r\omega)
\bigl(\del_r(\hat t\vartheta)\bigr)^2
+2(\del_r a)(\hat t\vartheta,r\omega)\del_r(\hat t\vartheta)
\\
&+\bigl(a(\hat t\vartheta,r\omega)-1\bigr)\del_r^2(\hat t\vartheta)
+\int_{\hat t_{\mathrm{init}}}^{\hat t\vartheta}\del_r^2a(\sigma,r\omega)\diff\sigma\Big),
\end{aligned}
\end{equation}
so the same estimates yield
\begin{equation}\label{eq:app-phase-correction-second-bound}
|\del_r^2\chi_{\hat t}|
\lesssim_{\rho,\eta}C_1\eps\hat t^{9\delta/2}.
\end{equation}
Since also
\[
\del_r^2(c\hat t\vartheta)=O_{\rho,\eta}(\hat t),
\]
these bounds and $9\delta/2<1$ imply, for sufficiently large $\hat t$,
\begin{equation}\label{eq:app-phase-nonstationary}
|\del_r\Phi_{\hat t}|\gtrsim_{\rho,\eta}\hat t,
\qquad
|\del_r^2\Phi_{\hat t}|\lesssim_{\rho,\eta}\hat t.
\end{equation}
Thus the phase is non-stationary on $K_{\rho,\eta}$.

Since only continuity of $b^+$ has been established, we approximate the
amplitude in $L^1$. For fixed
$\rho,\eta$ and any $\varepsilon_0>0$, the $L^1$ integrability in
\eqref{eq:app-osc-amplitude-bound} allows us to choose
$G\in C_c^1(\operatorname{int}K_{\rho,\eta})$ such that
\[
\left\|
|\hat y|^2\vartheta^{-5}(b^+)^2-G
\right\|_{L^1(K_{\rho,\eta})}
<\varepsilon_0.
\]
For the smooth amplitude $G$, polar coordinates give
\[
\int_{K_{\rho,\eta}}G(\hat y)e^{-2i\Phi_{\hat t}(\hat y)}\diff\hat y
=
\int_{\mathbb S^2}\int_{\rho}^{\sqrt{1-\eta^2}}
 r^2G(r\omega)e^{-2i\Phi_{\hat t}(r,\omega)}\diff r\diff\omega.
\]
Since
\[
e^{-2i\Phi_{\hat t}}
=-\frac1{2i\del_r\Phi_{\hat t}}\del_r e^{-2i\Phi_{\hat t}},
\]
one integration by parts in $r$ gives no boundary term and yields
\[
\begin{aligned}
&\int_{\rho}^{\sqrt{1-\eta^2}}
 r^2G(r\omega)e^{-2i\Phi_{\hat t}(r,\omega)}\diff r
\\
&\qquad
=\int_{\rho}^{\sqrt{1-\eta^2}}
\del_r\left(
\frac{r^2G(r\omega)}{2i\del_r\Phi_{\hat t}(r,\omega)}
\right)e^{-2i\Phi_{\hat t}(r,\omega)}\diff r.
\end{aligned}
\]
By \eqref{eq:app-phase-nonstationary},
\[
\left|
\del_r\left(
\frac{r^2G}{\del_r\Phi_{\hat t}}
\right)
\right|
\lesssim_{\rho,\eta}
\hat t^{-1}\bigl(|G|+|\del_rG|\bigr),
\]
and therefore
\begin{equation}\label{eq:app-osc-annulus-ibp}
\left|
\int_{K_{\rho,\eta}}G(\hat y)e^{-2i\Phi_{\hat t}(\hat y)}\diff\hat y
\right|
\lesssim_{\rho,\eta}
\hat t^{-1}
\left(\|G\|_{L^1}+\|\del_rG\|_{L^1}\right)
\longrightarrow0.
\end{equation}
The approximation error is uniform in $\hat t$, since
$|e^{-2i\Phi_{\hat t}}|=1$:
\[
\left|
\int_{K_{\rho,\eta}}
\left(|\hat y|^2\vartheta^{-5}(b^+)^2-G\right)
 e^{-2i\Phi_{\hat t}}\diff\hat y
\right|
<\varepsilon_0.
\]
Hence, for fixed $\rho$ and $\eta$,
\[
\limsup_{\hat t\to\infty}
\left|
\int_{K_{\rho,\eta}}
|\hat y|^2\vartheta^{-5}(b^+)^2
 e^{-2i\Phi_{\hat t}}\diff\hat y
\right|
\leq\varepsilon_0.
\]
Since $\varepsilon_0$ is arbitrary, the oscillatory integral over
$K_{\rho,\eta}$ tends to zero. Combining this with
\eqref{eq:app-center-small} and \eqref{eq:app-boundary-small} gives
\[
\limsup_{\hat t\to\infty}|I_{\mathrm{osc}}(\hat t)|
\lesssim
(C_1\eps)^2\left(\rho^5+\eta^{2-11\delta}\right).
\]
Letting $\rho,\eta\downarrow0$ after the limit $\hat t\to\infty$ proves
\eqref{eq:app-oscillatory-goal}.

By \eqref{eq:app-energy-oscillatory-decomposition} and
\eqref{eq:app-oscillatory-goal}, we obtain
\[
\Ebf_c^{\flat}(t,\phi)
=4c^2\int_{B_1}
(1-|\hat y|^2)^{-5/2}|b^+(\hat y)|^2\diff\hat y+o(1).
\]
Since $\hat t\to\infty$ together with $t\to\infty$, letting $t\to\infty$ and
using \eqref{eq:def-new-P0}, namely
\[
\frac1{8\pi}\frac{P_0(\hat y)}{1-|\hat y|^2}
=4c^2(1-|\hat y|^2)^{-5/2}|b^+(\hat y)|^2,
\]
we obtain
\[
E_\infty[\phi]
=4c^2\int_{B_1}
(1-|\hat y|^2)^{-5/2}|b^+(\hat y)|^2\diff\hat y
=\frac1{8\pi}\int_{B_1}
\frac{P_0(\hat y)}{1-|\hat y|^2}\diff\hat y,
\]
which proves \eqref{eq:energy-profile-identity}.

\end{proof}

\subsection{Quantitative one-point observability}

By \eqref{eq:positive-kernel-closed-ball} and
\eqref{eq:energy-profile-identity},
\begin{equation}\label{eq:two-positive-representations}
U(z)=\frac1{4\pi}\int_{B_1}K(z,\hat y)P_0(\hat y)\diff\hat y,
\qquad
E_\infty[\phi]=\frac1{8\pi}\int_{B_1}
\frac{P_0(\hat y)}{1-|\hat y|^2}\diff\hat y.
\end{equation}
The lower bound for $K$ and the decay of $P_0$ near $\partial B_1$
give the following estimate.

\begin{proposition}[Quantitative one-point observability]
\label{prop:closed-ball-observability}
Let
\[
\kappa:=1-\frac{11}{2}\delta>0.
\]
Then, for every $z\in\overline{B_1(0)}$,
\begin{equation}\label{eq:holder-observability-initial}
\Ebf_c^{\flat}(2,\phi)
\lesssim_\delta
(C_1\eps)^{\frac{2}{\kappa+1}}
U(z)^{\frac{\kappa}{\kappa+1}}.
\end{equation}
\end{proposition}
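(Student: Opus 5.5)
The plan is to chain three facts already established in the paper: the inverse energy estimate, the energy identity for $E_\infty[\phi]$, and a uniform lower bound on the kernel $K$. Then I split the energy integral into an interior part and a boundary layer and optimize the layer width.

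\emph{Step 1: reduce to $E_\infty[\phi]$.} By Corollary~\ref{coro1-28-5-2026}, $\Ebf_c^{\flat}(2,\phi)\leq A\,\Ebf_c^{\flat}(t,\phi)$ for all $t\geq2$. Letting $t\to\infty$ and using Proposition~\ref{prop:kg-energy-profile} gives
\[
\Ebf_c^{\flat}(2,\phi)\lesssim E_\infty[\phi]=\frac1{8\pi}\int_{B_1}\frac{P_0(\hat y)}{1-|\hat y|^2}\diff\hat y .
\]

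\emph{Step 2: lower bound for the kernel.} For $z\in B_1(0)$, \eqref{eq:D-distance-comparison} gives $D(z,\hat y)\leq|\hat y-z|^2\leq4$, so $K(z,\hat y)\geq 1/2$. For $z=\omega\in\mathbb S^2$, $1-\omega\cdot\hat y\leq2$, so again $K\geq1/2$. Since $P_0\geq0$, the representation \eqref{eq:positive-kernel-closed-ball} yields, for every $z\in\overline{B_1(0)}$,
\[
\int_{B_1}P_0\diff\hat y\leq 8\pi\,U(z).
\]

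\emph{Step 3: split and optimize.} Fix $0<\eta\leq1$ and split $B_1$ into $\{1-|\hat y|^2\geq\eta\}$ and $\{1-|\hat y|^2<\eta\}$.
\begin{itemize}
\item On the first region the weight satisfies $(1-|\hat y|^2)^{-1}\leq\eta^{-1}$, so its contribution is at most $8\pi\eta^{-1}U(z)$ by Step 2.
\item On the boundary layer, \eqref{eq:modified-source-profile-bounds} gives $P_0/(1-|\hat y|^2)\lesssim(C_1\eps)^2(1-|\hat y|^2)^{\kappa-1}$. Integrating in the radial variable, with $\diff\hat y\simeq \diff(1-|\hat y|^2)\,\diff\omega$ near the sphere, bounds this contribution by $\lesssim_\delta(C_1\eps)^2\eta^{\kappa}$, using $\kappa>0$.
\end{itemize}
Hence
\[
\Ebf_c^{\flat}(2,\phi)\lesssim_\delta \eta^{-1}U(z)+(C_1\eps)^2\eta^{\kappa}.
\]

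\emph{Step 4: choice of $\eta$.}
\begin{itemize}
\item If $U(z)\leq(C_1\eps)^2$, take $\eta=\bigl(U(z)/(C_1\eps)^2\bigr)^{1/(\kappa+1)}\leq1$. Both terms then equal $(C_1\eps)^{2/(\kappa+1)}U(z)^{\kappa/(\kappa+1)}$, which is \eqref{eq:holder-observability-initial}. The case $U(z)=0$ follows by letting $\eta\downarrow0$, or directly from this formula.
\item If $U(z)>(C_1\eps)^2$, use the crude bound from \eqref{eq:modified-source-profile-bounds}, which gives $E_\infty[\phi]\lesssim(C_1\eps)^2\int_{B_1}(1-|\hat y|^2)^{\kappa-1}\diff\hat y\lesssim_\delta(C_1\eps)^2$. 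Since $(C_1\eps)^2=(C_1\eps)^{2/(\kappa+1)}\bigl((C_1\eps)^2\bigr)^{\kappa/(\kappa+1)}<(C_1\eps)^{2/(\kappa+1)}U(z)^{\kappa/(\kappa+1)}$, the estimate holds again.
\end{itemize}

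There is no real obstacle here; all the analytic work sits in the earlier propositions. The only points needing care are that the kernel lower bound is uniform up to the boundary, including the null case $z\in\mathbb S^2$, and that the boundary-layer integral converges, which is exactly where $\kappa>0$ (that is, $\delta<2/11$) is used.
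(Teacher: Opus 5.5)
Your proof is correct and follows the paper's argument essentially step for step. You use the same ingredients: the uniform kernel lower bound $K\geq 1/2$, hence $\int_{B_1}P_0\leq 8\pi U(z)$; the split at $1-|\hat y|^2=\eta$; the choice $\eta\simeq\bigl(U(z)/(C_1\eps)^2\bigr)^{1/(\kappa+1)}$ with separate handling of $U(z)=0$ and $U(z)>(C_1\eps)^2$; and the inverse flat-energy estimate passing from $E_\infty[\phi]$ to $\Ebf_c^{\flat}(2,\phi)$. The only cosmetic differences are that you apply the inverse energy estimate first and verify the boundary case $1-\omega\cdot\hat y\leq2$ directly.
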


\begin{proof}
By \eqref{eq:D-distance-comparison},
\begin{equation}\label{eq:kernel-uniform-lower}
K(z,\hat y)=D(z,\hat y)^{-1/2}
\geq |\hat y-z|^{-1}\geq\frac12,
\qquad
z\in\overline{B_1(0)},\quad \hat y\in B_1(0).
\end{equation}
Thus
\begin{equation}\label{eq:P0-L1-by-profile}
\int_{B_1}P_0(\hat y)\diff\hat y\leq8\pi U(z).
\end{equation}

By \eqref{eq:modified-source-profile-bounds},
\[
0\leq P_0(\hat y)\lesssim
(C_1\eps)^2(1-|\hat y|^2)^\kappa.
\]
For $0<\eta<1/2$, split the integral in
\eqref{eq:energy-profile-identity} into
$\{1-|\hat y|^2\geq\eta\}$ and $\{1-|\hat y|^2<\eta\}$. On the first
region, \eqref{eq:P0-L1-by-profile} gives
\[
\int_{1-|\hat y|^2\geq\eta}
\frac{P_0(\hat y)}{1-|\hat y|^2}\diff\hat y
\lesssim \eta^{-1}U(z).
\]
On the boundary layer, write $\hat y=\rho\theta$. Since
$1-\rho^2<\eta<1/2$, one has $\rho\simeq1$ and
$\diff(1-\rho^2)=-2\rho\diff\rho$, so
$\diff\hat y=\rho^2\diff\rho\diff\theta\simeq
|\diff(1-\rho^2)|\diff\theta$. Using the profile bound and $\kappa>0$, we
obtain
\[
\int_{1-|\hat y|^2<\eta}
\frac{P_0(\hat y)}{1-|\hat y|^2}\diff\hat y
\lesssim
(C_1\eps)^2\int_0^\eta \sigma^{\kappa-1}\diff\sigma
\lesssim_\delta (C_1\eps)^2\eta^\kappa.
\]
Therefore
\begin{equation}\label{eq:energy-profile-eta}
E_\infty[\phi]
\lesssim_\delta
\eta^{-1}U(z)+(C_1\eps)^2\eta^\kappa.
\end{equation}

If $0<U(z)\leq(C_1\eps)^2$, choose
\[
\eta\simeq
\left(\frac{U(z)}{(C_1\eps)^2}\right)^{\frac1{\kappa+1}},
\]
with the implicit constant small enough that $\eta<1/2$. This gives
\[
E_\infty[\phi]
\lesssim_\delta
(C_1\eps)^{\frac{2}{\kappa+1}}
U(z)^{\frac{\kappa}{\kappa+1}}.
\]
If $U(z)=0$, the same conclusion follows by letting $\eta\downarrow0$ in
\eqref{eq:energy-profile-eta}. If $U(z)>(C_1\eps)^2$, the direct bound
$E_\infty[\phi]\lesssim_\delta(C_1\eps)^2$, which follows from
\eqref{eq:energy-profile-identity} and the profile bound, gives the same
estimate. Finally, the inverse flat-energy comparison
\eqref{eq6-21-june-2026} yields
\[
\Ebf_c^{\flat}(2,\phi)\lesssim E_\infty[\phi],
\]
and hence \eqref{eq:holder-observability-initial}.
\end{proof}

For $z=\omega\in\partial B_1(0)$, this estimate bounds the initial
scalar energy in terms of $U_\infty(\omega)$ at a single null direction.

\subsection{Proof of Theorem~\ref{thm1-15-aug-2026}}
We first translate the asymptotic profile of $W$ into that of $\hbar$. By \eqref{eq3'-02-dec-2025} and \eqref{eq1-23-7-2026},
\begin{equation}\label{eq:trace-W-comparison}
|\hbar(\hat t,\hat x)-W(\hat t,\hat x)|
\lesssim |\phi(\hat t,\hat x)|^2+|u_1(\hat t,\hat x)|
\lesssim_\delta C_1\eps\hat t^{-1}\langle\hat t-\hat r\rangle^{-1+8\delta},
\end{equation}
where \eqref{eq7-03-aug-2026} and \eqref{eq2-2026-8-3} are applied. Combining this with Theorem~\ref{thm:scalar-induced-wave-profile} gives \eqref{eq4-19-08-2026}. Proposition~\ref{prop:closed-ball-extension} gives the continuous extension $U\in C(\overline{B_1(0)})$ and the uniform radial convergence to $U_\infty$, proving the closed-ball profile assertions in the theorem.

The quantitative closed-ball observability estimate in
\eqref{eq:main-quantitative-observability} follows from
Proposition~\ref{prop:closed-ball-observability}. The last assertion of the
theorem follows immediately by taking $U(z_0)=0$ in that estimate. The scalar
initial data then vanish, so uniqueness gives $\phi\equiv0$; consequently
$W\equiv0$ and $U\equiv0$.

Finally,
$u=\hbar+12\pi\phi^2$ by \eqref{eq3'-02-dec-2025}, while
\eqref{eq5-20-june-2026} implies
$\hat r\,|\phi(\hat r+\hat q,\hat r\omega)|^2\to0$ for each fixed
$\hat q\geq2\hat t_{\mathrm{init}}$. Hence
\[
\hat\Rcal[\hbar](\omega,\hat q)=\hat\Rcal[u](\omega,\hat q)
=U_\infty(\omega)+O_\delta\bigl(C_1\eps\hat q^{-1+9\delta}\bigr),
\qquad \hat q\geq2\hat t_{\mathrm{init}},
\]
\[
\lim_{\hat q\to+\infty}\hat\Rcal[\hbar](\omega,\hat q)=U_\infty(\omega).
\]

\appendix
\section{Integration along Characteristics}\label{sec1-31-july-2026}
\subsection{Technical preparation}
\begin{lemma}\label{lem1-19-jan-2026}
	Let $u\in C^2\bigl(\{(t,x):t\geq t_{\mathrm{init}}\}\bigr)$, with
	$t_{\mathrm{init}}\geq1$, and write
	\[
		u_0(x):=u(t_{\mathrm{init}},x),
		\qquad
		u_1(x):=\del_tu(t_{\mathrm{init}},x).
	\]
	Assume that there exist constants
	$C_{\mathrm{int}},C_{\mathrm{ext}},C_0>0$,
	$0<a\leq1$, $0<b<1$ and $d\geq2+a$ such that
	\begin{equation}\label{eq3-08-jan-2026}
		\left|
		(\del_t-\del_r)
		(\del_t+\del_r)
		(ru)
		\right|
		\leq
		\begin{cases}
			C_{\mathrm{int}}(t+r)^{-1-a}\langle t-r\rangle^{-b},
			\quad &r\leq t-1/2;
			\\
			C_{\mathrm{ext}}(t+r)^{-d},
			\quad &r\geq t-1/2.
		\end{cases}
	\end{equation}
	Assume furthermore that
	\begin{equation}\label{eq1-21-aug-2026-B}
		|u_0(r\omega)|
		\lesssim C_0\langle r\rangle^{-1},
		\qquad
		\left|
		\del_r(ru_0(r\omega))+ru_1(r\omega)
		\right|
		\lesssim C_0\langle r\rangle^{-1-a}.
	\end{equation}
	Then, for $(t,x)\in\{r\leq t-1/2\}$,
	\begin{equation}\label{eq5-02-aug-2026}
		|u(t,x)|
		\lesssim
		a^{-1}
		\bigl(
		C_0+C_{\mathrm{ext}}+(1-b)^{-1}C_{\mathrm{int}}
		\bigr)
		t^{-1}\langle t-r\rangle^{1-a-b}.
	\end{equation}
	Furthermore, the radiation field
	\begin{equation}\label{eq2-19-jan-2026}
		\Rcal[u](\omega,q)
		:=
		\lim_{r\rightarrow+\infty}ru(r+q,r\omega)
	\end{equation}
	exists for all
	$(\omega,q)\in\mathbb S^2\times[1/2,+\infty)$, and
	\begin{equation}\label{eq6-02-aug-2026}
		\big|\Rcal[u](\omega,q)\big|
		\lesssim
		a^{-1}
		\bigl(
		C_0+C_{\mathrm{ext}}+(1-b)^{-1}C_{\mathrm{int}}
		\bigr)
		\langle q\rangle^{1-a-b}.
	\end{equation}
	Moreover, for $r\geq q$ and $r+q\geq t_{\mathrm{init}}$,
	\begin{equation}\label{eq1-12-aug-2026}
		\begin{aligned}
		\big|
		\Rcal[u](\omega,q)-ru(r+q,r\omega)
		\big|\lesssim
		a^{-1}
		\bigl(
		C_0+C_{\mathrm{ext}}+(1-b)^{-1}C_{\mathrm{int}}
		\bigr)
		\langle q\rangle^{1-b}r^{-a}.
		\end{aligned}
	\end{equation}
\end{lemma}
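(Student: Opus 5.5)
The plan is to integrate the one-dimensional wave equation for $\Phi:=ru$ along characteristics. Introduce null coordinates $\xi:=t+r$, $q:=t-r$, and write $\Phi(t,r\omega)$ for fixed $\omega$. Then
\[
(\del_t-\del_r)(\del_t+\del_r)\Phi=4\del_\xi\del_q\Phi=:F.
\]
Set $G:=(\del_t+\del_r)\Phi$.

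\textbf{Step 1: integrate $G$ along incoming characteristics.} Along the line $\xi=\mathrm{const}$ (i.e.\ $t+r$ fixed, $t$ increasing), $(\del_t-\del_r)G=F$. Integrate backward from $(t,r)$ until we reach either the initial slice $t=t_{\mathrm{init}}$ or the axis $r=0$.
\begin{itemize}
\item If we reach the initial slice, $G$ equals $\del_r(ru_0)+ru_1$ there. By \eqref{eq1-21-aug-2026-B} this is $\lesssim C_0\langle t+r\rangle^{-1-a}$, since there $r\simeq t+r$ when $t\gg t_{\mathrm{init}}$.
\item If we reach the axis, use $\Phi=0$ at $r=0$, so $G(t,0)=u(t,0)$. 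This requires a separate treatment, for instance by reflecting: define $\Phi(t,-r)=-\Phi(t,r)$, so that the 1D equation holds on the whole line.
\end{itemize}
The source integral along the segment is $\int F\,\diff q'$ over $q'$ from the initial value to $q$. Split it into the part with $q'\geq1/2$ (interior) and $q'<1/2$ (exterior).
\begin{itemize}
\item Interior: bounded by $C_{\mathrm{int}}\xi^{-1-a}\int\langle q'\rangle^{-b}\diff q'\lesssim (1-b)^{-1}C_{\mathrm{int}}\xi^{-1-a}\langle q\rangle^{1-b}$.
\item Exterior: $\xi$ is fixed along the line and $q'$ ranges over an interval of length $\lesssim\xi$, giving $C_{\mathrm{ext}}\xi^{1-d}\lesssim C_{\mathrm{ext}}\xi^{-1-a}$ because $d\geq2+a$.
\end{itemize}
Thus $|G|\lesssim (C_0+C_{\mathrm{ext}}+(1-b)^{-1}C_{\mathrm{int}})\,\xi^{-1-a}\langle q\rangle^{1-b}$.

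\textbf{Step 2: integrate $G$ along outgoing characteristics.} Now integrate $(\del_t+\del_r)\Phi=G$ along $q=\mathrm{const}$ from its starting point, which is on the initial slice (where $|\Phi|=|ru_0|\lesssim C_0$) or on the axis (where $\Phi=0$). Since $\int\xi^{-1-a}\diff\xi\lesssim a^{-1}\xi_0^{-a}\lesssim a^{-1}$, we get $|\Phi|\lesssim a^{-1}(\cdots)\langle q\rangle^{1-b}$. Refining this by using $\xi_0\gtrsim\langle q\rangle$ when starting from the axis gives $\langle q\rangle^{1-a-b}$.

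To convert to \eqref{eq5-02-aug-2026}, divide by $r$. When $r\geq t/2$ this gives $t^{-1}$. When $r<t/2$, $\Phi/r$ is not controlled directly. There I would use the reflected solution and bound $\Phi(t,r)/r=\frac1{2r}\int_{-r}^{r}\del_r\Phi$, with $\del_r\Phi=(G-G^-)/2$ where $G^-:=(\del_t-\del_r)\Phi$ is estimated symmetrically. This is presumably where the extra $\langle t-r\rangle^{-a}$ gain, and hence the exponent $1-a-b$, enters: $\xi\simeq q\simeq t$ in this region.

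\textbf{Step 3: radiation field.} For fixed $q$, the quantity $ru(r+q,r\omega)=\Phi$ along the outgoing line satisfies $\del_\xi\Phi=G/2$ with $|G|\lesssim\xi^{-1-a}\langle q\rangle^{1-b}$, which is integrable in $\xi$. So the limit exists, giving $\Rcal[u]$. The tail from $\xi=2r+q$ to $\infty$ is $\lesssim a^{-1}\langle q\rangle^{1-b}r^{-a}$, which is \eqref{eq1-12-aug-2026}. Bound \eqref{eq6-02-aug-2026} is the limit of the Step 2 bound.

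\textbf{Main obstacle.} The main obstacle is the region near the axis in Step 2, converting the bound on $\Phi$ into the pointwise $t^{-1}\langle t-r\rangle^{1-a-b}$ bound for $u$. I would handle it with the odd reflection and a mean-value argument in $r$, or alternatively by the Kirchhoff/positive-kernel comparison already used in the paper for interior points.
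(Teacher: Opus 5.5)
Your Steps 1--3 follow the paper's proof. The backward integration along $t+r=\mathrm{const}$ gives $|(\del_t+\del_r)(ru)|\lesssim (C_0+C_{\mathrm{ext}}+(1-b)^{-1}C_{\mathrm{int}})(t+r)^{-1-a}\langle t-r\rangle^{1-b}$. The integration along $t-r=\mathrm{const}$, started on the axis (or on the initial slice when $t-r<t_{\mathrm{init}}$), produces the gain $a^{-1}\langle t-r\rangle^{-a}$, and the tail integral gives \eqref{eq1-12-aug-2026}. In Step 1 the axis is never reached. Going backward in time along $t+r=\mathrm{const}$, $r$ increases, so you always end on $\{t=t_{\mathrm{init}}\}$ at $r=t+r-t_{\mathrm{init}}$, and no reflection is needed there.

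The region $r<t/2$ is not actually an obstacle, and the fixed-time mean-value argument you propose for it is not fully supported by the hypotheses. The outgoing segment you already integrate in Step 2 runs from $(t_*,r_*)$, with $r_*\le r_1$, to $(t_1,r_1)$. It has length at most $r_1$, and along it $t+r\ge t_1-r_1\ge t_1/2$. Hence
\[
r_1|u(t_1,r_1\omega)|\le r_*|u_0(r_*\omega)|+r_1\sup|G|
\lesssim r_1\Bigl(C_0\,\mathbf{1}_{\{t_1-r_1<t_{\mathrm{init}}\}}+\bigl(C_0+C_{\mathrm{ext}}+(1-b)^{-1}C_{\mathrm{int}}\bigr)t_1^{-1-a}\langle t_1-r_1\rangle^{1-b}\Bigr).
\]
Divide by $r_1$ and use $t_1\simeq t_1-r_1$, together with $t_1<2t_{\mathrm{init}}$ when the indicator is nonzero; this gives \eqref{eq5-02-aug-2026}. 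This is exactly the paper's argument.

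Your alternative needs bounds on $G^-=(\del_t-\del_r)(ru)$. For $t-r\ge t_{\mathrm{init}}$ these do follow from $G^-(t,0)=-G(t,0)$ and an outgoing integration. In the compact region $1/2\le t-r<t_{\mathrm{init}}$, $r<t/2$, however, the relevant characteristics end on $\{t=t_{\mathrm{init}}\}$, where $G^-=ru_1-\del_r(ru_0)$. The hypotheses \eqref{eq1-21-aug-2026-B} control only $u_0$ and the combination $\del_r(ru_0)+ru_1$. That step would therefore need an extra assumption, such as $|\del u_0|+|u_1|\lesssim C_0$ near the origin, which is assumed only in Corollary~\ref{cor1-02-aug-2026}. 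Your Kirchhoff fallback is not available either, since the lemma bounds $(\del_t-\del_r)(\del_t+\del_r)(ru)$ and gives no equation for $\Box u$.
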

\begin{proof}
	Set
	\[
		U(t,r,\omega):=ru(t,r\omega).
	\]
	We fix $\omega\in\mathbb S^2$. For
	$(t_1,r_1)\in\{r\leq t-1/2\}$, we first integrate
	\eqref{eq3-08-jan-2026} along the line
	$\{t+r=t_1+r_1\}$. By \eqref{eq1-21-aug-2026-B},
	\[
		\begin{aligned}
		\big| (\del_t+\del_r)
		U(t_{\mathrm{init}},t_1+r_1-t_{\mathrm{init}},\omega)\big|
		&=\left|\del_r(ru_0)+ru_1\right|
		(t_1+r_1-t_{\mathrm{init}},\omega)
		\\
		&\lesssim C_0(t_1+r_1)^{-1-a}.
		\end{aligned}
	\]
	Consequently, if
	\[
		t_1+r_1-t_{\mathrm{init}}
		<\frac{t_1+r_1}{2}-\frac14,
	\]
	the characteristic segment does not meet the exterior region, and the
	$C_{\mathrm{ext}}$-term in the estimate below is omitted. Thus,
	\begin{equation}\label{eq7-02-aug-2026}
		\begin{aligned}
		\big|(
		\del_t+\del_r)U(t_1,r_1,\omega)
		\big|
	     &\lesssim
		C_0(t_1+r_1)^{-1-a}
		+C_{\mathrm{ext}}(t_1+r_1)^{-d}
		\int_{\frac{t_1+r_1}{2}-\frac14}^{t_1+r_1-t_{\mathrm{init}}}
		\diff r
		\\&\quad+C_{\mathrm{int}}(t_1+r_1)^{-1-a}
		\int_{r_1}^{\frac{t_1+r_1}{2}-\frac14}
		\frac{\diff r}
		{\left\langle\frac{t_1+r_1}{2}-r\right\rangle^b}
		\\
		&\lesssim
		\bigl(
		C_0+C_{\mathrm{ext}}+(1-b)^{-1}C_{\mathrm{int}}
		\bigr)
		(t_1+r_1)^{-1-a}
		\langle t_1-r_1\rangle^{1-b}.
		\end{aligned}
	\end{equation}
	Here and below the implicit constants may depend on
	$t_{\mathrm{init}}$.

	Define the starting point on the line
	$\{t-r=t_1-r_1\}$ by
	\[
		(t_*,r_*):=
		\begin{cases}
			(t_1-r_1,0),
			& t_1-r_1\geq t_{\mathrm{init}},
			\\
			(t_{\mathrm{init}},t_{\mathrm{init}}-(t_1-r_1)),
			& t_1-r_1<t_{\mathrm{init}}.
		\end{cases}
	\]
	In the first case, $U(t_*,r_*,\omega)=0$. In the second case,
	\[
		|U(t_*,r_*,\omega)|
		=r_*|u_0(r_*\omega)|
		\lesssim C_0r_*\langle r_*\rangle^{-1}.
	\]
	Since $1/2\leq t_1-r_1<t_{\mathrm{init}}$ and
	$r_*\leq r_1$, this term is bounded by the right-hand sides of
	the estimates below. In the second case the actual integration starts
	at $t=t_{\mathrm{init}}$ and is therefore shorter.

	For
	$(t_1,r_1)\in\{r<t-1/2\}\cap\{r_1\geq t_1/2\}$,
	we integrate \eqref{eq7-02-aug-2026} along
	$\{t-r=t_1-r_1\}$ and obtain
	\begin{equation}\label{eq8-02-aug-2026}
		\begin{aligned}
		|U(t_1,r_1,\omega)|
		\lesssim&
		|U(t_*,r_*,\omega)|+
		\bigl(
		C_0+C_{\mathrm{ext}}+(1-b)^{-1}C_{\mathrm{int}}
		\bigr)
		\langle t_1-r_1\rangle^{1-b}
		\int_{t_1-r_1}^{t_1}
		\frac{\diff t}
		{\bigl(t-\frac{t_1-r_1}{2}\bigr)^{1+a}}
		\\
		\lesssim&
		a^{-1}
		\bigl(
		C_0+C_{\mathrm{ext}}+(1-b)^{-1}C_{\mathrm{int}}
		\bigr)
		\langle t_1-r_1\rangle^{1-a-b}.
		\end{aligned}
	\end{equation}
	This leads to \eqref{eq5-02-aug-2026} in this region. For
	$(t_1,r_1)\in\{r<t-1/2\}\cap\{r_1<t_1/2\}$,
	we similarly obtain
	\[
		\begin{aligned}
		|U(t_1,r_1,\omega)|
		\lesssim&
		a^{-1}
		\bigl(
		C_0+C_{\mathrm{ext}}+(1-b)^{-1}C_{\mathrm{int}}
		\bigr)
		\\
		&\qquad\times
		r_1t_1^{-1-a}
		\langle t_1-r_1\rangle^{1-b}.
		\end{aligned}
	\]
	Since $t_1\simeq t_1-r_1$ in this region, this also gives
	\eqref{eq5-02-aug-2026}.

	Finally, along $t-r=q$, estimate \eqref{eq7-02-aug-2026} gives
	\[
		\big|
		(\del_t+\del_r)U(q+r,r,\omega)
		\big|
		\lesssim
		\bigl(
		C_0+C_{\mathrm{ext}}+(1-b)^{-1}C_{\mathrm{int}}
		\bigr)
		(q+2r)^{-1-a}\langle q\rangle^{1-b}.
	\]
	The right-hand side is integrable in $r$. Hence
	$U(q+r,r,\omega)$ has a limit as $r\rightarrow+\infty$,
	which proves the existence of \eqref{eq2-19-jan-2026}.
	Estimate \eqref{eq6-02-aug-2026} follows from
	\eqref{eq8-02-aug-2026}, while integration on
	$[r,+\infty)$ gives \eqref{eq1-12-aug-2026}.
\end{proof}

\begin{corollary}\label{cor1-02-aug-2026}
	Let $u$ be a $C^2$ solution to
	\begin{equation}\label{eq1-03-aug-2026}
		-\Box u=f,
		\qquad
		u(t_{\mathrm{init}},x)=u_0(x),
		\qquad
		\del_tu(t_{\mathrm{init}},x)=u_1(x),
		\qquad
		t_{\mathrm{init}}\geq1.
	\end{equation}
	Assume that the initial data satisfy \eqref{eq1-21-aug-2026-B}.
	Assume furthermore that there exists $R_0\geq1$ such that
	$u_0$ and $u_1$ are spherically symmetric in $\{r\geq R_0\}$ and
	\begin{equation}\label{eq2-21-aug-2026-B}
		|\del u_0(x)|+|u_1(x)|
		\lesssim C_0,
		\qquad r\leq2R_0.
	\end{equation}
	Suppose that
	\begin{equation}\label{eq3-03-aug-2026}
		|f(t,x)|
		\lesssim
		\begin{cases}
			C_{\mathrm{int}}t^{-2-a}(t-r)^{-b},
			\quad &r<t-1/2;
			\\
			C_{\mathrm{ext}}r^{-d-1},
			\quad &r\geq t-1/2,
		\end{cases}
	\end{equation}
	where $C_{\mathrm{int}},C_{\mathrm{ext}},C_0>0$,
	$0<a\leq1$, $0<b<1$ and $d\geq2+a$.
	Then, for $(t,x)\in\{r<t-1/2\}$,
	\begin{equation}\label{eq4-03-aug-2026}
		|u(t,x)|
		\lesssim
		a^{-1}
		\bigl(
		C_0+C_{\mathrm{ext}}+(1-b)^{-1}C_{\mathrm{int}}
		\bigr)
		t^{-1}\langle t-r\rangle^{1-a-b}.
	\end{equation}
	The implicit constant may depend on $t_{\mathrm{init}}$, $R_0$, and $d$.
\end{corollary}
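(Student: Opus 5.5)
The plan is to reduce Corollary~\ref{cor1-02-aug-2026} to Lemma~\ref{lem1-19-jan-2026}, applied to each fixed direction $\omega$. The starting point is the radial identity
\[
(\del_t-\del_r)(\del_t+\del_r)(ru)=r\big(\del_t^2-\del_r^2-\tfrac{2}{r}\del_r\big)u=-r\Box u+r^{-1}\Delta_{\mathbb S^2}u=rf+r^{-1}\Delta_{\mathbb S^2}u .
\]
If $u$ were spherically symmetric, the angular term would vanish. The source hypothesis \eqref{eq3-03-aug-2026} would then give $|rf|\lesssim C_{\mathrm{int}}t^{-1-a}(t-r)^{-b}$ in $r<t-1/2$, and since $r\leq t$ there and $t+r\simeq t$, this is \eqref{eq3-08-jan-2026}. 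In $r\geq t-1/2$ we would have $t+r\lesssim r$, so $|rf|\lesssim C_{\mathrm{ext}}r^{-d}\lesssim C_{\mathrm{ext}}(t+r)^{-d}$. The lemma would then give \eqref{eq4-03-aug-2026} directly. The whole difficulty is therefore the angular term, and I would avoid it entirely by splitting $u$ via linearity rather than estimating $\Delta_{\mathbb S^2}u$.

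Concretely, I would write $u=v+w$. Here $v$ solves $-\Box v=f$ with zero data, and $w$ solves the homogeneous equation with data $(u_0,u_1)$. For $w$, I would use the further structure of the data. The data are spherically symmetric for $r\geq R_0$, so I split them with a radial cutoff $\chi$ equal to $1$ on $r\leq R_0$ and supported in $r\leq 2R_0$. The compactly supported piece $(\chi u_0,\chi u_1)$ is controlled by \eqref{eq2-21-aug-2026-B} together with $|u_0|\lesssim C_0$. The classical estimate $|w|\lesssim C_0 t^{-1}$ for such data (Kirchhoff) then handles it. I would also note that this piece vanishes for $t-r\gtrsim 2R_0$ by the strong Huygens principle, which is even better than needed. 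The remaining piece $((1-\chi)u_0,(1-\chi)u_1)$ is spherically symmetric, so its solution is radial. Applying Lemma~\ref{lem1-19-jan-2026} to it with zero source (so $C_{\mathrm{int}}=C_{\mathrm{ext}}=0$) uses \eqref{eq1-21-aug-2026-B}; the cutoff only modifies it on the bounded set $r\leq2R_0$, where \eqref{eq2-21-aug-2026-B} controls the error. This gives $t^{-1}\langle t-r\rangle^{1-a}$. Since $\langle t-r\rangle^{-b}\le1$ fails to give the exponent $1-a-b$, I would first check whether the lemma's own bound already carries $\langle t-r\rangle^{1-a-b}$ when $C_{\mathrm{int}}=0$. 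Re-reading \eqref{eq7-02-aug-2026} with only the $C_0$ term, the $(\del_t+\del_r)U$ bound is $C_0(t+r)^{-1-a}$ without the $\langle t-r\rangle^{1-b}$ factor. Integrating, this yields $C_0\langle t-r\rangle^{-a}\le C_0\langle t-r\rangle^{1-a-b}$, so this is fine.

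For $v$ with a general source, I would use positivity of the forward fundamental solution in $3$D. Since $|v|\leq \mathcal G[|f|]\le\mathcal G[F]$, where $F(t,r)$ is the radial majorant on the right side of \eqref{eq3-03-aug-2026}, it suffices to bound the radial solution $\mathcal G[F]$. Kirchhoff's formula is a positive operator, so comparison holds pointwise. This radial solution satisfies the hypotheses of Lemma~\ref{lem1-19-jan-2026} with zero data, and the source verification above applies verbatim. The step I expect to need the most care is this comparison. The majorant must be radial and continuous across $r=t-1/2$, which I would arrange by taking the maximum of the two branches, extended suitably. It must also be $C^2$ enough for the lemma, which I would obtain by mollifying, or by noting that the lemma's proof only integrates the source along characteristics. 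Combining the $v$ and $w$ bounds gives \eqref{eq4-03-aug-2026}, with constants depending on $t_{\mathrm{init}},R_0,d$.
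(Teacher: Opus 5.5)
Your proposal is correct and follows the paper's proof essentially step for step. The paper also splits the data with a radial cutoff into a spherically symmetric far part, handled by Lemma~\ref{lem1-19-jan-2026} with zero source, and a compactly supported near part, handled by Kirchhoff's formula and the strong Huygens principle. It likewise bounds the source part, via positivity of the forward fundamental solution, by the radial solution with the piecewise majorant, and applies the characteristic integrations to it in integral form.

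Two small corrections, neither affecting validity. First, the Huygens support property is genuinely needed, not merely a bonus, when $a+b>1$, since $t^{-1}$ alone does not give $t^{-1}\langle t-r\rangle^{1-a-b}$ there. Second, do not make the majorant continuous by taking a global maximum of the two branches, since $C_{\mathrm{ext}}r^{-d-1}$ violates the interior bound near $r=0$. The discontinuous majorant with the integral form of the lemma, which you also suggest, is exactly what suffices.
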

\begin{proof}
	Choose a radial function $\chi$ such that
	\[
		0\leq\chi\leq1,
		\qquad
		\chi=0\ \text{on }[0,R_0],
		\qquad
		\chi=1\ \text{on }[2R_0,+\infty).
	\]
	The initial data
	$(\chi u_0,\chi u_1)$ are spherically symmetric. Moreover,
	\eqref{eq1-21-aug-2026-B} and \eqref{eq2-21-aug-2026-B} imply
	\[
		|\chi u_0|\lesssim C_0\langle r\rangle^{-1},
		\qquad
		\left|
		\del_r(r\chi u_0)+r\chi u_1
		\right|
		\lesssim C_0\langle r\rangle^{-1-a}.
	\]
	Thus the spherically symmetric homogeneous solution generated by these
	data satisfies \eqref{eq4-03-aug-2026}, by
	Lemma~\ref{lem1-19-jan-2026} with vanishing right-hand side.

	The remaining initial data
	$((1-\chi)u_0,(1-\chi)u_1)$ are supported in
	$\{r\leq2R_0\}$. Kirchhoff's formula, the strong Huygens principle
	and \eqref{eq2-21-aug-2026-B} show that the corresponding homogeneous
	solution also satisfies \eqref{eq4-03-aug-2026} with constant
	$\lesssim C_0$.

	It remains to estimate the solution with zero initial data and source
	$f$. Let $v$ be the spherically symmetric solution to
	\[
		-\Box v=g,
		\qquad
		v(t_{\mathrm{init}},x)
		=\del_tv(t_{\mathrm{init}},x)=0,
	\]
	where
	\[
		g(t,r):=
		\begin{cases}
			C_{\mathrm{int}}t^{-2-a}(t-r)^{-b},
			& r<t-1/2,
			\\
			C_{\mathrm{ext}}r^{-d-1},
			& r\geq t-1/2.
		\end{cases}
	\]
We interpret $v$ as the retarded integral solution. Since $g$ is
non-negative and locally bounded, for $r>0$ the radial d'Alembert
formula gives
\[
v(t,r)=\frac1{2r}\int_{t_{\mathrm{init}}}^{t}
\int_{|r-(t-\tau)|}^{r+t-\tau}
\rho g(\tau,\rho)\diff\rho\diff\tau.
\]
Its value at $r=0$ is obtained by continuity. Positivity of the forward
fundamental solution bounds the source part of $u$ by $v$.
The function $w=rv$ has zero initial data and satisfies
\[
(\del_t-\del_r)(\del_t+\del_r)w=rg
\]
in the sense of distributions. The right-hand side is locally integrable
and obeys \eqref{eq3-08-jan-2026}, up to fixed multiplicative constants.
The characteristic integrations in the proof of
Lemma~\ref{lem1-19-jan-2026} therefore apply to $w$ in their integral
form: this follows by Fubini's theorem on characteristic rectangles,
or by local smooth approximation in the displayed d'Alembert formula.
Consequently the decay estimate \eqref{eq5-02-aug-2026} holds for $v$.
This is the only conclusion of that lemma needed here and completes
the proof without assuming that this piecewise-source solution is $C^2$.
\end{proof}

\begin{corollary}\label{cor1-03-aug-2026}
	Let $u$ be a $C^4$ solution to \eqref{eq1-03-aug-2026}
	satisfying the assumptions of Corollary~\ref{cor1-02-aug-2026}.
	Assume in addition that
	\begin{equation}\label{eq2-03-aug-2026}
		|\Delta_\omega f|
		\lesssim
		\begin{cases}
			C_{\mathrm{int}}t^{-2-a}(t-r)^{-b},
			\quad &r<t-1/2;
			\\
			0,
			\quad &r\geq t-1/2,
		\end{cases}
	\end{equation}
	where
	\[
		\Delta_\omega
		:=
		\sum_{1\leq b<c\leq3}\Omega_{bc}^2.
	\]
	Set
	\[
		z:=\Delta_\omega u,
		\qquad
		z_0:=\Delta_\omega u_0,
		\qquad
		z_1:=\Delta_\omega u_1.
	\]
	The two initial values of $z$ are
	\begin{equation}\label{eq3-21-aug-2026-B}
		z(t_{\mathrm{init}},x)=z_0(x),
		\qquad
		\del_tz(t_{\mathrm{init}},x)=z_1(x).
	\end{equation}
	Assume that
	\begin{equation}\label{eq4-21-aug-2026-B}
		\begin{aligned}
		&\mathrm{supp}\{z_0\}
		\cup
		\mathrm{supp}\{z_1\}
		\subseteq
		\{|x|\leq t_{\mathrm{init}}-1/2\},
		\\
		&\sum_{|I|\leq1}|\del^Iz_0(x)|
		+|z_1(x)|
		\lesssim C_0.
		\end{aligned}
	\end{equation}
	Then the radiation field
	\begin{equation}\label{eq5-21-aug-2026-B}
		\Rcal[u](\omega,q)
		=
		\lim_{r\rightarrow+\infty}ru(q+r,r\omega),
		\qquad
		q\geq1/2,
		\quad
		\omega\in\mathbb S^2,
	\end{equation}
	exists. Furthermore,
	\begin{equation}\label{eq6-21-aug-2026-B}
		\big|\Rcal[u](\omega,q)\big|
		\lesssim_{a,b,d}
		\bigl(C_0+C_{\mathrm{ext}}+C_{\mathrm{int}}\bigr)
		\langle q\rangle^{1-a-b},
		\qquad q\geq1/2.
	\end{equation}
	The implicit constant may also depend on $t_{\mathrm{init}}$ and $R_0$,
	as in Corollary~\ref{cor1-02-aug-2026}.
\end{corollary}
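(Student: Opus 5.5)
The natural route is to reduce to the radiation-field part of Lemma~\ref{lem1-19-jan-2026} applied to $z=\Delta_\omega u$, and then recover $u$ itself. The obstruction to applying that lemma directly to $u$ is the angular term. Writing $\Box u$ in polar coordinates gives
\[
(\del_t-\del_r)(\del_t+\del_r)(ru)=-rf+r^{-1}\Delta_\omega u ,
\]
so I must control $r^{-1}\Delta_\omega u=r^{-1}z$ pointwise with the decay required by \eqref{eq3-08-jan-2026}.

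\emph{First step: decay of $z$.} Since the $\Omega_{bc}$ commute with $\Box$, $z$ solves $-\Box z=\Delta_\omega f$ with data $(z_0,z_1)$. By \eqref{eq4-21-aug-2026-B} these data are compactly supported in $\{|x|\le t_{\mathrm{init}}-1/2\}$ and bounded by $C_0$, so \eqref{eq1-21-aug-2026-B} and \eqref{eq2-21-aug-2026-B} hold trivially for them. The data vanish for $r\ge R_0$, hence are spherically symmetric there. The source $\Delta_\omega f$ satisfies \eqref{eq2-03-aug-2026}, which vanishes in the exterior. Corollary~\ref{cor1-02-aug-2026} then gives
\[
|z|\lesssim (C_0+C_{\mathrm{int}})\,t^{-1}\langle t-r\rangle^{1-a-b}, \qquad r<t-1/2 .
\]
In the exterior I would still need a bound. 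Here the data of $z$ vanish for $r>t_{\mathrm{init}}-1/2$ and the source vanishes for $r\ge t-1/2$, so by Kirchhoff/Huygens (exactly as in Corollary~\ref{cor1-02-aug-2026}) $z$ is supported essentially in $\{r\le t-1/2\}$ up to a boundary layer. I would try to show $z=0$ in $\{r\ge t-1/2\}$ directly. If that fails, I would settle for a bound $t^{-1}$ there, which suffices for $d=2+a$ after checking the weights.

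\emph{Second step: apply the lemma to $u$.} In the interior,
\[
|rf|+r^{-1}|z|\lesssim C_{\mathrm{int}}t^{-1-a}(t-r)^{-b}+r^{-1}t^{-1}\langle t-r\rangle^{1-a-b}(C_0+C_{\mathrm{int}}).
\]
Near $r\sim t$ the second term is $t^{-2}\langle t-r\rangle^{1-a-b}\le t^{-1-a}\langle t-r\rangle^{-b}$ because $\langle t-r\rangle\le t$. Where $r\ll t$ the factor $r^{-1}$ is singular, and this is the main obstacle. I would handle it the way the proof of Lemma~\ref{lem1-19-jan-2026} does: only the region $r\ge t/2$ matters for the limit along $t-r=q$, since the characteristic integration producing $(\del_t+\del_r)(ru)$ runs along $t+r=\mathrm{const}$ from the data or from the region $r\ge (t+r)/2$-type points. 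For points with small $r$ I would instead use the already-available bound \eqref{eq4-03-aug-2026} on $u$ itself rather than the ODE.

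The exterior condition follows from \eqref{eq3-03-aug-2026} with $d\ge 2+a$ together with the exterior bound on $z$ from the first step. The data hypotheses for $u$ are exactly \eqref{eq1-21-aug-2026-B}; the compactly supported non-symmetric part can be split off as in Corollary~\ref{cor1-02-aug-2026}, and its radiation field exists by the explicit Kirchhoff formula.

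\emph{Conclusion.} Lemma~\ref{lem1-19-jan-2026} then gives existence of $\Rcal[u](\omega,q)$ for $q\ge1/2$ by integrability of $(\del_t+\del_r)(ru)$ along $t-r=q$, together with \eqref{eq6-21-aug-2026-B} from \eqref{eq6-02-aug-2026}. The constants pick up the factors $a^{-1}(1-b)^{-1}$, which are absorbed into $\lesssim_{a,b,d}$.
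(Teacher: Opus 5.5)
Your plan is essentially the paper's proof. You commute $\Delta_\omega$ through $\Box$ and apply Corollary~\ref{cor1-02-aug-2026} to $z$. You then use $r^{-1}|z|\lesssim (C_0+C_{\mathrm{int}})t^{-1-a}(t-r)^{-b}$ on $\{r\ge t/2\}$, a region containing every backward segment of $\{t+r=\mathrm{const}\}$ that ends on $\{t-r=q,\ r\ge q\}$, and redo the two characteristic integrations only there. Finally you get \eqref{eq6-21-aug-2026-B} by letting $r\to\infty$ in \eqref{eq4-03-aug-2026} for $u$; do not take it from \eqref{eq6-02-aug-2026}, whose derivation integrates through $\{r<t/2\}$.

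The one point to fix is the exterior. There $z\equiv0$ on $\{r\ge t-1/2\}$ holds exactly by domain of dependence: the backward cone from such a point lies in $\{|\xi|\ge\tau-1/2\}$, where both the data of $z$ and $\Delta_\omega f$ vanish. You genuinely need this exact vanishing. Your fallback bound $|z|\lesssim t^{-1}$ would give an exterior term $r^{-1}t^{-1}$, which is no better than $(t+r)^{-2}$ (that is, $d=2<2+a$). With that, $(\del_t+\del_r)(ru)$ would not be integrable along $t-r=q$.
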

\begin{proof}
	Since the spherical Laplacian commutes with $\Box$, we have
	\begin{equation}\label{eq7-21-aug-2026-B}
		-\Box z=\Delta_\omega f.
	\end{equation}
	The assumptions \eqref{eq4-21-aug-2026-B} imply all the initial-data
	bounds in Corollary~\ref{cor1-02-aug-2026} for $z$. Applying that
	corollary to \eqref{eq7-21-aug-2026-B}, and using finite propagation
	speed, we obtain
	\begin{equation}\label{eq8-21-aug-2026-B}
		|z(t,x)|
		\lesssim_{a,b}
		\begin{cases}
			(C_0+C_{\mathrm{int}})
			t^{-1}(t-r)^{1-a-b},
			& r<t-1/2,
			\\
			0,
			& r\geq t-1/2.
		\end{cases}
	\end{equation}
	Consequently, in the region
	$\{r<t-1/2\}\cap\{r\geq t/2\}$,
	\begin{equation}\label{eq9-21-aug-2026-B}
		\begin{aligned}
			r^{-1}|\Delta_\omega u|
			&\lesssim_{a,b}
			(C_0+C_{\mathrm{int}})
			t^{-2}(t-r)^{1-a-b}
			\\
			&=
			(C_0+C_{\mathrm{int}})
			t^{-1-a}(t-r)^{-b}
			\left(\frac{t-r}{t}\right)^{1-a}
			\\
			&\lesssim_{a,b}
			(C_0+C_{\mathrm{int}})
			t^{-1-a}(t-r)^{-b}.
		\end{aligned}
	\end{equation}

	We now fix $q\geq1/2$. Along the ray $t=r+q$, the condition
	$r\geq q$ implies $r\geq t/2$. Moreover, every segment
	$\{t+r=\mathrm{constant}\}$ used in the first characteristic
	integration and ending at such a point remains in $\{r\geq t/2\}$.
	On the other hand, the wave equation gives
	\[
		(\del_t-\del_r)(\del_t+\del_r)(ru)
		-r^{-1}\Delta_\omega u
		=rf.
	\]
	Thus \eqref{eq3-03-aug-2026} and
	\eqref{eq9-21-aug-2026-B} imply, on all the characteristic segments
	relevant to the tail $r\geq q$,
	\[
		\big|
		(\del_t-\del_r)(\del_t+\del_r)(ru)
		\big|
		\lesssim_{a,b}
		\begin{cases}
			(C_0+C_{\mathrm{int}})
			t^{-1-a}(t-r)^{-b},
			& r<t-1/2,
			\\
			C_{\mathrm{ext}}r^{-d},
			& r\geq t-1/2.
		\end{cases}
	\]
	The initial data of $u$ satisfy \eqref{eq1-21-aug-2026-B}.
	Repeating the two characteristic integrations in the proof of
	Lemma~\ref{lem1-19-jan-2026}, now only for $r\geq q$, proves the
	existence of \eqref{eq5-21-aug-2026-B}.

	Finally, Corollary~\ref{cor1-02-aug-2026} applied directly to $u$
	gives \eqref{eq4-03-aug-2026}. Taking the limit along
	$t=r+q$ then gives \eqref{eq6-21-aug-2026-B}.
\end{proof}
\subsection{Proof of Proposition~\ref{prop1-19-07-2026}}
\begin{proof}
	Set
	\begin{equation}
		\begin{aligned}
			u_{1,0}(\hat{x})
			&:=u_1(\hat{t}_{\mathrm{init}},\hat{x})
			=\eta^{\alpha\beta}h_{\alpha\beta}
			(\hat{t}_{\mathrm{init}},\hat{x})
			+12\pi\phi^2(\hat{t}_{\mathrm{init}},\hat{x}),
			\\
			u_{1,1}(\hat{x})
			&:=\hat{\del}_0u_1(\hat{t}_{\mathrm{init}},\hat{x})
			\\
			&=(1-\mSch)^{-1}
			\bigl(
			\eta^{\alpha\beta}\del_th_{\alpha\beta}
			+24\pi\phi\del_t\phi
			\bigr)(\hat{t}_{\mathrm{init}},\hat{x}).
		\end{aligned}
	\end{equation}
	The initial-data assumptions and the exterior Schwarzschild form imply
	that $u_{1,0}$ and $u_{1,1}$ are spherically symmetric for
	$\hat r\geq1$, and satisfy the local bounds required in
	\eqref{eq2-21-aug-2026-B}. More precisely, in the Schwarzschild
	region,
	\begin{equation}
		u_{1,0}(\hat r)
		=
		\frac{4\mSch^2}{\hat r^2-\mSch^2}
		+\frac{4\mSch}{\hat r}
		+\frac{2\mSch^2}{\hat r^2},
		\qquad
		u_{1,1}(\hat r)=0.
	\end{equation}
	Combining this identity with the local initial estimates, we obtain
	\begin{equation}\label{eq1-21-aug-2026-u1}
		|u_{1,0}(\hat r\omega)|
		\lesssim C_1\eps\langle\hat r\rangle^{-1},
		\qquad
		\left|
		\hat\del_r(\hat r u_{1,0}(\hat r\omega))
		+\hat r u_{1,1}(\hat r\omega)
		\right|
		\lesssim C_1\eps\langle\hat r\rangle^{-2}.
	\end{equation}
	In particular, the initial-data assumptions in
	Corollary~\ref{cor1-02-aug-2026} hold for every $0<a\leq1$.

	We now apply \eqref{eq1-01-august-2026}. In $\Kcal^m$, since
	$\hat q=\hat t-\hat r>1/2$,
	\begin{equation}
		|\mathbb G|
		\lesssim
		(C_1\eps)^2
		\hat t^{-3+7\delta}\hat q^{-1+\delta}.
	\end{equation}
	On $\{\hat q>1/2\}\cap(\Kcal^m)^c$, the Schwarzschild exterior estimate,
	together with $1/2<\hat q\leq1$ and $\hat r\simeq\hat t$, gives the same
	bound; for $\hat q\leq1/2$ one has
	$|\mathbb G|\lesssim(C_1\eps)^2\hat r^{-4}$. Thus the source assumptions
	of Corollary~\ref{cor1-02-aug-2026} hold with
	\[
		a=1-7\delta,
		\qquad
		b=1-\delta,
		\qquad
		d=3.
	\]
	That corollary therefore gives
	\begin{equation}
		|u_1|
		\lesssim_{\delta}
		C_1\eps\hat t^{-1}
		\langle\hat t-\hat r\rangle^{-1+8\delta}.
	\end{equation}

	It remains to estimate the radiation field. In $\Kcal^m$,
	\eqref{eq1-01-august-2026} gives
	\begin{equation}\label{eq2-21-aug-2026-u1}
		|\mathbb G|+|\Delta_\omega\mathbb G|
		\lesssim |\mathbb G|_2
		\lesssim_{\delta}
		(C_1\eps)^2
		\hat t^{-3+8\delta}\hat q^{-1+\delta}.
	\end{equation}
	Outside $\Kcal^m$, $\mathbb G$ is spherically symmetric, so
	$\Delta_\omega\mathbb G=0$, and the transition region is handled as above.
	Hence the source assumptions of Corollary~\ref{cor1-03-aug-2026} hold with
	\[
		a=1-8\delta,
		\qquad
		b=1-\delta,
		\qquad
		d=3.
	\]
	Set
	\[
		z:=\Delta_\omega u_1.
	\]
	Its two initial values are
	\begin{equation}\label{eq3-21-aug-2026-u1}
		\begin{aligned}
			z(\hat t_{\mathrm{init}},\hat x)
			&=\Delta_\omega u_{1,0}(\hat x)=:z_0(\hat x),
			\\
			\hat\del_0z(\hat t_{\mathrm{init}},\hat x)
			&=\Delta_\omega u_{1,1}(\hat x)=:z_1(\hat x).
		\end{aligned}
	\end{equation}
	The exterior Schwarzschild data show that the two quantities in
	\eqref{eq3-21-aug-2026-u1} are supported in
	$\{\hat r\leq\hat t_{\mathrm{init}}-1/2\}$. The initial high-order
	estimates also give
	\begin{equation}
		\sum_{|I|\leq1}|\hat\del^I z_0(\hat x)|
		+|z_1(\hat x)|
		\lesssim C_1\eps.
	\end{equation}
	Thus all the initial-data assumptions in
	Corollary~\ref{cor1-03-aug-2026} are satisfied. Applying that
	corollary, we obtain the existence of $\hat{\Rcal}[u_1]$ and
	\begin{equation}
		\big|\hat{\Rcal}[u_1](\omega,\hat q)\big|
		\lesssim_{\delta}
		C_1\eps\langle\hat q\rangle^{-1+9\delta}.
	\end{equation}
	This completes the proof.
\end{proof}

\section{Klein--Gordon asymptotics and profile estimates}
\label{sec:appendix-kg-profile-technical}
We prove Theorem~\ref{thm:refined-kg-asymptotics}, the coefficient
derivative bounds used in Proposition~\ref{prop:kg-energy-profile}, and
the scalar-source remainder estimate \eqref{eq:modified-source-remainder-bound}.
We begin with two ODE lemmas along hyperboloidal rays.

\subsection{Two ray-wise ODE lemmas}

\begin{lemma}[Hyperboloidal ray reduction]
\label{prop1-11-05-2021}
Let $v$ be a sufficiently regular and spatially compactly supported solution to
\begin{equation}\label{starting KG-sharp}
-\Boxt_g v+c^2v=f
\end{equation}
in $\Hcal_{[s_0,s_1]}$, where $c>0$. Set
\begin{equation}\label{eq4-11-05-2021}
g^{\alpha\beta}=\eta^{\alpha\beta}+h^{\alpha\beta},
\qquad
\bar h^{00}:=\left(\frac{t}{s}\right)^2\hu^{00}.
\end{equation}
For fixed $(t,x)\in\Hcal_{[s_0,s_1]}$, let
\[
s:=\sqrt{t^2-r^2},
\qquad
\gamma_{t,x}(\lambda):=
\left(\frac{\lambda t}{s},\frac{\lambda x}{s}\right).
\]
Assume that $|\bar h^{00}|\leq1/3$ on the ray segment under consideration,
and set
\[
w_{t,x}(\lambda)
:=\lambda^{3/2}\left.v\right|_{\gamma_{t,x}(\lambda)}.
\]
Then
\begin{equation}\label{eq1-10-05-2021}
\begin{aligned}
\frac{d^2}{d\lambda^2}w_{t,x}(\lambda)
&+\frac{c^2}
{1-\left.\bar h^{00}\right|_{\gamma_{t,x}(\lambda)}}
 w_{t,x}(\lambda)
\\
&=
\left(1-\left.\bar h^{00}\right|_{\gamma_{t,x}(\lambda)}\right)^{-1}
\left.
\big(R_1[v]+R_2[v]+R_3[v]+s^{3/2}f\big)
\right|_{\gamma_{t,x}(\lambda)}.
\end{aligned}
\end{equation}
Here, the repeated spatial indices are summed from $1$ to $3$, and
\begin{subequations}\label{eq:CMP-remainders}
\begin{align}
R_1[v]:={}&
 s^{3/2}\sum_a\delu_a\delu_a v
 +\frac{x^ax^b}{s^{1/2}}\delu_a\delu_bv
 +\frac{3}{4s^{1/2}}v
 +\frac{3x^a}{s^{1/2}}\delu_av,
\label{eq:CMP-R1}
\\
R_2[v]:={}&
-\bar h^{00}\left(
\frac{3v}{4s^{1/2}}+\frac{3s^{3/2}}{t}\delu_0v
\right)
+ s^{3/2}\Bigl(
\hu^{0b}(\delu_0\delu_b+\delu_b\delu_0)v
+\hu^{ab}\delu_a\delu_bv
\notag\\
&\hspace{35mm}
+h^{\alpha\beta}\del_\alpha\Psiu_\beta^{\ \beta'}
\delu_{\beta'}v
\Bigr)
-\frac{r^2}{s^{1/2}t}\hu^{00}\delu_0v,
\label{eq:CMP-R2}
\\
R_3[v]:={}&
-\bar h^{00}\left(
\frac{2x^as^{3/2}}{t}\delu_0\delu_av
+\frac{3x^a}{s^{1/2}}\delu_av
+\frac{x^ax^b}{s^{1/2}}\delu_a\delu_bv
\right).
\label{eq:CMP-R3}
\end{align}
\end{subequations}
\end{lemma}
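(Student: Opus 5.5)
The identity follows from rewriting $\Boxt_g v=g^{\alpha\beta}\del_\alpha\del_\beta v$ in the semi-hyperboloidal frame and then in terms of the ray derivative. The strategy follows the hyperboloidal ray reduction of \cite{PLF-YM-CMP}: the only non-tangential direction along $\gamma_{t,x}$ is the radial hyperboloidal derivative. The first step is to express the ray derivative. Since $\gamma_{t,x}(\lambda)=\lambda(t,x)/s$ and the Minkowski hyperbolic time is homogeneous of degree one, along the ray we have
\[
\frac{d}{d\lambda}=\frac{t}{s}\del_t+\frac{x^a}{s}\del_a=\frac{s}{t}\delu_0+\frac{x^a}{s}\delu_a,
\]
using $\del_a=\delu_a-(x^a/t)\del_t$ and $t^2-r^2=s^2$. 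Write this as $\del_\lambda=(s/t)\del_t+(x^a/s)\delu_a$ and compute $\del_\lambda^2$. The homogeneity gives $\del_\lambda(t/s)=0$ and $\del_\lambda(x^a/s)=0$, so
\[
\del_\lambda^2 v=\frac{t^2}{s^2}\del_t^2v+\frac{2tx^a}{s^2}\del_t\delu_av+\frac{x^ax^b}{s^2}\delu_a\delu_bv+(\text{first-order terms from }[\del_t,\delu_a]).
\]
Here the commutator gives $\del_t\delu_a=\delu_a\del_t-(x^a/t^2)\del_t$.

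The second step is the flat operator. Using $\etau^{\alpha\beta}$, we have $-\Box v=(s/t)^2\del_t^2v+2(x^a/t)\del_t\delu_av-\sum_a\delu_a\delu_av+(3/t+r^2/t^3)\del_tv$. The first-order coefficient comes from $\eta^{\alpha\beta}\del_\alpha\Psiu_\beta^{\beta'}$. Multiplying by $t^2/s^2$ identifies $(t/s)^2(-\Box v)=\del_\lambda^2v-(\text{tangential and first-order terms})$. Then substitute $v=\lambda^{-3/2}w$, noting that $\lambda=s$ on the ray. We get $\del_\lambda^2(s^{-3/2}w)=s^{-3/2}w''-3s^{-5/2}w'+\frac{15}{4}s^{-7/2}w$. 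The $w'$ term must cancel against the $3/t\cdot\del_t$ first-order term; this is exactly why the weight $\lambda^{3/2}$ was chosen, since $3/s=3/\lambda$ in $d=3$. The resulting $\frac34 s^{-1/2}v$ and the $x^a\delu_a$, $x^ax^b\delu_a\delu_b$ pieces form $R_1$. Multiplying the Klein--Gordon equation by $s^{3/2}$ then yields $w''+c^2w=R_1+s^{3/2}f$ in the flat case.

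The third step handles the metric perturbation. Decompose $h^{\alpha\beta}\del_\alpha\del_\beta v$ as in \eqref{eq6-20-june-2026}: $\hu^{00}\del_t^2v$ plus tangential terms plus frame-derivative terms. The non-$00$ pieces go directly into $R_2$. The piece $\hu^{00}\del_t^2v=(s/t)^2\bar h^{00}\del_t^2 v$ is rewritten via the step-one expansion solved for $(s/t)^2\del_t^2 v$ in terms of $\del_\lambda^2 v$. Its second-derivative part then contributes $-\bar h^{00}w''$ to the left side. The leftover mixed, tangential and first-order parts of this substitution produce the $\bar h^{00}$-terms in $R_2$ and $R_3$, including $-\frac{r^2}{s^{1/2}t}\hu^{00}\delu_0v$ from the $\del_t$ coefficient. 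Moving $\bar h^{00}w''$ to the left gives $(1-\bar h^{00})w''+c^2w=\dots$. Dividing by $1-\bar h^{00}$, which is legitimate because $|\bar h^{00}|\le1/3$, gives \eqref{eq1-10-05-2021}.

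The main obstacle is bookkeeping rather than conceptual. The coefficients of the first-order terms, especially the $3x^a s^{-1/2}\delu_a v$ and $\frac34 s^{-1/2}v$ contributions and their $\bar h^{00}$-weighted analogues, must come out exactly as displayed. To check this, I would carry the computation with all terms rather than up to lower order, verify the cancellation of $w'$ explicitly, and confirm that the frame commutator terms combine into the stated $R_1$.
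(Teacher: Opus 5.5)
Your route, a direct computation of the identity in the semi-hyperboloidal frame, is sound, but the displayed expansion in your first step is wrong, and your second step as written would not give the stated ODE. Since $\del_\lambda=(s/t)\del_t+(x^a/s)\delu_a$ and both coefficients are constant along the ray, the correct expansion is
\[
\del_\lambda^2v=\Big(\frac st\Big)^2\del_t^2v+\frac{2x^a}{t}\del_t\delu_av+\frac{r^2}{t^3}\del_tv+\frac{x^ax^b}{s^2}\delu_a\delu_bv .
\]
The first-order term comes from $\delu_a\del_t=\del_t\delu_a+(x^a/t^2)\del_t$. Your coefficients $t^2/s^2$ and $2tx^a/s^2$ belong to the Cartesian expansion with $\del_a$, not $\delu_a$. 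Compare this with your correct formula for $-\Box v$: the $r^2/t^3$ terms cancel, and
\[
-\Box v=\del_\lambda^2v+\frac3t\del_tv-\sum_a\delu_a\delu_av-\frac{x^ax^b}{s^2}\delu_a\delu_bv ,
\]
with no multiplication by $t^2/s^2$. If you did rescale, $c^2v$ and $f$ would pick up a factor $(t/s)^2$, and you would not reach $w''+c^2w=R_1[v]+s^{3/2}f$.

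With the corrected identity, use $\frac3t\del_tv=\frac3s\del_\lambda v-\frac{3x^a}{s^2}\delu_av$ and $\del_\lambda^2v+\frac3s\del_\lambda v=s^{-3/2}\bigl(w''-\frac34s^{-2}w\bigr)$. These give exactly $s^{3/2}(-\Box v)=w''-R_1[v]$. Your third step already uses the correct coefficient $(s/t)^2$ implicitly, since you solve for $(s/t)^2\del_t^2v$, and it closes:
\[
s^{3/2}\hu^{00}\del_t^2v=\bar h^{00}\bigl(w''-3s^{1/2}\del_\lambda v-\tfrac34s^{-1/2}v\bigr)-\bar h^{00}s^{3/2}\bigl(\tfrac{2x^a}{t}\del_t\delu_av+\tfrac{r^2}{t^3}\del_tv+\tfrac{x^ax^b}{s^2}\delu_a\delu_bv\bigr).
\]
After expanding $\del_\lambda v$, this yields precisely $\bar h^{00}w''$, $R_3[v]$, and the $\bar h^{00}$- and $\hu^{00}$-terms of $R_2[v]$. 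The term $-\frac{r^2}{s^{1/2}t}\hu^{00}\delu_0v$ comes from the commutator term.

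The paper does none of this computation. It quotes \cite[Lemma~3.4]{PLF-YM-CMP}, which is written in the hyperboloidal frame $\bar\del_0=(s/t)\delu_0$, $\bar\del_a=\delu_a$ with the convention $g^{\alpha\beta}=m^{\alpha\beta}-h^{\alpha\beta}$. It then translates using $\bar h^{00}=(t/s)^2\hu^{00}$, $\bar h^{0b}=(t/s)\hu^{0b}$, the sign change of $h$, and $\bar h^{00}\bar\del_0^2v=\hu^{00}\del_t^2v+\frac{r^2}{s^2t}\hu^{00}\del_tv$. That route is much shorter, but it relies on transporting an external formula across two changes of convention. Your derivation, once corrected, is self-contained and shows where each coefficient in $R_1,R_2,R_3$ comes from. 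Those exact coefficients are what the later ray-wise ODE estimates depend on.
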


\begin{proof}
The decomposition in \cite[Lemma~3.4]{PLF-YM-CMP} is written in the
hyperboloidal frame. We use
$\bar\partial_0=(s/t)\delu_0$, $\bar\partial_a=\delu_a$,
$\bar h^{00}=(t/s)^2\hu^{00}$,
$\bar h^{0b}=(t/s)\hu^{0b}$, $\bar h^{ab}=\hu^{ab}$, and
\[
\bar h^{00}\bar\partial_0^2v
=\hu^{00}\del_t^2v
+\frac{r^2}{s^2t}\hu^{00}\del_tv.
\]
Moreover, \cite{PLF-YM-CMP} uses
$g^{\alpha\beta}=m^{\alpha\beta}-h^{\alpha\beta}$, whereas here
$g^{\alpha\beta}=\eta^{\alpha\beta}+h^{\alpha\beta}$. Substitution into
\cite[proof of Lemma~3.4]{PLF-YM-CMP} gives
\eqref{eq1-10-05-2021} and \eqref{eq:CMP-remainders}.
\end{proof}

\begin{lemma}[First-order diagonalization]
\label{lem1-2026-7-31}
Let $v$ be a sufficiently regular function defined on $[s_0,\infty)$ and
satisfying
\begin{equation}\label{eq1-2026-7-31}
v''(s)+c^2(1+\mathfrak q(s))v(s)=f(s),
\end{equation}
where $c>0$, $\mathfrak q$ is real-valued,
$|\mathfrak q|\leq1/2$, and
\[
|\mathfrak q'v'|+|f|\in L^1([s_0,\infty)).
\]
Set $a(s):=\sqrt{1+\mathfrak q(s)}$ and
\begin{equation}\label{eq:general-vpm}
\begin{aligned}
v^{\pm}(s)
&:=e^{\pm ci\int_{s_0}^{s}a(\sigma)\diff\sigma}
\left(\mp\frac{v'(s)}{2ci\,a(s)}+\frac12v(s)\right),
\\
F^{\pm}(s)
&:=\pm e^{\pm ci\int_{s_0}^{s}a(\sigma)\diff\sigma}
\left(
\frac{\mathfrak q'(s)v'(s)}{4ci\,a(s)^3}
-\frac{f(s)}{2ci\,a(s)}
\right).
\end{aligned}
\end{equation}
Then
\begin{equation}\label{eq1-2026-8-22}
v^{\pm}(s)=b^{\pm}+\mathfrak r^{\pm}(s),
\qquad
b^{\pm}:=v^{\pm}(s_0)+\int_{s_0}^{\infty}F^{\pm}(\sigma)\diff\sigma,
\qquad
\mathfrak r^{\pm}(s):=-\int_s^{\infty}F^{\pm}(\sigma)\diff\sigma.
\end{equation}
Writing
\[
\Phi(s):=c\int_{s_0}^{s}a(\sigma)\diff\sigma,
\]
one has
\begin{equation}\label{eq:ode-v-vprime-reconstruction}
\begin{aligned}
v(s)
&=e^{-i\Phi(s)}\bigl(b^++\mathfrak r^+(s)\bigr)
 +e^{i\Phi(s)}\bigl(b^-+\mathfrak r^-(s)\bigr),
\\
v'(s)
&=-ci\,a(s)e^{-i\Phi(s)}\bigl(b^++\mathfrak r^+(s)\bigr)
 +ci\,a(s)e^{i\Phi(s)}\bigl(b^-+\mathfrak r^-(s)\bigr).
\end{aligned}
\end{equation}
If $v$ and $f$ are real-valued, then
\[
b^- = \overline{b^+},
\qquad
\mathfrak r^-(s)=\overline{\mathfrak r^+(s)}.
\]
Moreover,
\begin{equation}\label{eq2-2026-7-31}
\begin{aligned}
|b^{\pm}|
&\leq |v^{\pm}(s_0)|
 +Cc^{-1}\int_{s_0}^{\infty}
 \bigl(|\mathfrak q'(\sigma)v'(\sigma)|+|f(\sigma)|\bigr)\diff\sigma,
\\
|\mathfrak r^{\pm}(s)|
&\leq Cc^{-1}\int_s^{\infty}
 \bigl(|\mathfrak q'(\sigma)v'(\sigma)|+|f(\sigma)|\bigr)\diff\sigma.
\end{aligned}
\end{equation}
\end{lemma}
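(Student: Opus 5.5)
The approach is to treat this as a variation-of-constants computation for a slowly varying harmonic oscillator. First I will verify directly that $v^{\pm}$ satisfy $(v^{\pm})'=F^{\pm}$. Writing $E_\pm(s):=e^{\pm ci\int_{s_0}^s a}$, we have $E_\pm'=\pm ci\,a E_\pm$. Differentiating $v^{\pm}=E_\pm(\mp v'/(2cia)+v/2)$ produces three pieces: the derivative of $E_\pm$, the derivative of $v'/a$ (which involves $v''$ and $a'$), and $v'/2$. The derivative of $E_\pm$ contributes $\pm ci a E_\pm(\mp v'/(2cia)+v/2)=E_\pm(-v'/2\pm ci a v/2)$. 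The $-v'/2$ there cancels against the $+v'/2$ from differentiating $v/2$.

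Next I substitute $v''=f-c^2a^2v$, using $a^2=1+\mathfrak q$. The term $\mp E_\pm v''/(2cia)$ becomes $\mp E_\pm f/(2cia)\pm E_\pm c a v/(2i)$. Since $c a v/(2i)=-ci a v/2$, this cancels the $\pm ci a v/2$ term. What remains is the term coming from $a'$. Since $a'=\mathfrak q'/(2a)$, the derivative of $1/a$ is $-\mathfrak q'/(2a^3)$, so this term equals $\pm E_\pm v'\mathfrak q'/(4cia^3)$. Together these give exactly $F^{\pm}$.

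The hypotheses $|\mathfrak q|\le1/2$ (so $a\ge1/\sqrt2$ and $a$ is real) and $|\mathfrak q'v'|+|f|\in L^1$ imply $|F^\pm|\le Cc^{-1}(|\mathfrak q'v'|+|f|)\in L^1$, with $|E_\pm|=1$. Integrating from $s_0$ to $s$ therefore gives $v^\pm(s)=v^\pm(s_0)+\int_{s_0}^sF^\pm=b^\pm+\mathfrak r^\pm(s)$, where the improper integral defining $b^\pm$ converges absolutely. The bounds \eqref{eq2-2026-7-31} follow immediately.

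For the reconstruction, I take $e^{-i\Phi}v^++e^{i\Phi}v^-$ and note that $E_\pm=e^{\pm i\Phi}$. The $v'$ terms cancel and the $v/2$ terms add up to $v$. Similarly, $-ci a e^{-i\Phi}v^++ci a e^{i\Phi}v^-=-ci a(-v'/(2cia)+v/2)+ci a(v'/(2cia)+v/2)=v'$. Substituting $v^\pm=b^\pm+\mathfrak r^\pm$ gives \eqref{eq:ode-v-vprime-reconstruction}.

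Finally, if $v$, $f$ and $\mathfrak q$ are real, then $\overline{E_+}=E_-$ and $\overline{1/i}=-1/i$. Hence $\overline{v^+}=v^-$ and $\overline{F^+}=F^-$, and integrating yields $b^-=\overline{b^+}$ and $\mathfrak r^-=\overline{\mathfrak r^+}$.

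There is no real obstacle here. The only care needed is the sign bookkeeping in the derivative computation, and the point that "sufficiently regular" means $v\in C^2$ and $\mathfrak q\in C^1$, so that $a'$ exists.
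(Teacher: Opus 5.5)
Your proposal is correct and follows the same route as the paper. Both arguments differentiate $v^{\pm}$ directly, using $a'=\mathfrak q'/(2a)$ and the ODE, to get $(v^{\pm})'=F^{\pm}$, then integrate using $a\simeq 1$ to obtain the $L^1$ bounds, and finally combine $v^{+}$ and $v^{-}$ to reconstruct $v$ and $v'$; your write-up simply supplies the sign bookkeeping and the conjugation check that the paper leaves implicit.
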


\begin{proof}
As in \cite[proof of Lemma~3.5]{PLF-YM-CMP},
\[
a'=\frac{\mathfrak q'}{2a},
\]
and direct differentiation of \eqref{eq:general-vpm}, together with
\eqref{eq1-2026-7-31}, gives
\[
\bigl(v^{\pm}\bigr)'=F^{\pm}.
\]
Since $|\mathfrak q|\leq1/2$, one has $a\simeq1$ and hence
$F^\pm\in L^1([s_0,\infty))$. Integration gives
\eqref{eq1-2026-8-22} and \eqref{eq2-2026-7-31}; adding and subtracting
\eqref{eq:general-vpm} gives \eqref{eq:ode-v-vprime-reconstruction}.
\end{proof}

\subsection{Verification of the ODE assumptions}

Set
\[
s_0:=\hat t_{\mathrm{init}},
\qquad
H:=\hu^{00}+4\mSch t^{-1}.
\]

\begin{lemma}[Pointwise estimates]
\label{lem:pointwise-inputs-for-ode}
Let $(h_{\alpha\beta},\phi)$ be the global solution to
\eqref{eq3-29-nov-2025} together with initial data
\eqref{eq main initial-data} satisfying \eqref{eq1-07-dec-2025} and
\eqref{eq main smallness-condition}. Then, in $\Kcal^m$,
\begin{subequations}\label{eq:pointwise-inputs-for-ode}
\begin{align}
|h|_{N-7}
&\lesssim C_1\eps\hat t^{-1+\delta},
\label{eq:pointwise-h}
\\[1mm]
|H|_{N-9}
&\lesssim_{\delta,N}
C_1\eps\hat t^{-2+7\delta/2}
\big(\hat t-\hat r+\hat s^\delta\big),
\label{eq:pointwise-H}
\\[1mm]
|\del H|_{N-9}
&\lesssim_{\delta,N}
C_1\eps\hat t^{-2+7\delta/2},
\label{eq:pointwise-del-H}
\\[1mm]
|\phi|_{N-6}
+\frac{\hat s}{\hat t}|\del\phi|_{N-6}
+|\delu\phi|_{N-6}
&\lesssim
C_1\eps\hat t^{-3/2+\delta},
\label{eq:pointwise-phi-high}
\\[1mm]
|\phi|+|\del_t\phi|
&\lesssim_{\delta}
C_1\eps\hat t^{-5/2+\delta}
\big(\hat t-\hat r+\hat s^\delta\big)^{1+\delta/2}.
\label{eq:pointwise-phi-sharp}
\end{align}
\end{subequations}
\end{lemma}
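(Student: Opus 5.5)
Most of these bounds follow by converting the estimates already collected in Section~3 and Section~4 into the hatted variables. The conversion uses $\hat t\simeq t$ and $\hat r=r$. Corollary~\ref{coro1-22-june-2026} gives $1/2<\hat t-\hat r\leq t-r$ in $\Kcal^m$. For the reverse direction we use $t-r=\hat q+2\mSch\ln t\lesssim \hat q+\mSch\ln\hat t$. Since $\mSch\leq C_1\eps$ is small, the logarithm is absorbed by $\hat s^\delta$, which gives
\[
t-r\lesssim \hat t-\hat r+\hat s^{\delta}.
\]
This is the source of the factor $(\hat t-\hat r+\hat s^\delta)$ in the statement. Estimate \eqref{eq:pointwise-h} then follows from \eqref{eq1-14-dec-2025} together with $s\leq t$. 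Estimate \eqref{eq:pointwise-phi-high} is \eqref{eq5-20-june-2026} with $s^\delta\leq t^\delta$. Derivatives in $\mathscr Z$ are not changed by this relabeling.

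For $H=\hu^{00}+4\mSch t^{-1}$, estimate \eqref{eq:pointwise-H} is \eqref{eq-2026-8-4} with $(t-r)$ replaced as above. Estimate \eqref{eq:pointwise-del-H} follows from \eqref{eq:del-hu00-N9} and $|\del (t^{-1})|_{N-9}\lesssim t^{-2}$. The factor $\mSch t^{-2}$ is dominated by $C_1\eps\hat t^{-2+7\delta/2}$.

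The only genuinely new bound is \eqref{eq:pointwise-phi-sharp}. The plan is to recover extra decay from the $(s/t)$-weights. By \eqref{eq5-20-june-2026}, $|\del_t\phi|\lesssim C_1\eps(t/s)t^{-3/2}s^\delta$, and $|\phi|$ satisfies the same bound. Raising to the power and using $s^2\simeq t(t-r)$ gives
\[
(t/s)\,t^{-3/2}s^{\delta}\lesssim t^{-1}(t-r)^{-1/2}s^\delta .
\]
This is weaker than what is claimed near the cone, so instead I would interpolate with \eqref{eq7-03-aug-2026}, which gives $|\phi|+|\del\phi|\lesssim C_1\eps t^{-3/2}$.

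In the interior region $t-r\gtrsim t$, the target right-hand side is at least $t^{-5/2+\delta}t^{1+\delta/2}\geq t^{-3/2}$, so \eqref{eq7-03-aug-2026} already suffices. Near the cone, \eqref{eq7-03-aug-2026} is not strong enough, so we need a sharper Klein--Gordon bound there. The natural tool is the improved estimate from \cite{PLF-YM-book2} (its Eq.~(5.19)), of the form
\[
|\phi|\lesssim C_1\eps\,(s/t)^{2-\delta}\,t^{-3/2}s^{\delta}.
\]
This inequality is an assumption I have not checked; it comes from rewriting the Klein--Gordon equation as $c^2\phi=\Boxt_g\phi$ and bounding $\Boxt_g\phi$ by $(s/t)^2\del_t\del_t\phi$ plus good derivatives, following the PLF--YM "Klein--Gordon gains $(s/t)^2$" argument.

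With this bound, $(s/t)^{2}\simeq (t-r)/t$ yields $|\phi|\lesssim C_1\eps t^{-5/2+\delta}(t-r)^{1+\delta/2}$ after adjusting exponents. Applying the same argument to $\del_t\phi$ (which also solves a Klein--Gordon equation modulo commutator terms already controlled at order $N-7$) gives the same bound for $|\del_t\phi|$. Finally, $(t-r)$ is converted to $(\hat t-\hat r+\hat s^\delta)$ as above.

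The main obstacle is exactly this near-cone gain for $\phi$ and $\del_t\phi$, including the commutator bookkeeping for $\del_t\phi$. If the cited estimate does not carry the required power of $s/t$, the fallback is to integrate the ray ODE of Lemma~\ref{prop1-11-05-2021}. Its source terms $R_1$, $R_2$, $R_3$ are controlled by \eqref{eq11-20-june-2026}, and this gives $\hat s^{3/2}|\phi|\lesssim C_1\eps\hat s^{\delta}$ with the needed $(\hat s/\hat t)$ factor coming from compact support near $\del\Kcal^m$.
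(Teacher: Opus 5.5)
Your treatment of \eqref{eq:pointwise-h}--\eqref{eq:pointwise-phi-high} matches the paper: transcribe \eqref{eq1-14-dec-2025}, \eqref{eq-2026-8-4}, \eqref{eq:del-hu00-N9}, \eqref{eq5-20-june-2026} using $t\simeq\hat t$ and $\hat t-\hat r\le t-r\lesssim_\delta\hat t-\hat r+\hat s^\delta$. For the weighted term in \eqref{eq:pointwise-phi-high} you also need $\hat s/\hat t\le s/t$, which holds because $\hat t<t$. Your interior argument for \eqref{eq:pointwise-phi-sharp} via \eqref{eq7-03-aug-2026} is also correct.

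The genuine gap is \eqref{eq:pointwise-phi-sharp} near the cone, which is the entire content of that estimate. There are three problems:
\begin{itemize}
\item You assume $|\phi|\lesssim C_1\eps(s/t)^{2-\delta}t^{-3/2}s^{\delta}$ without proof. In this paper, Eq.~(5.19) of that reference is cited only for the $\delu\phi$ bound \eqref{eq11-20-june-2026}.
\item The $\del_t\phi$ case is merely asserted.
\item The fallback cannot be used as stated. Proposition~\ref{prop:ray-wise-kg-bounds} takes \eqref{eq:pointwise-phi-sharp} as an input, and $\hat s^{3/2}|\phi|\lesssim C_1\eps\hat s^{\delta}$ is weaker than \eqref{eq:pointwise-phi-high} where $\hat s\simeq\hat t^{1/2}$.
\end{itemize}

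Moreover, the mechanism you name does not produce $(s/t)^2$ in one step. The paper uses exactly that mechanism, without splitting regions. From $c^2\phi=\Box\phi+h^{\mu\nu}\del_\mu\del_\nu\phi$ and \eqref{eq:wave-decomposition-pointwise-kg} it gets \eqref{eq1-21-jan-2026}. Differentiating the equation in $t$ gives \eqref{eq:dt-kg-improved-pointwise}, whose only commutator term is $|\del h|\,|\del\del\phi|$. But \eqref{eq5-20-june-2026} controls only $(s/t)|\del\phi|_{N-6}$. So, using $s^2\ge t$ in $\Kcal$, a single pass yields
\[
|\phi|+|\del_t\phi|\lesssim C_1\eps\,(s/t)\,t^{-3/2}s^{2\delta}.
\]
Near $\del\Kcal^m$, where $s=t^{1/2+O(\delta)}$, this is $t^{-2+O(\delta)}$ rather than the required $t^{-5/2+O(\delta)}$. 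The commutator $|\del h|\,|\del\del\phi|\lesssim(C_1\eps)^2(t/s)^2t^{-3}s^{2\delta}$ is itself of this size. Also, \eqref{eq7-03-aug-2026} is zeroth order in $\del\phi$, so it cannot be inserted into these terms.

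What closes the estimate is a second pass:
\begin{itemize}
\item Apply the same identity to $\del_\alpha\mathscr Z^K\phi$ with $|K|\le2$. The commutators are bounded by $|h|_p|\del\phi|_{p+2}+|\del h|_p|\del\phi|_{p+1}$. This gives $|\del\phi|_2\lesssim C_1\eps(s/t)t^{-3/2}s^{2\delta}$, a gain of $(s/t)^2$ over \eqref{eq5-20-june-2026}.
\item Reinsert this into \eqref{eq1-21-jan-2026} and \eqref{eq:dt-kg-improved-pointwise}. This gives $|\phi|+|\del_t\phi|\lesssim C_1\eps(s/t)^2t^{-3/2}s^{3\delta}\lesssim C_1\eps\, t^{-5/2+O(\delta)}(t-r)^{1+O(\delta)}$.
\item Tracking the terms separately recovers the stated exponents, using \eqref{eq7-03-aug-2026} in the region $t-r\gtrsim t^{1/(1+\delta)}$ as you proposed.
\end{itemize}
The paper asserts the final bound right after one pass. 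With the inputs as recorded, it is this second pass that delivers the bound, and it is the step your proposal is missing.
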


\begin{proof}
By \eqref{eq1-14-dec-2025}, \eqref{eq-2026-8-4},
\eqref{eq:del-hu00-N9}, \eqref{eq5-20-june-2026}, and
$\mSch\lesssim C_1\eps$, together with
\[
t\simeq\hat t,
\qquad
\hat t-\hat r\leq t-r
\lesssim_\delta \hat t-\hat r+\hat s^\delta,
\qquad
s^\delta\lesssim_\delta\hat t^\delta,
\qquad
\frac{\hat s}{\hat t}\leq\frac{s}{t},
\]
we obtain the first four estimates.

For the last estimate, we use
\begin{equation}\label{eq:wave-decomposition-pointwise-kg}
\begin{aligned}
-\Box\phi={}&(s/t)^2\del_t\del_t\phi
+t^{-1}\sum_a\left(2(x^a/t)\del_tL_a\phi-\delu_aL_a\phi\right)
\\
&-t^{-1}\sum_a(x^a/t)\delu_a\phi
+t^{-1}\left((r/t)^2+3\right)\del_t\phi.
\end{aligned}
\end{equation}
Since $\Boxt_g\phi-c^2\phi=0$,
\begin{equation}\label{eq1-21-jan-2026}
c^2|\phi|
\lesssim
(s/t)^2|\del\del\phi|
+t^{-1}|\del\phi|_1
+|h|\,|\del\del\phi|.
\end{equation}
Differentiating the Klein--Gordon equation with respect to $t$ and applying
\eqref{eq:wave-decomposition-pointwise-kg} to $\del_t\phi$ gives
\begin{equation}\label{eq:dt-kg-improved-pointwise}
c^2|\del_t\phi|
\lesssim
(s/t)^2|\del\del\del\phi|
+t^{-1}|\del\del\phi|_1
+|h|\,|\del\del\del\phi|
+|\del h|\,|\del\del\phi|.
\end{equation}
By \eqref{eq5-20-june-2026}, \eqref{eq1-14-dec-2025}, and
\eqref{eq3-07-dec-2025}, and using
$s^2\lesssim t(t-r)$ and $t-r\geq1$, the right-hand sides are bounded by
\[
C_1\eps t^{-5/2+\delta}(t-r)^{1+\delta/2}.
\]
The last estimate follows from the above coordinate comparisons.
\end{proof}

\begin{proposition}[ODE estimates along hyperboloidal rays]
\label{prop:ray-wise-kg-bounds}
Let $(h_{\alpha\beta},\phi)$ be the global solution to
\eqref{eq3-29-nov-2025} together with initial data
\eqref{eq main initial-data} satisfying \eqref{eq1-07-dec-2025} and
\eqref{eq main smallness-condition}. For fixed $\hat y\in B_1(0)$, set
\[
\vartheta:=\sqrt{1-|\hat y|^2},
\qquad
\gamma_{\hat y}(\lambda):=
\left(\frac{\lambda}{\vartheta},\frac{\lambda\hat y}{\vartheta}\right),
\qquad
B:=1-2\mSch t^{-1}.
\]
Define
\begin{equation}\label{eq:def-vtx}
\begin{aligned}
v(\lambda,\hat y)
&:=\lambda^{3/2}\left.\phi\right|_{\gamma_{\hat y}(\lambda)},
\\
\mathfrak q(\lambda,\hat y)
&:=\vartheta^{-2}\left.HB^2\right|_{\gamma_{\hat y}(\lambda)}.
\end{aligned}
\end{equation}
Let
\begin{equation}\label{eq:effective-initial-lambda}
\lambda_0(\hat y):=
\max\left\{
 s_0,
 \frac12\left(\frac{1+|\hat y|}{1-|\hat y|}\right)^{1/2}
\right\}.
\end{equation}
For $C_1\eps$ sufficiently small, on $[\lambda_0(\hat y),\infty)$,
\begin{equation}\label{eq2:12-4-2026}
v''+c^2(1+\mathfrak q)v=f_*,
\qquad
|\mathfrak q|\leq\frac13,
\end{equation}
and
\begin{equation}\label{eq:ray-ode-input-bounds}
\begin{aligned}
|v|+|v'|
&\lesssim
C_1\eps\,\vartheta^{5/2-2\delta}\lambda^{2\delta},
\\
|\mathfrak q'v'|+|f_*|
&\lesssim
C_1\eps\,\vartheta^{3/2+7\delta/2}\lambda^{-2+9\delta}.
\end{aligned}
\end{equation}
\end{proposition}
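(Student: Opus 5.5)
Lemma~\ref{prop1-11-05-2021} is stated in the original coordinates, but the ray $\gamma_{\hat y}$ is a hatted ray. So the first step is to rederive the reduction in the hatted variables. I start from \eqref{eq1-5-27-2026}-type identities for $\phi$: by the decomposition \eqref{eq6-20-june-2026} applied to the Klein--Gordon equation, we have $-(1-\hu^{00})\del_t^2\phi+\Delta\phi-c^2\phi=$ (terms with at least one good derivative). I then write $\hu^{00}=-4\mSch t^{-1}+H$ and use \eqref{eq1-2026-8-6} to turn $(1+4\mSch t^{-1})\del_t^2-\Delta$ into $-\hat\Box+\mSch t^{-2}\mathbb A[\phi]$. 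This gives
\[
-\hat\Box\phi+c^2\phi-HB^2\hat\del_0^2\phi=\tilde f,
\]
where $\tilde f$ collects $\hu^{0b},\hu^{ab}$ good-derivative terms, the frame-derivative terms, $\mSch t^{-2}\mathbb A[\phi]$, and $HB^{-1}\cdot O(\mSch t^{-2})\del_t\phi$ from \eqref{eq3-04-june-2026}. Next I apply the flat ($h\equiv0$) computation of \cite[Lemma~3.4]{PLF-YM-CMP} in hatted coordinates along $\gamma_{\hat y}$, as in Lemma~\ref{prop1-11-05-2021}. This gives
\[
v''+c^2v-\vartheta^{-2}HB^2\cdot(\text{the }\hat\del_0^2\text{ part of }v'')=\lambda^{3/2}(\tilde f+\hat R_1),
\]
where $\hat R_1$ is the hatted version of \eqref{eq:CMP-R1}. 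Along the ray $\hat t=\lambda/\vartheta$, so $\lambda^{3/2}\hat\del_0^2\phi=\vartheta^{2}v''+$ tangential and lower-order terms. Dividing by $1-\mathfrak q$ gives $v''+c^2(1-\mathfrak q)^{-1}v=f_*$. I then rewrite $(1-\mathfrak q)^{-1}=1+\mathfrak q+O(\mathfrak q^2)$, putting the quadratic part into $f_*$ (or redefining by an $O(\mathfrak q^2)$ amount; the sign convention should be checked against the definition of $\mathfrak q$). The smallness $|\mathfrak q|\le1/3$ comes from \eqref{eq:pointwise-H}. The key point is that $\lambda\ge\lambda_0(\hat y)$ forces $\hat t-\hat r=\hat t(1-|\hat y|)\gtrsim 1$, and then $\vartheta^{-2}\hat t^{-2+7\delta/2}(\hat t-\hat r+\hat s^\delta)\lesssim \hat t^{-1+7\delta/2}\vartheta^{-2}(1-|\hat y|)+\dots\lesssim \lambda^{-1+7\delta/2}\vartheta^{-7\delta/2}$, which is small.

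For the bounds \eqref{eq:ray-ode-input-bounds}, I substitute $\hat t=\lambda/\vartheta$ and $\hat t-\hat r\simeq\lambda\vartheta$ into Lemma~\ref{lem:pointwise-inputs-for-ode}. From \eqref{eq:pointwise-phi-sharp}, $|v|\lesssim\lambda^{3/2}C_1\eps(\lambda/\vartheta)^{-5/2+\delta}(\lambda\vartheta)^{1+\delta/2}$, which is of the stated form $\vartheta^{5/2-2\delta}\lambda^{2\delta}$ after using $\lambda\vartheta\ge$ const to absorb the $\hat s^\delta$ term. For $v'=\tfrac32\lambda^{-1}v+\lambda^{3/2}\vartheta^{-1}(\hat\del_0+\hat y^a\hat\del_a)\phi$, I write $\hat\del_0+\hat y^a\hat\del_a=\vartheta^2\hat\del_0+\hat y^a\widehat{\delu}_a$ and use the $\del_t\phi$ part of \eqref{eq:pointwise-phi-sharp} together with \eqref{eq11-20-june-2026}. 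For $\mathfrak q'$, I use $\mathfrak q'=\vartheta^{-3}(\hat\del_0+\hat y^a\hat\del_a)(HB^2)$ with \eqref{eq:pointwise-del-H}, and combine it with the $v'$ bound. For $f_*$, each term in $\lambda^{3/2}\hat R_1$ and $\lambda^{3/2}\tilde f$ involves either two good derivatives $\delu\delu\phi$, which I bound by $t^{-2}|\phi|_2$ via $\delu_a=t^{-1}L_a$, or a product of $h$-terms from \eqref{eq:pointwise-h} and \eqref{eq1-14-dec-2025} with $\delu\phi$ or $\del\phi$. I then convert each into powers of $\lambda$ and $\vartheta$.

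I expect the main obstacle to be the exponent bookkeeping in $f_*$: getting $\lambda^{-2+9\delta}$ together with the favorable weight $\vartheta^{3/2+7\delta/2}$ simultaneously. The worst terms are $\hat x^a\hat x^b\lambda^{-1/2}\delu_a\delu_b\phi$, whose weight $r^2\sim\lambda^2\vartheta^{-2}$ is large near the light cone, and the $H$-terms multiplied by $\lambda^{3/2}\hat t^{-1}\hat\del_0\phi$. I will try first to use the sharp bound \eqref{eq:pointwise-phi-sharp} for $|\phi|_{2}$ on boosted derivatives; if that fails, I will trade powers of $\vartheta$ against $\lambda$ using $\lambda\vartheta\gtrsim1$ on $[\lambda_0,\infty)$.
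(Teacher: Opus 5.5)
Your plan follows the paper's route. You rewrite the equation in hatted variables as $-\hat\Box\phi-HB^2\hat\del_0^2\phi+c^2\phi=-\mathbb R_1[h,\phi]$. You apply the ray reduction of Lemma~\ref{prop1-11-05-2021} with $HB^2$ as the only perturbed coefficient, so $\bar h^{00}=\mathfrak q$ and $f_*$ picks up $-c^2\mathfrak q^2(1-\mathfrak q)^{-1}v$, which settles your sign question. You then feed in Lemma~\ref{lem:pointwise-inputs-for-ode} through $\hat t=\lambda/\vartheta$ and $\hat t-\hat r=\lambda\vartheta/(1+|\hat y|)\geq1/2$. Your arguments for $|\mathfrak q|\le1/3$, for $|v|+|v'|$, and for the $\widehat R_1$ and $\mathbb R_1$ parts of $f_*$ work, up to a few slips:
\begin{itemize}
\item $\lambda^{3/2}\hat\del_0^2\phi=\vartheta^{-2}v''+\dots$, not $\vartheta^{2}v''$.
\item \eqref{eq:CMP-R1} already carries the $\hat s^{3/2}$ weights, so it should not be multiplied by $\lambda^{3/2}$ again.
\item \eqref{eq:pointwise-phi-sharp} has no boosted version, so it cannot be used for $|\phi|_2$.
\item $\hat s^\delta=\lambda^\delta$ is not $\lesssim\lambda\vartheta$ when $\lambda\vartheta\simeq1$ and $\lambda$ is large. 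The cruder bound $\hat t-\hat r+\hat s^\delta\lesssim\lambda$ already gives the $|v|$ estimate.
\end{itemize}

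The genuine gap is your bound on $\mathfrak q'$. Applying \eqref{eq:pointwise-del-H} to the full derivative in $\mathfrak q'=\vartheta^{-3}(\hat\del_0+\hat y^a\hat\del_a)(HB^2)$ gives only $|\mathfrak q'|\lesssim C_1\eps\,\vartheta^{-1-7\delta/2}\lambda^{-2+7\delta/2}$. With your $v'$ bound this yields $|\mathfrak q'v'|\lesssim(C_1\eps)^2\vartheta^{3/2-11\delta/2}\lambda^{-2+11\delta/2}$. Its ratio to the target $\vartheta^{3/2+7\delta/2}\lambda^{-2+9\delta}$ is $\vartheta^{-9\delta}\lambda^{-7\delta/2}$. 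At $\lambda\simeq\lambda_0(\hat y)\simeq\vartheta^{-1}$ this is of size $\vartheta^{-11\delta/2}$, unbounded as $|\hat y|\to1$.

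Your fallback of trading via $\lambda\vartheta\gtrsim1$ cannot repair this. It converts surplus $\lambda$-decay into $\vartheta$-gain at most one-for-one, and here the surplus $7\delta/2$ is smaller than the deficit $9\delta$. Even the sharper $v'$ bound your ingredients give, dominated near $\lambda_0$ by $\vartheta^{3/2}\lambda^{-1+\delta}$ from \eqref{eq11-20-june-2026}, still leaves a loss of $\vartheta^{-5\delta/2}$. The loss matters downstream: integrating this term in $\lambda$ is what produces the weight $\vartheta^{5/2-11\delta/2}$ of $b^\pm$, and hence of $P_0$.

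The fix is the splitting you already used for $v'$. Along the ray, $\del_\lambda=\vartheta\hat\del_0+\lambda^{-1}\hat y^a\hat L_a$, so
$|\mathfrak q'|\lesssim\vartheta^{-1}|\del H|+\vartheta^{-2}\lambda^{-1}|\hat LH|+\mSch\vartheta^{-1}\lambda^{-2}|H|$.
The boost term must then be bounded by the commuted estimate \eqref{eq:pointwise-H}, together with $\hat L_a-L_a=O(\mSch\ln t)\,\del$. This gives $|\hat LH|\lesssim C_1\eps\hat t^{-2+7\delta/2}(\hat t-\hat r+\hat s^\delta)$, as in \eqref{eq:effective-metric-estimates}, rather than the weaker $\hat t|\del H|$. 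The result is $|\mathfrak q'|\lesssim C_1\eps(\vartheta^{1-7\delta/2}\lambda^{-2+7\delta/2}+\vartheta^{-7\delta/2}\lambda^{-3+9\delta/2})$, which closes the estimate.

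For the same reason, when you estimate the $\mathfrak q$-weighted terms of $f_*$, keep $|\mathfrak q|\lesssim C_1\eps(\vartheta^{1-7\delta/2}\lambda^{-1+7\delta/2}+\vartheta^{-7\delta/2}\lambda^{-2+9\delta/2})$. This is what your smallness computation actually yields, and you should not replace it by the weaker $\vartheta^{-7\delta/2}\lambda^{-1+7\delta/2}$ you recorded.
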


\begin{proof}
From \eqref{eq3-29-nov-2025} and
\eqref{eq1-2026-8-6}--\eqref{eq2-2026-8-6},
\[
\Boxt_g\phi-c^2\phi=0,
\qquad
\hat\Box\phi
=-(1+4\mSch t^{-1})\del_t^2\phi
+\Delta_x\phi+\mSch t^{-2}\mathbb A[\phi],
\]
where
\[
\mathbb A[\phi]
=4\mSch\frac{4\mSch t^{-1}-3}{B^2}\del_t^2\phi
+\frac{2}{B^3}\del_t\phi.
\]
Writing $h^{\mu\nu}\del_\mu\del_\nu\phi$ in the semi-hyperboloidal frame
and using \eqref{eq3-04-june-2026}, we obtain
\begin{equation}\label{eq:kg-hat-coordinate-detailed}
-\hat\Box\phi-HB^2\hat\del_0^2\phi+c^2\phi
=-\mathbb R_1[h,\phi],
\end{equation}
where
\begin{equation}\label{eq:def-mathbb-R1-detailed}
\begin{aligned}
\mathbb R_1[h,\phi]:={}&
-\sum_{(\mu,\nu)\neq(0,0)}
\hu^{\mu\nu}\delu_\mu\delu_\nu\phi
-h^{\mu\nu}\del_\mu\Psiu_\nu^{\ \nu'}\delu_{\nu'}\phi
\\
&+\mSch t^{-2}\mathbb A[\phi]
-2\mSch t^{-2}H\hat\del_0\phi .
\end{aligned}
\end{equation}
By \eqref{eq:pointwise-h}, \eqref{eq:pointwise-H},
\eqref{eq:pointwise-phi-high}, \eqref{eq11-20-june-2026},
\eqref{eq7-03-aug-2026}, and $\mSch\lesssim C_1\eps$,
\begin{equation}\label{eq:R1-hat-pointwise-bound}
|\mathbb R_1[h,\phi]|
\lesssim
(C_1\eps)^2\hat t^{-7/2+2\delta}.
\end{equation}

Along $\gamma_{\hat y}$, $\hat x/\hat t=\hat y$, $\hat s=\lambda$, and
$\vartheta$ is constant. For any spacetime quantity $F$,
\begin{equation}\label{eq:ray-derivative-identity}
\del_\lambda\left.F\right|_{\gamma_{\hat y}(\lambda)}
=
\left.
\left(
\vartheta\hat\del_0F
+\frac{\hat y^a}{\lambda}\hat L_aF
\right)
\right|_{\gamma_{\hat y}(\lambda)}.
\end{equation}
Since
\[
\hat L_a
=L_a
+\frac{2\mSch x^a}{t-2\mSch}\del_t
-2\mSch\ln t\,\del_a,
\]
we have
\[
|\hat L H|_{N-10}
\lesssim_N
|H|_{N-9}
+\mSch(1+\ln t)|\del H|_{N-9}.
\]
By \eqref{eq:pointwise-H}, \eqref{eq:pointwise-del-H},
$\mSch\lesssim C_1\eps$, and
\[
\ln t\lesssim_\delta t^{\delta/2}
\lesssim_\delta \hat s^\delta,
\]
we obtain
\begin{equation}\label{eq:effective-metric-estimates}
\begin{aligned}
|\hat L H|_{N-10}
&\lesssim_{\delta,N}
C_1\eps\hat t^{-2+7\delta/2}
\bigl(\hat t-\hat r+\hat s^\delta\bigr)
\\
&\lesssim_{\delta,N}
C_1\eps\left(
 t^{-2+7\delta/2}(t-r)
+t^{-2+9\delta/2}
\right).
\end{aligned}
\end{equation}
On the part of the ray with $\hat t-\hat r\geq1/2$,
\begin{equation}\label{eq:ray-coordinate-comparison}
\begin{aligned}
\hat s&=\lambda,
&
\hat t&=\lambda\vartheta^{-1},
&
\hat t-\hat r&=\frac{\lambda\vartheta}{1+|\hat y|},
\\
t&\simeq\hat t,
&
\lambda\vartheta&\geq\frac12 .
\end{aligned}
\end{equation}
Moreover,
\[
\left|
\del_\lambda\left.B\right|_{\gamma_{\hat y}(\lambda)}
\right|
\lesssim \mSch\vartheta\lambda^{-2}.
\]
Thus \eqref{eq:def-vtx} and \eqref{eq:ray-derivative-identity}, together with
$B\simeq1$, give
\begin{equation}\label{eq:ode-structural-bounds}
\begin{aligned}
|\mathfrak q|
&\lesssim
\vartheta^{-2}
\left.|H|\right|_{\gamma_{\hat y}(\lambda)},
\\
|\mathfrak q'|
&\lesssim
\vartheta^{-2}
\left.
\left(
\vartheta|\hat\del_0H|
+\lambda^{-1}|\hat L H|
+\mSch\vartheta\lambda^{-2}|H|
\right)
\right|_{\gamma_{\hat y}(\lambda)},
\\
|v|+|v'|
&\lesssim
\lambda^{3/2}
\left.
\left(
|\phi|+\vartheta|\hat\del_0\phi|
+\lambda^{-1}|\hat L\phi|
\right)
\right|_{\gamma_{\hat y}(\lambda)}.
\end{aligned}
\end{equation}
By \eqref{eq1-2026-8-5},
\[
\left.|\hat L\phi|\right|_{\gamma_{\hat y}(\lambda)}
\lesssim
\left.t^\delta|\phi|_1\right|_{\gamma_{\hat y}(\lambda)}.
\]
Since
$|\hat\del_0H|\lesssim|\del H|$ and
$|\hat\del_0\phi|\lesssim|\del_t\phi|$,
\eqref{eq:pointwise-H}, \eqref{eq:pointwise-del-H},
\eqref{eq:pointwise-phi-high}, \eqref{eq:pointwise-phi-sharp},
\eqref{eq:effective-metric-estimates}, and
\eqref{eq:ray-coordinate-comparison} give
\begin{equation}\label{eq:ode-quantitative-bounds}
\begin{aligned}
|\mathfrak q|
&\lesssim
C_1\eps\left(
\vartheta^{1-7\delta/2}\lambda^{-1+7\delta/2}
+\vartheta^{-7\delta/2}\lambda^{-2+9\delta/2}
\right),
\\
|\mathfrak q'|
&\lesssim
C_1\eps\left(
\vartheta^{1-7\delta/2}\lambda^{-2+7\delta/2}
+\vartheta^{-7\delta/2}\lambda^{-3+9\delta/2}
\right),
\\
|v|+|v'|
&\lesssim
C_1\eps\,\vartheta^{5/2-2\delta}\lambda^{2\delta}.
\end{aligned}
\end{equation}
Since $\lambda_0(\hat y)\simeq\vartheta^{-1}$, the first estimate gives
$|\mathfrak q|\leq1/3$ on $[\lambda_0(\hat y),\infty)$ when
$C_1\eps$ is sufficiently small.

Define
\begin{equation}\label{eq:def-effective-hat-h00}
\hat h^{00}:=HB^2,
\qquad
\hat h^{0a}=\hat h^{ab}:=0,
\qquad
\hat g^{\alpha\beta}:=\eta^{\alpha\beta}+\hat h^{\alpha\beta}.
\end{equation}
Then \eqref{eq:kg-hat-coordinate-detailed} is
\[
-\Boxt_{\hat g}\phi+c^2\phi=-\mathbb R_1[h,\phi],
\qquad
\Boxt_{\hat g}:=\hat g^{\alpha\beta}\hat\del_\alpha\hat\del_\beta.
\]
Since $\hat h^{00}$ is the only non-zero Cartesian component,
$\underline{\hat h}^{00}=\hat h^{00}$ and
$\underline{\hat h}^{0a}=\underline{\hat h}^{ab}=0$. Hence the coefficient
$\bar h^{00}$ in Lemma~\ref{prop1-11-05-2021} is $\mathfrak q$ along
$\gamma_{\hat y}$.

Let $\widehat R_j[\phi]$ denote the remainders in
Lemma~\ref{prop1-11-05-2021} written in the hatted coordinates. For example,
with
\[
\widehat{\delu}_0:=\hat\del_0,
\qquad
\widehat{\delu}_a:=\hat t^{-1}\hat L_a,
\]
we have
\begin{equation}\label{eq:def-hat-R1}
\begin{aligned}
\widehat R_1[\phi]={}&
\hat s^{3/2}\sum_a\widehat{\delu}_a\widehat{\delu}_a\phi
+\frac{\hat x^a\hat x^b}{\hat s^{1/2}}
\widehat{\delu}_a\widehat{\delu}_b\phi
+\frac{3}{4\hat s^{1/2}}\phi
+\frac{3\hat x^a}{\hat s^{1/2}}\widehat{\delu}_a\phi.
\end{aligned}
\end{equation}
Lemma~\ref{prop1-11-05-2021} applied to
\eqref{eq:kg-hat-coordinate-detailed} gives
\[
v''+\frac{c^2}{1-\mathfrak q}v
=\frac{1}{1-\mathfrak q}
\left(
\sum_{j=1}^3
\left.\widehat R_j[\phi]\right|_{\gamma_{\hat y}(\lambda)}
-\lambda^{3/2}
\left.\mathbb R_1[h,\phi]\right|_{\gamma_{\hat y}(\lambda)}
\right).
\]
Using
\[
(1-\mathfrak q)^{-1}
=1+\mathfrak q+\mathfrak q^2(1-\mathfrak q)^{-1},
\]
we obtain \eqref{eq2:12-4-2026}, where
\begin{equation}\label{eq:def-f-star}
f_*:=
\frac{1}{1-\mathfrak q}
\left(
\sum_{j=1}^3
\left.\widehat R_j[\phi]\right|_{\gamma_{\hat y}(\lambda)}
-\lambda^{3/2}
\left.\mathbb R_1[h,\phi]\right|_{\gamma_{\hat y}(\lambda)}
-c^2\mathfrak q^2v
\right).
\end{equation}
In \eqref{eq:CMP-remainders}, every term containing $\hat\del_0$ in
$\widehat R_2$ and $\widehat R_3$ carries a factor $\mathfrak q$. Thus,
by \eqref{eq1-2026-8-5} and $\lambda\vartheta\geq1/2$,
\[
\begin{aligned}
\sum_{j=1}^3
\left.|\widehat R_j[\phi]|\right|_{\gamma_{\hat y}(\lambda)}
\lesssim{}&
\lambda^{-1/2}
\left.
\bigl(|\phi|+t^\delta|\phi|_2\bigr)
\right|_{\gamma_{\hat y}(\lambda)}
\\
&+|\mathfrak q|\lambda^{1/2}\vartheta
\left.
t^\delta\bigl(|\phi|_2+|\del\phi|_1\bigr)
\right|_{\gamma_{\hat y}(\lambda)}.
\end{aligned}
\]
Together with \eqref{eq:R1-hat-pointwise-bound}, this gives
\begin{equation}\label{eq:ray-source-terms-bound}
\sum_{j=1}^3
\left.|\widehat R_j[\phi]|\right|_{\gamma_{\hat y}(\lambda)}
+\lambda^{3/2}
\left.|\mathbb R_1[h,\phi]|\right|_{\gamma_{\hat y}(\lambda)}
+|\mathfrak q^2v|
\lesssim
C_1\eps\,\vartheta^{3/2+7\delta/2}\lambda^{-2+9\delta}.
\end{equation}
By \eqref{eq:def-f-star}, \eqref{eq:ode-quantitative-bounds}, and
\eqref{eq:ray-source-terms-bound},
\begin{equation}\label{eq:ode-forcing-bound}
|\mathfrak q'v'|+|f_*|
\lesssim
C_1\eps\,\vartheta^{3/2+7\delta/2}\lambda^{-2+9\delta}.
\end{equation}
This proves \eqref{eq:ray-ode-input-bounds}.
\end{proof}

\begin{proof}[Proof of Theorem~\ref{thm:refined-kg-asymptotics}]
Fix $\hat y\in B_1(0)$. By Proposition~\ref{prop:ray-wise-kg-bounds},
$|\mathfrak q|\leq1/3$ on $[\lambda_0(\hat y),\infty)$. On
$[s_0,\lambda_0(\hat y)]$, the same bound follows from the explicit
Schwarzschild formula and the local estimates, after decreasing the smallness
threshold if necessary. Hence
\begin{equation}\label{eq:def-a-application}
a(\lambda,\hat y):=\sqrt{1+\mathfrak q(\lambda,\hat y)}
\end{equation}
is defined for $\lambda\geq s_0$. Below $s_0$ we use the real continuous
extension stated in Theorem~\ref{thm:refined-kg-asymptotics}; no ODE or
metric square-root identity is asserted there. Since $0<\delta<1/9$,
\eqref{eq:ode-forcing-bound} is integrable in $\lambda$. Since
$\lambda_0\simeq\vartheta^{-1}$, \eqref{eq:ode-quantitative-bounds} and
\eqref{eq:ode-forcing-bound} give
\[
|v^\pm(\lambda_0,\hat y)|
\lesssim C_1\eps\vartheta^{5/2-4\delta},
\qquad
\int_{\lambda_0}^{\infty}
\bigl(|\mathfrak q'v'|+|f_*|\bigr)\diff\lambda
\lesssim C_1\eps\vartheta^{5/2-11\delta/2}.
\]
Thus Lemma~\ref{lem1-2026-7-31} applies on
$[\lambda_0(\hat y),\infty)$. Absorbing the phase accumulated on
$[s_0,\lambda_0(\hat y)]$ into $b^\pm$ and $\mathfrak r^\pm$, we set
\[
\Phi(\lambda,\hat y):=c\int_{s_0}^{\lambda}a(\sigma,\hat y)\diff\sigma.
\]
It gives
\begin{equation}\label{eq:applied-v-vprime-reconstruction}
\begin{aligned}
v&=e^{-i\Phi}(b^++\mathfrak r^+)+e^{i\Phi}(b^-+\mathfrak r^-),
\\
v'&=-ci\,a e^{-i\Phi}(b^++\mathfrak r^+)
+ci\,a e^{i\Phi}(b^-+\mathfrak r^-).
\end{aligned}
\end{equation}
Thus
\begin{equation}\label{eq:applied-profile-remainder-bounds}
\begin{aligned}
|b^\pm(\hat y)|
&\lesssim C_1\eps\vartheta^{5/2-11\delta/2},
\\
|\mathfrak r^\pm(\lambda,\hat y)|
&\lesssim
C_1\eps\vartheta^{3/2+7\delta/2}\lambda^{-1+9\delta}.
\end{aligned}
\end{equation}
The first estimate in \eqref{eq:applied-profile-remainder-bounds} gives
\eqref{eq:kg-profile-bound-main}. Since $\phi$ is real-valued,
$b^-=\overline{b^+}$. The first identity in
\eqref{eq:applied-v-vprime-reconstruction} gives the Klein--Gordon expansion
and \eqref{eq:simultaneous-phi-error}. The representation of $b^\pm$ and the
uniform integrability of $F^\pm$ on compact subsets of $B_1(0)$ give the
continuity of $b^\pm$.

For \eqref{eq:simultaneous-dt-error}, the second identity in
\eqref{eq:applied-v-vprime-reconstruction} gives
\[
\begin{aligned}
v'-\mathcal D
={}&-ci(a-1)e^{-i\Phi}b^+
+ci(a-1)e^{i\Phi}b^-
\\
&-ci a e^{-i\Phi}\mathfrak r^+
+ci a e^{i\Phi}\mathfrak r^-.
\end{aligned}
\]
Since $|a-1|\lesssim|\mathfrak q|$, \eqref{eq:ode-quantitative-bounds} and
\eqref{eq:applied-profile-remainder-bounds}, together with
$\lambda\vartheta\geq1/2$, give
\begin{equation}\label{eq:vprime-leading-error}
|v'-\mathcal D|
\lesssim
C_1\eps\vartheta^{3/2+7\delta/2}\lambda^{-1+9\delta}.
\end{equation}
By \eqref{eq:ray-derivative-identity}, \eqref{eq:def-vtx}, and
$\hat L_a\hat s=0$,
\begin{equation}\label{eq:direct-dt-phi-identity}
\vartheta\hat\del_0\phi
=
\lambda^{-3/2}\left(
 v'-\lambda^{-1}\left(\frac32v+\hat y^a\hat L_av\right)
\right).
\end{equation}
Since $\hat L_a\lambda=0$,
\[
\lambda^{-1}|\hat L_av|=\lambda^{1/2}|\hat L_a\phi|.
\]
By \eqref{eq1-2026-8-5}, \eqref{eq:pointwise-phi-high}, and
\eqref{eq:ray-coordinate-comparison},
\begin{equation}\label{eq:angular-term-direct-dt}
\lambda^{-1}|\hat y^a\hat L_av|
\lesssim
C_1\eps\vartheta^{3/2-\delta}\lambda^{-1+3\delta},
\end{equation}
and similarly
\[
\lambda^{-1}|v|
\lesssim
C_1\eps\vartheta^{3/2-\delta}\lambda^{-1+3\delta}.
\]
Together with \eqref{eq:vprime-leading-error} and
\eqref{eq:direct-dt-phi-identity}, these estimates give
\eqref{eq:simultaneous-dt-error} for $\lambda\geq\lambda_0(\hat y)$.

If $\lambda<\lambda_0(\hat y)$ and $\hat t-\hat r\geq1/2$, then
\[
\frac12\left(\frac{1+|\hat y|}{1-|\hat y|}\right)^{1/2}
\leq\lambda<s_0,
\]
so $\vartheta\gtrsim_{s_0}1$ and the corresponding spacetime region is
bounded. More explicitly,
\[
\lambda\vartheta=(\hat t-\hat r)(1+|\hat y|)\geq\frac12.
\]
Hence $\vartheta>1/(2s_0)$, $\hat t<2s_0^2$, and $\lambda\geq1/2$.
The phase is taken with the real extension specified above. Therefore
$|\mathcal A|+|\mathcal D|\lesssim_c |b^+|$ on this region.
The pointwise estimate \eqref{eq:pointwise-phi-sharp},
\eqref{eq3-04-june-2026}, and the first estimate in
\eqref{eq:applied-profile-remainder-bounds} give
\eqref{eq:simultaneous-phi-error} and \eqref{eq:simultaneous-dt-error}
there, after enlarging the fixed constant. No metric square root below
$s_0$ is needed for this bounded-region argument.
\end{proof}

\subsection{Derivatives of the phase coefficient}
The following estimate is used in the proof of
Proposition~\ref{prop:kg-energy-profile}.

\begin{lemma}[Profile derivatives of the ODE coefficient]
\label{lem:app-q-y-derivatives}
Let $\mathfrak q(\lambda,\hat y)$ be defined by \eqref{eq:def-vtx}.
For $0<\rho,\eta<1/4$, set
\[
K_{\rho,\eta}:=\{\hat y\in B_1(0):|\hat y|\geq\rho,
\sqrt{1-|\hat y|^2}\geq\eta\}.
\]
Then, for $k=1,2$,
\begin{equation}\label{eq:app-q-y-derivatives}
|\del_{\hat y}^k \mathfrak q(\lambda,\hat y)|
\lesssim_{\rho,\eta}C_1\eps\,\lambda^{-1+7\delta/2},
\qquad \lambda\geq\hat t_{\mathrm{init}},\quad \hat y\in K_{\rho,\eta}.
\end{equation}
\end{lemma}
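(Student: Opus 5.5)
The plan is to differentiate the definition \eqref{eq:def-vtx} of $\mathfrak q$ directly in $\hat y$, with $\lambda$ held fixed. Along the ray we have $\hat t=\lambda\vartheta^{-1}$ and $\hat x=\lambda\vartheta^{-1}\hat y$, so
\[
\mathfrak q(\lambda,\hat y)=\vartheta(\hat y)^{-2}\,(HB^2)\bigl(\lambda\vartheta^{-1},\lambda\vartheta^{-1}\hat y\bigr).
\]
On $K_{\rho,\eta}$ the weights $\vartheta^{\pm1}$ and their $\hat y$-derivatives of order at most two are bounded by constants depending only on $\rho,\eta$. By the Leibniz rule, it therefore suffices to bound $\del_{\hat y}^j\bigl[(HB^2)\circ\gamma_{\hat y}(\lambda)\bigr]$ for $j=0,1,2$ by $C_1\eps\lambda^{-1+7\delta/2}$.

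The key observation is how $\del_{\hat y}$ acts on a function composed with the ray. Since
\[
\del_{\hat y^a}\bigl(\lambda\vartheta^{-1},\lambda\vartheta^{-1}\hat y\bigr)=\lambda\vartheta^{-3}\hat y_a\,(1,\hat y)+\lambda\vartheta^{-1}e_a,
\]
a first $\hat y$-derivative of $F\circ\gamma_{\hat y}$ equals $\hat t$ times a combination, with coefficients homogeneous of degree zero and bounded on $K_{\rho,\eta}$, of $\hat\del_0F$ and $\hat\del_aF$. Equivalently, it is a combination of $\hat L_aF$ and $\hat t\,\hat\del_0 F$ (and similarly of scaling-type vector fields). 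I would write it precisely as $\hat t\,\widehat{\delu}_a F+(\text{bounded})\,\hat t\,\hat\del_0F$. On $K_{\rho,\eta}$ we have $\hat t\simeq\lambda$ and $\hat t-\hat r\simeq_{\rho,\eta}\lambda$, so we sit uniformly inside the cone, away from both the tip and the light cone. There $\hat t\,|\hat\del F|$ is comparable to $|\hat L F|+(\hat t-\hat r)|\hat\del F|$ up to constants depending on $\rho,\eta$. Thus the first derivative is controlled by $\lambda|\del(HB^2)|$ plus $|\hat L(HB^2)|$. The second derivative produces terms $\lambda^2|\del\del(HB^2)|$, $\lambda|\del\hat L(HB^2)|$ and the like.

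To estimate these I will use Lemma~\ref{lem:pointwise-inputs-for-ode}, together with \eqref{eq1-2026-8-5} to pass from hatted to unhatted vector fields at the cost $t^{\delta}$, or better, directly the fact that $\hat L_a-L_a$ carries $\mSch\ln t$ coefficients. I also use $B=1-2\mSch t^{-1}$, whose derivatives are explicitly harmless. The bound \eqref{eq:pointwise-del-H} gives $\lambda|\del H|_{N-9}\lesssim C_1\eps\lambda^{-1+7\delta/2}$, which is exactly the required rate. The second derivative $\lambda^2|\del\del H|$ is handled by writing $\del\del H$ as $\del H$ differentiated once more within $|\del H|_{N-9}$, using the fact that $|\cdot|_{N-9}$ includes translations. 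Since $N$ is large, this gives again $\lambda^{2}\cdot\lambda^{-2+7\delta/2}\lambda^{-1}$, which is even better; at worst it is $\lambda^{-1+7\delta/2}$ after absorbing a $\hat t\simeq\lambda$ factor.

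The main obstacle is the pure boost term $|\hat L H|$ and $|\hat L\hat L H|$. The estimate \eqref{eq:pointwise-H} only gives
\[
C_1\eps\hat t^{-2+7\delta/2}\bigl(\hat t-\hat r+\hat s^\delta\bigr)\simeq C_1\eps\lambda^{-1+7\delta/2}
\]
on $K_{\rho,\eta}$, because $\hat t-\hat r\simeq\lambda$ there. This is exactly the claimed rate, so there is no room to lose any power. I would therefore commute carefully. For $|H|_{N-9}$ with boosts I invoke \eqref{eq:pointwise-H} directly, which includes up to $N-9$ boosts. The discrepancy $\hat L-L=O(\mSch\ln t)\,\del$ is controlled by \eqref{eq:pointwise-del-H} times $\mSch\ln t\lesssim C_1\eps\,t^{\delta/2}$, yielding $(C_1\eps)^2\lambda^{-2+4\delta}$, which is harmless. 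Collecting terms gives \eqref{eq:app-q-y-derivatives} for $k=1,2$, with constants depending on $\rho,\eta$ through the bounds on $\vartheta^{-1}$ and on $(\hat t-\hat r)/\hat t$.
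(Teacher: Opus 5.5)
\textbf{Verdict: genuine gap at $k=2$.} Your first-order bound is fine: $\hat t\,|\del H|\lesssim C_1\eps\hat t^{-1+7\delta/2}$ by \eqref{eq:pointwise-del-H}. The problem is the term $\lambda^2|\del\del(HB^2)|$ that your chain-rule expansion produces at second order. The norm $|\del H|_{N-9}$ in \eqref{eq:pointwise-del-H} bounds every commuted derivative $\del^IL^J\del H$, $|I|+|J|\leq N-9$, by the \emph{same} quantity $C_1\eps\hat t^{-2+7\delta/2}$. An extra translation does not produce an extra factor $\hat t^{-1}$. So the cited estimates only give
\[
\lambda^2|\del\del H|\lesssim C_1\eps\lambda^{7\delta/2},
\]
a full power of $\lambda$ worse than \eqref{eq:app-q-y-derivatives}. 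Your fallback ``at worst $\lambda^{-1+7\delta/2}$ after absorbing a $\hat t\simeq\lambda$ factor'' does not follow. Your splitting $\hat t\,\widehat{\delu}_aF+(\text{bounded})\,\hat t\,\hat\del_0F$ would hit the same obstruction through $\hat t^2\hat\del_0^2H$.

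\textbf{The missing idea.} At fixed $\lambda=\hat s$, the map $\hat y\mapsto\gamma_{\hat y}(\lambda)$ parametrizes the hyperboloid $\{\hat s=\lambda\}$. Hence $\del_{\hat y}$ is tangent to it and has no transversal component. Your own tangent vector simplifies exactly, since the $\hat t\hat\del_0$ coefficients cancel:
\[
\hat t\bigl[\vartheta^{-2}\hat y_a(\hat\del_0+\hat y_j\hat\del_j)+\hat\del_a\bigr]
=\hat L_a+\vartheta^{-2}\hat y_a\hat y_j\hat L_j
=\vartheta^{-2}\bigl(\hat L_a+\hat y_j\Omega_{aj}\bigr).
\]
This is \eqref{eq:app-y-derivative-vectorfields} in the paper. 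The coefficients are functions of $\hat x/\hat t$, bounded together with their $\hat L$-derivatives on $K_{\rho,\eta}$. Therefore $\del_{\hat y}^2\mathfrak q$ is a bounded combination of $\hat L\hat L(HB^2)$, $\hat L(HB^2)$ and $HB^2$, and no $\hat t^2\del\del H$ ever appears.

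\textbf{Closing the argument.} Your ``main obstacle'' paragraph then finishes the proof as in the paper. Use \eqref{eq:pointwise-H} with $\hat t-\hat r\simeq_{\rho,\eta}\hat t$ for the boost terms. Control the correction $\hat L-L=O(\mSch\ln t)\del$ by \eqref{eq:pointwise-del-H}.

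\textbf{Alternative fix.} You could keep your expansion, but then you must supply an improved interior bound $|\del\del H|\lesssim C_1\eps t^{-3+7\delta/2}$. This is obtainable, for instance, from \eqref{eq1-2026-7-10} and \eqref{eq2-2026-7-10}, using $\del_a=t^{-1}L_a-(x^a/t)\del_t$ and expanding $h^{00}$ in terms of $h_{\alpha\beta}$. You did not provide it.

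\textbf{Minor point: small $\lambda$.} Lemma~\ref{lem:pointwise-inputs-for-ode} holds only in $\Kcal^m$. For $\lambda$ close to $\hat t_{\mathrm{init}}$ and $\vartheta$ close to $\eta$, one has $\hat t-\hat r=\lambda\vartheta/(1+|\hat y|)<1/2$. So the claim that you sit ``uniformly inside the cone'' fails there. A bounded range of $\lambda$ must be handled separately, by local bounds and the explicit Schwarzschild formula.
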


\begin{proof}
At fixed $\lambda$, differentiation along the family of hyperboloidal
rays gives
\begin{equation}\label{eq:app-y-derivative-vectorfields}
\del_{\hat y_i}
=\frac1{1-|\hat y|^2}
\left(\hat\Omega_{0i}+\hat y_j\Omega_{ij}\right).
\end{equation}
On $K_{\rho,\eta}$ the factor $(1-|\hat y|^2)^{-1}$ and all of its first
two $\hat y$-derivatives are bounded by constants depending only on
$(\rho,\eta)$. Moreover $\hat\Omega_{0i}=\hat L_i$, while the spatial
rotations are bounded combinations of the hatted Lorentz boosts; indeed,
\[
\Omega_{ij}=\hat y_i\hat L_j-\hat y_j\hat L_i.
\]
Thus \eqref{eq:app-y-derivative-vectorfields} expresses the first two
$\hat y$-derivatives in terms of at most two boosts or rotations.
Write
\[
\hat L_i=L_i+A_i^\mu\del_\mu,\qquad
A_i^0=\frac{2m_{\mathrm S}x^i}{t-2m_{\mathrm S}},\qquad
A_i^a=-2m_{\mathrm S}\ln t\,\delta_i^a.
\]
On $K_{\rho,\eta}$, the coefficients satisfy
\[
|A_i|+|L_jA_i|\lesssim_{\rho,\eta}
m_{\mathrm S}(1+\ln t),\qquad
|\del A_i|\lesssim_{\rho,\eta}m_{\mathrm S}t^{-1}.
\]
The second hatted boost expands as
\[
\begin{aligned}
\hat L_i\hat L_jH={}&L_iL_jH+(L_iA_j^\nu)\del_\nu H
+A_j^\nu L_i\del_\nu H+A_i^\mu\del_\mu L_jH\\
&+A_i^\mu(\del_\mu A_j^\nu)\del_\nu H
+A_i^\mu A_j^\nu\del_\mu\del_\nu H.
\end{aligned}
\]
For fixed $K_{\rho,\eta}$, choose $\Lambda_{\rho,\eta}\geq s_0$
so that every corresponding point with $\lambda>\Lambda_{\rho,\eta}$
lies in $\Kcal^m$. Using \eqref{eq:pointwise-H} and \eqref{eq:pointwise-del-H}, together
with $t-r\simeq_{\rho,\eta}t$ on this compact set of rays, we obtain
\[
|\hat L_iH|+|\hat L_i\hat L_jH|
\lesssim_{\rho,\eta}C_1\eps
\left(t^{-1+7\delta/2}
+(1+\ln t)^2t^{-2+7\delta/2}\right)
\lesssim_{\rho,\eta}C_1\eps t^{-1+7\delta/2}.
\]
All derivatives on the right are covered by the stated commuted norms,
since $N\geq14$. Differentiating
\eqref{eq:app-y-derivative-vectorfields} once more only produces bounded
coefficient derivatives on $K_{\rho,\eta}$. Hence, for $k=1,2$,
\[
|\del_{\hat y}^kH|
\lesssim_{\rho,\eta}C_1\eps\lambda^{-1+7\delta/2},
\]
because $t\simeq_{\rho,\eta}\lambda$ along these rays.

The remaining values $s_0\leq\lambda\leq\Lambda_{\rho,\eta}$ form
a bounded spacetime region. Local smooth small-data bounds, together
with the explicit Schwarzschild formula wherever the ray is exterior,
give the same estimate after enlarging its fixed constant.
Finally, $\mathfrak q=\vartheta^{-2}HB^2$. The first two derivatives
of $\vartheta^{-2}$ are bounded on
$K_{\rho,\eta}$, while derivatives of
$B=1-2m_{\mathrm S}t^{-1}$ are lower order. Using also
$m_{\mathrm S}\lesssim C_1\eps$ proves \eqref{eq:app-q-y-derivatives}.
\end{proof}

\subsection{Scalar-source remainder}

\begin{lemma}[Scalar-source remainder]
\label{lem:scalar-source-remainder}
Let $R_0$ be the remainder in \eqref{eq:modified-source-decomposition}. Then
\[
|R_0(\hat t,\hat x)|
\lesssim_\delta (C_1\eps)^2\hat t^{-4+9\delta},
\qquad
\hat t-\hat r\geq\frac12.
\]
\end{lemma}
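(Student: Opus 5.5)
We follow the decomposition already used in the proof of Proposition~\ref{prop:scalar-induced-source-profile}. Write $\lambda=\hat s=\hat t\vartheta$ and $\lambda^{3/2}\phi=\mathcal A+\rho_\phi$, $\lambda^{3/2}\vartheta\hat\del_0\phi=\mathcal D+\rho_t$, with $\rho_\phi,\rho_t$ bounded by \eqref{eq:simultaneous-phi-error}--\eqref{eq:simultaneous-dt-error}. By \eqref{eq:source-before-cancellation} and the exact cancellation \eqref{eq:leading-source-cancellation}, the remainder splits as
\[
R_0=8\pi\lambda^{-3}\Big[(B^2-1)\mathcal D^2+B^2\bigl(2\mathcal D\rho_t+\rho_t^2\bigr)+c^2\bigl(2\mathcal A\rho_\phi+\rho_\phi^2\bigr)\Big]+8\pi\mathcal E_m(\del_t\phi)^2 .
\]
We estimate each of these four groups pointwise in $\{\hat t-\hat r\geq1/2\}$ and target the common bound $(C_1\eps)^2\hat t^{-4+9\delta}$.

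\textbf{Coordinate corrections.} Since $|B^2-1|\lesssim\mSch t^{-1}\lesssim C_1\eps\hat t^{-1}$ and $|\mathcal D|\lesssim|b^+|\lesssim C_1\eps\vartheta^{5/2-11\delta/2}$ by \eqref{eq:kg-profile-bound-main}, we get
\[
\lambda^{-3}|B^2-1||\mathcal D|^2\lesssim (C_1\eps)^3\hat t^{-4}\vartheta^{2-11\delta},
\]
which is acceptable. For the $\mathcal E_m$ term, \eqref{eq7-03-aug-2026} gives $|\del_t\phi|^2\lesssim(C_1\eps)^2\hat t^{-3}$, and $|\mathcal E_m|\lesssim\mSch t^{-1}\ln t$. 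Together these give $(C_1\eps)^3\hat t^{-4}\ln\hat t\lesssim_\delta(C_1\eps)^2\hat t^{-4+\delta}$.

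\textbf{Cross and quadratic remainder terms.} For $\lambda^{-3}|\mathcal A||\rho_\phi|$ we combine $|\mathcal A|\lesssim C_1\eps\vartheta^{5/2-11\delta/2}$ with $|\rho_\phi|\lesssim C_1\eps\vartheta^{3/2+7\delta/2}\lambda^{-1+9\delta}$. With $\lambda=\hat t\vartheta$ the product is
\[
(C_1\eps)^2\hat t^{-4+9\delta}\vartheta^{-4+9\delta+4-2\delta}\leq(C_1\eps)^2\hat t^{-4+9\delta},
\]
so the $\vartheta$-powers compensate exactly. The term $\lambda^{-3}|\mathcal D||\rho_t|$ is handled the same way, but $\rho_t$ carries the second piece $\vartheta^{3/2-\delta}\lambda^{-1+3\delta}$. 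This piece yields
\[
(C_1\eps)^2\hat t^{-4+3\delta}\vartheta^{-4+3\delta+4-13\delta/2}=(C_1\eps)^2\hat t^{-4+3\delta}\vartheta^{-7\delta/2}.
\]
Since $\vartheta\geq(\hat t-\hat r)^{1/2}\hat t^{-1/2}\gtrsim\hat t^{-1/2}$ on the support, this is bounded by $\hat t^{-4+3\delta+7\delta/4}\leq\hat t^{-4+9\delta}$. The quadratic terms $\lambda^{-3}\rho^2$ give
\[
\hat t^{-5+18\delta}\vartheta^{-3+2\cdot(3/2+7\delta/2)-2+18\delta}=\hat t^{-5+18\delta}\vartheta^{-2+25\delta},
\]
together with the analogous expression for the second $\rho_t$ piece. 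Using $\vartheta^{-2}\lesssim\hat t$, these are bounded by $\hat t^{-4+18\delta}\vartheta^{25\delta}$.

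This quadratic bound is the step I expect to be the main obstacle, since $18\delta>9\delta$. Near the cone the pointwise profile errors are too weak. I would first try to avoid squaring these errors by controlling $\rho_\phi^2$ and $\rho_t^2$ directly by $|\phi|^2$ and $|\del_t\phi|^2$ with the sharp bound \eqref{eq:pointwise-phi-sharp} in the boundary layer $\vartheta\lesssim\hat t^{-\gamma}$. Away from that layer, $\vartheta^{-2}\lesssim\hat t^{2\gamma}$, and interpolating between the two regions should give exponent $-4+9\delta$. If that fails, the fallback is to check whether the original product $\hat t^{-5+18\delta}\vartheta^{-2+25\delta}$, combined with $\lambda\vartheta\geq1/2$, already suffices when the exponent $-2+25\delta$ is nonnegative. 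Otherwise one accepts a slightly weaker exponent, which is harmless for $\delta<1/9$ in the later application of Corollary~\ref{cor1-02-aug-2026}.

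Finally, on the bounded region $\lambda<\lambda_0(\hat y)$, all terms are trivially bounded by $(C_1\eps)^2$ times a constant, and there $\hat t$ is bounded.
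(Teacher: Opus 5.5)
Your decomposition of $R_0$ is the paper's. Your bounds for the $(B^2-1)\mathcal D^2$ and $\mathcal E_m(\del_t\phi)^2$ terms and for the cross terms $\lambda^{-3}|\mathcal A||\rho_\phi|$ and $\lambda^{-3}|\mathcal D||\rho_t|$ are correct; using \eqref{eq7-03-aug-2026} for the $\mathcal E_m$ term is slightly simpler than the paper's route.

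The gap is the quadratic term $\lambda^{-3}\rho_\phi^2$ (and likewise the first piece of $\lambda^{-3}\rho_t^2$), which you leave unresolved. The obstruction comes from your own estimate. You spend $\vartheta^{-2}\lesssim\hat t$ on the whole factor $\vartheta^{-2}$, including where $\vartheta\simeq1$, and this produces the spurious $\hat t^{-4+18\delta}$. Keep the $\vartheta$-power and use $\lambda\vartheta\geq1/2$ only as far as needed; the bound then closes in one line:
\[
\lambda^{-3}|\rho_\phi|^2\lesssim(C_1\eps)^2\vartheta^{3+7\delta}\lambda^{-5+18\delta}
=(C_1\eps)^2\hat t^{-4+9\delta}(\lambda\vartheta)^{-1+9\delta}\vartheta^{7\delta}
\lesssim(C_1\eps)^2\hat t^{-4+9\delta},
\]
since $-1+9\delta<0$, $\lambda\vartheta\geq1/2$ and $\vartheta\leq1$. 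Equivalently, $\hat t^{-5+18\delta}\vartheta^{-2+25\delta}$ is bounded as follows:
\begin{itemize}
\item when $25\delta<2$, it is at most $\hat t^{-4+11\delta/2}$, using $\vartheta^2\gtrsim\hat t^{-1}$;
\item when $25\delta\geq2$, it is at most $\hat t^{-5+18\delta}\leq\hat t^{-4+9\delta}$, which is exactly where $\delta<1/9$ is used.
\end{itemize}
The second piece of $\rho_t$ gives $\lambda^{-5+6\delta}\vartheta^{3-2\delta}=\hat t^{-5+6\delta}\vartheta^{-2+4\delta}\lesssim\hat t^{-4+4\delta}$. No boundary-layer interpolation with \eqref{eq:pointwise-phi-sharp} is needed.

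Your last fallback, accepting a weaker exponent, should be dropped. A bound $\hat t^{-4+18\delta}$ does not prove the lemma as stated, and it is not harmless later. It would turn $(\lambda\hat t)^{-1+9\delta}$ into $(\lambda\hat t)^{-1+18\delta}$ in \eqref{eq:direct-rescaled-source-convergence-interior}. The dominated-convergence step of Proposition~\ref{prop:direct-full-source-limit} would then require $\delta<1/18$. It would also worsen the remainder exponent $-1+9\delta$ in \eqref{eq3-2026-8-18}, and hence in \eqref{eq4-19-08-2026}. Finally, your separate treatment of $\lambda<\lambda_0(\hat y)$ is unnecessary. The error bounds of Theorem~\ref{thm:refined-kg-asymptotics} already hold on all of $\{\hat t-\hat r\geq1/2\}$.
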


\begin{proof}
Recall from the proof of Proposition~\ref{prop:scalar-induced-source-profile} that
\[
\lambda=\hat s,
\qquad
\vartheta=\frac{\hat s}{\hat t},
\qquad
B=1-2\mSch t^{-1},
\qquad
\left(\frac{s}{t}\right)^2=\vartheta^2+\mathcal E_m,
\]
and
\[
\lambda^{3/2}\phi=\mathcal A+\rho_\phi,
\qquad
\lambda^{3/2}\vartheta\hat\del_0\phi=\mathcal D+\rho_t.
\]
By \eqref{eq:source-before-cancellation} and
\eqref{eq:leading-source-cancellation},
\begin{equation}\label{eq:scalar-source-remainder-expansion}
\begin{aligned}
R_0=8\pi\lambda^{-3}\Big(&
2c^2\mathcal A\rho_\phi
+2B^2\mathcal D\rho_t
+c^2\rho_\phi^2\\
&+B^2\rho_t^2
+(B^2-1)\mathcal D^2\Big)
+8\pi\mathcal E_m(\del_t\phi)^2.
\end{aligned}
\end{equation}
Theorem~\ref{thm:refined-kg-asymptotics} gives
\begin{equation}\label{eq:source-local-remainder-bounds}
\begin{aligned}
|\rho_\phi|
&\lesssim C_1\eps\vartheta^{3/2+7\delta/2}\lambda^{-1+9\delta},
\\
|\rho_t|
&\lesssim C_1\eps\left[
\vartheta^{3/2+7\delta/2}\lambda^{-1+9\delta}
+\vartheta^{3/2-\delta}\lambda^{-1+3\delta}\right].
\end{aligned}
\end{equation}
Moreover,
\[
|\mathcal A|+|\mathcal D|
\lesssim C_1\eps\vartheta^{5/2-11\delta/2}.
\]
Since $B\simeq1$, the $\rho_\phi$ term and the first term in $\rho_t$ satisfy
\[
\lambda^{-3}\left(
|\mathcal A||\rho_\phi|
+|\mathcal D|\,C_1\eps
\vartheta^{3/2+7\delta/2}\lambda^{-1+9\delta}
\right)
\lesssim
(C_1\eps)^2\hat t^{-4+9\delta}.
\]
For the second term in $\rho_t$,
\[
\begin{aligned}
\lambda^{-3}|\mathcal D|\,
C_1\eps\vartheta^{3/2-\delta}\lambda^{-1+3\delta}
&\lesssim
(C_1\eps)^2\lambda^{-4+3\delta}\vartheta^{4-13\delta/2}
\\
&=(C_1\eps)^2\hat t^{-4+3\delta}\vartheta^{-7\delta/2}
\\
&\lesssim (C_1\eps)^2\hat t^{-4+19\delta/4}
\lesssim (C_1\eps)^2\hat t^{-4+9\delta},
\end{aligned}
\]
where $\lambda\vartheta=\hat t\vartheta^2\geq1/2$. The quadratic remainder terms satisfy
\[
\begin{aligned}
\lambda^{-3}\bigl(|\rho_\phi|^2+|\rho_t|^2\bigr)
\lesssim
(C_1\eps)^2\Big(
&\vartheta^{3+7\delta}\lambda^{-5+18\delta}
+\vartheta^{3-2\delta}\lambda^{-5+6\delta}
\Big)
\lesssim
(C_1\eps)^2\hat t^{-4+9\delta}.
\end{aligned}
\]
Next, $|B^2-1|\lesssim\mSch t^{-1}$ and
\[
\lambda^{-3}|\mathcal D|^2
\lesssim
(C_1\eps)^2\hat t^{-3}\vartheta^{2-11\delta}
\]
give
\[
|B^2-1|\lambda^{-3}|\mathcal D|^2
\lesssim
(C_1\eps)^2\hat t^{-4+9\delta}.
\]
Finally, by \eqref{eq:pointwise-phi-sharp} and
$|\mathcal E_m|\lesssim\mSch t^{-1}\ln t$,
\[
|\mathcal E_m|\,|\del_t\phi|^2
\lesssim_\delta
(C_1\eps)^2t^{-4+9\delta}.
\]
Here we used $t-r\leq t$, $\mSch\lesssim C_1\eps$, and
$\ln t\lesssim_\delta t^\delta$. Combining these estimates in
\eqref{eq:scalar-source-remainder-expansion} proves the claim.
\end{proof}


\end{document}